\documentclass[reqno]{amsart}
\usepackage{amssymb,latexsym,upref,enumerate,fouridx}
\usepackage{bbm}
\usepackage[normalem]{ulem}
\usepackage{comment}
\usepackage{mathrsfs,color}
\usepackage{centernot}
\usepackage[colorlinks,linkcolor=blue,citecolor=blue]{hyperref}
\usepackage{graphicx}
\usepackage{cite}
\newcommand{\added}{}

\usepackage[displaymath,mathlines]{lineno}
\usepackage{tikz} 

\allowdisplaybreaks

\numberwithin{equation}{section}

\theoremstyle{plain}
\newtheorem{thm}{Theorem}[section]
\newtheorem{lem}[thm]{Lemma}
\newtheorem{prop}[thm]{Proposition}

\theoremstyle{definition}

\theoremstyle{remark}
\newtheorem{rem}[thm]{Remark}

\newcommand{\ep}{\epsilon}

\newcommand{\tauac}{\acute{\tau}{}}
\newcommand{\gac}{\acute{g}{}}

\newcommand{\Qft}{\tilde{\mathfrak{Q}}{}}

\newcommand{\Jcch}{\check{\mathcal{J}}{}}

\newcommand{\chihu}{\underline{\hat{\chi}}{}}
\newcommand{\ghu}{\underline{\gh}{}}
\newcommand{\whu}{\underline{\wh}{}}
\newcommand{\vhu}{\underline{\vh}{}}

\newcommand{\zhu}{\underline{\zh}{}}
\newcommand{\hhu}{\underline{\hh}{}}
\newcommand{\Bhu}{\underline{\Bh}{}}
\newcommand{\Qhu}{\underline{\Qh}{}}
\newcommand{\Gammahu}{\underline{\hat{\Gamma}}{}}

\newcommand{\Gammah}{\hat{\Gamma}{}}

\newcommand{\gchat}{\hat{\gc}{}}

\newcommand{\chih}{\hat{\chi}{}}

\newcommand{\betah}{\hat\beta}

\newcommand{\taur}{\mathring{\tau}{}}
\newcommand{\chir}{\mathring{\chi}{}}

\newcommand{\taug}{\grave{\tau}{}}
\newcommand{\er}{\mathring{e}{}}
\newcommand{\gr}{\mathring{g}{}}
\newcommand{\lr}{\mathring{l}{}}
\newcommand{\Jcr}{\mathring{\mathcal{J}}{}}
\newcommand{\vr}{\mathring{v}{}}

\newcommand{\rhor}{\mathring{\rho}{}}

\newcommand{\Kttt}{\tilde{\Ktt}{}}
\newcommand{\gttt}{\tilde{\gtt}{}}
\newcommand{\Nttt}{\tilde{\Ntt}{}}
\newcommand{\bttt}{\tilde{\btt}{}}

\newcommand{\ggr}{\grave{g}{}}
\newcommand{\taugr}{\grave{\tau}{}}

\newcommand{\udim}{{\mathcal n}}%{{n^4+2 n^3-3 n^2+n+1}}

\newcommand{\Bcv}{\boldsymbol{\Bc}{}}

\newcommand{\cnormal}{n}
\newcommand{\cspatmetr}{\gtt}
\newcommand{\cextcurv}{\Ktt}
\newcommand{\cacc}{\dot n}
\newcommand{\cspnabla}{D}
\newcommand{\cexp}{\Htt}
\newcommand{\cshear}{\sigma}
\newcommand{\cshearresc}{\Sigma}
\newcommand{\cspatcurv}{\Rtt}

\newcommand{\pnormal}{\bar{\cnormal}}
\newcommand{\pspatmetr}{\bar{\cspatmetr}}
\newcommand{\pextcurv}{\bar\cextcurv}
\newcommand{\pacc}{\bar{\cacc}}
\newcommand{\pspnabla}{\bar{\cspnabla}}
\newcommand{\pexp}{\bar\cexp}
\newcommand{\pshear}{\bar\cshear}
\newcommand{\pshearresc}{\bar\cshearresc}
\newcommand{\pspatcurv}{\bar\cspatcurv}
\newcommand{\pWeyl}{\bar{\mathtt{W}}}

\input Tdef.def

\renewcommand{\emph}[1]{\textit{#1}}

\newcounter{mnotecount}[section]
\let\oldmarginpar\marginpar
\renewcommand\marginpar[1]{\-\oldmarginpar[\raggedleft\footnotesize #1]%
 {\raggedright\footnotesize #1}}

\begin{document}

\title[Big bang stability and isotropisation in the ekpyrotic regime]{Big bang stability and isotropisation for the Einstein-scalar field equations in the ekpyrotic regime}

\author[F. Beyer]{Florian Beyer}
\address{Dept of Mathematics and Statistics\\
730 Cumberland St\\
University of Otago, Dunedin 9016\\ New Zealand}
\email{florian.beyer@otago.ac.nz }

\author[D. Garfinkle]{David Garfinkle}
\address{Deptartment of Physics, Oakland University, Rochester, MI 48309, USA}
\email{garfinkl@oakland.edu}

\author[J. Isenberg]{James Isenberg}
\address{Department of Mathematics and Institute for Fundamental Science, University of Oregon, Eugene,OR 97403, USA}
\email{isenberg@uoregon.edu}

\author[T.A. Oliynyk]{Todd A. Oliynyk}
\address{School of Mathematical Sciences\\
9 Rainforest Walk\\
Monash University, VIC 3800\\ Australia}
\email{todd.oliynyk@monash.edu}

\begin{abstract}
It has been shown that, in spacetime dimensions $n\geq 3$, that the Kasner-scalar field solutions to the Einstein-scalar fields equations with potential $V_0 e^{-s \phi}$, where $s<s_c=\sqrt{\frac{8(n-1)}{n-2}}$ and $V_0\in \Rbb$, are nonlinearly stable to the past and terminate at a quiescent big bang singularity over the full range of sub-critical Kasner exponents. In particular, the spatially homogeneous and isotropic solutions, the Friedman-Lemaitre-Robertson-Walker (FLRW) spacetimes, to the Einstein-scalar field equations are stable in this sense for $s<s_c$ and $V_0\in\Rbb$.

While perturbations of the sub-critical Kasner-scalar field family of solutions, including the FLRW solutions, are asymptotically velocity term dominated (AVTD) near the big bang, they do not in general isotropise near the big bang singularity. Rather, they remain highly anisotropic, even for small perturbations of the isotropic FLRW solutions.

In this article, we establish the stability of FLRW solutions to the Einstein-scalar field equations for the potential parameter values $s>s_c$ and $V_0<0$. Such scalar field potentials are known in the literature as \textit{ekpyrotic}. In particular, we prove that the FLRW solutions to the Einstein-scalar field equations are nonlinearly stable to the past and terminate at a quiescent, crushing AVTD big bang singularity. A distinguishing property of these perturbed spacetimes is that they isotropise towards the big bang. 
\end{abstract}

\maketitle

\section{Introduction}

\subsection{Background and motivation}
\label{sec:introduction}
Cosmological solutions of the Einstein--matter system have been studied for decades,
yet the structure of their singularities remains far from fully understood. In this
paper, we explore the behavior of the gravitational field near such singularities in the \emph{cosmological setting} of globally hyperbolic spacetimes
with compact Cauchy surfaces \cite{bartnik1988}: Our aim is to understand the
backward-in-time asymptotics as one approaches ``the beginning'', that is, the behaviour of the gravitational field in a neighbourhood of a \emph{big bang singularity}\footnote{Without loss of generality, we orient time so
that the contracting direction corresponds to the past.} in the finite past.

Consistent with Hawking's
singularity theorem \cite{hawkingLargeScaleStructure1973} (See also
\cite[Chapter~14, Theorem~55A]{oneill1983a}.) and with the influential heuristics of Belinskii-Khalatnikov-Lifshitz (BKL)
\cite{belinskii1970,lifshitz1963} regarding {generic singularities}, one expects
big bang formation to persist for broad classes of solutions to the Einstein--matter
equations. These heuristics are commonly summarised in terms of the \emph{BKL conjecture},
which predicts that cosmological singularities should be generically spacelike, local, and oscillatory.
Support for the BKL picture comes from several directions, including rigorous work
in the spatially homogeneous setting
\cite{BeguinDutilleul:2023,beguin:2010,liebscher_et_al:2011,ringstrom2001a} and
numerical studies beyond spatial homogeneity
\cite{anderssonAsymptoticSilenceGeneric2005,curtis2005,garfinkle2002,garfinkle2002a,garfinkle2004,garfinkle2007,berger2001,marshall2025a}.
At the same time, developments involving spikes
\cite{berger1993,coley2015,lim2008,lim2009,rendall2001} and weak null singularities
\cite{dafermos2017,luk2017} indicate that the BKL picture is incomplete. Nevertheless,
it remains a useful heuristic guide for many investigations in mathematical cosmology.

It is generally believed that a minimally coupled scalar field (or, more generally, a \emph{stiff} fluid)  suppresses all oscillations suggested by the BKL conjecture and leads to \emph{convergent}, \emph{asymptotically velocity term dominated} (AVTD) asymptotics \cite{eardley1972,isenberg1990} near the big bang. This property is of physical interest because it leads to monotonic behaviour of the gravitational and the matter fields near the big bang. As well, if AVTD is present in a family of cosmological solutions, obtaining a proof of strong cosmic censorship \cite{penrose1969,Penrose:1980ge} is simplified \cite{CIM1990}. In addition, scalar fields are widely believed to play a significant role in the dynamics of the early universe \cite{mukhanov2005}. These considerations provide strong motivation for a detailed analysis of these equations and the qualitative behaviour of their solutions.

The Einstein-scalar field equations take the form
\begin{align}
  \Rb_{ij}&=2\nablab_i\phi\nablab_j \phi+\frac{4}{n-2}V(\phi)\gb_{ij}, 
  \label{ESF.1}\\
  \Box_{\gb} \phi &=V'(\phi), \label{ESF.2}
\end{align}
where $\gb_{ij}$ is the physical $n$-dimensional spacetime metric ($n>2$),
$\nablab_i$, $\Rb_{ij}$, and $\Box_{\gb} =\gb^{ij}\nablab_i\nablab_j$ are, respectively, the Levi-Civita connection, the Ricci tensor, and the wave operator associated to $\gb_{ij}$, 
$\phi$ is the scalar field, and $V(\phi)$ is the scalar field potential. In our convention, the energy-momentum tensor of the scalar field is 
\begin{equation}
  \label{eq:sfEMT}
  \bar T_{ij}=2\Bigl(\nablab_i\phi\nablab_j \phi-\frac 12\nablab_k\phi\nablab^k \phi\gb_{ij}-V(\phi)\gb_{ij}\Bigr).
\end{equation}

In this article, we restrict our attention to scalar field potentials  $V(\phi)$ of the form
\begin{equation}
  \label{eq:exponentialpotential}
  V(\phi)=V_0 e^{-s\phi},
\end{equation}
where $s$ and $V_0$ are here arbitrary real constant parameters. Without loss of  generality, we assume $V_0$ is normalised by 
\begin{equation}
  \label{eq:V0normalisation}
  V_0\in\{-1,0,1\}.
\end{equation} 
To see why this is the case, if we suppose that $\{\gb_{ij},\phit\}$ is a solution of \eqref{ESF.1}--\eqref{eq:exponentialpotential} with the potential $V(\phit)=V_0 e^{-s\phit}$, then it is straightforward to check that $\{\gb_{ij},\phi:=\phit-s^{-1}\ln(|V_0|)\}$ solves
\eqref{ESF.1}--\eqref{eq:exponentialpotential}
with the potential $V(\phi)=\frac{V_0}{|V_0|} e^{-s\phi}$. Moreover, we note that
$(\gb_{ij},\phi:=-\phit)$ is a solution of \eqref{ESF.1}--\eqref{eq:exponentialpotential} with the potential $V(\phi)=V_0e^{-(-s)\phi}$. 

It is common in the cosmology literature (e.g., \cite{mukhanov2005}) to appeal to the correspondence between  scalar fields $\phi$ and irrotational perfect fluids, which holds as long as
$\bar\nabla_i\phi$ is timelike. The correspondence follows from comparing the scalar field energy momentum tensor \eqref{eq:sfEMT} with that of a perfect fluid, and this correspondence determines the fluid velocity vector $\utt_i$, the density $\mu$ and the pressure $P$ according to
\begin{equation} \label{utt-def}
  \utt_i=\frac{\nablab_i\phi}{\sqrt{-\nablab_k\phi\nablab^k \phi}},\quad
  \mu=-\nablab_k\phi\nablab^k \phi+2V(\phi) \AND 
  P=-\nablab_k\phi\nablab^k \phi-2V(\phi).
\end{equation}
While the type of fluid determined by this correspondence can have unusual properties if $V(\phi)$ is not zero \cite{faraoni2023,piattella2014}, it is standard to introduce
the \emph{effective equation of state parameter} $w$, which is given by
\begin{equation}
  \label{eq:effectiveequationofstate}
  w=\frac{P}{\mu}=\frac{-\frac 12\nablab_k\phi\nablab^k \phi-V(\phi)}{-\frac 12\nablab_k\phi\nablab^k \phi+V(\phi)}.
\end{equation}
This quantity is especially useful in asymptotic regimes if $w$ is approximately constant and the dynamical PDEs are dominated by effective ODEs, which define the AVTD regime.

Recent work has led to substantial progress regarding the nonlinear past stability of FLRW solutions to the Einstein--scalar field equations and their AVTD big bang asymptotics, beginning with \cite{rodnianski2014, rodnianski2018}. The analysis has been subsequently extended in \cite{fournodavlos2020b}, where AVTD big bang stability had been established for nonlinear perturbations of the subcritical Kasner family. Background-independent AVTD big bang formation for a broad class of Einstein--scalar field models was then obtained in the series of works \cite{ringstrom2021,ringstrom2021a,ringström2022,ringström2022a,franco-grisales2026,franco-grisales2024}, especially in \cite{groeniger2023} making use of the ideas in \cite{fournodavlos2020b}.
In a complementary line of research, spatial localisation techniques were developed for the big bang stability problem in \cite{BeyerOliynyk:2021,BeyerOliynykZheng:2025,zheng2026,franco-grisales2026,athanasiou2024}. 
For completeness, we note that AVTD behaviour is not restricted to scalar-field models. Even in the absence of a scalar field, it has been found to occur, particularly within certain symmetry classes of vacuum spacetimes \cite{ames2021, ames2021a,chruscielStrongCosmicCensorship1990,isenberg1990,fournodavlos2020b,ringstrom2009a, choquet-bruhat2004,ames2019,ames2013a,andersson2001,BeyerLeLoch:2017,choquet-bruhat2006,klinger2015,clausen2007,damour2002,fournodavlos2020,heinzle2012,isenberg1999,isenberg2002,kichenassamy1998,stahl2002,athanasiou2024,franco-grisales2024, dong2026}, and it can also occur in the presence of alternative matter fields \cite{an2025a, beyerStabilityFLRWSolutions2023, fajman2025a}.

The works cited above concerning the Einstein--scalar field equations treat either the \emph{free massless scalar field} case, that is, $V_0=0$ in \eqref{eq:exponentialpotential}, or the case $V_0\not=0$ in which the (steepness) parameter $s$ in the potential \eqref{eq:exponentialpotential} satisfies 
\begin{equation*}
s<s_c:=\sqrt{\frac{8(n-1)}{n-2}}.
\end{equation*}
In this regime, scalar field dynamics near the big bang singularity are characterised by the dominance of the kinetic energy over the potential energy. As a consequence, the effective equation of state parameter $w$ in \eqref{eq:effectiveequationofstate} approaches $w=1$ near the singularity, and the scalar field asymptotically behaves like an irrotational stiff fluid. 

Heuristically, in evolving backward in time for this regime, the scalar field ``rolls up'' the exponentially growing potential barrier\footnote{For definiteness, we assume that $V_0$ is positive, that $0<s<s_c$, and that $\phi$ tends to negative infinity.}. More specifically, the potential is not ``too steep,'' and the gravitational effects are sufficiently strong to drive the scalar field uphill while its kinetic energy continues to dominate the potential energy. We refer to the parameter range {$s<s_c$} as the \emph{Kasner regime}. As discussed in detail in Section~\ref{sec:FLRWEinsteinSF}, this terminology reflects the fact that, in this regime, the spacetime geometry is generically anisotropic and exhibits pointwise Kasner-like behaviour near the big bang.

On the other hand, if $s>s_c$, then the potential is too steep for the characteristic Kasner dynamics described above to persist, and one enters the \emph{ekpyrotic regime}. In this article, we investigate the behaviour of solutions to the Einstein--scalar field system near big bang singularities in this regime, beginning in Section~\ref{sec:FLRWEinsteinSF} with spatially isotropic and homogeneous FLRW solutions. Historically, ekpyrotic spatially homogeneous Einstein--scalar field solutions were first proposed in \cite{khoury2001} as an alternative to the inflationary paradigm and as a mechanism to address fundamental problems in standard cosmology, including the flatness and horizon problems. A striking feature of ekpyrotic dynamics was identified in \cite{erickson2004} and was referred to as the ``cosmic no-hair for contracting universes,'' in analogy with the classical cosmic no-hair results for expanding spacetimes \cite{waldAsymptoticBehaviorHomogeneous1983}. In contrast to the Kasner regime, the ekpyrotic setting provides a dynamical mechanism that suppresses anisotropies.

The analysis of \cite{erickson2004} was subsequently placed within a dynamical-systems framework in \cite{alho2022,lidsey2006}, and ekpyrotic solutions without symmetry assumptions were investigated numerically in \cite{garfinkle2008}. These studies indicate that ekpyrotic solutions typically enter an AVTD regime near the big bang, as in the Kasner case, but with the effective equation of state parameter $w$ approaching  a constant strictly greater than one, rather than tending to $w=1$ as in the Kasner regime. In light of \eqref{eq:effectiveequationofstate}, this behaviour reflects the fact that the scalar field potential and kinetic energies remain comparable in magnitude near the singularity in these ekpyrotic solutions.

From this perspective, ekpyrotic solutions are expected to resemble, near the big bang, barotropic irrotational fluids with a speed of sound exceeding the speed of light and the speed of gravitational radiation. A broad class of analytic solutions to the Einstein–fluid equations with superluminal sound speeds was constructed in \cite{heinzle2012} by solving a singular initial value problem where asymptotic data is prescribed at the big bang singularity. Phenomena analogous to those observed in numerical studies of ekpyrotic solutions were found. The fluids in such solutions are referred to as \textit{ultrastiff fluids}.

The primary aim of this article is to mathematically establish the past nonlinear stability of spatially flat
FLRW Einstein-scalar field spacetimes and their big bang singularities 
over the full ekpyrotic parameter range  $s>s_c$. 
Our main results are presented in Theorem \ref{glob-stab-thm}.

\subsection{Spatially flat FLRW Einstein-scalar field spacetimes}
\label{sec:FLRWEinsteinSF}
The class of the FLRW Einstein-scalar field spacetimes consists of spatially homogeneous and isotropic solutions to the Einstein--scalar field equations \eqref{ESF.1}--\eqref{ESF.2}. In the present work, we restrict our attention to the FLRW models with spatially flat hypersurfaces and we analyse both the Kasner and the ekpyrotic regimes, emphasising the qualitative differences between them. As noted above, the primary objective of this article is to prove the nonlinear past stability of the spatially flat ekpyrotic-FLRW spacetimes.

\subsubsection{Hubble normalised variables}
To analyse the spatially flat FLRW spacetimes, we adopt the dynamical-systems approach of \cite{beyer2013}; see also \cite{alho2022} for a more general dynamical-systems treatment. We employ the  \emph{Hubble-normalised}, symmetry-adapted physical orthonormal frame variables and  we define the \emph{Hubble time} $\tilde t$ in terms of the Gaussian time coordinate $\tb$ by\footnote{We use the standard notation $\dot{}=d/d\tb$ and $'=d/d{\tilde t}$.}
\begin{equation}
  \label{eq:defHubbletime}
  \frac{d{\tilde t}}{d\tb}={\bar\Htt},
\end{equation}
where ${\bar\Htt}$ denotes the physical mean curvature of the Gaussian time slices with respect to the future-pointing timelike unit normal $\partial_{\tb}$ (cf.\ Section~\ref{spacetimedecomp}). Since we are focusing on spacetimes that collapse toward the past, we restrict attention to the case of strictly positive ${\bar\Htt}$.

Following\footnote{The definitions of the variables $\xtt$ and $\ytt$ in \eqref{eq:defxy} and their evolution equations \eqref{eq:FLRWevoleq.11}--\eqref{eq:FLRWq} differ from those in \cite{beyer2013} for three reasons: first, we allow the potential parameter $V_0$ to take positive, zero, or negative values (whereas \cite{beyer2013} considered only positive potentials); second, the energy--momentum tensor of the scalar field derived from \eqref{ESF.1} differs by a factor of $2$ due to differing conventions; and third, we work in arbitrary spatial dimension $n>2$, rather than restricting to the case $n=4$.} \cite{beyer2013}, we find it useful to introduce the dimensionless variables $\xtt$ and $\ytt$ via
\begin{equation}
  \label{eq:defxy}
  \xtt:=\sqrt{\frac{2}{(n-1)(n-2)}}\,\frac{\dot\phi}{{\bar\Htt}},\quad 
  \ytt:=\frac{4}{(n-1)(n-2)}\frac{V(\phi)}{{\bar\Htt}^2}. 
\end{equation}
The variables $\xtt$ and $\ytt$ can be interpreted as a scale invariant scalar field kinetic energy and a scale invariant scalar field potential energy, respectively. In terms of these variables, the effective equation of state parameter \eqref{eq:effectiveequationofstate} can be expressed as
\begin{equation}
  \label{eq:effectivew}
  w=\frac{\frac 12\dot\phi^2-V(\phi)}{\frac 12\dot\phi^2+V(\phi)}
  =\frac{\xtt^2 -\ytt}{\xtt^2 +\ytt}.
\end{equation}

It is straightforward to verify that the spatially flat FLRW Einstein--scalar field equations with the potential \eqref{eq:exponentialpotential} can be separated into the following \textit{evolution equations}
\begin{align}
  \label{eq:FLRWevoleq.11}
  \xtt'&=\xtt(q-n+2)+\frac{\sqrt{(n-1)(n-2)}}{2\sqrt2}s \ytt,\\
  \label{eq:FLRWevoleq.12}
  \ytt'&=-\sqrt{\frac{(n-1)(n-2)}{2}}s \xtt \ytt+2(1+q)\ytt,\\ 
  \label{eq:FLRWevoleq.13}
  {\bar\Htt}'&=-(1+q) {\bar\Htt},
\end{align}
and the \textit{Hamiltonian constraint equation}
\begin{equation}
  \label{eq:FLRWconstr}
  \xtt^2+\ytt=1,
\end{equation}
where
\begin{equation}
  \label{eq:FLRWq}
  q=(n-2)\xtt^2-\ytt
\end{equation}
defines the \textit{deceleration scalar}. The momentum constraint is satisfied trivially. In deriving these equations, we have used the fact that the scalar field potential enters only through the ratio $V'(\phi)/V(\phi)$, which, for exponential potentials \eqref{eq:exponentialpotential} with $V_0\not=0$, is equal to the constant $-s$. The system above remains valid in the case $V_0=0$, since in that case $\ytt$ vanishes identically as a consequence of its definition \eqref{eq:defxy}.

We observe that \eqref{eq:FLRWevoleq.13} decouples from the remaining evolution equations, and that the constraint \eqref{eq:FLRWconstr} allows us to determine the variable $\ytt$ algebraically from the variable $\xtt$, thereby eliminating the evolution equation \eqref{eq:FLRWevoleq.12}. The essential dynamics of the spatially flat FLRW Einstein--scalar field system is thus captured by the one-dimensional autonomous equation
\begin{equation}
  \label{eq:FLRWevoleq.21}
  \xtt'=(n-1)\Bigl(\frac{s}{s_c}-\xtt\Bigr)\Bigl(1-\xtt^2\Bigr),
\end{equation}
where the constant
\begin{equation}
  \label{sc-def}
  s_{c} = \sqrt{\frac{8(n-1)}{n-2}}
\end{equation}
plays a central role in the present work.

Once a solution $\xtt$ of \eqref{eq:FLRWevoleq.21} has been determined,  $\ytt$ is determined by
\begin{equation}
  \label{eq:FLRWevoleq.22}
  \ytt=1-\xtt^2,
\end{equation}
and ${\bar\Htt}$ is obtained by integrating \eqref{eq:FLRWevoleq.13}. From \eqref{eq:FLRWq}, we note that the deceleration scalar is given by
\begin{equation}
  \label{eq:FLRWq.2}
  q=(n-1)\xtt^2-1,
\end{equation}
while the effective equation of state parameter satisfies
\begin{equation}
  \label{eq:effectivew.2}
  w=2\xtt^2-1,
\end{equation}
cf.\ \eqref{eq:effectivew}.

\subsubsection{Fixed point solutions}
\label{sec:FLRWfpsol}
The fixed points of the ODE \eqref{eq:FLRWevoleq.21} play a crucial role in determining the behaviour of the FLRW solutions. In general, there are three fixed points:
\begin{equation}
  \label{eq:FLRWFP}
  \xtt_1=\frac{s}{s_c},\quad \xtt_2=1,\quad \xtt_3=-1.
\end{equation}
Each of these fixed points corresponds to a distinct physical regime and each determines the asymptotic behaviour of the FLRW solutions.
From \eqref{eq:FLRWevoleq.22}, it follows that
\begin{equation}
  \label{eq:yfp}
  \ytt_1=1-\frac{s^2}{s_c^2},\quad \ytt_2=0,\quad \ytt_3=0.
\end{equation}
In view of \eqref{eq:defxy} and \eqref{eq:exponentialpotential} together with \eqref{eq:V0normalisation}, the fixed points $\xtt_2$ and $\xtt_3$ correspond to the case $V_0=0$, whereas $\xtt_1$ corresponds to $V_0\not=0$ (unless $s^2=s_c^2$). As we see below, the fixed points $\xtt_2$ and $\xtt_3$ describe the asymptotic behaviour of solutions in the \emph{Kasner regime}\footnote{The class of Kasner solutions to the Einstein--scalar field equations has been studied extensively; see, for example, \cite{rodnianski2018, rodnianski2014}. In general, these solutions are anisotropic, while $\xtt_2$ and $\xtt_3$ correspond to the special isotropic cases.} in which the scalar field potential becomes asymptotically negligible compared to the kinetic energy (so that $w$ approaches $1$ in view of \eqref{eq:effectivew}). In contrast, for solutions converging to $\xtt_1$, the potential energy remains dynamically significant.

As we show below, the fixed point $\xtt_1$ is past stable for \eqref{eq:FLRWevoleq.21} if and only if $s^2>s_c^2$. In this case, $\ytt_1<0$, and hence $V_0<0$. For solutions asymptotic to $\xtt_1$, the effective equation of state parameter $w$ in \eqref{eq:effectivew} approaches the value $(2s^2-s_c^2)/s_c^2$, cf.\ \eqref{eq:effectivew.2}, which exceeds $1$ if $s^2>s_c^2$. Such solutions are sometimes referred to as \emph{ultrastiff} and fall within the \textit{ekpyrotic regime}.

We now derive the physical FLRW metric $\gb_{ij}$ and the scalar field $\phi$ corresponding to the three fixed point solutions  $\xtt_i$ listed in \eqref{eq:FLRWFP}. In the following, we assume that $\xtt_i\not=0$ (that is, we exclude the special case in which $i=1$ and $s=0$). We have already seen that \eqref{eq:yfp} is a consequence of \eqref{eq:FLRWevoleq.22}. 
Exploiting \eqref{eq:defHubbletime}, \eqref{eq:FLRWevoleq.13}, \eqref{eq:FLRWconstr} and \eqref{eq:FLRWq},  we find that 
\begin{equation}
  {\bar\Htt}^{-2}\dot {\bar\Htt}=-(n-1)\xtt_i^2,
\end{equation}
for each $i=1,2,3$. The particular solution for which ${\bar\Htt}$ diverges at $\tb=0$ is given by
\begin{equation}
  \label{eq:FLRWFPsol.01}
  {\bar\Htt}=\frac1{(n-1)\xtt_i^2\tb}.
\end{equation}
From this, we find that the corresponding spatially flat FLRW metrics can by expressed in local coordinates as
\begin{equation}
  \label{eq:FLRWFPsol.11}
  \bar g=-d\tb\otimes d\tb+{\tb}^{2/((n-1)\xtt_i^2)}\sum_{\Lambda=1}^{n-1}dx^\Lambda\otimes dx^\Lambda.
\end{equation}

Next, we consider the definition  \eqref{eq:defxy} of $\xtt$ as a differential equation for $\phi(\tb)$ for any of the three fixed point solutions $\xtt_i$:
\begin{equation*}
  \dot\phi=\sqrt{\frac{(n-1)(n-2)}{2}} \xtt_i {\bar\Htt} =  \frac2{s_c \xtt_i\tb}.
\end{equation*}
Integrating this equation yields the general scalar field solution
\begin{equation}
  \label{eq:FLRWFPsol.21}
  \phi=\frac2{s_c \xtt_i}\log\tb+\phi_*,
\end{equation}
in which $\phi_*$ is an arbitrary constant.

The constraint equation \eqref{eq:FLRWconstr} takes the form
\begin{equation}
  \label{eq:FLRWFPConstr}
  1=\xtt_i^2+\frac 12e^{-s\phi_*}\xtt_i^4s_c^2 V_0 {\tb}^{2(1-\frac{s}{s_c \xtt_i})},
\end{equation}
which is required to hold for all $\tb>0$. Clearly, for $i=2,3$, see \eqref{eq:FLRWFP}, this requires that $V_0=0$. For $i=1$, however, the $\tb$-dependence in \eqref{eq:FLRWFPConstr} drops out and we find, provided $s_c^2\not= s^2$, that
\begin{equation}
  \label{eq:FLRWFPConstr.ekp}
  {e^{s\phi_*}}=-\frac 12V_0 \frac{s^4}{s^2-s_c^2}.
\end{equation}
Hence, given \eqref{eq:V0normalisation}, it follows that $V_0=-1$ if $s^2>s_c^2$ and $V_0=1$ if $s^2<s_c^2$. In the special case $s^2=s_c^2$ and $i=1$, it is straightforward to see that \eqref{eq:FLRWFPConstr} implies $V_0=0$.

In the following, we refer to the particular solutions \eqref{eq:FLRWFPsol.11} and \eqref{eq:FLRWFPsol.21} corresponding to $\xtt_{2,3}=\pm 1$ and $V_0=0$ of the spatially flat FLRW Einstein--scalar field system with the potential \eqref{eq:exponentialpotential} as the \emph{positive Kasner--FLRW} and the \emph{negative Kasner--FLRW} solution, respectively. 
On the other hand, we refer to the particular solution $\xtt_1=s/s_c$ with $V_0=-1$ and $\phi_*$ determined by \eqref{eq:FLRWFPConstr.ekp} as the \emph{positive ekpyrotic--FLRW} or the \emph{negative ekpyrotic--FLRW} solution, according to whether $s$ is positive or negative\footnote{As discussed in Section~\ref{sec:FLRWstabilityanalsys}, the case $s=0$ is always past unstable and is therefore be excluded from consideration.}; see also Table~\ref{tbl:FLRW}. Here, the qualifier \emph{positive/negative} corresponds to whether $\xtt$ is positive/negative. In the positive case, the scalar field \emph{decreases monotonically toward the past}, whereas in the negative case it \emph{increases monotonically toward the past}, cf.\ \eqref{eq:FLRWFPsol.21}. For the remainder of this article, we restrict attention, without loss of generality, to the positive case. Indeed, if $x$ is a solution to \eqref{eq:FLRWevoleq.21} with the parameters $s$ and $V_0$, then $-x$ solves \eqref{eq:FLRWevoleq.21}  with parameters $-s$ and the same $V_0$.

We conclude this subsection by highlighting several interesting properties.
First, it follows from \eqref{eq:effectivew.2} that the effective equation of state parameter $w$ satisfies $w=1$ in the Kasner setting, whereas in the ekpyrotic setting, it is given by
\begin{equation}
  \label{eq:FLRWepyrotic.w}
  w=\frac{2s^2-s_c^2}{s_c^2}>1.
\end{equation}

Second, we emphasise that the \emph{Kasner exponents} (that is, the eigenvalues of the scale-invariant physical Weingarten map) are $p_I=1/(n-1)$ for all $I=1,\ldots,n-1$, as a consequence of isotropy, and this holds in \emph{both} the Kasner and the ekpyrotic regimes. In contrast, the deceleration scalar $q$ differs in the two regimes: in the Kasner regime one has $q=n-2$, while in the ekpyrotic regime,
\begin{equation*}
  q=(n-1)\frac{s^2}{s_c^2}-1
  =n-2+(n-1)\frac{w-1}{2}>n-2,
\end{equation*}
for $w>1$, in view of \eqref{eq:FLRWq.2} and \eqref{eq:FLRWepyrotic.w}.

Third, we consider the scalar field quantities $\phi_0$ and $\phi_1$, originally introduced in \cite[Definition~7 and Example~8]{groeniger2023} and adapted here to our more general setting:
\begin{equation}
  \label{eq:Ringstromlimits}
  \phi_1=\frac{\dot\phi}{(n-1){\bar\Htt}},\quad 
  \phi_0=\phi+\xtt_i^{-2}\phi_1\log((n-1){\bar\Htt}).
\end{equation}
In the Kasner setting where $\xtt_i=\pm 1$ and our definition coincides with that of \cite{groeniger2023}, it was shown in \cite{franco-grisales2024, groeniger2023, ringström2022, ringström2022a, ringström2025} that these quantities converge as $t\searrow 0$ to big bang \emph{asymptotic data} that characterises the leading order behaviour of the scalar field near the big bang.

In the present setting, it follows from \eqref{eq:defxy} and \eqref{sc-def} that
\begin{equation*}
  \phi_1=\frac{2}{s_c}\xtt_i,
\end{equation*}
which yields $\phi_1=\pm \frac{2}{s_c}$ at the Kasner-FLRW fixed points $\xtt_2$ and $\xtt_3$ (recall \eqref{sc-def}), and
\begin{equation}
  \label{eq:FLRWPhiOekpy}
  \phi_1=\frac{2 s}{s_c^2}
\end{equation}
at the ekpyrotic-FLRW fixed point $\xtt_1$. Using \eqref{eq:FLRWFPsol.01} and \eqref{eq:FLRWFPsol.21}, we further obtain
\begin{equation}
  \label{eq:FLRWphi0}
  \phi_0=\phi_*-\frac{2}{s_c\xtt_i}\log(\xtt_i^2).
\end{equation}

In summary, both $\phi_0$ and $\phi_1$ converge to finite limits as $t\searrow 0$ in both the Kasner and the ekpyrotic regimes, and they may therefore be interpreted as asymptotic invariants that characterise the leading order behaviour of the scalar field $\phi$ near the big bang.

\subsubsection{Stability analysis for the class of FLRW Einstein-scalar field solutions}
\label{sec:FLRWstabilityanalsys}
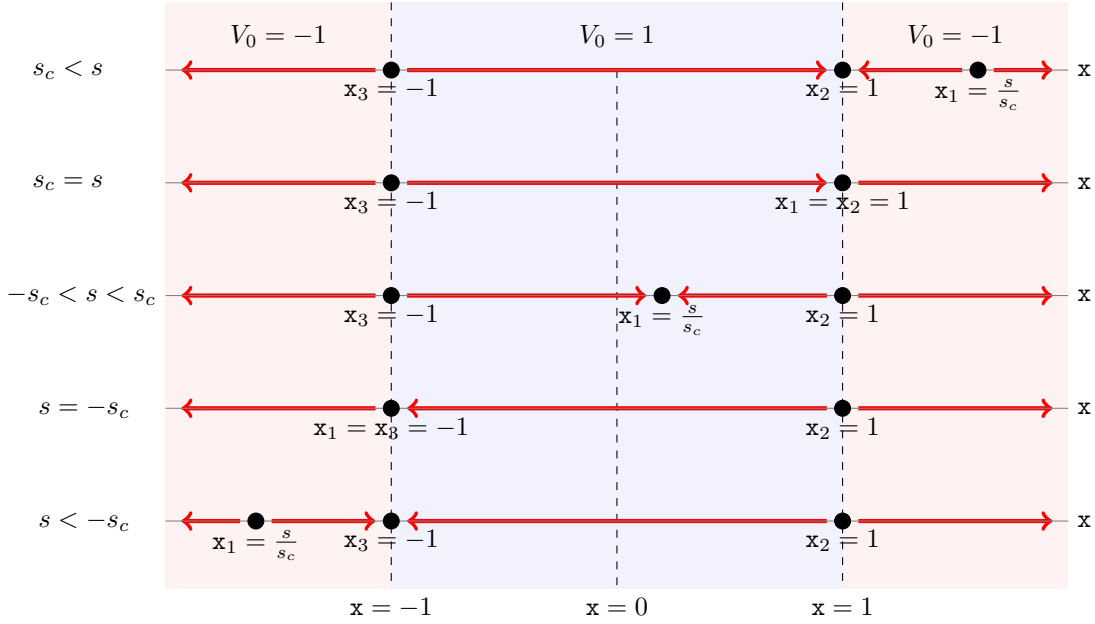
\begin{figure}[t]
  \centering
  \begin{tikzpicture}[scale=3]
    \def\h{1.3}
    \def\th{1.15}
    \draw [fill=red!05,draw=none] (-2,-\h) rectangle (-1,\h);
    \node[draw=none] at (-1.5,\th) {$V_0=-1$};
    \draw [fill=red!05,draw=none] (1,-\h) rectangle (2,\h);
    \node[draw=none] at (1.5,\th) {$V_0=-1$};
    \draw [fill=blue!05,draw=none] (-1,-\h) rectangle (1,\h);
    \node[draw=none] at (0,\th) {$V_0=1$};
    \def\ha{1.0}
    \def\hab{0.5}
    \def\hb{0.0}
    \def\hbc{-0.5}
    \def\hc{-1.0}
    \draw [dashed] (-1,\h) --(-1,-\h) node[below] {$\xtt=-1$};
    \draw [dashed] (0,\ha) --(0,-\h) node[below] {$\xtt=0$};
    \draw [dashed] (1,\h) --(1,-\h) node[below] {$\xtt=1$};

    \def\fpA{1.6}
    \def\fpB{0.2}
    \def\fpC{-1.6}
    \def\gap{0.07}
    \draw[<-] [red] [ultra thick] (-2+\gap,\ha) --(-1-\gap,\ha);
    \draw[->] [red] [ultra thick] (-1+\gap,\ha) --(1-\gap,\ha);
    \draw[<-] [red] [ultra thick] (1+\gap,\ha) --(\fpA-\gap,\ha);
    \draw[->] [red] [ultra thick] (\fpA+\gap,\ha) --(2-\gap,\ha);
    \draw[<-] [red] [ultra thick] (-2+\gap,\hab) --(-1-\gap,\hab);
    \draw[->] [red] [ultra thick] (-1+\gap,\hab) --(1-\gap,\hab);
    %\draw[<-] [red] [ultra thick] (1+\gap,\hab) --(\fpA-\gap,\hab);
    \draw[->] [red] [ultra thick] (1+\gap,\hab) --(2-\gap,\hab);
    \draw[<-] [red] [ultra thick] (-2+\gap,\hb) --(-1-\gap,\hb);
    \draw[->] [red] [ultra thick] (-1+\gap,\hb) --(\fpB-\gap,\hb);
    \draw[<-] [red] [ultra thick] (\fpB+\gap,\hb) --(1-\gap,\hb);
    \draw[->] [red] [ultra thick] (1+\gap,\hb) --(2-\gap,\hb);
    \draw[<-] [red] [ultra thick] (-2+\gap,\hbc) --(-1-\gap,\hbc);
    %\draw[->] [red] [ultra thick] (\fpC+\gap,\hbc) --(-1-\gap,\hbc);
    \draw[<-] [red] [ultra thick] (-1+\gap,\hbc) --(1-\gap,\hbc);
    \draw[->] [red] [ultra thick] (1+\gap,\hbc) --(2-\gap,\hbc);
    \draw[<-] [red] [ultra thick] (-2+\gap,\hc) --(\fpC-\gap,\hc);
    \draw[->] [red] [ultra thick] (\fpC+\gap,\hc) --(-1-\gap,\hc);
    \draw[<-] [red] [ultra thick] (-1+\gap,\hc) --(1-\gap,\hc);
    \draw[->] [red] [ultra thick] (1+\gap,\hc) --(2-\gap,\hc);
    \draw [gray] (-2,\ha) node [left] [black] {$s_c<s\qquad$}--(2,\ha) node [right] [black] {$\xtt$};
    \draw [gray] (-2,\hab) node [left] [black] {$s_c=s\qquad$}--(2,\hab) node [right] [black] {$\xtt$};
    \draw [gray] (-2,\hb) node [left] [black] {$-s_c<s<s_c$}--(2,\hb) node [right] [black] {$\xtt$};
    \draw [gray] (-2,\hbc) node [left] [black] {$s=-s_c\quad$}--(2,\hbc) node [right] [black] {$\xtt$};
    \draw [gray] (-2,\hc) node [left] [black] {$s<-s_c\quad$}--(2,\hc) node [right] [black] {$\xtt$};
    \def\fpth{1.0}
    \filldraw [black] (\fpA,\ha) circle (\fpth pt) node[below] {$\xtt_1=\frac{s}{s_c}$};
    \filldraw [black] (1,\ha) circle (\fpth pt) node[below] {$\xtt_2=1$};
    \filldraw [black] (-1,\ha) circle (\fpth pt) node[below] {$\xtt_3=-1$};
    \filldraw [black] (1,\hab) circle (\fpth pt) node[below] {$\xtt_1=\xtt_2=1$};
    \filldraw [black] (-1,\hab) circle (\fpth pt) node[below] {$\xtt_3=-1$};
    \filldraw [black] (\fpB,\hb) circle (\fpth pt) node[below] {$\xtt_1=\frac{s}{s_c}$};
    \filldraw [black] (1,\hb) circle (\fpth pt) node[below] {$\xtt_2=1$};
    \filldraw [black] (-1,\hb) circle (\fpth pt) node[below] {$\xtt_3=-1$};
    \filldraw [black] (1,\hbc) circle (\fpth pt) node[below] {$\xtt_2=1$};
    \filldraw [black] (-1,\hbc) circle (\fpth pt) node[below] {$\xtt_1=\xtt_3=-1$};
    \filldraw [black] (\fpC,\hc) circle (\fpth pt) node[below] {$\xtt_1=\frac{s}{s_c}$};
    \filldraw [black] (1,\hc) circle (\fpth pt) node[below] {$\xtt_2=1$};
    \filldraw [black] (-1,\hc) circle (\fpth pt) node[below] {$\xtt_3=-1$};
\end{tikzpicture}  
  \caption{The spatially flat FLRW solution state space of the Einstein-scalar field equation with the potential \eqref{eq:exponentialpotential} (assuming the normalisation \eqref{eq:V0normalisation}) represented by the variable $\xtt$ defined in \eqref{eq:defxy} with the fixed points $\xtt_1$, $\xtt_2$ and $\xtt_3$ given by \eqref{eq:FLRWFP} where the constant $s_c$ is given by \eqref{sc-def}. The arrows in this diagram indicate the direction of the \emph{future} flow (that is, the arrow points to the right (left, respectively) if the right-hand side of \eqref{eq:FLRWevoleq.21} is positive (negative, respectively)).}
  \label{fig:FlatFLRWStatespace}
\end{figure} 
To prepare for our general stability analysis,
we provide a brief past stability analysis of the one-dimensional dynamical system \eqref{eq:FLRWevoleq.21}. The elementary conclusions are summarised in Figure~\ref{fig:FlatFLRWStatespace} and in Table~\ref{tbl:FLRW}. 

Figure~\ref{fig:FlatFLRWStatespace} depicts the one-dimensional state space of the FLRW solutions of the Einstein-scalar field equations for the  different ranges of the scalar field potential parameters $s$ and $V_0$, and the corresponding positions of the three fixed points $\xtt_1$, $\xtt_2$ and $\xtt_3$. 
We recall from \eqref{eq:defxy} and \eqref{eq:exponentialpotential} that the sign of $\ytt$ must agree with the sign of $V_0$. 
If $V_0\ge 0$ the constraint \eqref{eq:FLRWconstr} implies the bound $x^2\le 1$. Consequently, the state space of \eqref{eq:FLRWevoleq.21} is compact and coincides with the blue shaded part of Figure~\ref{fig:FlatFLRWStatespace}. On the other hand, the constraint \eqref{eq:FLRWconstr} implies that $x^2>1$ if $V_0<0$, which is the red shaded part of Figure~\ref{fig:FlatFLRWStatespace}. It is evident from Figure~\ref{fig:FlatFLRWStatespace} that the value of $s$ and the sign of $V_0$ together determine which of the fixed points $\xtt_1$, $\xtt_2$ or $\xtt_3$ are past attractors for the class of FLRW solutions; cf.\ also Table~\ref{tbl:FLRW}.
\begin{table}[h]
  \centering
  \begin{tabular}{|l|c|c|c|c|c|}\hline
    Regime & past stable attractor & $V_0$ & $s$ & limit of $w$ & limit of $q$\\\hline\hline
    positive ekpyrotic & $\xtt_1$ & $-1$ & $(s_c,\infty)$ & $(2s^2-s_c^2)/s_c^2\in (1,\infty)$ & $(n-1)\frac{s^2}{s_c^2}-1$\\
    positive critical & $\xtt_1=\xtt_2$ & $-1,0$ & $s_c$ & $1$ & $n-2$ \\
    positive Kasner & $\xtt_2$ & $-1,0,1$ & $(-\infty,s_c)$ & $1$ &$n-2$\\
    negative Kasner & $\xtt_3$ & $-1,0,1$ & $(-s_c,\infty)$ & $1$ & $n-2$\\
    negative critical & $\xtt_1=\xtt_3$ & $-1,0$ & $-s_c$ & $1$ &  $n-2$\\
    negative ekpyrotic & $\xtt_1$ & $-1$ & $(-\infty,-s_c)$ &  $(2s^2-s_c^2)/s_c^2\in (1,\infty)$& $(n-1)\frac{s^2}{s_c^2}-1$\\\hline
  \end{tabular}
  \caption{Past dynamics for the class of spatially flat FLRW solutions of the Einstein-scalar field equations with exponential potential assuming the normalisation \eqref{eq:V0normalisation}.}
  \label{tbl:FLRW}
\end{table}

For the remainder of this article, we restrict our attention to analysing small perturbations of the  positive ekpyrotic-FLRW solution. No generality is loss in doing so because perturbations of the negative ekpyrotic-FLRW solution can be obtained from perturbations of the positive  ekpyrotic-FLRW solution via the transformation $(s,\phi)\mapsto (-s,-\phi)$ where the metric $\gb$ and the potential parameter $V_0$ remain fixed. 

The past nonlinear stability of the positive and the negative Kasner-FLRW solutions and their big bang singularities follow directly from the stability results established in \cite{groeniger2023}. Finally, it is worth mentioning that the past stability of the positive and the negative critical-FLRW solutions, cf.\ Table~\ref{tbl:FLRW}, remains unresolved.

\subsection{An informal statement of the main results}
%\label{sec:informal}
The main result of this article, Theorem~\ref{glob-stab-thm}, establishes nonlinear stability in the contracting time direction for perturbations of the positive ekpyrotic-FLRW solution \eqref{eq:FLRWFPsol.11}--\eqref{eq:FLRWFPsol.21} with $\xtt_i=s/s_c$ and $s>s_c$ to the Einstein--scalar field equations \eqref{ESF.1}--\eqref{ESF.2} with the potential \eqref{eq:exponentialpotential} and $V_0=-1$, in spacetime dimensions $n\geq 3$. The proof of Theorem~\ref{glob-stab-thm} involves solving the Cauchy problem arising from \emph{synchronised} initial data; that is, initial data prescribed on the hypersurface 
\begin{equation*}
    \Sigma_{t_0}=\{t_0\}\times \Tbb^{n-1},
\end{equation*} for some $t_0>0$, where the scalar field is initially constant along $\Sigma_{t_0}$. It follows from the results of Section~\ref{temp-synch} that this restriction entails no loss of generality. The synchronisation condition permits the introduction of a gauge that preserves this property and allows us to use the scalar field $\phi$ as a time function
\[
\tau=e^{\frac{(n-1)s^2-s_c^2}{2(n-1)s}\phi}.
\]
The purpose of this synchronisation is to ensure that the big bang singularity occurs simultaneously at $\tau=0$.

We now present an informal version of our main result, which is Theorem~\ref{glob-stab-thm} below.

\begin{thm}[Past global stability of the positive ekpyrotic-FLRW spacetime] \label{thm:informal}
The positive ekpyrotic-FLRW solution of the Einstein-scalar field equations is nonlinearly stable in the contracting direction in the sense that solutions $\{\gb_{ij},\phi\}$ of the Einstein-scalar field equations that are 
evolved from sufficiently differentiable, synchronized initial data imposed on $\Sigma_{t_0}$ that is suitably close to initial data of the positive ekpyrotic-FLRW solution exist on the spacetime region $M \cong \bigcup_{t\in (0,t_0]}\tau^{-1}(\{t\})\cong (0,t_0]\times \Tbb^{n-1}$. The perturbed solutions are $C^2$-inextendible through the $\tau\!=\!0$-boundary of $M$, past timelike geodesically incomplete and they terminate at quiescent, crushing AVTD big bang singularities, which are
located at $\tau\!=\!0$ and are characterised by curvature blow up.
 
\smallskip
 
\noindent Moreover, using\footnote{The induced spatial metric $\bar{\gtt}_{ij}$, acceleration $\pacc_i$, and extrinsic curvature $\bar{\Ktt}_{ij}$ associated to the $\tau\!=\!\text{constant}$ foliation are defined in Appendix \ref{spacetimedecomp}; see in particular \eqref{eq:trafounitnormal.a}, \eqref{eq:pspatmetrdef} and \eqref{defphysextr}.} $\bar{\gtt}_{ij}$, $\nb_i$, $\pacc_i$, and $\bar{\Ktt}_{ij}$ to denote the induced spatial metric, normal vector, acceleration and extrinsic curvature associated to the $\tau\!=\!\text{constant}$ foliation, and using  $\bar{\Ktt}_{ij}=\sigmab_{ij}+\bar{\Htt}\bar{\gtt}_{ij}$ to denote the decomposition of the extrinsic curvature into the shear $\sigmab_{ij}$ and the Hubble scalar (mean curvature) $\bar{\Htt}$, these solutions
\begin{enumerate}[(a)]
\item isotropise in the sense that 
\begin{equation*}
\lim_{\tau\searrow 0} \frac{\sigmab_{ij}\sigmab^{ij}}{\bar{\Htt}^2}=0, \quad \lim_{\tau\searrow 0}\frac{\pacc^i\pacc_i}{\bar{\Htt}^2} = 0, \quad
\lim_{\tau\searrow 0} \frac{\bar{\Upsilon}_i\bar{\Upsilon}^i}{\bar{\Htt}^4} =0
\AND \lim_{\tau\searrow 0} \frac{\bar{\Xi}_i{}^j\bar{\Xi}^i{}_j}{\bar{\Htt}^4} =0
\end{equation*}
where $\bar{\Upsilon}_i=\Rb_j{}^k\bar{\gtt}_i{}^j\nb_k$, $\bar{\Xi}_i{}^j=\Rb_k{}^l\bar{\gtt}_i{}^k\bar{\gtt}_k{}^j-\frac{1}{n-1}\Rb_k{}^l\bar{\gtt}_k{}^l\bar{\gtt}_i{}^j$, and
$\Rb_{ij}$ is the Ricci curvature tensor of $\gb_{ij}$, 
\item and their spacetime curvature is Ricci dominated in the sense that
\begin{equation*}
\lim_{\tau\searrow 0} \frac{\Wb_{ijkl}\Wb^{ijkl}}{\Rb_{ij}\Rb^{ij}} =0 \AND 
\lim_{\tau\searrow 0}\Rb_{ij}\Rb^{ij}=\infty,
\end{equation*}
where $\Wb_{ijkl}$ is the Weyl curvature tensor of $\gb_{ij}$. 
\end{enumerate}
\end{thm}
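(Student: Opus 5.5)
The proof follows the Fuchsian strategy from the earlier big bang stability works of Beyer--Oliynyk and collaborators. \textbf{Step 1 (gauge and time function).} Using the synchronisation of Section~\ref{temp-synch}, I take the scalar field as time, via $\tau=e^{\frac{(n-1)s^2-s_c^2}{2(n-1)s}\phi}$, so that $\Sigma_{t_0}$ is a level set of $\tau$. I then pick an orthonormal frame $e_0=\nb$, $e_I$ adapted to the $\tau$-foliation and propagated by a Fermi--Walker or Lie-type transport condition together with a spatial harmonic-type gauge. I rescale the frame and connection coefficients by the background Hubble scalar, that is, by powers of $\tau$. The variables are the Hubble-normalised shear $\Sigma_{IJ}=\sigmab_{IJ}/\bar{\Htt}$, the acceleration $\pacc_I/\bar{\Htt}$, the lapse perturbation, the frame components, the spatial connection coefficients, and the scalar field quantities analogous to $\xtt,\ytt$. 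With these variables the Einstein--scalar field equations become a first-order symmetric hyperbolic system of the form
\begin{equation*}
B^0\partial_\tau u+B^\Lambda\partial_\Lambda u=\frac1\tau \mathcal{B}\mathbb{P}u+F(\tau,u),
\end{equation*}
where the spatial derivative terms carry a positive power $\tau^{p}$. This power comes from the ekpyrotic rate: $q>n-2$ in the table means spatial gradients become negligible relative to $\bar{\Htt}$.

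\textbf{Step 2 (Fuchsian structure, the main obstacle).} I linearise the ODE part about the fixed point $\xtt_1=s/s_c$. I must verify that $\mathcal{B}$ is positive definite in the contracting direction on the relevant variables, and that the remaining terms are regular in $\tau$. For the scalar variable this is the past stability of $\xtt_1$ for $s>s_c$, read off from \eqref{eq:FLRWevoleq.21}: the linearisation is $(n-1)(1-\xtt_1^2)<0$ in Hubble time. For the shear, the evolution $\Sigma'=-(n-2-q)\Sigma+\dots$ gives decay at rate $q-(n-2)=(n-1)(w-1)/2>0$ toward the past. The delicate part is choosing the rescaling powers for the frame and connection so that every coefficient has the correct sign and the constraints are compatible. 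Here I would add constraint multiples, since in the Kasner regime no such gap exists and it is $w>1$ that supplies it. I also need to check that $\tau^{p}$ dominates the loss from the spatial-derivative terms uniformly in $s>s_c$. I expect this bookkeeping to be the main difficulty, and I would first try weights $\tau^{-\epsilon}$ with small $\epsilon$ determined by $(s^2-s_c^2)$.

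\textbf{Step 3 (global existence and asymptotics).} With the Fuchsian form in hand, I invoke the global existence theorem for Fuchsian symmetric hyperbolic systems. This gives, for small synchronised data in $H^k$, a solution on $(0,t_0]\times\Tbb^{n-1}$ with decay estimates $\|u(\tau)\|_{H^{k-1}}\lesssim \tau^{\delta}$. Here $u$ is the deviation of the normalised variables from their FLRW values. It then follows that $\Sigma\to0$ and $\pacc/\bar{\Htt}\to0$. The ratios $\bar\Upsilon/\bar{\Htt}^2$ and $\bar\Xi/\bar{\Htt}^2$ also tend to zero, because by \eqref{ESF.1} they are given by $\nablab\phi$ cross terms and by spatial curvature, both of which are controlled by $u$ and $\tau^p$. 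This proves (a).

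\textbf{Step 4 (geometric conclusions).} The mean curvature $\bar{\Htt}$ blows up like the background, $\bar{\Htt}\sim \tau^{-c}$ with $c>0$, which makes the singularity crushing. The Ricci contraction $\Rb_{ij}\Rb^{ij}$ then behaves like $C\bar{\Htt}^4$ with $C>0$ fixed by the fixed point, so it blows up. The Weyl tensor can be expressed through the shear, the spatial curvature and the electric/magnetic parts, and is $o(\bar{\Htt}^2)$; this gives (b). The quiescent AVTD behaviour follows from the limits of the normalised variables and from the decay of the spatial-derivative terms. For past timelike geodesic incompleteness, I bound the proper time to $\tau=0$ by $\int\bar{\Htt}^{-1}\,d\tilde t<\infty$ using the lapse estimates. $C^2$-inextendibility follows from the curvature blow-up, because the curvature invariant is unbounded along every past-directed causal curve approaching $\tau=0$.
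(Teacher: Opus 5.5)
\textbf{The gap is in Step 2.} That step is where the real difficulty of the proof lies, and the remedy you propose does not resolve it. The Fuchsian global existence theorem needs two things at once: symmetric principal coefficients, and a singular matrix whose symmetric part dominates $B^0$.

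The paper's formulation already uses exactly the devices you suggest. It has the weights $t^{\ep}$, $t^{1+\ep}$ and $t^{1-\Rtt}$, and the lapse renormalisation $\beta=1+t^{\Rtt-\ep}\betah$. Even after these, the singular matrix $\Ac$ in \eqref{Fuch-ev-A-W} is only block upper triangular with positive diagonal blocks. Its symmetric part could not be shown to be positive.

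A block-diagonal weight $\Asc$ making $\Asc\Ac$ positive does exist, by \cite[Lemma A.1]{BeyerOliynyk:2021}. However, it needs unequal weights on blocks that the principal part $A^K$ couples. The singular couplings run in a cycle through the hyperbolic pairs:
the $\tauac_{Q00}$-equation contains $t^{-1}m_M$; the $m_M$-equation contains $t^{-1}\gac_{QN0M}$, which $A^K$ pairs with $\gac_{Q00M}$; the $\gac_{Q00M}$-equation contains $t^{-1}\gac_{QN00}$, which $A^K$ pairs with $\gac_{Q000}$; and $\gac_{Q000}$ is coupled back to $\tauac_{Q00}$ through $\Ac_{4\,5}$ and $\Ac_{5\,4}$.
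Consequently $\Asc A^K$ cannot be kept symmetric. Moreover, no rescaling of whole variables by constants or by powers of $\tau$ can make every coupling along such a cycle regular, because the exponents gained around a cycle sum to zero. Your variables are different, but nothing in Step 2 shows how they would avoid an analogous coupling.

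The missing idea is the successive spatial differentiation of \cite{BeyerOliynykZheng:2025}:
\begin{itemize}
\item Set $W_\bc=t^{|\bc|\nu}\del{}^\bc W$ with $0<\nu<1$.
\item Treat the equations with $|\bc|<l$ as ODEs. The term $e_K^\Lambda A^K\del{\Lambda}W_\bc$ moves into the source, where it costs only $t^{-\nu}$ times next-order variables. The non-symmetric weight $\Asc$ can therefore be applied to these equations.
\item The top-order equations keep their symmetric principal part and acquire the extra singular term $l\nu A^0$, which dominates $\Ac\Pbb$ once $l$ is large.
\end{itemize}
This is why the theorem requires data in $H^{k+l}$ with $l$ large. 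The paper also notes that the Kasner-regime proofs \cite{fournodavlos2020b,groeniger2023} met the same kind of obstruction, and that \cite{Li:2024} suggests it is intrinsic. It cannot be left as bookkeeping.

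\textbf{Smaller points.}
\begin{itemize}
\item In the $\partial_\tau$ form you display, the spatial-derivative coefficients do not carry a decaying power. To leading order $\tau$ is conformal time for the near-Minkowski conformal metric, so these coefficients are $O(1)$, as in \eqref{Fuch-ev-A-W}. Gradients are suppressed by a factor $\tau$ only relative to Hubble time, and this holds for any $q>0$, not because $q>n-2$.
\item What $q>n-2$ does buy is the shear decay, which you compute correctly; it gives $\bar\Sigma\sim t^{\Rtt}$, matching the paper.
\item The linearisation of \eqref{eq:FLRWevoleq.21} at $\xtt_1$ is $(n-1)(\xtt_1^2-1)>0$ in future Hubble time. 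That positive sign is what makes $\xtt_1$ a past attractor; with your sign it would be a future attractor.
\item In Step 3, $\bar\Upsilon_i$ and $\bar\Xi_i{}^j$ vanish identically, not merely asymptotically, and spatial curvature plays no role. Since $\phi$ is a function of the time function, $\nablab\phi$ is parallel to $\nb$. By \eqref{ESF.1}, the mixed normal--spatial projection of $\Rb_{ij}$ therefore vanishes, and its spatial projection is the pure trace $\frac{4}{n-2}V\bar\gtt_{ij}$.
\end{itemize}
Once the Fuchsian estimates are available, the rest of Steps 3--4 follows the paper's derivation of the geometric conclusions.
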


Motivated by the concept of \textit{isotropic singularities}  \cite{anguige1999,tod1999,goode1985,tod1999a,tod2002}, we find that it is natural to ask whether the \textit{conformal} geometry defined by
\begin{equation*}
  \gb_{ij} = e^{2\Phi}g_{ij}, \quad 
  e^{2\Phi}=e^{2\Phi_0}\tau^{\frac{2(1-\Rtt)}{n-2}},\quad 
  \Phi_0\in\Rbb,
\end{equation*}
which is defined naturally in our framework, admits a meaningful extension through the big bang singularity at $\tau\!=\!0$. A spacetime is defined to possess an isotropic singularity at $\tau\!=\!0$ in the sense of \cite[Def.~1]{goode1985} if the physical geometry is $C^2$-inextendible across the hypersurface $\tau\!=\!0$, whereas the conformal geometry extends through the singularity with sufficiently high regularity. Typically, the regularity of this extension is at least $C^2$, which guarantees that the conformal Weyl tensor extends continuously across the singular hypersurface. In such a situation, the singular behaviour is encoded in the conformal factor. In particular, Ricci dominance, that is,
\begin{equation} \label{Ricci-dominance}
\lim_{\tau\searrow 0} \frac{\Wb_{ijkl}\Wb^{ijkl}}{\Rb_{ij}\Rb^{ij}} =0 \AND
\lim_{\tau\searrow 0}\Rb_{ij}\Rb^{ij}=\infty,
\end{equation}
follows directly from the singular conformal structure \cite[Thm.~3.1]{goode1985}. Furthermore, if the timelike congruence is taken to be the unit normal\footnote{In the terminology of \cite{goode1985}, this ensures that the timelike congurence is regular and orthogonal at $\tau=0$. Also in our setting this choice coincides with taking the timelike congruence to be the one determined by the fluid velocity $u_i$; see \eqref{utt-def}.} $\nb_i$ of the $\tau\!=\!\text{constant}$ foliation, then such solutions isotoprise is the kinematic sense 
\cite[Thm.~3.3]{goode1985}   
\begin{equation}\label{kinematic-isotropisation}
\lim_{\tau\searrow 0} \frac{\sigmab_{ij}\sigmab^{ij}}{\bar{\Htt}^2}=0, \quad \lim_{\tau\searrow 0}\frac{\pacc^i\pacc_i}{\bar{\Htt}^2} = 0, 
\end{equation}
and the Ricci sense \cite[Thm.~3.2]{goode1985}
\begin{equation} \label{Ricci-isotropisation}
\lim_{\tau\searrow 0} \frac{\bar{\Upsilon}_i\bar{\Upsilon}^i}{\bar{\Htt}^4} =0,\quad \lim_{\tau\searrow 0} \frac{\bar{\Xi}_i{}^j\bar{\Xi}^i{}_j}{\bar{\Htt}^4} =0.
\end{equation}

While Theorem~\ref{glob-stab-thm} ensures that the conformal geometry of the perturbed positive ekyprotic-FLRW solution extends at least continuously to a neighbourhood of the big bang singularity at $\tau\!=\!0$, this degree of regularity is insufficient to guarantee that the conformal Weyl tensor $W_{ijkl}$ extends continuously to $\tau\!=\!0$. Consequently, Theorem~\ref{glob-stab-thm} does not yield a definitive answer concerning the behaviour of the conformal Weyl tensor at the singularity. To clarify this point, we use \eqref{eq:Weylconfinv} together with $\eb_{\ib}^\mu=e^{-\Phi} e_i^\mu \delta_{\ib}^i$ to obtain
\begin{equation}
  \label{eq:Weylconf.rel}
  W_{ijkl}
  =e^{2\Phi} \Wb_{\ib\jb{\bar k}{\bar l}}\delta_i^{\ib}\delta_j^{\jb}\delta_k^{\bar k}\delta_l^{\bar l}
  =\tau^{-2}\left(\tau e^{\Phi}\bar\Htt\right)^2
  \frac{\Wb_{\ib\jb{\bar k}{\bar l}}}{\bar\Htt^2}
  \delta_i^{\ib}\delta_j^{\jb}\delta_k^{\bar k}\delta_l^{\bar l}.
\end{equation}
Although Theorem~\ref{glob-stab-thm} implies that $\tau e^{\Phi}\bar\Htt$ remains bounded  and that $\Wb_{\ib\jb{\bar k}{\bar l}}/\bar\Htt^2$ decays at a certain rate, our estimates are not sufficiently sharp to conclude that every component decays rapidly enough to compensate for the $\tau^{-2}$ singular factor in \eqref{eq:Weylconf.rel}. 
Accordingly, the resulting big bang singularities in the perturbed solutions need not be isotropic in the strong sense of \cite{goode1985}. Nevertheless, since these singularities satisfy the characteristic features of Ricci dominance \eqref{Ricci-dominance} and isotropisation \eqref{kinematic-isotropisation}-\eqref{Ricci-isotropisation}, they may reasonably be regarded as isotropic.

Our results also show that the scalar field quantities $\phi_0$ and $\phi_1$, cf.\ Theorem~\ref{glob-stab-thm}.(h), converge as $\tau\searrow 0$. Closely related quantities have been analysed in detail in \cite{franco-grisales2024, groeniger2023, ringström2022, ringström2022a, ringström2025,franco-grisales2026}, where they were shown to converge and to determine asymptotic data for the scalar field in the Kasner regime. Theorem~\ref{glob-stab-thm} establishes an analogous convergence result in the ekpyrotic regime, in the sense that these quantities also admit limits as $t\searrow 0$. However, in contrast to the Kasner regime, where the limiting values are generically spatially dependent and encode asymptotic data, the limits of $\phi_0$ and $\phi_1$ in the ekpyrotic regime are necessarily \textit{spatially constant}. It is therefore unclear whether they capture the full asymptotic behaviour of the scalar field in this setting.

It is also worth mentioning that the effective equation-of-state quantity $w$, defined by \eqref{eq:effectiveequationofstate}, is well-defined for the perturbed positive ekyprotic-FLRW solutions  and converges to the positive ekpyrotic-FLRW  constant value \eqref{eq:FLRWepyrotic.w} as $\tau \searrow 0$. In particular, the asymptotic value of $w$ for these solutions is larger than one, which suggests that the perturbed solutions behave asymptotically like irrotational ultra-stiff fluids. Indeed, the authors of \cite{heinzle2012} found similar behaviour in solutions with big bang singularities for a large class of ultra-stiff Einstein-fluid systems. Interestingly, while our solutions are also asymptotically velocity term dominated (AVTD), the scalar field is not dominated by its kinetic energy. 

The proof makes essential use of the variable and gauge choices for the Einstein-scalar field equations with vanishing potential introduced in \cite{BeyerOliynyk:2021}, and  employed again in \cite{beyerStabilityFLRWSolutions2023}. In \cite{BeyerOliynyk:2021}, two of the authors developed a conformal representation of the geometry in which the scalar field plays a central role, together with a novel wave gauge approach where the target manifold carries the flat Minkowski metric which takes the canonical form with respect to Lagrangian coordinates adapted to the scalar field gradient. As a consequence of these choices, the reduced conformal Einstein–scalar field system is symmetric hyperbolic and Fuchsian in the sense of \cite{oliynyk2016,BeyerOliynyk:2020, BOOS:2021}. Moreover, for all solutions whose initial data are sufficiently close to FLRW data and are synchronised in the sense that the scalar field is constant on the initial hypersurface, the function $\tau=e^{-\alpha\phi}$ (see \eqref{taudef}) can be used as a time coordinate where the big bang singularity occurs at $\tau=0$. The formulation in \cite{BeyerOliynyk:2021} also uses Fermi-Walker propagated frames that are orthonormal with respect to the conformal metric.

Most of this framework carries over to the non-vanishing potential case studied here. In particular, the local existence, uniqueness, and extension result of Section~\ref{sec:locexist_cont} is essentially the same as in \cite{BeyerOliynyk:2021}.
Beyond this, the main differences required to capture the ekpyrotic dynamics (see Section~\ref{sec:confEinstSF}) are as follows: the conformal gauge must be adjusted, the reduction to Fuchsian form, while following the same general strategy, requires a modification of the orthonormal-frame variables (see Section~\ref{sec:tnormalisedequations}), and the potential terms need more refined nonlinear decompositions (see Section~\ref{sec:NonlinDecomp}).

These differences lead to a more complicated leading-order coefficient matrix in the Fuchsian system, for which we are unable to verify that its symmetrised version is positive semi-definite, a property required to apply Fuchsian techniques. We resolve this difficulty using the method of successive spatial differentiation introduced in \cite{BeyerOliynykZheng:2025}, as explained in Section~\ref{sec:differentiated_system}. Once global existence toward the big bang and the basic energy, decay, and convergence estimates are obtained from the Fuchsian theory of \cite{oliynyk2016,BeyerOliynyk:2020, BOOS:2021}; see Proposition~\ref{prop:globalstability}, the remainder of the proof of Theorem~\ref{glob-stab-thm} proceeds along more standard lines.

As a final remark, we note that the stability results established in this article can be \emph{localised in space}. This can be achieved by following the same strategy as in \cite[Theorem~11.1]{BeyerOliynyk:2021}, which is a spatially localised version of the global-in-space stability theorem \cite[Theorem~10.1]{BeyerOliynyk:2021}.

\subsection{Overview}
This paper is organised as follows: In Section~\ref{sec:FLRWEinsteinSF}, we analyse the family of spatially flat FLRW solutions to the Einstein scalar field equations and we define the Kasner-FLRW and the ekpyrotic-FLRW solutions. In particular, we analyse the stability of these solutions within the class of spatially homogeneous and  isotropic solutions in Section~\ref{sec:FLRWstabilityanalsys}, which provides intuition for the general setting. Our conventions, notation, and relevant function spaces are collected together in Section~\ref{prelim}. In Section~\ref{presec:confEinstSF}, we develop a conformal reformulation of the Einstein-scalar field equations as well as a wave gauge reduced version, while the conformal representation of the FLRW-ekpyrotic solution is presented in Section~\ref{sec:confrep_ekpyrotic}. Subsequently, local existence and continuation,  constraint propagation and synchronisation of the initial data for the conformal Einstein-scalar field equations is addressed in Section~\ref{sec:locexist_cont}. In Section~\ref{Fermi}, we introduce a Fermi-Walker transported orthonormal frame. To prepare for the Fuchsian analysis, we employ the orthornormal frame and use it to express the gauge-reduced conformal Einstein-scalar field equations in first-order form in Section~\ref{sec:FirstOrderForm} and we derive estimates for the nonlinear terms in Section~\ref{sec:NonlinDecomp}. 
The first-order evolution equations evolution equations are transformed in Section~\ref{sec:FuchsianForm} into a Fuchsian form, which plays an important role in our derivation of uniform bounds near the big bang singularity (See Subsections~\ref{sec:Fuch-form} and \ref{sec:differentiated_system}). Finally, in Section~\ref{sec:main_theorem}, we state and prove our main result, Theorem~\ref{glob-stab-thm}.

\section{Preliminaries\label{prelim}}

%\subsection{Data availability statement}

%This article has no associated data.

\subsection{Coordinates, frames and indexing conventions\label{indexing}}
In the article, we consider $n$-dimensional spacetime manifolds of the form
\begin{equation} \label{Mt1t0-def}
    M_{t_1,t_0}= (t_1,t_0]\times \Tbb^{n-1},
\end{equation}
where $t_0>0$, $t_1<t_0$, and $\Tbb^{n-1}$ is
the $(n-1)$-torus defined by
\begin{equation} \label{Tbb-def}
    \Tbb^{n-1} = [-L,L]^{n-1}/\sim
\end{equation} 
with $\sim$ denoting the equivalence relation obtained
from identifying the sides of the box $[-L,L]^{n-1}\subset \Rbb^{n-1}$. On $M_{t_1,t_0}$,
we always employ coordinates $(x^\mu)=(x^0,x^\Lambda)$
where $(x^\Lambda)$ are periodic spatial coordinates
on $\Tbb^{n-1}$ and $x^0$ is a time coordinate
on the interval $(t_1,t_0]$. Lower case Greek letters, e.g., $\mu,\nu,\gamma$, run from $0$ to $n-1$ and are used to label spacetime coordinate indices while upper case Greek letters, e.g., $\Lambda,\Omega,\Gamma$, run from $1$ to $n-1$ and label spatial coordinate indices. Partial derivative with respect to the coordinates $(x^\mu)$ are denoted by $\del{\mu} = \frac{\del{}\;}{\del{}x^\mu}$.
We often denote the time coordinate $x^0$ by $t$, that is, $t = x^0$, and write $\del{t} = \del{0}$ for the partial derivative with respect to $x^0$. The time function $t$ induces a foliation of the spacetime manifold $M_{t_1,t_0}$ into spatial slices $\Sigma_{t}$ as follows:
\begin{equation*}
M_{t_1,t_0}= \bigcup_{t_1 < t \leq t_0} \Sigma_{t}
\quad \text{where} \quad
\Sigma_{t}=\{t\}\times \Tbb^{n-1}.
\end{equation*}

We use two types of frames in this article, each serving a distinct purpose. The first frame, $e_j = e_j^\mu \del{\mu}$, is orthonormal with respect to the conformal metric $g_{\mu\nu}$ and is employed for most computations. The second frame, $\eb_{\jb} = \eb_{\jb}^\mu \del{\mu}$, is orthonormal with respect to the physical metric $\gb_{\mu\nu}$ and only appears in the statement of our main results.  To distinguish indices, we adopt the following conventions: lowercase Latin letters, e.g.,\ $i,j,k$ and $\bar i,\jb,\bar k$, label frame indices running from $0$ to $n-1$, while uppercase Latin letters, e.g.,\ $I,J,K$ and $\bar I,\bar J,\bar K$, label spatial frame indices running from $1$ to $n-1$.

\subsection{Inner-products and matrices}
We use $\ipe{\xi}{\zeta} = \xi^{\tr} \zeta$ and $|\xi| = \sqrt{\ipe{\xi}{\xi}}$,  where $\xi,\zeta \in \Rbb^N$, to denote the Euclidean inner-product and norm, respectively. The set of all $N \times N$ matrices is written as $\Mbb{N}$, and $\Sbb{N}$ denotes the subspace of symmetric $N \times N$ matrices. For $A \in \Mbb{N}$, the operator norm $|A|_{\op}$ is defined by
\begin{equation*}
   |A|_{\op} = \sup_{\xi \in \Rbb^N_\times} \frac{|A\xi|}{|\xi|},
\end{equation*}
where $\Rbb^N_\times = \Rbb^N \setminus \{0\}$. For $A,B \in \Mbb{N}$, we also use the notation
\begin{equation*}
    A \leq B \quad \Longleftrightarrow \quad \xi^{\tr} A \xi \leq \xi^{\tr} B \xi, \quad \forall \; \xi \in \Rbb^N.
\end{equation*}

\subsection{Sobolev spaces}
For $k \in \Zbb_{\geq 0}$, the $W^{k,p}$ norm of a map $u \in C^\infty(U,\Rbb^N)$, where $U \subset \Tbb^{n-1}$ is open, is defined by
\begin{equation*}
\norm{u}_{W^{k,p}(U)} = \begin{cases} 
\begin{displaystyle}
\biggl( \sum_{|\bc|\leq k} \int_U |\del{}^{\bc} u|^p \, d^{n-1} x \biggl)^{\frac{1}{p}}  
\end{displaystyle} & \text{if $1 \leq p < \infty$}, \\[1em]
\begin{displaystyle} 
\max_{|\bc| \leq k} \sup_{x \in U} |\del{}^{\bc} u(x)|  
\end{displaystyle} & \text{if $p = \infty$},
\end{cases}
\end{equation*}
where $\bc = (\bc_1,\ldots,\bc_{n-1}) \in \Nbb_{0}^{n-1}$ is a multi-index, $|\bc| = \sum_{\ell=1}^{n-1} \bc_\ell$, and
$\del{}^\bc = \del{1}^{\bc_1} \del{2}^{\bc_2} \cdots \del{n-1}^{\bc_{n-1}}$. The Sobolev space $W^{k,p}(U,\Rbb^N)$ is defined as the completion of $C^\infty(U,\Rbb^N)$ with respect to this norm. If $N = 1$ or the dimension $N$ is clear from the context, we simplify notation and write $W^{k,p}(U)$ instead of $W^{k,p}(U,\Rbb^N)$. We also adopt the standard notation $H^k(U,\Rbb^N) = W^{k,2}(U,\Rbb^N)$ throughout this article.

\subsection{Constants and inequalities}
We use the notation $a \lesssim b$ to represent inequalities of the form $a \leq Cb$ in situations where the precise value or dependence of the constant $C$ is not important. If the dependence of $C$ on other quantities must be specified, for example if $C$ depends on the norm $\norm{u}_{L^\infty}$, we write $C = C(\norm{u}_{L^\infty})$. All such constants are assumed to be non-negative, non-decreasing, and continuous functions of their arguments.

\subsection{Curvature conventions\label{sec:curv_conv}}
The \textit{Riemann (curvature) tensor} $\Rc_{ijk}{}^l$ associated with a metric $\gc_{ij}$ is defined by
\begin{equation*}%\label{comm-Rc}
  [\Dc_i,\Dc_j]\omega_k = \Rc_{ijk}{}^l \omega_l
\end{equation*}
where $\omega_l$ is any $1$-form and $\Dc_i$ denotes the Levi-Civita connection of $\gc_{ij}$.  The \textit{Ricci tensor} is then defined by
\begin{equation*}
  %\label{Ricciconv}
  \Rc_{ik} = \Rc_{ijk}{}^j.
\end{equation*}
These definitions fix the curvature conventions employed in this article.

\section{Conformal field equations}
\label{presec:confEinstSF}

\subsection{Conformal Einstein-scalar field equations}
\label{sec:confEinstSF}
In this section, we reformulate the Einstein-scalar field equations, closely following the approach in \cite{BeyerOliynyk:2021}. The reformulation begins by replacing the physical metric $\gb_{ij}$ with a \textit{conformal metric} $g_{ij}$ defined by
\begin{equation}\label{confmet}
  \gb_{ij} = e^{2\Phi}g_{ij},
\end{equation}
where, here and throughout this article, we assume that the spacetime dimension $n$ satisfies $n\geq 3$, and $\Phi$ is here an arbitrary sufficiently smooth function.

Under the conformal transformation \eqref{confmet}, the Ricci tensor transforms as
\begin{equation}\label{confRicci}
\Rb_{ij}=R_{ij}-(n-2)\nabla_i \nabla_j \Phi + (n-2)\nabla_i \Phi \nabla_j \Phi -(\Box_g \Phi + (n-2)|\nabla\Phi|^2_g)g_{ij},
\end{equation}
where $\nabla_i$ denotes the Levi-Civita connection associated with $g_{ij}$,
$\Box_g = g^{ij}\nabla_i\nabla_j$ is the wave operator, and
$|\nabla\Phi|_g^2 = g^{ij}\nabla_i\Phi \nabla_j\Phi$. 
The connection coefficients of $\gb_{ij}$ and $g_{ij}$ satisfy the relation
\begin{equation}\label{confChrist}
\Gammab_{i}{}^k{}_j-\Gamma_{i}{}^k{}_j=g^{kl}(g^{il}\nabla_j \Phi + g_{jl}\nabla_i\Phi -g_{ij}\nabla_l\Phi).
\end{equation}
Substituting \eqref{confRicci} into the Einstein equations \eqref{ESF.1}, we obtain
\begin{equation}\label{confESFA}
-2R_{ij}=-2(n-2)\nabla_i \nabla_j \Phi + 2(n-2)\nabla_i \Phi \nabla_j \Phi -2(\Box_g \Phi + (n-2)|\nabla\Phi|^2_g)g_{ij}-4\nabla_{i}\phi\nabla_{j} \phi-\frac{8}{n-2}V(\phi)\gb_{ij}.
\end{equation}

We now introduce an arbitrary Lorentzian background metric $\gc_{ij}$, not required to satisfy any field equations, which is later chosen to simplify our analysis.
We denote by $\Dc_i$  the Levi-Civita connection corresponding to the background metric $\gc_{ij}$, and by $\gamma_{i}{}^k{}_{j}$ the connection coefficients with respect to a chosen frame. The scalar field equation \eqref{ESF.2} then takes the form
\begin{equation*}
  \Box_{\gb} \phi=\gb^{ij}\Dc_{i}\Dc_{j}\phi-\gb^{ij}(\Gammab_{i}{}^k{}_j-\Gamma_i{}^k{}_j+\Cc_{i}{}^k{}_{j})\Dc_{k}\phi = V'(\phi)
\end{equation*}
where
\begin{equation}\label{Ccdef}
\Cc_{i}{}^k{}_j := \Gamma_{i}{}^k{}_j-\gamma_{i}{}^k{}_j=\frac{1}{2}g^{k l}\bigl(\Dc_i g_{j l}+\Dc_j g_{i l}-\Dc_l g_{ij}\bigr)
\end{equation}
is a tensor field.
Using \eqref{confmet} and \eqref{confChrist}, we can express the scalar field equation as
\begin{equation*} %\label{confESFB}
  g^{ij}\Dc_{i}\Dc_{j}\phi= X^k\Dc_{k}\phi -(n-2)g^{ij}\Dc_{i}\Phi\Dc_{j}\phi+e^{2\Phi}V'(\phi),
\end{equation*}
where
\begin{equation} \label{Xdef}
  X^k := g^{ij}\Cc_{i}{}^k{}_j=\frac{1}{2}g^{ij}g^{k l}\bigl(2\Dc_i g_{j l}-\Dc_l g_{ij}\bigr),
\end{equation}
or equivalently as
\begin{equation} \label{confESFC}
  \Box_g \phi = -(n-2)\nabla^i\Phi\nabla_i\phi +e^{2\Phi}V'(\phi).
\end{equation}
We note that here and below that all indices are raised and lowered using the conformal metric, e.g.,  $\nabla^k\Phi = g^{kl}\nabla_l\Phi$. 

Until now, the function $\Phi$ in the conformal transformation \eqref{confmet} has been arbitrary. In order to fix some irrelevant degrees of freedom, we now specify the form\footnote{The constant $\Phi_0$, while unnecessary in the Kasner setting \cite{BeyerOliynyk:2021}, is essential in the analysis of the ekpyrotic regime.}
\begin{equation}\label{gaugefixA}
\Phi =\lambda \phi+\Phi_0
\end{equation}
of $\Phi$
where $\lambda$ and $\Phi_0$ are free real constants.
Next, we introduce the scalar field $\tau$ via
\begin{equation} \label{taudef}
\tau=e^{-\alpha\phi}  \quad \Longleftrightarrow \quad \phi = -\frac{1}{\alpha}\ln(\tau)
\end{equation}
where $\alpha$ is a free non-zero constant. Using \eqref{taudef}, the scalar field potential \eqref{eq:exponentialpotential} can be expressed as a function of $\tau$ as follows:
\begin{equation}
\label{eq:DefVtt}
\Vtt(\tau):=V\biggl(-\frac{1}{\alpha}\ln(\tau)\biggr)=V_0\tau^{s/\alpha}.
\end{equation}
Moreover, the derivative of $V(\phi)$ satisfies
\begin{equation} \label{V'(phi)}
V'(\phi)=\Vtt'(\tau)\frac{d\tau}{d\phi}
=-\alpha\tau \Vtt'(\tau)=-s \Vtt(\tau)=-sV_0\tau^{s/\alpha},
\end{equation}
and we note that
\begin{equation}\label{exp(2Phi)}
e^{2\Phi}=e^{2\Phi_0}\tau^{-2\lambda/\alpha}.
\end{equation}

In terms of $\tau$, the scalar field equation \eqref{confESFC} takes the form
\begin{equation}\label{box-tau}
   \Box_g\tau=g^{ij}\nabla_i \nabla_j \tau
  = \Bigl((n-2)\frac{\lambda}{\alpha}+1\Bigr) \tau^{-1}\nabla^i\tau\nabla_i\tau 
  +s \alpha V_0 e^{2\Phi_0}\tau^{(s-2\lambda)/\alpha+1},
\end{equation}
while the Einstein equations \eqref{confESFA} become
\begin{align}
  R_{ij}
  =&-\frac{(n-2)\lambda}{\alpha}\tau^{-1}\nabla_i \nabla_j\tau
  +\frac{(n-2)\lambda \alpha+2-(n-2)\lambda^2}{\alpha^2}  
  \tau^{-2}\nabla_i \tau \nabla_j \tau \notag \\
  &
  +\Bigl(-\lambda s+\frac{4}{n-2}\Bigr) V_0 e^{2\Phi_0}\tau^{(s-2\lambda)/\alpha}g_{ij}. \label{Rij-tau}
\end{align}
Assuming that $\lambda^2\neq 2/(n-2)$ and that $\lambda\neq 0$, and following \cite{BeyerOliynyk:2021}, we set
\begin{equation}\label{alphafix}
\alpha= -\frac{2-\lambda^2(n-2)}{\lambda (n-2)},
\end{equation}
which eliminates the second term on the right-hand side of \eqref{Rij-tau}. We also introduce a constant $\Rtt$ to represent the constant coefficient in the first term on the right-hand side of\footnote{In the Kasner setting \cite{BeyerOliynyk:2021}, choosing $\lambda$ so that $\Rtt=0$ is sufficient. For the ekpyrotic regime, however, it is essential to allow $\Rtt\neq 0$. This introduces a major complication absent in \cite{BeyerOliynyk:2021} due to the presence of the term $\tau^{-1}\nabla^i\tau\nabla_i\tau$ on the right-hand side of \eqref{box-tau} in the ekpyrotic regime.}  \eqref{box-tau}:
\begin{equation}
\label{eq:RttDef}
\Rtt=\frac{(n-2)\lambda}{\alpha}+1=\frac{2-\lambda^2(n-2)(n-1)}{2-\lambda^2(n-2)}
\quad\Longleftrightarrow\quad
\lambda^2={\frac{2 (1 - \Rtt)}{(n - 2) (n - 1 - \Rtt)}}.
\end{equation}
To ensure that $\Phi\to -\infty$ as $\tau\searrow 0$, we impose $\lambda/\alpha<0$. Using \eqref{alphafix} and $n>2$, this translates to $0<\lambda^2<\frac{2}{n-2}$, corresponding to $\Rtt\in(-\infty,1)$ in \eqref{eq:RttDef}. Choosing $\lambda>0$, we obtain
\begin{equation}
\label{eq:lambdaByR}
\lambda=\sqrt{{\frac{2 (1 - \Rtt)}{(n - 2) (n - 1 - \Rtt)}}}.
\end{equation}
Thus
\begin{equation}
\label{eq:alphaByR}
\alpha\overset{\eqref{alphafix}}{=}-\sqrt{\frac{2(n-2)}{(1-\Rtt)(n-1-\Rtt)}}<0,
\end{equation}
implying that $\lim_{\tau\searrow 0}\phi=-\infty$ as a consequence of \eqref{taudef}. This is consistent with restricting our analysis to \emph{positive} ekpyrotic/Kasner regimes; cf.\ Section~\ref{sec:FLRWfpsol}.
Summarising, we have shown that
\begin{equation} \label{confESFAa.1}
R_{ij}=\frac{1-\Rtt}{\tau}\nabla_i \nabla_j \tau
+\Ttt_{ij},
\end{equation}
where
\begin{equation} \label{eq:genTtt}
\Ttt_{ij}
=\frac{2V_0 e^{2\Phi_0}}{n-1-\Rtt}\Ltt \tau^{\Ltt-2}g_{ij},\quad
\Ltt=2+\frac{s-2\lambda}{\alpha}
=\frac{n-1-\Rtt}2\Bigl(\frac{4}{n-2}-\lambda s\Bigr),
\end{equation}
and
\begin{equation} \label{confESFAa.2}
\Box_g \tau = \Wtt
\end{equation}
with 
\begin{equation} \label{Wtt-def}
\Wtt = \frac{\Rtt}{\tau} |\nabla\tau|_g^2
+s \alpha V_0 e^{2\Phi_0}\tau^{\Ltt-1}.
\end{equation}
Collecting \eqref{confESFAa.1} and \eqref{confESFAa.2} together, we have
\begin{align} \label{confESFAaN}
R_{ij}&=(1-\Rtt){\tau^{-1}}\nabla_i \nabla_j \tau
+\Ttt_{ij},\\
\label {confESFCbN}
  \Box_g \tau &= \Wtt.
\end{align}
We refer to these equations as the \textit{conformal Einstein-scalar field equations}. From \eqref{confmet} and \eqref{gaugefixA}--\eqref{alphafix}, each solution $\{g_{ij},\tau\}$ of these equations yields the physical solution
\begin{equation}
  \label{eq:conf2phys}
  \biggl\{\gb_{ij}=e^{2\Phi_0}\tau^{\frac{2(1-\Rtt)}{n-2}} g_{ij},\,\phi=\sqrt{\frac{(1-\Rtt)(n-1-\Rtt)}{2(n-2)}}\ln(\tau)\biggr\}
\end{equation}
of the \textit{physical Einstein-scalar field equations} \eqref{ESF.1}--\eqref{eq:exponentialpotential}. By setting $\Rtt=0$ and $\Phi_0=0$, the above equations become identical to those from \cite[\S1.1]{BeyerOliynyk:2021}.

\subsection{Reduced conformal Einstein-scalar field equations}
To establish the existence of solutions to the conformal Einstein-scalar field equations, we replace \eqref{confESFAaN} with a gauge-reduced version. We employ a conformal wave gauge defined by
\begin{equation}\label{wave-gauge}
X^k =0,
\end{equation}
where $X^k$ is defined in \eqref{Xdef}. We then consider the corresponding wave-gauge-reduced equations
\begin{align}
  -2R_{ij}+2\nabla_{(i} X_{j)}
  &=-\frac{2(1\added{-\Rtt})}{\tau}\nabla_i \nabla_j \tau
  \added{-2\Ttt_{ij}} \notag\\
  &\oset{\eqref{Ccdef}}{=}-\frac{2(1\added{-\Rtt})}{\tau}\bigl(
  \Dc_i \Dc_j \tau - \Cc_i{}^k{}_j\Dc_k \tau  \bigr)
  \added{-2\Ttt_{ij}}, \label{confESFF}
\end{align}
which we refer to as the \textit{reduced conformal Einstein equations}.

To justify the spacetime imposition of the wave gauge constraint \eqref{wave-gauge}, we cite Proposition \ref{lag-exist-prop}.(e), which guarantees that this wave gauge constraint propagates.
Using \eqref{Ccdef}, \eqref{Xdef}, and \eqref{wave-gauge}, the conformal scalar field equation \eqref{confESFAa.2} can be written as
\begin{equation} \label{confESFG}
g^{ij}\Dc_{i}\Dc_{j}\tau=\Wtt.
\end{equation}
Combining \eqref{confESFF} and \eqref{confESFG}, we obtain the system
\begin{align}
-2R_{ij}+2\nabla_{(i} X_{j)}& =-\frac{2(1\added{-\Rtt})}{\tau}\bigl(
  \Dc_i \Dc_j \tau - \Cc_i{}^k{}_j\Dc_k \tau  \bigr)
  \added{-2\Ttt_{ij}}, 
\label{confESFFa}\\
 g^{ij}\Dc_{i}\Dc_{j}\tau    &=\added{\Wtt}, \label{confESFGa}
\end{align}
which we refer to as the \textit{reduced conformal Einstein-scalar field equations}.

For use below, we recall that the reduced Ricci tensor is given by
\begin{equation} \label{red-Ricci}
-2R_{ij}+2\nabla_{(i} X_{j)}=g^{kl}\Dc_k \Dc_l g_{ij} + Q_{ij}
+2g^{kl}g_{m(i}\Rc_{j)kl}{}^m
\end{equation}
where 
\begin{equation}\label{Q-def}
Q_{ij} = \frac{1}{2}g^{kl}g^{mn}\Bigl(\Dc_i g_{mk} \Dc_j g_{n l}
+2 \Dc_{n}g_{il}\Dc_{k}g_{jm} - 2\Dc_{l}g_{in} \Dc_{k}g_{jm}
-2 \Dc_{l}g_{in}\Dc_j g_{mk} -2 \Dc_{i}g_{mk}\Dc_{l}g_{jn}\Bigr)
\end{equation}
and, as above, $\Rc_{ijk}{}^l$ denotes the Riemann tensor of the background metric $\gc_{ij}$. Differentiating \eqref{confESFGa} and using the commutator formula
$\Dc_k\Dc_i \Dc_j\tau-\Dc_i\Dc_j \Dc_k \tau
=\Rc_{kij}{}^l\Dc_l\tau$,
we observe that
\begin{align} \label{confESFI} 
  g^{ij}\Dc_i\Dc_{j} \Dc_{k}\tau
  =& g^{il}g^{jm}\Dc_kg_{lm}  \Dc_{i} \Dc_{j}\tau
     -g^{ij}\Rc_{kij}{}^l \Dc_l \tau 
   \added{
     +\Dc_k\Wtt}.
\end{align}

\subsection{Fixing the background metric}
So far, the background metric $\gc_{ij}$ has been arbitrary. Because the FLRW metric \eqref{eq:FLRWFPsol.11} is conformally flat and our focus is on nonlinear perturbations of this solution, we restrict attention to flat background metrics. Consequently the curvature tensor vanishes
\begin{equation}\label{curvature}
\Rc_{ijk}{}^l = 0
\end{equation}
and therefore the background covariant derivatives $\Dc_i$ commute
\begin{equation}\label{commutator}
[\Dc_i,\Dc_j] =0.
\end{equation}

\subsection{Conformal representation of the positive Kasner and ekpyrotic FLRW solutions}
\label{sec:confrep_ekpyrotic}
Next, we transform the FLRW solutions \eqref{eq:FLRWFPsol.11} and \eqref{eq:FLRWFPsol.21}, with $\xtt_i$ from \eqref{eq:FLRWFP} and $s_c$ from \eqref{sc-def}, into the conformal representation \eqref{eq:conf2phys}. In this section we only present the \emph{positive} Kasner and ekpyrotic classes: $\xtt_i=1$ for the positive Kasner case, and $\xtt_i=s/s_c$ with $s>0$ for the positive ekpyrotic case. For the latter case, we further assume that $s>s_c$, which is expected to be stable to the past according to Section~\ref{sec:FLRWstabilityanalsys}.

Under these restrictions and assuming that $n\geq 3$, the inequality
\begin{equation}
\label{eq:FLRWFPrestr}
(n-1)x_i^2>1
\end{equation}
holds. 
Using \eqref{eq:FLRWFPrestr}, we perform a coordinate transformation from the Gaussian time coordinate $\tb$ in \eqref{eq:FLRWFPsol.11} and \eqref{eq:FLRWFPsol.21} to a new time coordinate $t$ defined by
\begin{equation}
  \label{eq:FLRWtimetrafo}
  \tb=\Theta\, t^{\frac{(n-1)x_i^2}{(n-1)x_i^2-1}},
\end{equation}
where $\Theta>0$ is an arbitrary constant. After suitable constant rescaling of the spatial coordinates, \eqref{eq:FLRWFPsol.11} and \eqref{eq:FLRWFPsol.21} become
\begin{equation}
  \label{eq:FLRWFPsol.2}
  \begin{aligned}
  \bar g&=\Theta^2 \left(\frac{(n-1)x_i^2}{(n-1)x_i^2-1}\right)^{2} t^{\frac{2}{(n-1)x_i^2-1}}\Bigl[-dt\otimes dt+\sum_{\Lambda=1}^{n-1}dx^\Lambda\otimes dx^\Lambda\Bigr], \\
  \phi&=
  \frac2{s_c}\frac{(n-1)x_i}{(n-1)x_i^2-1}\log t,
  \end{aligned}
\end{equation}
where we choose the constant $\Theta$ in \eqref{eq:FLRWtimetrafo} as
\begin{equation}
  \label{eq:FLRWphiStar}
  \Theta^2=e^{-s_c x_i\phi_*},
\end{equation}
in order to cancel the constant term in \eqref{eq:FLRWFPsol.21}.
For the Kasner case $\xtt_i=1$, the constraint \eqref{eq:FLRWFPConstr} is satisfied when $V_0=0$ irrespective of the value of $e^{-s\phi_*}$. We may therefore pick the value of this last constant so that $\Theta$ in \eqref{eq:FLRWphiStar} is $\Theta=(n-2)/(n-1)$. Using \eqref{eq:FLRWphiStar} and \eqref{sc-def}, \eqref{eq:FLRWFPsol.2} then becomes
\begin{equation*}
  %\label{eq:FLRWFPsol.2.Kasner}
  \bar g= t^{\frac{2}{n-2}}\Bigl[-dt\otimes dt+\sum_{\Lambda=1}^{n-1}dx^\Lambda\otimes dx^\Lambda\Bigr], \quad 
  \phi=\sqrt{\frac{n-1}{2(n-2)}}\log t.
\end{equation*}
The form of these expressions agree with \eqref{eq:conf2phys} with $\tau=t$, $g_{ij}=\eta_{ij}$ ($\eta_{ij}$ is the Minkowski metric) and $\Phi_0=\Rtt=0$. Together with \eqref{eq:lambdaByR}, \eqref{eq:alphaByR}, and \eqref{eq:genTtt}, this implies that
\begin{equation*}
\lambda=\sqrt{{\frac{2}{(n - 2) (n - 1 )}}},\quad
\alpha=-\sqrt{\frac{2(n-2)}{n-1}}.
\end{equation*}
We note that $\Ltt$ is irrelevant in the Kasner case since it appears multiplied by $V_0$ in \eqref{confESFAaN} and \eqref{confESFCbN}, and $V_0$ vanishes in this case. Although we do not pursue it here, the values $\Phi_0=\Rtt=0$ essentially reduce the positive-Kasner stability problem to the zero-potential setting treated in \cite{BeyerOliynyk:2021}. For any $s<s_c$ and $V_0\in\Rbb$, the argument of \cite{BeyerOliynyk:2021} can be adapted to obtain a spatially localised nonlinear big-bang stability result, complementing \cite{groeniger2023}.

For the positive ekpyrotic case with $s>s_c$ and $V_0=-1$,
the constraint \eqref{eq:FLRWFPConstr.ekp} implies
\begin{equation}
  \label{eq:FLRWFPConstr.ekp.2}
  {e^{s\phi_*}}=\frac 12 \frac{s^4}{s^2-s_c^2},
\end{equation}
and by \eqref{eq:FLRWphiStar}, this yields
\begin{equation*}
  %\label{eq:FLRWphiStar}
  \Theta^2=e^{-s\phi_*}=2 \frac{s^2-s_c^2}{s^4}.
\end{equation*}
Thus, \eqref{eq:FLRWFPsol.2} becomes
\begin{align*}
  %\label{eq:FLRWFPsol.2.ekp}
  \bar g&=2 \frac{s^2-s_c^2}{s^4}\left(\frac{(n-1)s^2}{(n-1)s^2-s_c^2}\right)^{2} t^{\frac{2(1-\Rtt)}{n-2}}\biggl(-dt\otimes dt+\sum_{\Lambda=1}^{n-1}dx^\Lambda\otimes dx^\Lambda\biggr), \\
  \phi&=\frac{2(n-1)s}{(n-1)s^2-s_c^2}\log t.
\end{align*}
The form of these expressions agree with \eqref{eq:conf2phys} with $\tau=t$, $g_{ij}=\eta_{ij}$  and (see~\eqref{eq:lambdaByR}, \eqref{eq:alphaByR}, and \eqref{eq:genTtt})
\begin{equation}
  \label{eq:ekpyroticparameterconstants.N}
  \Rtt=\frac{(n-1)(s^2-s_c^2)}{(n-1)s^2-s_c^2},\,\, 
  e^{2\Phi_0}=2\frac{(n-1)^2({s^2}-s_c^2)}{((n-1)s^2-s_c^2)^2},\,\,
  \Ltt=0,\,\,
  \lambda=\frac {s_c^2 s}{2(n - 1)},\,\, \alpha=-\frac{(n-1)s^2-s_c^2}{2(n-1)s}.
\end{equation} 
The fact that $\Rtt$ does not vanish in the ekpyrotic regime turns out to be of crucial importance for the results in this paper.

\begin{rem}
For the remainder of the article, we assume that \begin{equation*}
s>s_c=\sqrt{\frac{8(n-1)}{n-2}}, \quad V_0=-1,
\end{equation*}
and the constants $\Rtt$, $e^{2\Phi_0}$, $\Ltt$, $\lambda$, and $\alpha$ are determined by \eqref{eq:ekpyroticparameterconstants.N}. With these choices, we note that 
\begin{equation*}
    0<\Rtt<1,
\end{equation*}  
\begin{equation*}
 % \label{eq:ekpyroticRtt}
 s\alpha V_0e^{2\Phi_0}= \frac{2 e^{2\Phi_0}}{ n - 1 - \Rtt} \frac{s}{\lambda}=\Rtt,
\end{equation*}
and that
\begin{equation}\label{Ttt-Wtt-fix}
\Ttt_{ij}=0 \AND \Wtt = \frac{\Rtt}{\tau}(|\nabla\tau|_g^2+1).
\end{equation}
\end{rem}

\section{Local existence and continuation in Lagrangian coordinates}
\label{sec:locexist_cont}

The local-in-time existence in Lagrangian coordinates of solutions to the conformal Einstein-scalar field equations with a vanishing potential along with a continuation principle for these solutions was established in Proposition~5.6 of  \cite{BeyerOliynyk:2021}. Since the addition of a potential only modifies the sub-principal terms, it is straightforward to adapt the proof of  \cite[Prop.~5.6]{BeyerOliynyk:2021} so that it applies to the conformal Einstein-scalar field equations with a non-vanishing potential. Due to the simplicity of the modifications required, we omit the proof and only present the final result. However, before we can state it, we first need to setup the conformal Einstein-scalar field equations in Lagrangian coordinates and describe the initial data. 

Thus far, the frame in which the reduced conformal Einstein-scalar field equations \eqref{confESFFa}-\eqref{confESFGa} are formulated is arbitrary. To proceed, we need to fix the frame, which we do by introducing coordinates 
$(\xh^\mu)=(\xh^0,\xh^\Lambda)$ on a spacetime $M_{t_1,t_0}$ of the form \ref{Mt1t0-def}, as described in Section \ref{indexing}, and then choose the frame to be the coordinate frame determined by this choice of coordinates. It is important to note here that the coordinate frame is only useful for establishing the local-in-time existence of solutions. In order to establish global estimates, we need to use an orthonormal frame as described in Section \ref{Fermi}.

We proceed by fixing the background metric $\gc$ so that its coordinate components (relative to the coordinates $(\xh^\mu)$) take the form
\begin{equation}\label{gct-def}
    \gchat_{\mu\nu}=\eta_{\mu\nu}:= -\delta_\mu^0\delta_\nu^0 + \delta_\mu^\Lambda \delta_\nu^\Gamma \delta_{\Gamma\Lambda}.
\end{equation}
With this choice, the covariant derivative associated to the background metric reduces to partial differentiation, i.e.,~
$\hat{\Dc}_\mu = \delh{\mu}$. 
Then it follows from \eqref{Ccdef}, \eqref{red-Ricci}, \eqref{curvature} and \eqref{Ttt-Wtt-fix} that we can express the reduced conformal Einstein-scalar field equations \eqref{confESFFa}-\eqref{confESFGa} as
\begin{align}
\gh^{\alpha\beta}\delh{\alpha}\delh{\beta}\gh_{\mu\nu} + \Qh_{\mu\nu} &= -\frac{2(1-\Rtt)}{\tauh}\bigl( \delh{\mu}\delh{\nu}\tauh - \Gammah^\gamma_{\mu\nu} \delh{\nu}\tauh \bigr)%-2\hat{\Ttt}_{\mu\nu}
, \label{tconfESF-A.1}\\
\gh^{\alpha\beta}\delh{\alpha}\delh{\beta}\tauh &= \hat{\Wtt},  \label{tconfESF-A.2}
\end{align} 
where $\tauh$ denotes the scalar field $\tau$ viewed as a function of the coordinates $(\xh^\mu)$, $\gh_{\mu\nu}$ are the components of the conformal metric $g$ with respect to the coordinates $(\xh^\mu)$, $\Gammah^\gamma_{\mu\nu} =
     \frac{1}{2}\gh^{\gamma\lambda}\bigl(\delh{\mu}\gh_{\nu\lambda} +\delh{\nu}\gh_{\mu\lambda}-\delh{\lambda}\gh_{\mu\nu}\bigr)$
are the Christoffel symbols of $\gh_{\mu\nu}$,
\begin{equation}
  \label{eq:defQh}
\Qh_{\mu \nu} = \frac{1}{2}\gh^{\alpha \beta}\gh^{\sigma \delta}\bigl(\delh{\mu} \gh_{\sigma \alpha} \delh{\mu} \gh_{\delta \beta}
+2 \delh{\delta}\gh_{\mu \beta}\delh{\alpha}\gh_{\nu \sigma} - 2\delh{\beta}\gh_{\mu \delta} \delh{\alpha}\gh_{\nu \sigma}
-2 \delh{\beta}\gh_{\mu \delta}\delh{\nu} \gh_{\sigma\alpha} -2 \delh{\mu }\gh_{\sigma\alpha}\delh{\beta}\gh_{\nu \delta}\bigr),
\end{equation}
and
\begin{equation} \label{hatWtt-def}
\hat{\Wtt} 
=\frac{\Rtt}{\tauh}\bigl( \gh^{\mu\nu}\delh{\mu}\tauh\delh{\nu}\tauh
+1\bigr).
\end{equation}
We additionally note from \eqref{Ccdef} and \eqref{Xdef} that the coordinate components of the wave gauge vector field $X$ are determined by
\begin{equation} \label{Xt-rep}
    \Xh^\gamma = \gh^{\mu\nu}\Gammah^\gamma_{\mu\nu}.
\end{equation}

\subsection{Initial data}
In the remainder of this article, we specify initial data for the Einstein-scalar field equations \eqref{tconfESF-A.1}-\eqref{tconfESF-A.2} on hypersurfaces
\begin{equation*}
\Sigma_{t_0}=\{t_0\}\times\Tbb^{n-1},
\end{equation*}
and we label the initial data as follows:
\begin{align}
    \gh_{\mu\nu}\bigl|_{\Sigma_{t_0}} &= \gr_{\mu\nu}, \label{gt-idata} \\
    \delh{0}\gh_{\mu\nu}\bigl|_{\Sigma_{t_0}} &= \grave{g}_{\mu\nu}, \label{dt-gt-idata} \\
    \tauh\bigl|_{\Sigma_{t_0}} &= \taur, \label{taut-idata}\\
    \delh{0}\tauh \bigl|_{\Sigma_{t_0}} &= \taug.\label{dt-taut-idata}
\end{align}
In order to guarantee that solutions of the reduced conformal Einstein-scalar field equations \eqref{tconfESF-A.1}-\eqref{tconfESF-A.2} also satisfy the conformal Einstein-scalar field equation \eqref{confESFAaN}-\eqref{confESFCbN}, we always assume that the initial data \eqref{gt-idata}-\eqref{dt-taut-idata} satisfies the constraint equations
\begin{align}
 \biggr( \Gh^{0\nu} - \frac{1-\Rtt}{\tauh}\nablah^0\nablah^\nu\tauh+\frac{1-\Rtt}{2\tauh}\hat{\Wtt} \gh^{0\nu}\biggr)\biggl|_{\Sigma_{t_0}} &=0, \qquad \text{(gravitational constraints)}
  \label{grav-constr}\\
  \Xh^\mu\bigl|_{\Sigma_{t_0}} &=0, \qquad \text{(wave gauge constraints)}
  \label{wave-constr}
\end{align}
where
$\Gh^{\mu\nu}$ and $\nablah_\mu$ are the Einstein tensor and the Levi-Civita connection of the conformal metric $\gh_{\mu\nu}$, respectively. 

\begin{rem} \label{idata-rem}
The \emph{geometric initial data} on the hypersurface $\Sigma_{t_0}$ is a subset of the initial data \eqref{gt-idata}-\eqref{dt-taut-idata} consisting of the fields $\{\gtt,\Ktt,\taur,\taugr\}$
where $\gtt=\gtt_{\Lambda\Omega}d\xh^\Lambda \otimes d\xh^\Omega$ is the \textit{spatial metric} and $\Ktt=\Ktt_{\Lambda\Omega}d\xh^\Lambda \otimes d\xh^\Omega$ is the \textit{second fundamental form}, which are determined from the initial data $\{\gr_{\mu\nu}, \grave{g}_{\mu\nu}, \taur, \taug\}$ via
\begin{equation}
  \label{gtt-def1}
  \gtt_{\Lambda\Omega} = \gr_{\Lambda\Omega} \AND \Ktt_{\Lambda\Omega} = \frac{1}{2\Ntt}(\ggr_{\Lambda\Omega}-2\Dtt_{(\Lambda} \btt_{\Omega)}),
\end{equation}
respectively. Here, 
$\btt_\Lambda = \gr_{0\Lambda}$ and $\Ntt^2 = -\gr_{00}+ \btt^\Lambda\btt_\Lambda$
define the \textit{shift}
$\btt=\btt_\Lambda d\xh^\Lambda$ and  \textit{lapse} $\Ntt$, respectively,  $\Dtt_\Lambda$
denotes the Levi-Civita connection of the spatial metric $\gtt_{\Lambda\Omega}$, and $\btt^\Lambda = \gtt^{\Lambda\Omega}\btt_\Omega$ where $(\gtt^{\Lambda\Omega})=( \gtt_{\Lambda\Omega})^{-1}$ is the inverse spatial metric. The importance of the geometric initial data is that it determines the physical part (i.e., non-gauge) of the initial data. In particular, the gravitational constraint equations \eqref{grav-constr} can be formulated entirely in terms of the geometric initial data, and for a given choice of geometric initial data, the remaining initial data can always be chosen so that the wave gauge constraints \eqref{wave-constr} are satisfied; see \cite[Rem.~5.1]{BeyerOliynyk:2021} for details.
\end{rem}

\subsection{First-order formulation\label{fof}}
As in \cite[\S5.3]{BeyerOliynyk:2021}, we introduce first-order variables
\begin{gather}
    \hh_{\beta\mu\nu} = \delh{\beta}\gh_{\mu\nu}, \quad
    \vh_{\mu} = \delh{\mu}\tauh \AND
    \wh_{\mu\nu} = \delh{\mu}\delh{\nu}\tauh, \label{tvars}
\end{gather}
and we define the vector field $\chih^\mu$ via
\begin{equation}
  \chih^\mu = \frac{1}{|\vh|_{\gh}^2} \vh^{\mu}, \label{chit-def}
\end{equation}
which we note satisfies $\chih(\tauh)=1$.
We always assume that $\vh^\mu$ is timelike (i.e.,\ $|\vh|_{\gh}^2 <0$) in order to ensure that $\hat{\chi}^\mu$ remains well defined and timelike. Then proceeding in the same fashion as in \cite[\S5.3]{BeyerOliynyk:2021}, we obtain the following first-order formulation of the reduced conformal Einstein-scalar field equations \eqref{tconfESF-A.1}-\eqref{tconfESF-A.2}:
\begin{align}
   \Bh^{\lambda\beta\alpha} \delh{\alpha}\hh_{\beta\mu\nu}&= \chih^\lambda \Bigr(\Qh_{\mu\nu}+\frac{2(1-\Rtt)}{\tauh}\bigl( \wh_{(\mu \nu)} - \Gammah^\gamma_{\mu\nu} \vh_\gamma\bigl)
   \Bigr), \label{tconf-ford-B.1}\\
     \Bh^{\lambda\beta\alpha}\delh{\alpha}\wh_{\beta\mu}
     & =-\chih^\lambda\bigl(\gh^{\alpha \sigma}\gh^{\beta\delta}\hh_{\mu\sigma\delta}\wh_{\alpha\beta}+\hat{\Wtt}_\mu\bigr), \label{tconf-ford-B.2}\\
      \Bh^{\lambda\beta\alpha}\delh{\alpha}\zh_{\beta} &=-\chih^\lambda \hat{\Wtt}, \label{tconf-ford-B.3} \\
      \chih^\alpha \delh{\alpha}\gh_{\mu\nu}&= \chih^\alpha \hh_{\alpha\mu\nu}, \label{tconf-ford-B.4}\\
     \chih^\alpha \delh{\alpha} \vh_{\mu}&= \chih^\alpha \wh_{\alpha\mu},  \label{tconf-ford-B.5}\\
     \chih^\alpha \delh{\alpha} \tauh&= \chih^\alpha \zh_{\alpha},  \label{tconf-ford-B.6}
\end{align}
where 
\begin{equation}\label{Bt-def}
 \Bh^{\lambda\beta\alpha}=    -\chih^\lambda\gh^{\beta\alpha} -\chih^\beta \gh^{\lambda\alpha} 
  + \gh^{\lambda\beta}\chih^\alpha,
\end{equation}
\begin{equation}\label{hatWtt-mu-def}
\hat{\Wtt}_\mu 
= -\frac{\Rtt}{\tauh}\gh^{\alpha\sigma}\gh^{\beta\delta}\hh_{\mu\sigma\delta}\zh_\alpha\zh_\beta-\frac{\Rtt}{\tauh^2}\gh^{\alpha\beta}\zh_{\alpha}\zh_{\beta}\zh_\mu + \frac{2\Rtt}{\tauh}\gh^{\alpha \beta}\zh_\alpha \wh_{\beta\mu}-\frac{\Rtt}{\tauh^{2}}\zh_\mu,
\end{equation}
and we interpret $\zh_\mu$ as the derivative of $\tauh$ as follows: 
\begin{equation} \label{zh-def}
    \zh_\mu = \delh{\mu}\tauh.
\end{equation}

\subsection{Lagrangian coordinates\label{Lag-coordinates}}
Following \cite{BeyerOliynyk:2021}, we now introduce Lagrangian coordinates $(x^\mu)$ that are adapted to the vector field $\chih^\alpha$. The reason for introducing these coordinates is that they allow us to use the scalar field $\tau$ as a time coordinate. The advantage of using $\tau$ as a time coordinate is that it synchronizes the singularity; see Section \ref{temp-synch} below for details.

The Lagrangian coordinates $(x^\mu)$ are defined via the map
\begin{equation} \label{lemBa.1a}
\xh^\mu = l^\mu(x) := \Gc_{x^0-t_0}^\mu(t_0,x^\Lambda), \quad \forall\, (x^0,x^\Lambda)\in M_{t_1,t_0},
\end{equation}
where $\Gc_s(\xh^\lambda) = (\Gc^\mu_s(\xh^\lambda))$
denotes the flow map of $\chih^\mu$, i.e., 
\begin{equation*}
\frac{d\;}{ds} \Gc^\mu_s(\xh^\lambda) = \chih^\mu(\Gc_s(\xh^\lambda)) \AND
\Gc^\mu_{0}(\xh^\lambda) = \xh^\mu.
\end{equation*}
We note that the map $\ell^\mu$ defines a diffeomorphism
\begin{equation*}
    l \: : \: M_{t_1,t_0} \longrightarrow l(M_{t_1,t_0})\subset M_{-\infty,t_0}
\end{equation*}
that satisfies $l(\Sigma_{t_0})=\Sigma_{t_0}$ so long as the vector field $\chih^\mu$ does not vanish and remains sufficiently regular. 
In agreement with our coordinate conventions, we often use 
$t=x^0$
to denote the Lagrangian time coordinate.

It follows from \eqref{lemBa.1a} that $l^\mu$ solves the initial value problem 
\begin{align}
    \del{0}l^\mu &= \chihu^\mu, \label{l-ev.1} \\
    l^\mu(t_0,x^\Lambda) &= \delta^\mu_0 t_0 +\delta^\mu_\Lambda x^\Lambda, \label{l-ev.2} 
\end{align}
where we employ the notation
\begin{equation}\label{fu-def}
    \underline{f}=f\circ l
\end{equation}
to denote the pull-back of scalars by the Lagrangian map $l$. In what follows, we need to distinguish the geometric pull-back of a field using $l$ from pulling-back the components of the field as scalars. In particular, the notation \eqref{fu-def} is used to denote the pull-back of components as scalars while the removal of the ``hat'' denotes the geometric pull-back. For example,
\begin{equation}\label{chi-lag}
\chi^\mu = \Jcch_\nu^\mu \chihu^\nu \oset{\eqref{l-ev.1}}{=} \Jcch_\nu^\mu \Jc_0^\nu =\delta^\mu_0,
\end{equation}
\begin{equation} \label{tau-lag}
\tau = \underline{\tauh},
\end{equation}
and
\begin{equation} \label{g-lag}
    g_{\mu\nu} = \Jc^\alpha_\nu \Jc^\beta_\mu\ghu_{\alpha\beta},
\end{equation}
where
\begin{equation} \label{Jc-def}
\Jc^\mu_\nu = \del{\nu}l^\mu
\end{equation}
is the Jacobian matrix of the map $l^\mu$,
$(\Jcch^\mu_\nu):= (\Jc^\mu_\nu)^{-1}$    
is its inverse, and we recall from \cite[\S5.4]{BeyerOliynyk:2021} that 
$\Jc^\mu_\nu$ satisfies the
initial value problem \begin{align}
    \del{0}\Jc^\mu_\nu &= \Jc^\lambda_\nu \Jsc_\lambda^\mu, \label{J-ev.1} \\
    \Jc^\mu_\nu(t_0,x^\Lambda) &= \delta^0_\nu \chih^\mu(t_0,x^\Lambda)+\delta_\nu^\Lambda\delta^\mu_\Lambda, \label{J-ev.2} 
\end{align}
with
\begin{equation} \label{Jsc-def}
  \begin{split}
    \Jsc_\lambda^\mu =\frac{1}{|\vhu|^2_{\ghu}} \biggl(&
   \ghu^{\mu\sigma}\whu_{\lambda\sigma}-\ghu^{\mu\tau}\ghu^{\sigma\omega} \hhu_{\lambda\tau\omega}\vhu_\sigma \\
   &-\frac{1}{|\vhu|^2_{\ghu}}\bigl(-\ghu^{\alpha\tau}\ghu^{\beta\omega} \hhu_{\lambda\tau\omega}\vhu_\alpha\vhu_\beta+2\ghu^{\alpha\beta}\vhu_\alpha \wh_{\lambda\beta}\bigr)\ghu^{\sigma\mu}\vhu_\sigma \biggr).
  \end{split}
\end{equation}

Expressing \eqref{tconf-ford-B.1}-\eqref{tconf-ford-B.6} in Lagrangian coordinates by pulling-back those equations as scalars, and then combining the resulting equations with
\eqref{l-ev.1} and \eqref{J-ev.1} yields the following first-order Lagrangian formulation of the reduced Einstein-scalar field equations:
\begin{align}
   \Bhu^{\lambda\beta\alpha} \Jcch_\alpha^\gamma \del{\gamma}\hhu_{\beta\mu\nu}&= \Jc_0^\lambda \Bigr(\Qhu_{\mu\nu}+\frac{2(1-\Rtt)}{\tau}\bigl( \whu_{(\mu \nu)} - \Gammahu^\gamma_{\mu\nu} \vhu_\gamma\bigl)
   \Bigr), \label{tconf-ford-C.1}\\
     \Bhu^{\lambda\beta\alpha} \Jcch_\alpha^\gamma \del{\gamma}\whu_{\beta\mu},
     &=-\Jc^\lambda_0\bigl(\ghu^{\alpha \sigma}\ghu^{\beta\delta}\hhu_{\mu\sigma\delta}\whu_{\alpha\beta}+\underline{\hat{\Wtt}}{}_\mu\bigr), \label{tconf-ford-C.2}\\
      \Bhu^{\lambda\beta\alpha} \Jcch_\alpha^\gamma \del{\gamma}\zhu_{\beta} &=-\Jc^\lambda_0\underline{\hat{\Wtt}}, \label{tconf-ford-C.3} \\
       \del{0}\ghu_{\mu\nu}&= \Jc_0^\alpha \hhu_{\alpha\mu\nu}, \label{tconf-ford-C.4}\\
     \del{0} \vhu_{\mu}&= \Jc_0^\alpha \whu_{\alpha\mu},  \label{tconf-ford-C.5}\\
     \del{0} \tau&= \Jc^\alpha_0\zhu_{\alpha},  \label{tconf-ford-C.6}\\
     \del{0} \Jc^\mu_\nu &=  \Jc^\lambda_\nu \Jsc^\mu_\lambda,\label{tconf-ford-C.7}\\
    \del{0}l^\mu &= \chihu^\mu, \label{tconf-ford-C.8}\end{align}
where the fields to be solved for are 
$\{\hhu_{\beta\mu\nu},\whu_{\beta\mu},\zhu_{\beta},\ghu_{\mu\nu},\vhu_\mu,\tau,\Jc^\mu_\nu,l^\mu\}.$

\subsection{Lagrangian initial data}
The initial data for the system \eqref{tconf-ford-C.1}-\eqref{tconf-ford-C.6} is determined from the reduced conformal Einstein-scalar field initial data \eqref{gt-idata}-\eqref{dt-taut-idata} in the same way as it is described in  
\cite[\S 5.5]{BeyerOliynyk:2021}. Specifically, we have
\begin{align}
    l^\mu \bigl|_{\Sigma_{t_0}} &= \lr^\mu, \label{l-idata}\\
    \Jc^\mu_\nu \bigl|_{\Sigma_{t_0}} &= \Jcr^\mu_\nu, \label{Jc-idata}\\
    \tau\bigl|_{\Sigma_{t_0}} &= \taur, \label{tauh-idata} \\ 
    \vhu_\mu\bigl|_{\Sigma_{t_0}} &= \delta^0_\mu \taug +\delta_\mu^\Lambda \del{\Lambda} \taur, \label{vhu-idata} \\
    \whu_{\Lambda\Omega}\bigl|_{\Sigma_{t_0}} &= \del{\Lambda}\del{\Omega} \taur, \label{whu-idata-1}\\
    \whu_{0\Omega}\bigl|_{\Sigma_{t_0}} &= \del{\Omega} \taug, \label{whu-idata-2}\\
     \whu_{\Lambda 0}\bigl|_{\Sigma_{t_0}} &= \del{\Lambda} \taug, \label{whu-idata-3}\\
      \whu_{0 0}\bigl|_{\Sigma_{t_0}} &= \frac{1}{\gr^{00}}\bigl(-\gr^{0\Lambda}\del{\Lambda}\taug+
      -\gr^{\Lambda\Omega}\del{\Lambda}\del{\Omega}\taur+ \mathring{\Wtt} \bigr), \label{whu-idata-4}\\
    \zhu_\mu\bigl|_{\Sigma_{t_0}} &= \delta^0_\mu \taug +\delta_\mu^\Lambda \del{\Lambda} \taur, \label{zhu-idata} \\
    \ghu_{\mu\nu}\bigl|_{\Sigma_{t_0}} &= \gr_{\mu\nu}, \label{ghu-idata} \\
    \hhu_{\alpha\mu\nu}\bigl|_{\Sigma_{t_0}} &= \delta_\alpha^0 \grave{g}_{\mu\nu}+\delta_\alpha^\Lambda\del{\Lambda}\gr_{\mu\nu}, \label{hhu-idata}
\intertext{where}
    \lr^\mu &= \delta^\mu_0 t_0 + \delta^\mu_\Lambda x^\Lambda, \label{lr-def} \\
    \vr^\mu &= \gr^{\mu\nu}(\delta^0_\nu \taug +\delta_\nu^\Lambda\del{\Lambda}\taur),\label{vr-def}\\
    \chir^\mu &= \frac{1}{|\vr|_{\gr}^2}\vr^{\mu}, \label{chir-def}\\
    \Jcr^\mu_\nu &= \delta^0_\nu \chir^\mu + \delta^\Lambda_\nu \delta^\mu_\Lambda, \label{Jcr-def}
    \intertext{and}
    \mathring{\Wtt} &
    =  \frac{\Rtt}{\taur}\bigl(\gr^{00}\taugr^2+2\gr^{0\Omega}\taugr\del{\Omega}\taur+\gr^{\Lambda\Omega}\del{\Lambda}\taur\del{\Omega}\taur+1\bigr). \label{Wtt-ring-def}
\end{align}

\begin{rem}\label{rem-Lag-idata}
On the initial hypersurface $\Sigma_{t_0}=\{t_0\}\times \Tbb^{n-1}$,  our choice of initial data
implies that 
\begin{equation} \label{dt-tau-idata}
    \del{0}\tau\bigl|_{\Sigma_{t_0}} = \Jc^\mu_0 \zh_\mu\bigl|_{\Sigma_{t_0}} = \chir^\mu \vr_\mu =1, 
\end{equation}
\begin{equation}\label{g-idata}
 g_{\mu\nu} \bigl|_{\Sigma_{t_0}} = \Jc^\alpha_\nu \Jc^\beta_\nu \gr_{\alpha\beta} \bigl|_{\Sigma_{t_0}} 
 = \Jcr^\alpha_\mu \gr_{\alpha\beta}\Jcr^\beta_\nu,
\end{equation}
\begin{equation} \label{dt-g-idata}
    \del{0}g_{\mu\nu} \bigl|_{\Sigma_{t_0}} = 
     \Jcr^\gamma_0\Jcr^\mu_\alpha \Jcr^\beta_\nu \bigl(\delta_\gamma^0 \grave{g}_{\alpha\beta}+\delta_\gamma^\Lambda\del{\Lambda}\gr_{\alpha\beta}\bigr) +  \gr_{\alpha\beta}\bigl(
     \Jcr_\mu^\lambda \mathring{\Jsc}{}^\alpha_\lambda+\Jcr^\beta_\nu + \Jcr^\alpha_\mu \Jcr_\nu^\lambda \mathring{\Jsc}{}^\beta_\lambda\bigr)
\end{equation}
and
\begin{equation} \label{chi-idata}
    \chi^\mu \bigl|_{\Sigma_{t_0}} = \Jcch^\mu_\nu \chih^\nu \bigl|_{\Sigma_{t_0}} = (\Jcr^{-1})^\mu_\nu \chir^\nu = \delta^\mu_0,
\end{equation}
where $\mathring{\Jsc}{}^\mu_\nu= \Jsc^\mu_\nu|_{\Sigma_{t_0}}$.
\end{rem}

\subsection{Local-in-time existence}
We are now in a position to state our local-in-time existence and uniqueness results and continuation principle for the system \eqref{tconf-ford-C.1}-\eqref{tconf-ford-C.8}. Since the proof only requires a trivial modification of the proof of Proposition 5.5 from  \cite{BeyerOliynyk:2021}, we state the result without proof. 

\begin{prop} \label{lag-exist-prop}
Suppose $k>(n-1)/2+1$, $t_0>0$, and that the initial data $\taur\in H^{k+2}(\Tbb^{n-1})$, $\taugr\in H^{k+1}(\Tbb^{n-1})$, $\gr_{\mu\nu}\in H^{k+1}(\Tbb^{n-1},\Sbb{n})$ and $\ggr_{\mu\nu}\in H^{k}(\Tbb^{n-1},\Sbb{n})$ is chosen so that 
the inequalities $\det(\gr_{\mu\nu})<0$ and $|\vr|_{\gr}^2 <0$
are satisfied where $\vr^\mu$ is defined by \eqref{vr-def}.
Then there exists a $t_1<t_0$ and a unique solution
\begin{equation}
  \label{eq:Wreg}
\Wsc \in \bigcap_{j=0}^{k}C^j\bigl((t_1,t_0], H^{k-j}(\Tbb^{n-1})\bigr),
\end{equation}
where
\begin{equation}
  \label{eq:Wdef}
    \Wsc=(\hhu_{\beta\mu\nu},\whu_{\beta\nu}, \zhu_\beta,\ghu_{\mu\nu},\vhu_\mu,\tau,\Jc^\mu_\nu,\ell^\mu ),
\end{equation}
on $M_{t_1,t_0}$ to the initial value problem consisting of the evolution
equations \eqref{tconf-ford-C.1}-\eqref{tconf-ford-C.8} and
the initial conditions \eqref{l-idata}-\eqref{hhu-idata}. Moreover, the following properties hold:
\begin{enumerate}[(a)]
\item Letting $\Wsc_{\!0} = \Wsc|_{\Sigma_{t_0}} \in H^{k}(\Tbb^{n-1})$
denote the initial data, there exists a $\delta>0$ such that if $\Wsc^*_{\!0} \in  H^{k}(\Tbb^{n-1})$ satisfies $\norm{\Wsc^*-\Wsc_{\!0}}_{H^k(\Tbb^{n-1})}<\delta$,    
then there exists a unique solution 
\begin{equation*}
    \Wsc^* \in \bigcap_{j=0}^{k}C^j\bigl((t_1,t_0], H^{k-j}(\Tbb^{n-1})\bigr)
\end{equation*}
of the evolution equations \eqref{tconf-ford-C.1}-\eqref{tconf-ford-C.8} on $M_{t_1,t_0}=(t_1,t_0]\times \Tbb^{n-1}$  that agrees with the initial data $\Wsc^*_{\!0}$ on the initial hypersurface $\Sigma_{t_0}$. 
\item The relations 
\begin{equation} \label{eq:Lag-constraints}
\del{\alpha}\ghu_{\mu\nu}=\Jc^\beta_\alpha \hhu_{\beta\mu\nu},\quad  \del{\alpha}\vhu_{\mu}=\Jc^\beta_\alpha \whu_{\beta\mu}, \quad \del{\alpha}\tau-\Jc^\beta_\alpha \zhu_{\beta},\quad \vhu_\mu=\zhu_\mu \AND \Jc^\mu_\nu = \del{\nu} l^\mu    
\end{equation}
hold in $M_{t_1,t_0}$.
\item The pair $\{g_{\mu\nu}=\del{\mu}l^\alpha\ghu_{\alpha\beta}\del{\nu}l^\beta,\tau\}$
determines a solution 
of the reduced conformal Einstein-scalar field equations
\begin{equation} \label{lag-redeqns}
    -2R_{\mu\nu}+2\nabla_{(\mu} X_{\nu)}=-\frac{2(1\added{-\Rtt})}{\tau}\nabla_\mu \nabla_\nu \tau
    , \quad g^{\alpha\beta}\Dc_\alpha\Dc_\beta \tau=\Wtt,
\end{equation}
on $M_{t_1,t_0}$ that satisfies the initial conditions \eqref{tauh-idata} and \eqref{dt-tau-idata}-\eqref{dt-g-idata}, where $X^\gamma = \frac{1}{2}g^{\mu\nu}g^{\gamma\lambda}
    (2\Dc_{\mu}g_{\nu\lambda}-\Dc_\lambda g_{\mu\nu})$
and $\Dc_\mu$ is the Levi-Civita connection of the flat metric
$\gc_{\mu\nu}=\del{\mu}l^\alpha \eta_{\alpha\beta} \del{\nu}l^\beta$
on $M_{t_1,t_0}$. 
\item The scalar field $\tau$ is specified by
\begin{equation} \label{tau-synch} 
    \tau = t-t_0 + \taur  
\end{equation}
in $M_{t_1,t_0}$ while the vector field
\begin{equation} \label{chi-def}
    \chi^\mu = \frac{1}{|\nabla\tau|^2_g}\nabla^\mu\tau 
\end{equation}
satisfies 
\begin{equation} \label{Lagrangian}
\chi^\mu = \delta^\mu_0
\end{equation}
in $M_{t_1,t_0}$.
\item If the initial data $\{\gr_{\mu\nu},\ggr_{\mu\nu},\taur,\taugr\}$ also satisfies the
constraint equations \eqref{grav-constr}-\eqref{wave-constr}
on $\Sigma_{t_0}$, then the pair $\{g_{\mu\nu},\tau\}$ solves
the conformal Einstein-scalar field equations
\begin{align} 
R_{\mu\nu}=\frac{(1-\Rtt)}{\tau}\nabla_\mu \nabla_\nu \tau,\quad
\Box_{g}\tau = \Wtt,\label{lag-confeqns}
\end{align}
and satisfies the wave gauge constraint 
\begin{equation} \label{lag-wave-gauge}
    X^\gamma :=\frac{1}{2}g^{\mu\nu}g^{\gamma\lambda}(2\Dc_\mu g_{\nu\lambda}-\Dc_{\lambda}g_{\mu\nu} )=0  
\end{equation}
in $M_{t_1,t_0}$.
\item If 
  \begin{equation}
    \label{eq:cont_crit1}
\max\biggl\{\sup_{M_{t_1,t_0}}\!\det(g_{\mu\nu}), \sup_{M_{t_1,t_0}}\!|\nabla\tau|_{g}^2\biggl\} <0
\end{equation}
and
\begin{align}
  \sup_{t_1<t<t_0}\Biggl(&\norm{g_{\mu\nu}(t)}_{W^{2,\infty}(\Tbb^{n-1})}+\norm{\del{t}g_{\mu\nu}(t)}_{W^{1,\infty}(\Tbb^{n-1})} \notag \\
  &\quad
  +\norm{\Dc_\nu \chi^\lambda(t)}_{W^{2,\infty}(\Tbb^{n-1})}+\norm{\del{t}(\Dc_\nu \chi^\lambda)(t)}_{W^{1,\infty}(\Tbb^{n-1})}\Bigr)<\infty,  \label{eq:cont_crit2}
\end{align}
then there exists a $t_1^*<t_1$ such that the solution $\Wsc$ can be uniquely continued to the time interval $(t_1^*,t_0]$.
\end{enumerate}
\end{prop}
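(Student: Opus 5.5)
The proof will follow the strategy of \cite[Prop.~5.6]{BeyerOliynyk:2021} closely, so I only describe how to adapt it. I first check that the first-order Lagrangian system \eqref{tconf-ford-C.1}--\eqref{tconf-ford-C.8} is symmetric hyperbolic. The principal part is unchanged from the vanishing-potential case: the only new terms are $\underline{\hat{\Wtt}}$, $\underline{\hat{\Wtt}}{}_\mu$ and the $\Rtt$-multiplied terms. These are lower order and smooth on the set where $\tau>0$, $\det(\ghu)<0$ and $|\vhu|^2_{\ghu}<0$. The operators $\Bhu^{\lambda\beta\alpha}\Jcch^\gamma_\alpha\del{\gamma}$ are handled as in \cite{BeyerOliynyk:2021}. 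After multiplying by the timelike covector dual to $\chih$, the matrix $\Bh^{\lambda\beta\alpha}\chih_\lambda$-type contraction is symmetric and the coefficient of $\del{0}$ is positive definite whenever $\chih$ is timelike. This positivity holds because $\Jc^\mu_0=\chihu^\mu$ by \eqref{tconf-ford-C.8}, together with \eqref{J-ev.1}--\eqref{J-ev.2}. The remaining equations are transport ODEs along $\del{0}$.

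With that, standard local existence for quasilinear symmetric hyperbolic systems on $\Tbb^{n-1}$ with $k>(n-1)/2+1$ gives the solution $\Wsc$ with regularity \eqref{eq:Wreg}. It also gives the continuous-dependence statement (a), since the open conditions $\tau>0$ and timelike $\vhu$ persist for nearby data. The data regularity accounts for the loss in $\whu$ and $\hhu$.

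For (b), I would derive homogeneous linear transport/hyperbolic equations for the differences $\del{\alpha}\ghu_{\mu\nu}-\Jc^\beta_\alpha\hhu_{\beta\mu\nu}$, $\vhu_\mu-\zhu_\mu$, $\del{\alpha}\tau-\Jc^\beta_\alpha\zhu_\beta$, and so on, exactly as in \cite[\S5]{BeyerOliynyk:2021}. The new terms $\underline{\hat{\Wtt}}$ and $\underline{\hat{\Wtt}}{}_\mu$ are compatible: \eqref{hatWtt-mu-def} is precisely the derivative of \eqref{hatWtt-def} when $\zh=\vh=\del{}\tauh$ and $\hh=\del{}\gh$, which is why $\zh$ was introduced as a separate variable. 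The differences vanish initially by \eqref{l-idata}--\eqref{hhu-idata} and \eqref{whu-idata-4}, so uniqueness yields (b).

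Part (c) then follows by undoing the first-order reduction and pulling back geometrically via $l$. Part (d) follows from $\del{0}\tau=\chihu^\alpha\vhu_\alpha=1$ and \eqref{chi-lag}.

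Part (e) is the constraint propagation, and I expect it to be the main obstacle because $\Rtt\neq 0$. I plan to use the contracted Bianchi identity together with $\Box_g\tau=\Wtt$. The aim is a linear homogeneous wave equation for $X^\gamma$ of the form $\Box_g X^\gamma+\dots=0$ with coefficients singular only in $\tau^{-1}$, harmless for $\tau>0$. To get it, I would compute $\nabla^\mu\bigl(R_{\mu\nu}-\tfrac12Rg_{\mu\nu}\bigr)$ from the reduced equations, and check that the right-hand side $\tau^{-1}(1-\Rtt)\nabla\nabla\tau$ has divergence matching its trace-adjusted form modulo terms proportional to $X$. This matching works precisely because $\Wtt=\Rtt\tau^{-1}(|\nabla\tau|^2+1)$ and $\Ttt_{ij}=0$ for the choices \eqref{eq:ekpyroticparameterconstants.N}. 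Initial vanishing of $X$ follows from \eqref{wave-constr}. Vanishing of $\del{0}X$ follows from the gravitational constraints \eqref{grav-constr}, whose form includes the $\hat{\Wtt}$ term for this reason.

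Part (f) is the continuation principle. It follows from the standard blow-up criterion for symmetric hyperbolic systems in $W^{1,\infty}$ of the unknowns. I would translate \eqref{eq:cont_crit1}--\eqref{eq:cont_crit2} into bounds on $\Wsc$ via $\Dc\chi$, $\Jc$ and $g$ as in \cite{BeyerOliynyk:2021}, using \eqref{tau-synch} to keep $\tau$ bounded away from $0$ on $(t_1,t_0]$ when $t_1>t_0-\min\taur$.
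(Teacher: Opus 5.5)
Your proposal is correct in substance and follows the same route as the paper. The paper gives no separate proof: it only remarks that the potential enters through sub-principal terms ($\underline{\hat{\Wtt}}$, $\underline{\hat{\Wtt}}{}_\mu$ and the $\Rtt$-dependent coefficients), so the argument of \cite[Prop.~5.6]{BeyerOliynyk:2021} carries over. Your checks are exactly what that adaptation requires: that $\hat{\Wtt}_\mu=\partial_\mu\hat{\Wtt}$ on the constraint surface, and that the divergence of $(1-\Rtt)\tau^{-1}\bigl(\nabla_i\nabla_j\tau-\tfrac12 g_{ij}\Box_g\tau\bigr)$ is linear in $(X,\nabla X)$ once $\Box_g\tau=\Wtt-X^k\nabla_k\tau$.

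Two minor corrections. First, positivity of the $\partial_0$-coefficient $\Bhu^{\lambda\beta\alpha}\Jcch^0_\alpha$ requires the slices $\{x^0=\text{const}\}$ to be spacelike, not merely $\hat{\chi}$ timelike; at $t_0$ this is $\mathring{g}^{00}<0$, which the hypotheses tacitly assume. Second, your restriction $t_1>t_0-\min\mathring{\tau}$ in (f) is genuinely needed: the FLRW solution $g=\eta$, $\tau=t$ satisfies \eqref{eq:cont_crit1}--\eqref{eq:cont_crit2} on $(0,t_0]$ but cannot be continued through $\tau=0$.
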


\subsection{Temporal synchronization of the singularity\label{temp-synch}}
In order to temporally synchronize big bang singularities in our solutions, we need a time coordinate whose level set at a particular time, e.g., $0$, agrees with the spacelike singular hypersurface. The temporal synchronization of big bang singularities that occur in the solutions to the Einstein-scalar field equations with vanishing potential from \cite{BeyerOliynyk:2021} is achieved by using the scalar field $\tau$ as a time coordinate. We do the same in this article and again use $\tau$ to synchronize big bang singularities that appear, this time, in solutions of the Einstein-scalar field equations with non-vanishing potentials of the type \eqref{eq:exponentialpotential} . 

It follows from Proposition  \ref{lag-exist-prop}.(d) that $\tau$ coincides with the Lagrangian time coordinate $t$ if and only if $\taur=t_0$ on the initial hypersurface  $\Sigma_{t_0}$. However, if we want our results to apply to an open set of geometric data, we cannot in general assume that $\taur $ is constant. To resolve this tension, we proceed as in   \cite{BeyerOliynyk:2021} and allow for choices of
$\tau$ that are close to constant on $\Sigma_{t_0}$, i.e.,~$\tau = t_0 + \rhor$ in $\Sigma_{t_0}$ with $\rhor$ a sufficiently small function. As shown in the proof of \cite[Prop.~5.8]{BeyerOliynyk:2021}, which is easily adapted to allow for non-vanishing potentials, it possible to  evolve $\tau$ for a short amount of time to obtain a solution $\{g_{\mu\nu},\tau\}$ of the conformal Einstein-scalar field equations on $M_{t_1,t_0}$ for some $t_1<t_0$ with $t_1$ close to $t_0$ and then find a level surface of $\tau^{-1}(t^*_0)$
for some $t^*_0\in (t_1,t_0)$ that satisfies $\tau^{-1}(t^*_0) \subset (t_1,t_0)\times \Tbb^{n-1}$ and $\tau^{-1}(t^*_0)\cong \Tbb^{n-1}$. By replacing $\Sigma_{t_0}$ with $\tau^{-1}(t^*_0)$, this yields a hypersurface $\tau^{-1}(t^*_0)\cong \Tbb^{n-1}$ on which
$\tau$ is constant. This construction is made precise in the following proposition and we omit the proof since it follows from a straightforward modification of \cite[Prop.~5.8]{BeyerOliynyk:2021}.

\begin{prop} \label{synch-prop}
Suppose $k>(n-1)/2+1$, $t_0>0$, the initial data $\taur=t_0+\rhor$, $\rhor\in H^{k+2}(\Tbb^{n-1})$, $\taugr\in H^{k+1}(\Tbb^{n-1})$, $\gr_{\mu\nu}\in H^{k+1}(\Tbb^{n-1},\Sbb{n})$ and $\ggr_{\mu\nu}\in H^{k}(\Tbb^{n-1},\Sbb{n})$ is chosen so that 
the inequalities $\det(\gr_{\mu\nu})<0$ and $|\vr|_{\gr}^2 <0$ hold
and the constraint equations \eqref{grav-constr}-\eqref{wave-constr} are satisfied, and
let $\{\gtt_{\Lambda\Gamma},\Ktt_{\Lambda\Gamma},\taur,\taugr\}$
denote the geometric initial data on $\Sigma_{t_0}=\{t_0\}\times \Tbb^{n-1}$ that is determined from the initial data $\{\gr_{\mu\nu},\ggr_{\mu\nu},\taur=t_0+\rhor,\taugr\}$ via \eqref{gtt-def1}.
Then for any $\tilde{\delta}>0$, there exists a $\delta>0$ and times $t^*_1<t^*_0<t_0$ such that if $\norm{\rhor}_{H^{k+2}(\Tbb^{n-1})}<\delta$, then:
\begin{enumerate}[(a)]
    \item The solution
$W$ from Proposition \ref{lag-exist-prop} to the initial value problem consisting of the evolution
equations \eqref{tconf-ford-C.1}-\eqref{tconf-ford-C.8} and
the initial conditions \eqref{l-idata}-\eqref{hhu-idata} exists on $M_{t_1^*,t_0}=(t_1^*,t_0]\times \Tbb^{n-1}$.
\item The pair $\{g_{\mu\nu}=\del{\mu}l^\alpha\ghu_{\alpha\beta}\del{\nu}l^\beta,\tau\}$
determined from the solution $W$ via \eqref{eq:Wdef} defines a solution to the conformal Einstein-scalar field equations \eqref{lag-confeqns}.
\item The map
\begin{equation*}
    \Psi \: : \: (t_0,t_1)\times \Tbb^{n-1}  \longrightarrow  \Rbb \times \Tbb^{n-1}\: :\: (t,x)\longmapsto (\ttl,\xt) = (t+\rhor(x),x)
\end{equation*}
defines a diffeomorphism onto its image and the push-forward
$ \{ \gt_{\mu\nu} = (\Psi_*g)_{\mu\nu},\taut=\Psi_*\tau\}$ of the solution $\{g_{\mu\nu},\tau\}$ by this map 
determines geometric initial data 
$\{\gttt_{\Lambda\Sigma},\Kttt_{\Lambda\Sigma},\mathring{\taut},\grave{\taut}\}$ on the hypersurface $\Sigma_{t^*_0}=\{t^*_0\}\times\Tbb^n$ satisfying $\mathring{\taut}=t^*_0$ and
\begin{equation*}
\norm{\gttt_{\Lambda\Sigma}-\gtt_{\Lambda\Sigma}}_{H^{k+1}(\Tbb^{n-1})}+\norm{\Kttt_{\Lambda\Sigma}-\Ktt_{\Lambda\Sigma}}_{H^{k}(\Tbb^{n-1})}+\norm{t^*_0-\taur}_{H^{k+2}(\Tbb^{n-1})}+\norm{\grave{\taut}-\taugr}_{H^{k+1}(\Tbb^n)} <\tilde{\delta}.    
\end{equation*}
\end{enumerate}
\end{prop}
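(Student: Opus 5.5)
The plan is to follow the proof of \cite[Prop.~5.8]{BeyerOliynyk:2021}, modifying only the places where the nonvanishing potential enters. Since these modifications affect only sub-principal terms, the argument reduces to continuous dependence on data combined with an implicit-function argument for the level set of $\tau$.

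\textbf{Step 1: uniform existence time.} First treat the reference case $\rhor=0$ and fix the unperturbed data $\{\gr_{\mu\nu},\ggr_{\mu\nu},\taur=t_0,\taugr\}$. Proposition~\ref{lag-exist-prop} yields a solution $\Wsc$ on some $M_{t_1,t_0}$. Choose $t_1<t_1^*<t_0^*<t_0$. By the stability statement Proposition~\ref{lag-exist-prop}.(a), there is a $\delta_1>0$ such that every data set within $\delta_1$ in $H^k$ generates a solution on the same slab $M_{t_1^*,t_0}$. This solution also depends continuously on the data in $\bigcap_j C^j((t_1^*,t_0],H^{k-j})$.

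The data $\Wsc_{\!0}$ depend on $\rhor$ through several quantities: $\vhu$, $\whu$, $\zhu$, $\Jcr$, and $\mathring{\Wtt}$ in \eqref{Wtt-ring-def}. The only new feature compared to \cite{BeyerOliynyk:2021} is the factor $\Rtt/\taur$ in $\mathring{\Wtt}$. This factor is smooth in $\rhor$ because $\taur\ge t_0/2>0$ for $\|\rhor\|_{L^\infty}$ small. Consequently the map $\rhor\mapsto\Wsc_{\!0}$ is continuous from $H^{k+2}$ into $H^k$ near $0$, and shrinking $\delta$ gives (a). Part (b) then follows immediately from Proposition~\ref{lag-exist-prop}.(e), since the constraints are assumed to hold.

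\textbf{Step 2: the level set of $\tau$.} By Proposition~\ref{lag-exist-prop}.(d), in Lagrangian coordinates we have $\tau=t-t_0+t_0+\rhor(x)=t+\rhor(x)$. Hence $\Psi(t,x)=(\tau,x)$, and $\Psi$ is trivially a diffeomorphism onto its image, its inverse being $(\ttl,x)\mapsto(\ttl-\rhor(x),x)$. The pushed-forward scalar is $\taut=\ttl$. The level set $\{\ttl=t_0^*\}$ is the graph $t=t_0^*-\rhor(x)$. This graph lies inside $(t_1^*,t_0)\times\Tbb^{n-1}$ once $\|\rhor\|_{L^\infty}\lesssim\|\rhor\|_{H^{k+2}}$ is smaller than $\min(t_0-t_0^*,t_0^*-t_1^*)$. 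It is spacelike for small $\rhor$, since $\nabla\tau$ is timelike by \eqref{eq:cont_crit1} and the normal is $\nabla\tau$ itself. Moreover $\mathring{\taut}=t_0^*$ by construction.

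\textbf{Step 3: the closeness estimate.} This is the step I expect to be the main technical obstacle. Compute the geometric data $\{\gttt,\Kttt,\grave{\taut}\}$ induced on the graph $\{t=t_0^*-\rhor(x)\}$ from $g=\Psi^*\gt$. This requires composing the fields $g_{\mu\nu}(t_0^*-\rhor(x),x)$ and $\del{t}g_{\mu\nu}(t_0^*-\rhor(x),x)$ with $\rhor$ and $\del{\Lambda}\rhor$. Comparing with the $\rhor=0$ solution restricted to $t=t_0^*$, the difference splits into two parts: the dependence of the solution on $\rhor$, which is controlled by Step 1, and the shift of the evaluation surface, which is controlled by Moser/composition estimates using $\rhor\in H^{k+2}$.

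The target statement, however, compares against the original data $\gtt,\Ktt$ at $t_0$. To handle this, take $t_0-t_0^*$ small and use time continuity of the solution in $C^0((t_1^*,t_0],H^k)$. This is delicate because it would need one extra derivative on $\gttt$ ($H^{k+1}$) and $\grave{\taut}$ ($H^{k+1}$) compared to the $H^k$ regularity of $\Wsc$. The remedy I would try, as in \cite{BeyerOliynyk:2021}, is to use the propagated regularity: $g$ and $\tau$ are themselves one derivative smoother than the first-order variables, via \eqref{eq:Lag-constraints}. Combining this with the relation $\Jc=\partial l$, the required closeness follows by choosing first $t_0^*$ close to $t_0$ and then $\delta$ small enough that the total deviation is less than $\tilde\delta$.
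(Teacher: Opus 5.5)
Your Steps 1 and 2 follow the route the paper has in mind. The paper omits the proof and refers to \cite[Prop.~5.8]{BeyerOliynyk:2021}. Cauchy stability gives a uniform slab, part (b) comes from Proposition~\ref{lag-exist-prop}.(e), and $\tau=t+\rhor$ makes $\Psi$ the map that flattens the level sets of $\tau$.

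You rightly flag Step 3 as the crux, but the remedy you propose does not work. The relations \eqref{eq:Lag-constraints} give an extra derivative to $\ghu_{\mu\nu}$, $\vhu_\mu$, $\tau$ and $l^\mu$. They do not give one to the geometric pull-back $g_{\mu\nu}=\Jc^\alpha_\mu\Jc^\beta_\nu\ghu_{\alpha\beta}$, because that contains $\Jc$ itself:
\begin{itemize}
\item $\Jc^\mu_\Lambda=\del{\Lambda}l^\mu$, where $l$ is the flow of $\chih$;
\item $\chih$ is only $H^{k+1}$, since it is built from $\gh\in H^{k+1}$ and $\del{}\tauh\in H^{k+1}$;
\item equivalently, $\del{0}\Jc=\Jc\Jsc$ with $\Jsc$ built from $\whu,\hhu\in H^k$.
\end{itemize}
Geometrically, $\gttt$ in the $\Psi$-coordinates is a pull-back. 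One takes the $H^{k+1}$ metric that the level set $\{\tauh=t_0^*\}$ carries in wave coordinates and pulls it back by the diffeomorphism of $\Tbb^{n-1}$ obtained by following the integral curves of $\chih$. That diffeomorphism is only $H^{k+1}$. So $\gttt$ is in general only in $H^k$, and no choice of $t_0^*$ and $\delta$ yields the $H^{k+1}$ bound in these coordinates. The second fundamental form does not lose a derivative. Up to sign it is the restriction of $\nabla\nabla\tau/(-|\nabla\tau|_g^2)^{1/2}$, and $\nabla\nabla\tau$ is built from $\whu$, $\hhu$, $\ghu$ and $\Jc$, all in $H^k$.

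To get the stated regularity you have to read the data off in a different parametrisation of the same hypersurface. One option:
\begin{itemize}
\item work in the wave coordinates $(\xh^\mu)$, where $\gh\in C^0H^{k+1}\cap C^1H^k$ and $\tauh\in C^0H^{k+2}\cap C^1H^{k+1}$;
\item write $\{\tauh=t_0^*\}$ as a graph $\xh^0=f(\xh^\Lambda)$ by the implicit function theorem;
\item use $\xh^\Lambda$ as coordinates on it.
\end{itemize}
This differs from $\Psi$ only by a spatial diffeomorphism, which is all Remark~\ref{synch-rem} needs. Alternatively, accept the loss of one derivative for $\gttt$. That only shifts the regularity index by one when Remark~\ref{synch-rem} is used to reduce Theorem~\ref{glob-stab-thm} to synchronised data.

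In either case there is a second issue. The top-order derivatives are only continuous in time with values in $L^2$. Restricting them to a non-flat spacelike graph such as $t=t_0^*-\rhor(x)$ is a trace statement that needs energy estimates on spacelike hypersurfaces. It is not a Moser/composition estimate, as your Step 3 suggests.

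Finally, in the $\Psi$-coordinates $\taut(\tilde t,\tilde x)=\tilde t$, so $\grave{\taut}\equiv 1$. The last term of the estimate therefore does not follow from time continuity and small $\rhor$ unless $\taugr$ is itself close to $1$. This component has to be interpreted gauge-invariantly, for instance through the normal derivative $(-|\nabla\tau|_g^2)^{1/2}$.
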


\begin{rem} \label{synch-rem}
Given the geometric initial data 
$\{\gttt_{\Lambda\Sigma},\Kttt_{\Lambda\Sigma},\mathring{\taut},\grave{\taut}\}$ on $\Sigma_{t^*_0}$ from Proposition \ref{synch-prop}, we can always impose the wave gauge constraint on $\Sigma_{t^*_0}$ by choosing an appropriate choice of the free initial data\footnote{It is worthwhile noting that this choice of free initial data is, in general, different from the lapse-shift pair computed from restricting the conformal Einstein-scalar field solution $\{\gt_{\mu\nu},\taut\}$ from Proposition \ref{synch-prop} to $\Sigma_{t^*_0}$.} $\{\Nttt,\bttt_\Lambda,\dot{\Nttt},\dot{\bttt}{}_\Lambda\}$; see Remark \ref{idata-rem} for details. Because of this, we lose no generality, as far as our stability results are concerned, by assuming that the initial data \eqref{gt-idata}-\eqref{dt-tau-idata} satisfies the gravitational and wave gauge constraint equations
\eqref{grav-constr}-\eqref{wave-constr}
along with the synchronisation condition $\taur = t_0$ on the initial hypersurface $\Sigma_{t_0}$,
which as a consequence of \eqref{tau-synch} implies that
\begin{equation} \label{tau-fix}
\tau = t \quad \text{in $M_{t_0,t_1}$}.
\end{equation}
\end{rem}

\section{A Fermi-Walker transported frame \label{Fermi}}

Proposition \ref{lag-exist-prop} establishes the local-in-time existence of solutions $\{g_{\mu\nu},\tau\}$ to the conformal Einstein–scalar field system \eqref{lag-confeqns} in Lagrangian coordinates 
$(x^\mu)$, subject to both the wave-gauge constraint \eqref{lag-wave-gauge} and the slicing condition \eqref{tau-fix}. While this representation is adequate for establishing local solutions, it is poorly suited for deriving global results, particularly near $t=0$.
To address this limitation, we follow the method of \cite{BeyerOliynyk:2021} and employ an orthonormal frame formulation, originally inspired by \cite{fournodavlos2020b}.

In the following, we employ the same orthonormal frame introduced in \cite[\S6]{BeyerOliynyk:2021}. After briefly summarising the basic setup, we state only those results from that reference that are needed for our analysis here, and we refer the reader to \cite[\S6]{BeyerOliynyk:2021} for the full derivations. 

The first step in constructing the orthonormal frame is to fix the temporal frame vector $e_0=e_0^\mu \del{\mu}$ via
\begin{equation}\label{e0-def}
e_0 := (-|\chi|_g^2)^{-\frac{1}{2}}\chi \oset{\eqref{chi-def}}{=}-\beta \nabla \tau
\end{equation}
where $\beta$ is defined by
\begin{equation}\label{alpha-def}
    \beta = (-|\nabla\tau|^2_g)^{-\frac{1}{2}}.
\end{equation}
It follows from \eqref{chi-def}, \eqref{Lagrangian} and  \eqref{e0-def} that the coordinate components of $e_0$ are determined by
\begin{equation}\label{e0-mu}
e_0^\mu = \beta^{-1}\chi^\mu =\beta^{-1}\delta^\mu_0.
\end{equation}

The timelike vector field $e_0$ can be completed to a orthonormal frame
$e_i=e_i^\mu\partial_{\mu}$ by evolving the spatial frame $e_I=e_I^\mu\partial_{\mu}$ using Fermi-Walker transport:
\begin{equation} \label{Fermi-A}
\nabla_{e_0}e_J = -\frac{g(\nabla_{e_0}e_0,e_J)}{g(e_0,e_0)} e_0.
\end{equation}
By solving this transport equation starting with initial data that is initially orthonormal, we obtain an orthonormal frame $e_i$ for the conformal metric $g$ on the entire region where the solution $\{g_{\mu\nu},\tau\}$ is defined.
So, letting 
\begin{equation*}
%\label{gij-def}
g_{ij}:=g(e_i,e_j) = e_i^\mu g_{\mu\nu} e^\nu_j
\end{equation*}
denote the frame metric, we have
\begin{equation}\label{orthog}
g_{ij}= \eta_{ij}:= -\delta_i^0\delta_j^0+ \delta_i^I\delta_j^J\delta_{IJ}.
\end{equation}

From this point on, all frame indices are expressed with respect to the above defined orthonormal frame $e_i$. In particular, the frame components
$e_i^j$ of the frame vector fields $e_i$ are
\begin{equation} \label{frame-def}
e_i^j = \delta^j_i.
\end{equation}
Also, relative to this orthonormal frame, the connection coefficients $\Gamma_{i}{}^k{}_{j}$ associated to the conformal metric $g$ are determined via
\begin{equation} \label{Gamma-def}
\nabla_{e_i} e_j = \Gamma_{i}{}^k{}_{j} e_k.
\end{equation}

As shown in \cite[\S6]{BeyerOliynyk:2021}, the spatial frame coefficients $e_I^\lambda$ satisfy
\begin{equation} \label{e0I-fix}
    e_I^0 = 0
\end{equation}
and
\begin{equation} \label{[e0,eI]-C}
\del{t}e_I^\Lambda=\beta (\gamma_0{}^J{}_I - \gamma_{I}{}^J{}_0)
e^\Lambda_J,
\end{equation}
while the derivative of $\beta$ can be expressed as
\begin{equation}\label{grad-alpha}
    e_i(\beta)=\nabla_i\beta =
    -\beta^2 \delta^k_0 (\Dc_i\Dc_k\tau
    -\Cc_i{}^l{}_k \Dc_l\tau).
\end{equation}
Furthermore, we recall that the connection coefficients $\gamma_i{}^k{}_j$ associated to the $g$-orthonormal frame $e_i$ and the background metric $\gc$, i.e.,~$\Dc_{e_i} e_j = \gamma_{i}{}^k{}_{j} e_k$,
satisfy
\begin{gather}
\gamma_{0}{}^k{}_j 
=
-\beta\delta^i_0\bigl(\delta^0_jg^{lk}+\delta^l_j\delta^k_0\bigr)
(\Dc_{i}\Dc_l\tau-\Cc_i{}^p{}_l\Dc_p\tau)  -\Cc_{0}{}^k{}_j,\label{gamma-0kj}\\
  \gamma_I{}^0{}_0 = \frac{1}{2}\delta^i_I\delta_0^j\delta^k_0\Dc_i g_{jk}, \quad
   \gamma_I{}^K{}_0 =-  \delta^{Kl} \delta_I^i\delta^j_0\Dc_i g_{jl}+ \eta^{KL}\gamma_I{}^0{}_L, \label{gamma-I00-IK0}
\intertext{and}
\del{t}\gamma_I{}^k{}_J=\beta\bigl(e_I(\gamma_0{}^k{}_J)-\gamma_I{}^l{}_J\gamma_0{}^k{}_l+\gamma_0{}^l{}_J\gamma_I{}^k{}_l+(\gamma_0{}^l{}_I-\gamma_I{}^l{}_0)\gamma_l{}^k{}_J\bigr).\label{gamma-Ikj-A}
\end{gather}

\section{First-order form}
\label{sec:FirstOrderForm}

In this section, we present a frame formulation of the reduced conformal Einstein–scalar field equations in first-order form and derive a Fuchsian formulation of these equations. This Fuchsian formulation is suitable for establishing the existence of solutions up to the big bang singularity at 
$t=0$. Our presentation closely follows \cite{BeyerOliynyk:2021,beyerStabilityFLRWSolutions2023}.

\subsection{Primary fields}
The derivation of the first-order equations begins with formulating the evolution equations in terms of the primary fields
\begin{equation}\label{p-fields}
g_{ijk}=\Dc_i g_{jk}  
\AND \tau_{ij}=\Dc_i\Dc_j \tau .
\end{equation}
After obtaining these equations, we derive evolution equations for the differentiated fields
\begin{equation}\label{d-fields}
    g_{ijkl}=\Dc_ig_{jkl} \AND \tau_{ijk}=\Dc_i\tau_{jk}.
\end{equation}

To start the derivation of the first-order equations for the primary fields, we use \eqref{red-Ricci} and \eqref{tau-fix} to express
\eqref{confESFF} as
\begin{equation}\label{confESSFJ}
g^{kj}\Dc_k \Dc_j g_{lm} =-\frac{2(1\added{-\Rtt})}{t}\bigl(
\Dc_l \Dc_m \tau - \Cc_l{}^p{}_m\Dc_p \tau  \bigr)- Q_{lm},
\end{equation}
where 
\begin{equation}\label{grad-tau}
    \Dc_i\tau= -\beta^{-1}g_{ij}e^j_0 = \beta^{-1}\delta_i^0 
\end{equation}
and
\begin{equation}\label{Cc-dt}
\Cc_l{}^p{}_m\Dc_p \tau 
=-\frac{1}{2}\beta^{-1}e_0^j\bigl(\Dc_l g_{jm}
+\Dc_m g_{jl} - \Dc_j g_{lm}\bigr)
\end{equation}
as a consequence of \eqref{Ccdef},   \eqref{e0-def} and \eqref{orthog}-\eqref{frame-def}. Next, multiplying \eqref{confESSFJ} by $-e_0^i$ produces
\begin{equation} \label{for-A}
-e^i_0 g^{kj}\Dc_k \Dc_j g_{lm} =\frac{2(1\added{-\Rtt})}{t}e^i_0
\bigl(
\Dc_l \Dc_m \tau - \Cc_l{}^p{}_m\Dc_p \tau  \bigr) + e^i_0 Q_{lm}
.
\end{equation}
Noting that $-g^{ik}e^j_0\Dc_k\Dc_j g_{lm} + g^{ij}e^k_0 \Dc_k \Dc_j g_{lm} 
=0$ by \eqref{commutator},
we can add this expression to \eqref{for-A} to get
\begin{align}\label{for-B}
B^{ijk}\Dc_kg_{jlm} =&
\frac{1-\added{\Rtt}}{t}\beta^{-1}\delta^i_0\delta^j_0(g_{ljm}+g_{mjl}-g_{jlm})
+\frac{2(1-\added{\Rtt})}{t}\delta^i_0\tau_{lm}+ \delta^i_0 Q_{lm}
\end{align}
where
\begin{align}
  B^{ijk} &= -\delta^i_0 \eta^{jk}-\delta^j_0\eta^{ik}+ \eta^{ij}\delta^k_0\label{B-def}
\end{align}
and we note that \eqref{frame-def}, \eqref{orthog} and \eqref{Cc-dt} were employed in the derivation. A similar analysis, using \eqref{grad-tau}, shows that the conformal scalar field equation
\eqref{confESFI} can be written in first-order
form as 
\begin{align}\label{for-D}
  B^{ijk}\Dc_k \tau_{jl} =& -\delta^i_0 \eta^{pm}\eta^{qm}g_{lmm}  \tau_{pq}    
   \added{
     -{\delta^i_0}\Dc_l\Wtt}.
\end{align}
Together, equations \eqref{for-B} and \eqref{for-D} constitute the desired first-order formulation of the reduced conformal Einstein–scalar field equations in terms of the primary fields \eqref{p-fields}.

For the analysis that follows, it proves advantageous to interpret \eqref{for-B} and \eqref{for-D} as transport equations for $g_{jlm}$ and 
$\tau_{jl}$, respectively, and to rewrite them in the form 
\begin{align}
\del{t}g_{rlm} =&
                 \frac{1\added{-\Rtt}}{t}\delta^0_r\delta^j_0(g_{ljm}+g_{mjl}-g_{jlm})
                 +\frac{2(1\added{-\Rtt})}{t}\delta_r^0\beta
                 \tau_{lm}-\delta_{ri}B^{ijK}\beta g_{Kjlm}+ \delta_{ri}\beta Q^i_{lm}
                 \label{for-F.1}
                   \intertext{and}
                   \del{t}\tau_{rl} =&
                                       \delta_{ri}\beta J^i_l-\delta_{ri}B^{ijK}\beta\tau_{Kjl}  
   \added{
     -{\delta^0_r \beta}\Dc_l\Wtt},
 \label{for-F.2}
\end{align}
respectively, where in deriving these expressions we employed \eqref{e0-mu}, \eqref{d-fields} and
\eqref{B-def} and have set
\begin{align}
  Q^i_{lm} &= \delta^i_0 Q_{lm}+\delta^{ij}\bigl(  \gamma_0{}^p{}_j  g_{plm} +\gamma_0{}^p{}_l  g_{jpm} +\gamma_0{}^p{}_m  g_{jlp}\bigr)
             \label{Qilm-def}
\intertext{and}
             J^i_l&=-\delta^i_0\eta^{kp}\eta^{jq}g_{lpq}\tau_{kj}+\delta^{ij}\bigl(  \gamma_0{}^p{}_j  \tau_{pl} +\gamma_0{}^p{}_l  \tau_{jp}\bigr).\label{J-def}
\end{align}

\begin{rem}
An important point for the analysis that follows is that we do not use equation \eqref{for-F.1} with 
$l=m=0$ in any subsequent arguments. This is because the wave
gauge condition \eqref{wave-gauge} can be used to determine $g_{i00}$ in terms of the other metric variables $g_{ilM}$. To see this, we use \eqref{Xdef}, \eqref{orthog}, and \eqref{p-fields} to rewrite the wave gauge constraint \eqref{wave-gauge} as
\begin{equation*}
2X_0 = -g_{000}+\delta^{JK}(2g_{JK0}-g_{0JK})=0 \AND 2 X_I = -(2g_{00I}-g_{I00})+\delta^{JK}(2g_{JKI}-g_{IJK})=0.
\end{equation*}
Rearranging and using the symmetry $g_{IJ0}=g_{I0J}$, this becomes
\begin{equation} \label{gi00}
g_{000}= -\delta^{JK}(g_{0JK}-2g_{J0K}) \AND g_{I00}= 2g_{00I}-\delta^{JK}(2g_{JKI}-g_{IJK}),
\end{equation}
which verifies the claim.
\end{rem}

Separating \eqref{for-F.1} into the components $(r,l,m)=(0,0,M)$ and $(r,l,m)=(R,0,M)$, $(r,l,m)=(0,L,M)$ and
$(r,l,m)=(R,L,M)$, we find,
with the help of \eqref{gi00}, that 
\begin{align}
  \del{t}g_{00M} &= \frac{1\added{-\Rtt}}{t}\bigl(2g_{00M}-\delta^{IJ}(2g_{IJM}-g_{MIJ})\bigr)
                   +\frac{2(1\added{-\Rtt})}{t}\beta\tau_{0M} - B^{0jK}\beta g_{Kj0M}
+\beta Q^0_{0M}
, \label{for-G.1}\\
\del{t}g_{R0M} &= -\delta_{RI}\beta B^{IjK}g_{Kj0M}+\delta_{RI}\beta Q^I_{0M}, \label{for-G.2}\\
  \del{t}g_{0LM} &= \frac{1\added{-\Rtt}}{t}(g_{L0M}+g_{M0L}-g_{0LM})
                   +\frac{2(1\added{-\Rtt})}{t}\beta\tau_{LM}- B^{0jK}\beta g_{KjLM}
                   +\beta Q^0_{LM}
  \label{for-G.3}
\intertext{and}
\del{t}g_{RLM} &= -\delta_{RI}\beta B^{IjK}g_{KjLM}+ \delta_{RI}\beta Q^I_{LM}. \label{for-G.4}
\end{align}
For the analysis, the evolution equation \eqref{for-G.3} for $g_{0LM}$ is problematic, and so, we instead consider $g_{0LM}-g_{L0M}-g_{M0L}$, which satisfies
\begin{align}
\del{t}(g_{0LM}-g_{L0M}-g_{M0L}) =&  -\frac{1\added{-\Rtt}}{t}(g_{0LM}-g_{L0M}-g_{M0L}) +\frac{2(1\added{-\Rtt})}{t}\beta\tau_{LM}+S_{LM}
\label{for-H}
\end{align}
where
\begin{align}
    S_{LM}=&-\beta B^{0jK}g_{KjLM}
+\beta Q^0_{LM} +\delta_{MI}\beta B^{IjK}g_{Kj0L} \notag \\
&-\delta_{MI}\beta Q^I_{0L} 
+\delta_{LI}\beta B^{IjK}g_{Kj0M}-\delta_{LI}\beta Q^I_{0M}.\label{S-def}
\end{align}

\subsection{Differentiated fields} Applying $\Dc_q$ to \eqref{confESSFJ}, we find, with the
help of \eqref{commutator}, \eqref{tau-fix}, \eqref{frame-def}, \eqref{orthog}, \eqref{p-fields}-\eqref{d-fields} and \eqref{grad-tau}-\eqref{Cc-dt}, that
\begin{align}
   g^{kj}\Dc_k \Dc_q \Dc_j g_{lm}
   =&-\frac{1\added{-\Rtt}}{t}\beta^{-1} e_0^j (\Dc_q\Dc_l g_{jm}
      +\Dc_q\Dc_m g_{jl}-\Dc_q\Dc_j g_{lm}) \notag \\
      &-\frac{2(1\added{-\Rtt})}{t}\Dc_q\Dc_l\Dc_m\tau -P_{qlm}
      ,\label{confESSFL}
\end{align}
where
\begin{align}
P_{qlm}=&  \frac{1\added{-\Rtt}}{t}\bigl(\Dc_q(\beta^{-1}) \delta_0^j+\beta^{-1}  \gamma_q{}^j{}_0\bigr)  (g_{ljm}+g_{mjl}-
g_{jlm}) - \eta^{kr}\eta^{js} g_{qrs}g_{kjlm} \notag\\
 &+ \Dc_q Q_{lm}
+(1\added{-\Rtt})\biggl(- \frac{1}{t^2}\beta^{-1} \delta_0^j (g_{ljm}+g_{mjl}-
g_{jlm})-\frac{2}{t^2}
 \tau_{lm}\biggr)\beta^{-1}\delta_q^0 \label{P-def}
\end{align}
and in deriving \eqref{P-def} we have used
 $\Dc_q e^j_0=\Dc_q \delta^j_0 = \gamma_q{}^j{}_0$.
We further observe by  \eqref{grad-alpha}, \eqref{Cc-dt} and
\eqref{gi00} that
\begin{equation}\label{grad-alpha-A}
\Dc_q\beta 
= e_q(\beta)=-\delta_q^0\Bigl(\beta^{2}\tau_{00}-\frac{1}{2}\beta \delta^{JK}(g_{0JK}-2g_{J0K}) \Bigr)
-\delta_q^P\Bigl(\beta^{2}\tau_{P0}+\frac{1}{2}\beta\bigl(2g_{00P}-\delta^{JK}(2g_{JKP}-g_{PJK})\bigr)\Bigr).
\end{equation}
Noting that
$-g^{ik}e^j_0\Dc_k\Dc_q\Dc_j g_{lm} + g^{ij}e^k_0 \Dc_k \Dc_q \Dc_j g_{lm} =0$
by \eqref{commutator}, we add this to \eqref{confESSFL} to obtain
\begin{align}
  B^{ijk}\beta\Dc_k g_{qjlm}
  =& \frac{1\added{-\Rtt}}{t} \delta_0^i\delta_0^j  (g_{qljm}+ g_{qmjl}-g_{qjlm})
     +\frac{2(1\added{-\Rtt})}{t}\delta^i_0\beta\tau_{qlm}+ \delta^i_0 \beta P_{qlm}.\label{for-I}
\end{align}
Similarly, applying $\Dc_q$ to \eqref{confESFI} gives
\begin{align} 
  B^{ijk}\beta\Dc_k\tau_{qjl}
  =&  \beta K^i_{ql} \added{     -\delta_0^i\beta\Dc_q\Dc_l\Wtt}\label{for-J}
\end{align}
where
\begin{equation} \label{K-def}
K^i_{ql}= \delta^{i}_0\bigl(
-\eta^{jr}\eta^{ks}g_{qrs}\tau_{kjl}
-\bigl(\eta^{jr}\eta^{ks}g_{qlrs}-2\eta^{jm}\eta^{rn}\eta^{ks}g_{qmn}
g_{lrs}\bigr)\tau_{jk} - \eta^{jr}\eta^{ks}g_{lrs}\tau_{qjk}\bigr)
\end{equation}
and in deriving this we have again used \eqref{grad-tau}.
Together, equations \eqref{for-I} and \eqref{for-J} determine a first-order system of evolution equations for the differentiated fields \eqref{d-fields}.

\begin{rem}
In the arguments that follow, we need only consider the $q=Q$ components from
\eqref{for-I} and \eqref{for-J}. This is because we can obtain $g_{0jlm}$ and
$\tau_{0jl}$ from the equations \eqref{for-B} and \eqref{for-D}; specifically,
\begin{align}
  g_{0rlm} =& \frac{1\added{-\Rtt}}{t}\beta^{-1}\delta_r^0\delta_0^j(g_{ljm}+g_{mjl}-g_{jlm})
             +\frac{2(1\added{-\Rtt})}{t}\delta_r^0\tau_{lm} - \delta_{ri}B^{ijK}g_{Kjlm}
             +\delta_r^0 Q_{lm}
             \label{for-K.1}
\intertext{and}
             \tau_{0rl} =&
                          -\delta_{ri}B^{ijK}\tau_{Kjl}
                           -\delta_r^0 \eta^{jp}\eta^{kq} g_{lpq} \tau_{kj}\added{
     -\delta^0_r\Dc_l\Wtt}. \label{for-K.2}
\end{align}
\end{rem}

\subsection{Frame and connection coefficient transport equations}
To close the system, we require evolution equations for the frame and connection coefficients. As their derivation follows the same steps as in \cite[\S7.3]{BeyerOliynyk:2021}, we omit the details and list the resulting equations below:
\begin{align}
\del{t}\beta 
&=-\beta^{3}\tau_{00}+\frac{1}{2}\beta^2 \delta^{JK}(g_{0JK}-2g_{J0K}), \label{for-L}\\
\del{t}e_I^\Lambda=&-\beta \Bigl(\frac{1}{2}\delta^{JK}(g_{0IK}-g_{I0K}-g_{K0I}) + \delta^{JK}\gamma_{I}{}^0{}_K \Bigr)
e^\Lambda_J, \label{for-M.1} \\
\del{t}\gamma_I{}^k{}_J=&
-\beta\Bigl(\delta^k_0\Bigl(\beta e_I(\tau_{0J})+\frac{1}{2}e_I(g_{J00})\Bigr)+\frac{1}{2}\eta^{kl}\bigl(e_I(g_{0Jl})+e_I(g_{J0l})-e_I(g_{l0J})\bigr)\Bigr)+L_I{}^k{}_J, \label{for-M.2}
\end{align}
where
\begin{align}
    L_I{}^k{}_J =& \beta\Bigl(\delta^k_0\Bigl(\beta^2 \tau_{I0}+\frac{1}{2}\beta \bigl(2g_{00I}-\delta^{KL}(2g_{KLI}-g_{IKL})\bigr)\Bigr)\tau_{0J} -\gamma_I{}^l{}_J\gamma_0{}^k{}_l+\gamma_0{}^l{}_J\gamma_I{}^k{}_l+ (\gamma_0{}^l{}_I-\gamma_I{}^l{}_0)\gamma_l{}^k{}_J\Bigr). \label{L-def}
\end{align}
We further recall from \cite[\S7.3]{BeyerOliynyk:2021} the following formulas for the background connection coefficients:
\begin{align}
 \gamma{}_0{}^k{}_j
    &=-(\delta^0_j\eta^{kl}+\delta^l_j\delta^k_0)\Bigl(\beta\tau_{0l}+\frac{1}{2}g_{l00} \Bigr)-\frac{1}{2}\eta^{kl}(g_{0jl}+g_{j0l}-g_{l0j}), \label{gamma-0kj-A}\\
     \gamma{}_0{}^0{}_0 &=-\frac{1}{2}g^{PQ}(g_{0PQ}-2g_{P0Q}),\label{gamma-000}\\
     \gamma{}_0{}^K{}_0 &= -\delta^{KL}(\beta \tau_{0L}+g_{00L}), \label{gamma-0K0}\\
     \gamma_0{}^0{}_J&=-\beta \tau_{0J}, \label{gamma-00J}\\
      \gamma_0{}^K{}_J&= -\frac{1}{2}\delta^{KL}(g_{0JL}+g_{J0L}-g_{L0J}), \label{gamma-0KJ}\\
    \gamma_I{}^0{}_0 &= g_{00I}-\frac{1}{2}\delta^{JK}(2 g_{JKI}-g_{IJK}), \label{gamma-I00}\\
    \gamma_I{}^J{}_0 &= -\delta^{JK}g_{I0K} + \delta^{JK}\gamma_{I}{}^0{}_K. \label{gamma-IJ0}
\end{align}

\section{Nonlinear decompositions}
\label{sec:NonlinDecomp}
In this section, we analyze the nonlinear terms appearing in the first-order equations derived in the previous section. To facilitate the analysis, we follow \cite[\S8]{BeyerOliynyk:2021} and introduce the following definitions:
\begin{align}
\kt&=(\kt_{IJ}) :=(g_{0IJ}-g_{I0J}-g_{J0I}),
\label{kt-def} \\
\ellt&=(\ellt_{IjK}) := (g_{IjK}), &&  \label{ellt-def}\\
  \mt&=(\mt_{I})  := (g_{00M}), \label{mt-def}\\
  \tau &= (\tau_{iJ},\tau_{Ji}), \label{tau-def}\\
\gt& 
=(g_{Ijkl}), \label{gt-def}\\    
\taut&
=(\tau_{Ijk}), \label{taut-def}\\
\intertext{and}
\psit&=(\psit_I{}^k{}_J):=(\gamma_I{}^k{}_J).   \label{psit-def}
\end{align}
It is worth noting that our definition of $\tau=(\tau_{iJ},\tau_{Ji})$ in \eqref{tau-def} differs from the definition of $\tau=(\tau_{ij})$ given in \cite{BeyerOliynyk:2021}. Unlike the analysis carried out there, the refined analysis developed here necessitates a careful distinction between the component $\tau_{00}$ and the collection $\tau=(\tau_{iJ},\tau_{Ji})$.

\subsection{$*$-notation\label{*-not}}
As in \cite{BeyerOliynyk:2021}, we adopt the following $*$-notation to denote multilinear maps for which it is not necessary
for subsequent arguments to know the exact values of their constant coefficients. For example,
$\kt*\ellt$ can be used to denote tensor fields of the form
\begin{align*}
[\kt*\ellt]_{ij}= C_{ij}^{KLMpQ}\kt_{KL}\ellt_{MpQ}
\end{align*}
where the coefficients $C_{ij}^{KLMpQ}$ are constants. We also use the notation
\begin{equation*}
    (1+\beta)\kt*\ellt= \kt*\ellt +\beta (\kt*\ellt)
\end{equation*}
where on the right hand side the two $\kt*\ellt$ terms correspond, in general, to distinct bilinear maps, e.g.,
\begin{equation*}
    [(1+\beta)\kt*\ellt]_{ij}:=  C_{ij}^{KLMpQ}\kt_{KL}\ellt_{MpQ}+\beta  \tilde{C}_{ij}^{KLMpQ}\kt_{KL}\ellt_{MpQ}.
\end{equation*}
More generally, the $*$ functions as a product that distributes over sums of terms where terms like
$\lambda \kt*\ellt$, $\lambda \in \Rbb\setminus\{0\}$, and $\ellt *\kt$ are identified. For example,
\begin{equation*}
    \ellt*(\kt+\mt)= (\kt+\mt)*\ellt= \ellt*\kt+\ellt*\mt.
\end{equation*}
With this notation, we would have
\begin{equation*}
    (\mt+\psit)*(\ellt+\kt) = \mt*\ellt + \mt*\kt + \psit*\ellt + \psit*\kt 
\AND
    (\kt+\ellt)*(\kt+\ellt)= \kt*\kt + \kt*\ellt + \ellt*\ellt.
\end{equation*}

\subsection{$Q_{ij}$ and $Q^k_{ij}$}

\begin{lem} \label{Qij-lem}
\begin{gather}
Q_{00}=  -\frac{3}{2}\bigl(\delta^{RS}\kt_{RS}\bigr)^2 + \frac{1}{2} \delta^{PQ}\delta^{RS}\kt_{PR} \kt_{QS}+
\Qf, \quad
Q_{0M}= \Qf_M \AND
Q_{LM}= \delta^{RS}\kt_{LR} \kt_{MS}+
\Qf_{LM}\label{Qij-lem.1}
\end{gather}
where  $\Qf = (\kt+\ellt+\mt)*(\ellt+\mt)$.
\end{lem}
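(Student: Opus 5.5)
The plan is a direct algebraic computation. We evaluate the quadratic form \eqref{Q-def} in the orthonormal frame, where $g^{kl}=\eta^{kl}$ by \eqref{orthog}, and then sort each product according to the groupings \eqref{kt-def}--\eqref{mt-def}. Since $Q_{ij}$ is a quadratic form in $g_{ijk}=\Dc_i g_{jk}$ with constant coefficients $\eta^{kl}\eta^{mn}$, every term is a product of two components $g_{abc}$.

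The first step is to express every component $g_{abc}$ through $\kt,\ellt,\mt$.
\begin{itemize}
\item Components with first index spatial are entries of $\ellt$, namely $g_{IjK}$, together with $g_{I00}$. By \eqref{gi00} and the symmetry $g_{00I}=g_{0I0}$, the latter equals $2\mt_I$ plus terms in $\ellt$.
\item The components $g_{00M}=g_{0M0}$ are exactly $\mt_M$.
\item For $g_{0LM}$ we write $g_{0LM}=\kt_{LM}+g_{L0M}+g_{M0L}$, and the last two terms lie in $\ellt$.
\item $g_{000}=-\delta^{JK}(g_{0JK}-2g_{J0K})=-\delta^{JK}\kt_{JK}+\ellt$-terms.
\end{itemize}
So each component equals $\kt$ plus a linear combination of $\ellt$ and $\mt$. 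Expanding any bilinear expression, the only pieces not of the form $(\kt+\ellt+\mt)*(\ellt+\mt)$ are the $\kt*\kt$ pieces. These come solely from products of two factors each of type $g_{0LM}$ or $g_{000}$, that is, first index $0$ with both remaining indices spatial, or $g_{000}$.

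The second step is to extract these $\kt*\kt$ terms in each component of \eqref{Q-def}. For $Q_{0M}$ and $Q_{LM}$ we track which factors can carry a first index $0$ with spatial or trace-type remaining indices. For $Q_{0M}$, every term has one factor with free index $M$ in a position that forces it to be $g_{\cdot M\cdot}$ with first index spatial or $g_{0M\cdot}$ paired with an $\mt$-type slot. One should check that no $\kt*\kt$ survives. I expect this to follow because $\kt$ terms need two $0$'s placed in first slots by contractions, while the free index $0$ occupies a non-first slot (giving $g_{\cdot0\cdot}$ components in $\ellt$ or $\mt$) or a first slot paired with a factor containing $M$ in a non-first slot.

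For $Q_{LM}$, the terms where both factors have first index $0$ come from $\Dc_l g_{in}\Dc_k g_{jm}$-type products with $k=l=0$ contracted via $\eta^{00}$, together with the $\Dc_i g\,\Dc_j g$ terms. These give $\delta^{RS}\kt_{LR}\kt_{MS}$ after summing the coefficients. For $Q_{00}$ the terms involve traces: $\eta^{kl}\eta^{mn}g_{0mk}g_{0nl}$ gives $\kt\cdot\kt$, and the remaining terms with the $g_{000}=-\mathrm{tr}\,\kt+\dots$ substitution give the $(\mathrm{tr}\,\kt)^2$ coefficient. Collecting yields $-\tfrac32(\mathrm{tr}\,\kt)^2+\tfrac12|\kt|^2$. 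Everything else is absorbed into $\Qf$.

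The main obstacle is bookkeeping of the signs and coefficients in $Q_{00}$. Each of the five terms of \eqref{Q-def} contributes, and signs from $\eta^{00}=-1$ enter repeatedly. I would organize this by setting $\ellt=\mt=0$ at the outset, so that $g_{0LM}=\kt_{LM}$, $g_{000}=-\mathrm{tr}\,\kt$, and all other components vanish. Then I would evaluate the five terms separately for $(i,j)=(0,0),(0,M),(L,M)$, checking the result against the known vanishing-potential computation in \cite[\S8]{BeyerOliynyk:2021}, which has the same $Q_{ij}$.
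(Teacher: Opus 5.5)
Your proposal is correct and matches the paper's approach: the paper defers to \cite[Lemma~8.1]{BeyerOliynyk:2021}, which is a direct evaluation of \eqref{Q-def} in the orthonormal frame using \eqref{gi00}. Your reduction to $\ellt=\mt=0$, where $g_{abc}=\delta^0_a h_{bc}$ with $h_{00}=-\mathrm{tr}\,\kt$ (exactly, in fact), $h_{0M}=0$ and $h_{LM}=\kt_{LM}$, with everything else absorbed into $\Qf$ by bilinearity, is the right bookkeeping.

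One small correction for when you carry it out: in $Q_{LM}$ the $\Dc_i g\,\Dc_j g$ term has spatial first indices and contributes no $\kt*\kt$, so $\delta^{RS}\kt_{LR}\kt_{MS}$ comes entirely from the $-2\Dc_l g_{in}\Dc_k g_{jm}$ term. In $Q_{00}$ the five terms of \eqref{Q-def} contribute, in order, $\tfrac12\bigl((\mathrm{tr}\,\kt)^2+\delta^{PQ}\delta^{RS}\kt_{PR}\kt_{QS}\bigr)$, $(\mathrm{tr}\,\kt)^2$, $-(\mathrm{tr}\,\kt)^2$, $-(\mathrm{tr}\,\kt)^2$ and $-(\mathrm{tr}\,\kt)^2$.
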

\noindent The above lemma is identical to \cite[Lemma~8.1]{BeyerOliynyk:2021}; we refer the reader to that reference for the proof. On the other hand, although the lemmas presented below may appear similar or even identical to those in \cite[\S8]{BeyerOliynyk:2021}, they in fact constitute a refinement of those results. This refinement arises from the definition of $\tau$ used here, which does not involve the $\tau_{00}$ component. Since these refined lemmas play a crucial role in the analysis carried out below, we include complete proofs for each of them. 

\begin{lem}\label{Qijk-lem}
\begin{gather}
Q^0_{0M}= \Qft_M, \quad Q^I_{0M}=\Qft^I_M, \quad
Q^I_{LM}=  \Qft^I_{LM}
\AND
Q^0_{LM}= -\frac{1}{2}\delta^{RS}\kt_{RS}\kt_{LM}+ \Qft^0_{LM} \label{Qijk-lem.1}
\end{gather}
where
\begin{equation}
  \label{eq:defQft}
  \Qft =\beta(\kt+\ellt+\mt)*\tau +(\kt+\ellt+\mt)*(\ellt+\mt).
\end{equation}
\end{lem}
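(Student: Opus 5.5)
Starting from the definition \eqref{Qilm-def}, $Q^i_{lm} = \delta^i_0 Q_{lm}+\delta^{ij}(\gamma_0{}^p{}_j g_{plm} +\gamma_0{}^p{}_l g_{jpm} +\gamma_0{}^p{}_m g_{jlp})$, the plan is to substitute the explicit background connection coefficients \eqref{gamma-000}--\eqref{gamma-0KJ} and Lemma~\ref{Qij-lem}, and then sort each product into the classes $\kt,\ellt,\mt,\tau$ of \eqref{kt-def}--\eqref{tau-def}. First, using \eqref{gi00} together with $g_{IJ0}=g_{I0J}$, I express every metric derivative component through these variables. The spatial ones satisfy $g_{Ijk}=\ellt$. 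For the temporal ones, $g_{00M}=\mt$, $g_{0LM}=\kt_{LM}+\ellt*1$ (since $g_{L0M}$ is an $\ellt$ component), and $g_{000}=-\delta^{JK}(\kt_{JK}+g_{J0K}+g_{K0J})+2\delta^{JK}g_{J0K}=-\delta^{JK}\kt_{JK}+\ellt$. Consequently $g_{0jk}$ is linear in $\kt,\ellt,\mt$, and the $\kt$-part appears only in $g_{0LM}$ and $g_{000}$. Likewise $\gamma_0{}^0{}_0$, $\gamma_0{}^K{}_J$ and $\gamma_0{}^K{}_0$ are linear in $\kt,\ellt,\mt$ plus $\beta\tau_{0L}$, and $\gamma_0{}^0{}_J=-\beta\tau_{0J}$. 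The crucial point is that no $\tau_{00}$ appears anywhere: \eqref{gamma-0kj-A} involves only $\tau_{0l}$ with $l$ spatial after \eqref{gamma-000}, \eqref{gamma-0K0} and \eqref{gamma-00J}. So all $\tau$-dependence is of the form $\beta\tau$ with $\tau=(\tau_{iJ},\tau_{Ji})$.

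Next I treat the components one at a time. For $Q^0_{0M}=Q_{0M}=\Qf_M$ the identification is immediate, since $\Qf$ is a special case of $\Qft$. For $Q^I_{0M}$ and $Q^I_{LM}$ there is no $Q_{lm}$ contribution, so the claim reduces to showing that each $\gamma_0{}^p{}_j g_{\cdot\cdot\cdot}$ term with $j=I$ lies in $\Qft$. In $\gamma_0{}^p{}_I g_{plm}$ with $p=0$, the coefficient $\gamma_0{}^0{}_I=-\beta\tau_{0I}$ multiplies $g_{0lm}\in\kt+\ellt+\mt$, which gives $\beta(\kt+\ellt+\mt)*\tau$. With $p=P$, the coefficient $\gamma_0{}^P{}_I$ multiplies $g_{Plm}=\ellt$, which gives $(\kt+\ellt)*\ellt$. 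For the remaining two terms, $g_{Ipm}$ and $g_{Ilp}$ are $\ellt$ components because the first index is spatial, so they give $(\kt+\ellt+\mt+\beta\tau)*\ellt$. All of these lie in $\Qft$.

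The delicate case, and the one I expect to be the main obstacle, is $Q^0_{LM}=Q_{LM}+(\text{$\gamma_0$-terms with } j=0)$. Here quadratic $\kt*\kt$ terms arise both from Lemma~\ref{Qij-lem}, namely $\delta^{RS}\kt_{LR}\kt_{MS}$, and from $\gamma_0{}^p{}_0 g_{pLM}$ with $p=0$. The latter is $-\tfrac12 g^{PQ}(g_{0PQ}-2g_{P0Q})\,g_{0LM}=-\tfrac12(\delta^{PQ}\kt_{PQ}+\ellt)(\kt_{LM}+\ellt)$, whose leading part is $-\tfrac12\delta^{RS}\kt_{RS}\kt_{LM}$. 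The remaining terms $\gamma_0{}^p{}_L g_{0pM}+\gamma_0{}^p{}_M g_{0Lp}$ contain $\gamma_0{}^K{}_L g_{0KM}$, whose $\kt*\kt$ part is $-\tfrac12\delta^{KR}\kt_{LR}\kt_{KM}$. Adding the symmetric $M$ counterpart gives $-\delta^{RS}\kt_{LR}\kt_{MS}$, which cancels the term from $Q_{LM}$.

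The $p=0$ pieces of these remaining terms are $\gamma_0{}^0{}_L g_{00M}=\beta\tau*\mt$, and the $p=0$, $j=0$ leftover yields $\kt*\ellt$ and $\ellt*\ellt$. Everything besides the surviving $-\tfrac12\delta^{RS}\kt_{RS}\kt_{LM}$ is therefore of the form $\Qft^0_{LM}$. I will carefully track the index placement ($\delta^{ij}$ versus $\eta$) and the signs in \eqref{gamma-0KJ}, since this cancellation is exactly what fixes the coefficient $-\tfrac12$.
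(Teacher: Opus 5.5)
Your strategy is the paper's. You write $Q^i_{lm}=\delta^i_0Q_{lm}+\Psi^i_{lm}$, where $\Psi^i_{lm}$ collects the $\gamma_0$-terms, substitute \eqref{gamma-000}--\eqref{gamma-0KJ} and \eqref{gi00}, sort the products, and observe the $\kt*\kt$ cancellation in $Q^0_{LM}$. That cancellation is handled correctly. The $-\delta^{RS}\kt_{LR}\kt_{MS}$ coming from $\gamma_0{}^K{}_L g_{0KM}+\gamma_0{}^K{}_M g_{0LK}$ cancels the corresponding term of Lemma~\ref{Qij-lem}. This leaves $-\tfrac12\delta^{RS}\kt_{RS}\kt_{LM}$ from $\gamma_0{}^0{}_0 g_{0LM}$.

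\textbf{Error in the $Q^0_{0M}$ case.} You write $Q^0_{0M}=Q_{0M}$, which is false. In \eqref{Qilm-def} the factor $\delta^{ij}$ with $i=0$ selects $j=0$, so
\[
Q^0_{0M}=Q_{0M}+\gamma_0{}^p{}_0\,g_{p0M}+\gamma_0{}^p{}_0\,g_{0pM}+\gamma_0{}^p{}_M\,g_{00p},
\]
just as you correctly did for $Q^0_{LM}$.

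These extra terms are the paper's $\Psi^0_{0M}$, and checking them is not vacuous. $\kt$ enters $\gamma_0{}^0{}_0$, $\gamma_0{}^P{}_M$, $g_{0PM}$ and $g_{000}$, so you must confirm that no $\kt*\kt$ product appears. Written out, the terms are:
\begin{itemize}
\item $2\gamma_0{}^0{}_0\mt_M=-\delta^{PQ}\kt_{PQ}\mt_M$;
\item $\gamma_0{}^P{}_0(g_{P0M}+g_{0PM})=(\beta\tau+\mt)*(\kt+\ellt)$;
\item $\gamma_0{}^0{}_M g_{000}=\beta\tau_{0M}\delta^{JK}\kt_{JK}$;
\item $\gamma_0{}^P{}_M\mt_P=(\kt+\ellt)*\mt$.
\end{itemize}
Every product carries a factor $\mt$ or $\beta\tau$, so $\Psi^0_{0M}$ is of the form $\Qft_M$ and the claim survives. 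With this check added, your argument is complete.

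\textbf{Minor bookkeeping point.} $g_{I00}$ is not an $\ellt$-component, since by \eqref{ellt-def} the last index of $\ellt$ is spatial. By \eqref{gi00}, $g_{I00}=2\mt_I-\delta^{JK}(2\ellt_{JKI}-\ellt_{IJK})$. It enters $Q^I_{0M}$ through $\gamma_0{}^0{}_M g_{I00}=\beta\tau*(\mt+\ellt)$. This still lies in $\Qft$, but your ``$*\ellt$'' there should read ``$*(\ellt+\mt)$''. Also, $g_{000}=-\delta^{JK}\kt_{JK}$ exactly, because the $\ellt$-terms cancel under the $\delta^{JK}$ contraction; this is harmless.
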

\noindent 
\begin{proof}
Following the proof of \cite[Lemma~8.2]{BeyerOliynyk:2021}, we define
\begin{equation*}% \label{Psi-def}
\Psi^{i}_{lm}= \delta^{ij}\bigl(\gamma_0{}^p{}_j g_{plm}+\gamma_0{}^p{}_l g_{jpm} +\gamma_0{}^p{}_m g_{jlp}\bigr)
\end{equation*}
and set $(i,l,m)=(0,0,M)$,  $(i,l,m)=(I,0,M)$,  $(i,l,m)=(0,L,M)$ and  $(i,l,m)=(I,L,M)$ to get
\begin{align*} 
\Psi^{0}_{0M}=& 2\gamma_0{}^0{}_0 g_{00M}+\gamma_0{}^P{}_0( g_{P0M}+ g_{0PM}) -\gamma_0{}^0{}_M\delta^{RS}(g_{0RS}-2g_{R0S})+\gamma_0{}^P{}_M g_{00P}, %\label{Psi-00M}
\\
\Psi^{I}_{0M}=& \delta^{IJ}\bigl(\gamma_0{}^0{}_J g_{00M}+\gamma_0{}^P{}_J g_{P0M}+\gamma_0{}^0{}_0 g_{J0M} +\gamma_0{}^P{}_0 g_{JPM} \notag \\
&\qquad +\gamma_0{}^0{}_M (2g_{00J}-\delta^{RS}(2g_{RSJ}-g_{JRS}))+\gamma_0{}^{P}{}_M g_{J0P}\bigr), %\label{Psi-I0M}
\\
\Psi^{I}_{LM}=& \delta^{IJ}\bigl(\gamma_0{}^0{}_J g_{0LM}+\gamma_0{}^P{}_J g_{PLM}+\gamma_0{}^0{}_L g_{J0M} +\gamma_0{}^P{}_L g_{JPM} +\gamma_0{}^0{}_M g_{JL0}+\gamma_0{}^{P}{}_M g_{JLP}\bigr) 
%\label{Psi-ILM}
\intertext{and}
\Psi^{0}_{LM}
=& -\frac{1}{2}\delta^{PQ}g_{0PQ}g_{0LM}-\delta^{PQ}g_{0LQ}g_{0MP}
+\Bigl[\delta^{PQ}g_{P0Q}g_{0LM}-\frac{1}{2}\delta^{PQ}(g_{L0Q}g_{0PM}-g_{Q0L}g_{0PM}) \notag \\
&-\frac{1}{2}\delta^{PQ}(g_{M0Q}g_{0LP}-g_{Q0M}g_{0LP})+
\gamma_0{}^P{}_0 g_{PLM}+\gamma_0{}^0{}_L g_{00M} +\gamma_0{}^0{}_M g_{0L0}\Bigr],
%\label{Psi-0LM}
\end{align*}
where in deriving these expressions we have employed \eqref{gi00}.
We also observe from \eqref{gi00}, \eqref{gamma-0kj-A}, \eqref{ellt-def}, \eqref{mt-def},
that
\begin{align}
  \label{eq:gammadecomp1}
  \gamma{}_0{}^k{}_j
  =&-(\delta^0_j\eta^{kL}-\delta^L_j\eta^{k0})\Bigl(\beta \tau_{0L}+ \mt_L-\frac{1}{2}\delta^{JK}(2\ellt_{JKL}-\ellt_{LJK}) \Bigr)\\
  &-\frac{1}{2}\eta^{k0}\delta_j^J (2\mt_{J}-\delta^{PQ}(2\ellt_{PQJ}-\ellt_{JPQ}))
  -\frac{1}{2}\eta^{kL}\delta_j^0 \delta^{PQ}(2\ellt_{PQL}-\ellt_{LPQ})\notag\\
  &+\frac{1}{2}\eta^{k0}\delta_j^0 \delta^{PQ}\kt_{PQ}  
  -\frac{1}{2}\eta^{kL}\delta_j^J(\kt_{JL}+2\ellt_{J0L}).\notag
\end{align}
From the above expansions and the definitions
\eqref{kt-def}-\eqref{mt-def}, we deduce that
\begin{align}
\Psi^0_{0M}&= \Qft_M,\label{Qijk-lem.5} \\
\Psi^I_{0M}&= \Qft^I_{0M}, \label{Qijk-lem.6}\\
\Psi^0_{LM}&=  -\frac{1}{2}\delta^{PQ}\kt_{PQ}\kt_{LM}-\delta^{PQ}\kt_{LQ}\kt_{MP}+\Qft_{LM} \label{Qijk-lem.7}
\intertext{and}
\Psi^I_{LM}&=  \Qft^I_{LM}, \label{Qijk-lem.8}
\end{align}
where $\Qft$ is as defined in the statement of the lemma. By \eqref{Qilm-def}
$Q^i_{lm}=\Psi^i_{lm}+\delta^i_0 Q_{lm}$ and the stated expansions follow directly from \eqref{Qijk-lem.1} and \eqref{Qijk-lem.5}–\eqref{Qijk-lem.8}.
\end{proof}

\subsection{$e_I(\tau_{jk})$, $e_I(g_{jkl})$ and $L_I{}^k{}_j$}

\begin{lem} \label{lem-cov-exp}
\begin{align}
     e_I(\tau_{jk})&=\tau_{Ijk} + \tf_{Ijk}, \label{lem-cov-exp.1}\\
     e_I(g_{jkl})&= g_{Ijkl}+\gf_{Ijkl},
     \label{lem-cov-exp.2}
     \intertext{and}
     L_I{}^k{}_J&=\beta\Lf_I{}^k{}_J\label{lem-cov-exp.3}
\end{align}
where
\begin{align*} %\label{gf-def}
  \tf= (\psit+\ellt+\mt)*(\tau_{00}+\tau),  
  \quad
  \gf= (\psit+\ellt+\mt)*(\kt+\ellt +\mt), 
\end{align*}
and
\begin{align} \label{Lf-def}
  \Lf&=  (\beta\tau+\psit+\kt+\ellt +\mt)*(\beta \tau+\psit+\ellt+\mt).
\end{align}
\end{lem}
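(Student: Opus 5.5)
The plan is to rewrite frame derivatives of components as covariant derivatives plus connection terms, and then sort the connection coefficients into the $*$-classes of the statement.

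For \eqref{lem-cov-exp.1}, since $e_I$ is a frame vector, $e_I(\tau_{jk}) = \Dc_I\tau_{jk} + \gamma_I{}^p{}_j\tau_{pk} + \gamma_I{}^p{}_k\tau_{jp}$, and $\Dc_I\tau_{jk}=\tau_{Ijk}$ by \eqref{d-fields}. The remainder is a sum of terms $\gamma_I{}^p{}_j\tau_{pk}$ with $p$ running over all frame indices. So I need to show that every coefficient $\gamma_I{}^p{}_j$ lies in the span of $\psit,\ellt,\mt$:
\begin{itemize}
\item $\gamma_I{}^K{}_J=\psit$ by \eqref{psit-def};
\item $\gamma_I{}^0{}_J=\eta^{00}\eta_{JK}\gamma_I{}^K{}_0$ up to the $g_{IJ0}$-terms, because metric compatibility $\Dc g=0$ fails only through $\Dc_I g_{jk}=g_{Ijk}$. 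Concretely, $0=e_I(\eta_{jk})=g_{Ijk}+\gamma_I{}^p{}_j\eta_{pk}+\gamma_I{}^p{}_k\eta_{jp}$, so $\gamma_I{}^0{}_J$ is $\gamma_I{}^J{}_0$ plus $\ellt_{IJ0}$;
\item $\gamma_I{}^J{}_0$ is given by \eqref{gamma-IJ0};
\item $\gamma_I{}^0{}_0$ is given by \eqref{gamma-I00}, which is linear in $\mt$ and $\ellt$.
\end{itemize}
Circularities such as $\gamma_I{}^0{}_K$ appearing in \eqref{gamma-IJ0} are resolved through the same compatibility identity. Hence $\gamma_I{}^p{}_j=\psit+\ellt+\mt$ (in $*$-notation, a linear combination), and multiplying by $\tau_{pk}$, whose components are $\tau_{00}$ or lie in $\tau$, gives $\tf$. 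Keeping $\tau_{00}$ separate from $\tau$ is exactly why the statement lists $\tau_{00}+\tau$.

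For \eqref{lem-cov-exp.2} the same expansion applies: $e_I(g_{jkl})=g_{Ijkl}+\gamma*g_{jkl}$. The point is to check that every component $g_{jkl}$ is expressible in $\kt,\ellt,\mt$:
\begin{itemize}
\item $g_{Ijk}=\ellt$;
\item $g_{00M}=\mt$;
\item $g_{000}$ and $g_{I00}$ follow from \eqref{gi00}, in terms of $g_{0JK}-2g_{J0K}$ and $\mt,\ellt$;
\item $g_{0LM}=\kt_{LM}+g_{L0M}+g_{M0L}$, where the last two terms are $\ellt$;
\item $g_{0L0}=g_{00L}$ by symmetry of $g_{jkl}$ in its last two indices.
\end{itemize}
Then $\gf=(\psit+\ellt+\mt)*(\kt+\ellt+\mt)$.

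For \eqref{lem-cov-exp.3} I substitute into \eqref{L-def}. The first bracket gives $\beta\cdot(\beta^2\tau_{I0}+\beta(\mt+\ellt))\tau_{0J}=\beta\,(\beta\tau+\mt+\ellt)*(\beta\tau)$, using $\tau_{I0},\tau_{0J}\in\tau$. For the $\gamma\gamma$ products, \eqref{eq:gammadecomp1} shows that $\gamma_0{}^k{}_j$ lies in the span of $\beta\tau,\mt,\ellt,\kt$, while the $\gamma_I{}^k{}_j$ lie in the span of $\psit,\ellt,\mt$ as above; this includes $\gamma_I{}^l{}_0$ via \eqref{gamma-I00}--\eqref{gamma-IJ0}. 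The remaining products are $\gamma_I\gamma_0$, of the form $(\psit+\ellt+\mt)*(\beta\tau+\kt+\ellt+\mt)$, and $(\gamma_0-\gamma_I)\gamma_l$, where $\gamma_l$ is $\gamma_0$ or $\gamma_I$.

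\textbf{Main obstacle.} The term $(\gamma_0{}^l{}_I)\gamma_l{}^k{}_J$ with $l=0$ yields $\gamma_0*\gamma_0$, which could contain $\kt*\kt$, and the claimed form of $\Lf$ forbids $\kt*\kt$. Here $\gamma_0{}^k{}_J$ appears with $J$ spatial, and $\gamma_0{}^0{}_I=-\beta\tau_{0I}$ by \eqref{gamma-00J}, which is free of $\kt$. The $\kt$-part of $\gamma_0{}^k{}_J$ multiplies $\gamma_0{}^0{}_I$, so the product is $\kt*\beta\tau$, which is allowed. The same check applies for $l=L$: there $\gamma_0{}^L{}_I$ (containing $\kt$) is paired with $\gamma_L{}^k{}_J\in\psit+\ellt+\mt$. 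I will track exactly which index slots carry $\kt$ in each product so that no product has $\kt$ in both factors.
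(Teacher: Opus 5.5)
Your strategy is the paper's. You write $e_I(\tau_{jk})$ and $e_I(g_{jkl})$ as $\Dc_I$-derivatives plus $\gamma_I$-terms. You then check that every $\gamma_I{}^m{}_j$ lies in the span of $\psit,\ellt,\mt$, that every $g_{jkl}$ lies in the span of $\kt,\ellt,\mt$ (via \eqref{gi00}), and use \eqref{eq:gammadecomp1} to see that $\gamma_0{}^k{}_j$ lies in the span of $\beta\tau,\kt,\ellt,\mt$ with no $\tau_{00}$. Your bookkeeping for the $g_{jkl}$ components and for the first bracket of \eqref{L-def} is right. Your check that no $\kt*\kt$ term enters $\Lf$ is also correct: the only $\gamma_0*\gamma_0$ product is $\gamma_0{}^0{}_I\gamma_0{}^k{}_J$, and $\gamma_0{}^0{}_I=-\beta\tau_{0I}$ by \eqref{gamma-00J}. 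This spells out what the paper's one-line verification leaves implicit.

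The step that fails as written is your handling of $\gamma_I{}^0{}_J$. Metric compatibility $0=g_{Ijk}+\gamma_I{}^p{}_j\eta_{pk}+\gamma_I{}^p{}_k\eta_{jp}$ with $(j,k)=(J,0)$ is exactly \eqref{gamma-IJ0}. That is a single relation $\gamma_I{}^J{}_0=\delta^{JK}(\gamma_I{}^0{}_K-g_{I0K})$ between the two unknowns, so appealing to it again cannot ``resolve the circularity''. No other identity expresses $\gamma_I{}^0{}_J$ through $\kt,\ellt,\mt$ either: it is an independent unknown of the system, evolved by \eqref{for-N} with $k=0$ and initialised via \eqref{frame-idata-J} from the second fundamental form.

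What makes your claim true is simply the definition \eqref{psit-def}. The upper index of $\psit_I{}^k{}_J$ is a lowercase frame index, so $\psit$ contains $\gamma_I{}^0{}_J=\psit_I{}^0{}_J$ and not just $\gamma_I{}^K{}_J$. With this, \eqref{gamma-IJ0} and \eqref{gamma-I00} give $\gamma_I{}^M{}_0=\delta^{MK}(\psit_I{}^0{}_K-\ellt_{I0K})$ and $\gamma_I{}^0{}_0\in\mt+\ellt$. That is precisely the decomposition \eqref{eq:gammadecomp2} the paper relies on, and the rest of your argument then goes through unchanged.
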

\begin{proof}
By the definition of the covariant derivative $\Dc_j$ associated to the background metric $\gc_{ij}$, we have
\begin{align*}
    e_I(\tau_{jk})=&\delta_I^i\bigl(\Dc_i\tau_{jk}+
    \gamma_i{}^m{}_j\tau_{mk}+\gamma_i{}^m{}_k\tau_{jm}\bigr)
    = \tau_{Ijk}+
                     \gamma_I{}^m{}_j\tau_{mk}+\gamma_I{}^m{}_k\tau_{jm}\\
     &= \tau_{Ijk}+
       2\gamma_I{}^0{}_{(j}\delta^0_{k)}\tau_{00}
       +
       2(\gamma_I{}^0{}_{(j}\delta^P_{k)}+\gamma_I{}^P{}_{(j}\delta^0_{k)})\tau_{0P}       
       +
       2\gamma_I{}^M{}_{(j}\delta^P_{k)}\tau_{MP}
    %\label{eI(taujk)}
    \intertext{and}
     e_I(g_{jkl})=&\delta_I^i\bigl(\Dc_ig_{jkl}+
    \gamma_i{}^m{}_jg_{mkl}+\gamma_i{}^m{}_kg_{jml}+\gamma_i{}^m{}_l g_{jkm}\bigr)
    = g_{Ijkl}+
    \gamma_I{}^m{}_jg_{mkl}+\gamma_I{}^m{}_kg_{jml}+\gamma_I{}^m{}_l g_{jkm}. %\label{eI(gjkl)}
\end{align*}
Noting from \eqref{gi00}, \eqref{gamma-000}-\eqref{gamma-0KJ},
\eqref{kt-def}-\eqref{mt-def} and \eqref{psit-def} that
\begin{equation}
  \label{eq:gammadecomp2}
  \gamma_I{}^m{}_j=\delta_j^0 \delta^m_0\bigl(\mt_{I}-\frac{1}{2}\delta^{JK}(2 \ellt_{JKI}-\ellt_{IJK})\bigr)+\delta_j^0 \delta^m_M\delta^{MK}(-\ellt_{I0K} + \psit_{I}{}^0{}_K)
  +\delta_j^J\psit_I{}^m{}_J,
\end{equation}
it is clear that the expansions \eqref{lem-cov-exp.1}-\eqref{lem-cov-exp.2} follow directly from the above formulas for $e_I(\tau_{jk})$ and $e_I(g_{jkl})$. 

To complete the proof, we observe that 
\begin{align*}
    -\gamma_I{}^l{}_J\gamma_0^k{}_l + \gamma_0{}^l{}_J\gamma_I{}^k{}_l+ (\gamma_0{}^l{}_I-\gamma_I{}^l{}_0)\gamma_l{}^k{}_J
    &=  -\gamma_I{}^0{}_J\gamma_0{}^k{}_0-\gamma_I{}^L{}_J\gamma_0{}^k{}_L + \gamma_0{}^0{}_J\gamma_I{}^k{}_0+ \gamma_0{}^L{}_J\gamma_I{}^k{}_L \\
    & \qquad +(\gamma_0{}^0{}_I-\gamma_I{}^0{}_0)\gamma_0{}^k{}_J+(\gamma_0{}^L{}_I-\gamma_I{}^L{}_0)\gamma_L{}^k{}_J
\end{align*}
can be expressed, with the help of \eqref{kt-def}-\eqref{mt-def}, \eqref{psit-def},  \eqref{eq:gammadecomp1}, \eqref{eq:gammadecomp2} and \eqref{gamma-000}-\eqref{gamma-IJ0},
as 
\begin{gather*}
  -\beta \gamma_I{}^l{}_J\gamma_0{}^k{}_l + \beta \gamma_0{}^l{}_J\gamma_I{}^k{}_l + \beta(\gamma_0{}^l{}_I-\gamma_I{}^l{}_0)\gamma_l{}^k{}_J =\beta(\beta\tau+\psit+\kt+\ellt +\mt)*(\beta \tau+\psit+\ellt+\mt). 
\end{gather*}
Using this, it then straightforward to verify
that \eqref{lem-cov-exp.3} and \eqref{Lf-def} follows from \eqref{L-def} and \eqref{kt-def}-\eqref{mt-def}.
\end{proof}

For use below, we observe from \eqref{lem-cov-exp.1}-\eqref{lem-cov-exp.2} that the transport equation
\eqref{for-M.2} can be expressed
as 
\begin{align}
    \del{t}\gamma_I{}^k{}_J=
-\delta^k_0\Bigl(\beta^2 \tau_{I0J}+\frac{1}{2}\beta g_{IJ00}\Bigr)-\frac{1}{2}\eta^{kl}\bigl(\beta g_{I0Jl}+\beta g_{IJ0l}-\beta g_{Il0J}\bigr)+\beta \Lf_I{}^k{}_J\label{for-N},
\end{align}
where $\Lf$ is of the form \eqref{Lf-def}, and from \eqref{kt-def} and \eqref{psit-def} that the transport equations
\eqref{for-L} and \eqref{for-M.1} can be expressed as
\begin{align}
\del{t}\beta &= -\beta^3 \tau_{00}+\frac{1}{2} \delta^{JK}\beta^2\kt_{JK} \label{for-O.1}
\intertext{and}
\del{t}e^\Lambda_I &= - \Bigl(\frac{1}{2}\delta^{JL}\beta\kt_{IL}+\delta^{JK}\beta\psit_{I}{}^0{}_K\Bigr)e^\Lambda_J, \label{for-O.2}
\end{align}
respectively. We further observe from \eqref{grad-alpha-A} and
\eqref{ellt-def}-\eqref{mt-def} that
\begin{equation} \label{grad-alpha-B}
    e_I(\beta)= -\beta^2 \tau_{I0}-\frac{1}{2}\beta\bigl( 2 \mt_I -\delta^{JK}(2\ellt_{JKI}-\ellt_{IJK})\bigr).
\end{equation}

\subsection{$P_{Qlm}$, $\delta^q_Q \beta \Dc_k g_{qjlm}$ and $\delta^q_Q \beta \Dc_k \tau_{qjl}$} 

\begin{lem}\label{lem-P-exp}
\begin{align}
   \beta P_{Qlm} &= \Pf_{Qlm},\label{lem-P-exp.1} \\
  \delta^q_Q \beta \Dc_k g_{qjlm} &=  \delta_k^0\del{t}g_{Qjlm} +\delta_k^K\beta e_K(g_{Qjlm})+ \Gf_{Qkjlm}, \label{lem-P-exp.2}\\
   \intertext{and}
   \delta^q_Q \beta \Dc_k \tau_{qjl}&=
                                       \delta_k^0\del{t}\tau_{Qjl} +\delta_k^K\beta e_K(\tau_{Qjl})+ \Tf_{Qkjl}\label{lem-P-exp.3}\\
  &\added{+\beta\delta_j^0(\delta_k^K\psit_K{}^0{}_Q-\beta\delta_k^0\tau_{0Q})\Dc_l \Wtt}, \notag
\end{align}
where
\begin{align}
\Pf =& \frac 1t \beta\tau_{00}*(\ellt+\mt)+\frac{1}{t}(\beta\tau + \psit + \ellt +\mt)*(\kt+\ellt+\mt) + \beta(\kt+\ellt+\mt) *\bigl( \gt+(\ellt+\mt)*(\kt+\ellt+\mt)\bigr), \label{Pf-def}\\
    \Gf=&(\beta \tau +\psit) *\Bigl( \frac{1}{t}\bigl(\beta \tau_{00}+\beta\tau+\kt+\ellt+\mt\bigr)  
     +\beta(\kt+\ellt+\mt)*(\kt+\ellt+\mt)\Bigr) \notag\\
     &\qquad
       +\beta \bigl(\beta \tau+\psit+\kt+\ellt+\mt\bigr)*\gt
    \label{Gf-def}
    \intertext{and}
    \Tf=&\beta (\beta \tau+\psit+ \kt+ \ellt +\mt)*\taut
          +\beta (\beta \tau+\psit +\ellt +\mt)*(\kt + \ellt +\mt)*(\tau_{00}+\tau)\label{Tf-def}. 
\end{align}
\end{lem}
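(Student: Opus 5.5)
The argument mirrors the proof of \cite[Lemma~8.4]{BeyerOliynyk:2021}. The one new ingredient is that $\tau_{00}$ is no longer absorbed into $\tau$ and must be tracked separately. The aim throughout is that $\tau_{00}$ never multiplies a term that does not already vanish at the FLRW background.

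\textbf{Step 1: the expansion \eqref{lem-P-exp.1} of $\beta P_{Qlm}$.} Starting from \eqref{P-def} with $q=Q$, the last line drops out because it carries the factor $\delta^0_q$. For the first term I use $\Dc_Q(\beta^{-1})=-\beta^{-2}e_Q(\beta)$ together with \eqref{grad-alpha-B}, which gives $\beta\Dc_Q(\beta^{-1})=\tau+\ellt+\mt$ up to $*$-coefficients. The connection piece $\gamma_Q{}^j{}_0$ is expanded with \eqref{eq:gammadecomp2}, giving $\ellt+\mt+\psit$.

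The bracket $g_{ljm}+g_{mjl}-g_{jlm}$, contracted with $\delta^j_0$ or with $\gamma_Q{}^j{}_0$, is rewritten using \eqref{gi00} and \eqref{kt-def}--\eqref{mt-def}; it is linear in $\kt+\ellt+\mt$. This produces the $\frac1t(\beta\tau+\psit+\ellt+\mt)*(\kt+\ellt+\mt)$ term.

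The term $\eta\eta g_{Qrs}g_{kjlm}$ is $\ellt*(\gt+g_{0jlm})$. Here $g_{0jlm}$ is eliminated via \eqref{for-K.1}. That substitution is where $\frac1t\beta^{-1}(\ldots)$ and $\frac1t\tau_{lm}$ terms appear, and where $\tau_{00}$ enters, namely through $\tau_{lm}$ with $l=m=0$. Each such occurrence is multiplied by $\ellt$, which yields the $\frac1t\beta\tau_{00}*(\ellt+\mt)$ term. Any leftover $\beta^{-1}$ is cancelled by the overall factor $\beta$.

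Finally, $\Dc_QQ_{lm}$ is handled by differentiating the quadratic form \eqref{Q-def}. It is bilinear in $(g_{Q\cdot\cdot\cdot})=\gt$ and in $g_{\cdot\cdot\cdot}$, plus cubic terms from differentiating $g^{kl}$, which vanish since the frame metric is constant. The variables $g_{\cdot\cdot\cdot}$ are expressed in terms of $\kt,\ellt,\mt$, and the first-order $g_{0LM}$ components are $\kt+\ellt$ by \eqref{kt-def}. The essential check is the structure of Lemma~\ref{Qij-lem}: the only $\kt*\kt$ pieces of $Q_{lm}$ have a derivative landing on $\kt$, so after differentiation one always gets $(\kt+\ellt+\mt)*\gt$ or $(\kt+\ellt+\mt)*(\ellt+\mt)*(\ldots)$. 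I expect this to be the main obstacle, since one must make sure that no pure $\kt*g_{Q0LM}$-type term lacking a smallness factor hides in $\Dc_QQ_{lm}$. If it does, the remedy is to note that $g_{Q0LM}$ and $g_{QLM0}$ are themselves components of $\gt$.

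\textbf{Step 2: the expansions \eqref{lem-P-exp.2}--\eqref{lem-P-exp.3}.} These are pure commutation of frame derivatives. I write
\[
\beta\Dc_kg_{Qjlm}=\beta e_k(g_{Qjlm})-\beta\gamma_k{}^p{}_Q g_{pjlm}-\cdots
\]
and use $\beta e_0=\del t$ from \eqref{e0-mu}. The connection coefficients $\gamma_0{}^p{}_\cdot$ are expanded by \eqref{eq:gammadecomp1} and $\gamma_K{}^p{}_\cdot$ by \eqref{eq:gammadecomp2}.

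Terms with $p$ spatial give $(\beta\tau+\psit+\kt+\ellt+\mt)*\gt$. Terms with $p=0$ involve $g_{0jlm}$, which is again replaced by \eqref{for-K.1}. The coefficient of $g_{0jlm}$ when $q=Q$ is $\gamma_k{}^0{}_Q$. By \eqref{gamma-00J} and \eqref{eq:gammadecomp2} this equals $\beta\tau_{0Q}$ or $\psit_K{}^0{}_Q$, which explains the prefactor $(\beta\tau+\psit)$ multiplying $\frac1t(\cdots)$ and $Q_{lm}$ in \eqref{Gf-def}.

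The $\tau$ case works the same way, using \eqref{for-K.2}. Its $-\delta^0_j\Dc_l\Wtt$ term, multiplied by the same coefficient $\beta\delta_k^K\psit_K{}^0{}_Q-\beta^2\delta^0_k\tau_{0Q}$, gives exactly the displayed $\Wtt$ remainder, which I keep explicit rather than absorbing. The $\tau_{00}$ occurrences come from $\eta g\,\tau$ contractions and are grouped into the factor $(\tau_{00}+\tau)$ in \eqref{Tf-def}. The final task is to check that every other factor in those products carries $\kt+\ellt+\mt$.
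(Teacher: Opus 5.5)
Your route is the same as the paper's proof. You split $P_{Qlm}$ into the $\frac1t$-terms, the $g_{Qrs}g_{kjlm}$ term and $\Dc_QQ_{lm}$. You use \eqref{grad-alpha-B}, \eqref{eq:gammadecomp1}--\eqref{eq:gammadecomp2}, and eliminate $g_{0jlm}$, $\tau_{0jl}$ through \eqref{for-K.1}--\eqref{for-K.2}. You also identify that the coefficient $\beta\gamma_k{}^0{}_Q=\beta\delta_k^K\psit_K{}^0{}_Q-\beta^2\delta_k^0\tau_{0Q}$ produces both the $(\beta\tau+\psit)$ prefactor in $\Gf$ and the explicit $\Wtt$ term.

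One assertion is false and needs correcting. The cubic terms from differentiating $g^{kl}$ in \eqref{Q-def} do \emph{not} vanish. The frame components $g^{kl}=\eta^{kl}$ are constant, but $\Dc$ is the connection of the flat background metric $\gc$, not of $g$. Hence $\Dc_Qg^{kl}=-\eta^{kr}\eta^{ls}g_{Qrs}\neq0$; indeed $g_{Qrs}=\Dc_Qg_{rs}$ is one of the primary fields. These are the terms $\Qtt^2_{Qij}$ in the paper's proof. The lemma survives because each such term carries the factor $g_{Qrs}$, which is $\ellt$ or, via \eqref{gi00} for $r=s=0$, $\mt+\ellt$. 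So these terms are of the form $(\ellt+\mt)*(\kt+\ellt+\mt)*(\kt+\ellt+\mt)$ and are absorbed by the cubic part of \eqref{Pf-def}. Replace ``vanish'' by this observation.

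Two smaller slips:
\begin{itemize}
\item $\beta e_Q(\beta^{-1})=\beta\tau_{Q0}+(\ellt+\mt)$, with the factor $\beta$ on $\tau$. This matters in $*$-notation, where coefficients must be constants.
\item The factor multiplying $g_{00lm}$ in $\eta^{kr}\eta^{js}g_{Qrs}g_{kjlm}$ is $g_{Q00}\sim\mt+\ellt$, not $\ellt$ alone. This is exactly why the $\tau_{00}$ term is $\frac1t\beta\tau_{00}*(\ellt+\mt)$.
\end{itemize}

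Finally, the ``main obstacle'' you anticipate in $\Dc_QQ_{lm}$ is not one, and no appeal to Lemma~\ref{Qij-lem} is needed. Differentiating \eqref{Q-def} tensorially always yields a second-derivative factor $g_{Q\cdot\cdot\cdot}\in\gt$ times a first-order factor in $\kt+\ellt+\mt$. That is precisely the allowed term $\beta(\kt+\ellt+\mt)*\gt$.
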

\begin{proof}
With the help of \eqref{orthog} and \eqref{p-fields}-\eqref{d-fields}, we observe from 
differentiating \eqref{Q-def} that
\begin{equation} \label{Dq-Qij}
    \Dc_q Q_{ij} =\Qtt^1_{qij}+ \Qtt^2_{qij}
\end{equation}
where 
\begin{align*}
\Qtt^{1}_{qij} &= \frac{1}{2}\eta^{kl}\eta^{mn}\Bigl(g_{qimk} g_{jn l}+g_{imk}g_{qjnl}
+2 g_{qnil}g_{kjm} +2g_{nil}g_{qkjm}\\ &\qquad- 2g_{qlin} g_{kjm}- 2g_{lin}g_{qkjm}
-2g_{qlin}g_{jmk}-2g_{lin}g_{qjmk} -2 g_{qimk}g_{ljn}-2g_{imk}g_{qljn}\Bigr)
\intertext{and}
    \Qtt^2_{qij} &=  -\frac{1}{2}(\eta^{kr}\eta^{ls}g^{mn}+ \eta^{kl} \eta^{mr}\eta^{ns}\bigr)g_{qrs}\Bigl(g_{imk}g_{jn l}
+2 g_{nil}g_{kjm} - 2g_{lin}g_{kjm}
-2g_{lin}g_{jmk} -2g_{imk}g_{ljn}\Bigr).
\end{align*}
Setting $q=Q$ in \eqref{Dq-Qij} yields
\begin{equation} \label{Dq-exp}
    \delta^q_Q\Dc_q Q_{ij} =\bigl[(\kt+\ellt +\mt)*\bigl( \gt+(\ellt+\mt)*(\kt+\ellt +\mt) \bigr)\bigr]_{Qij},
\end{equation}
where in deriving this we have employed  \eqref{gi00} and \eqref{kt-def}-\eqref{gt-def}.

Next, we set
\begin{equation*}
    \Ptt_{Qlm} =
 \frac{1\added{-\Rtt}}{t}\bigl(e_Q(\beta^{-1}) \delta_0^j+\beta^{-1}  \gamma_Q{}^j{}_0\bigr)  (g_{ljm}+g_{mjl}-
g_{jlm}) -\eta^{kr}\eta^{js}g_{Qrs}g_{kjlm},
\end{equation*}
where we note from \eqref{P-def} that $P_{Qlm}=\Ptt_{Qlm} +\Dc_Q Q_{lm}$. Using \eqref{gi00} and \eqref{ellt-def}-\eqref{mt-def}, we observe that $\Ptt_{Qlm}$ can be expressed as
\begin{align}
    \Ptt_{Qlm} &=
 \frac{1\added{-\Rtt}}{t}\bigl(e_Q(\beta^{-1})+\beta^{-1}  \gamma_Q{}^0{}_0\bigr)  (g_{l0m}+g_{m0l}-
g_{j0m})
 +\frac{1\added{-\Rtt}}{t}\beta^{-1}  \gamma_Q{}^J{}_0 (g_{lJm}+g_{mJl}-
g_{Jlm})\notag \\
&\qquad -\bigl(2\mt_Q-\delta^{RS}(2\ellt_{RSQ}-\ellt_{QRS})\bigr)g_{00lm}+\eta^{JS} \ellt_{Q0S}g_{0Jlm}
- \delta^{KR}\eta^{js} \ellt_{QsR}g_{Kjlm}.  \label{Ptt-def}
\end{align}
Furthermore, by \eqref{gi00}, \eqref{gamma-I00}, \eqref{gamma-IJ0}, 
\eqref{kt-def}-\eqref{mt-def}, \eqref{psit-def} and  
\eqref{grad-alpha-B}, we have
\begin{align}
\frac{1\added{-\Rtt}}{t}\bigl(e_Q(\beta^{-1})+\beta^{-1}  \gamma_Q{}^0{}_0\bigr)  (g_{l0m}&+g_{m0l}-
g_{j0m})
+\frac{1\added{-\Rtt}}{t}\beta^{-1}  \gamma_Q{}^J{}_0 (g_{lJm}+g_{mJl}-
g_{Jlm})\notag \\
&= \frac{1}{t}\beta^{-1}[(\beta \tau+\psit+\ellt+\mt)*(\kt+\ellt+\mt)]_{Qlm}, 
    \label{Ptt-exp-A}
\end{align}
while we see from \eqref{for-K.1} and \eqref{Qij-lem.1} that
\begin{align} \label{g0rlm-exp}
  g_{0rlm} = &\frac{1}{t}\beta^{-1}\delta_r^0(g_{l0m}+g_{m0l}-g_{0lm})
               +\frac{2}{t}\delta_r^0\tau_{lm} - \delta_{ri}B^{ijK}g_{Kjlm}
               +[(\kt+\ellt+\mt)*(\kt+\ellt+\mt)]_{rlm} .
\end{align}
The expansion \eqref{lem-P-exp.1} is then a direct consequence of \eqref{P-def}, \eqref{Dq-exp}-\eqref{g0rlm-exp} and \eqref{kt-def}-\eqref{mt-def}.

Finally, from the definition of the covariant derivative $\Dc_q$ and \eqref{e0-mu}, we have
\begin{align*}
    \delta^q_Q \beta \Dc_k g_{qjlm} &= \beta e_k(g_{Qjlm})-\beta \gamma_k{}^p{}_{Q} g_{pjlm}
    -\beta \gamma_k{}^p{}_{j} g_{Qplm}
     -\beta \gamma_k{}^p{}_{l} g_{Qjpm}
      -\beta \gamma_k{}^p{}_{m} g_{Qjlp}\\
     &= \delta_k^0\del{t}g_{Qjlm} +\delta_k^K\beta e_K(g_{Qjlm}) -\beta \gamma_k{}^0{}_{Q} g_{0jlm}
    -\Bigl(\beta \gamma_k{}^P{}_{Q} g_{Pjlm}\\
    &\hspace{4.0cm} +\beta \gamma_k{}^p{}_{j} g_{Qplm}
     +\beta \gamma_k{}^p{}_{l} g_{Qjpm}
      +\beta \gamma_k{}^p{}_{m} g_{Qjlp}\Bigr)
\end{align*}
From this expression, we see that \eqref{lem-P-exp.2} follows from \eqref{eq:gammadecomp1}, \eqref{eq:gammadecomp2}, \eqref{g0rlm-exp} and the definitions \eqref{kt-def}-\eqref{psit-def}. Furthermore, employing similar arguments, it is not difficult with the help of \eqref{for-K.2} to verify that the expansion \eqref{lem-P-exp.3} also holds. 
\end{proof}

\begin{lem}\label{lem-JK-exp}
\begin{align}
  \label{eq:defJ}
  \beta J^j_L=&   \bigl[\beta (\beta\tau+ \kt +\ellt +\mt) *\tau+\beta(\beta\tau+\mt+\ellt)\tau_{00}\bigr]^j_L,\\
  \beta K^i_{Ql} =& \Kf^i_{Ql} 
  \added{+\delta^i_0 \beta(2\mt_{Q}-2 \delta^{JK}\ellt_{JKQ}+\delta^{JK}\ellt_{QJK})\Dc_l\Wtt},
\end{align}
where
\begin{align}
  \label{eq:defKf}
  \Kf =& (\beta\kt+\beta\ellt+\beta\mt)*\taut + \beta\gt*\tau + (\ellt+\mt)*(\beta\kt+\beta\ellt+\beta\mt)*(\tau_{00}+\tau). 
\end{align}
\end{lem}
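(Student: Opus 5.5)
The plan is to compute both expressions directly from the definitions \eqref{J-def} and \eqref{K-def}, splitting every frame index into its temporal and spatial parts. Then I use \eqref{gi00} to remove $g_{i00}$ and the decomposition \eqref{eq:gammadecomp1} of $\gamma_0{}^k{}_j$, together with \eqref{kt-def}--\eqref{psit-def}, to identify each product as a $*$-term. The one refinement over \cite[Lemma~8.1--8.3]{BeyerOliynyk:2021} is that $\tau_{00}$ must be tracked separately from $\tau=(\tau_{iJ},\tau_{Ji})$. So at every step I will check that $\tau_{00}$ appears only multiplied by $\beta(\beta\tau+\mt+\ellt)$ in $\beta J$, and only multiplied by $(\ellt+\mt)*(\beta\kt+\beta\ellt+\beta\mt)$ in $\Kf$.

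For $\beta J^j_L$, I first treat the term $-\delta^j_0\eta^{kp}\eta^{iq}g_{Lpq}\tau_{ki}$. Its only $\tau_{00}$ contribution is $-g_{L00}\tau_{00}$. By \eqref{gi00}, $g_{L00}=2\mt_L-\delta^{JK}(2\ellt_{JKL}-\ellt_{LJK})$, which contains no $\kt$, so this piece is of the form $\beta(\mt+\ellt)\tau_{00}$. The remaining pieces involve $g_{L0Q}=\ellt_{L0Q}$ or $g_{LPQ}=\ellt_{LPQ}$ multiplied by components of $\tau$, giving $\beta\ellt*\tau$.

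Next I expand $\delta^{ji}(\gamma_0{}^p{}_i\tau_{pL}+\gamma_0{}^p{}_L\tau_{ip})$ using \eqref{eq:gammadecomp1}. The $\tau_{00}$ terms come from $\gamma_0{}^0{}_i\tau_{0L}$-type terms (which produce $\tau$, not $\tau_{00}$) and from $\gamma_0{}^0{}_L\tau_{i0}$ with $i=0$. By \eqref{gamma-00J}, $\gamma_0{}^0{}_L=-\beta\tau_{0L}$ has no $\kt$, so the corresponding $\tau_{00}$ term is of the form $\beta^2\tau\,\tau_{00}$. All other terms have the form $\gamma*\tau$ with $\gamma\in\beta\tau+\kt+\ellt+\mt$. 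Collecting these gives \eqref{eq:defJ}.

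For $\beta K^i_{Ql}$, only $i=0$ contributes. The first and last terms of \eqref{K-def} are $\beta(g_{Qrs}+g_{lrs})*\tau_{\cdot\cdot\cdot}$. Whenever the third-order $\tau$ carries a spatial first index, it lies in $\taut$. Terms like $\tau_{0jk}$ appear only with $q=Q$ fixed spatial, except in $-\eta^{jr}\eta^{ks}g_{lrs}\tau_{Qjk}$, which is in $\taut$ anyway. The coefficients $g_{Qrs}$ are in $\kt+\ellt+\mt$ after using \eqref{gi00}, and $g_{lrs}$ with $l=0$ also reduces there through \eqref{kt-def}. Hence these terms are $(\beta\kt+\beta\ellt+\beta\mt)*\taut$.

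The middle term contains $g_{Ql rs}\tau_{jk}$ and $g_{Qmn}g_{lrs}\tau_{jk}$. The quadratic piece is directly $(\kt+\ellt+\mt)*(\kt+\ellt+\mt)*(\tau_{00}+\tau)$. The only delicate part is that its $\tau_{00}$ coefficient involves $g_{Q00}g_{l00}$ or $g_{Q0n}g_{l0s}$, and \eqref{gi00} shows that each such product carries at least one factor of $\ellt+\mt$. The linear piece $g_{Qlrs}\tau_{jk}$ is $\gt*\tau$ when $(j,k)\neq(0,0)$.

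The main obstacle is the term $g_{Ql00}\tau_{00}$. To handle it, I differentiate the wave gauge identity \eqref{gi00} in the direction $\Dc_Q$. This gives $g_{Q l00}$ as a linear combination of $\gt$ components, plus the commutator-type terms $\Dc_Q$ applied to the frame, which are absorbed using $\gamma_Q$ and \eqref{eq:gammadecomp2}. However, a bare $\gt\,\tau_{00}$ term is not permitted by \eqref{eq:defKf}.

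So instead I would trace the $\tau_{00}$ coefficient back to the scalar field equation. Using \eqref{confESFG} in the frame, $\tau_{00}=\delta^{JK}\tau_{JK}-\Wtt$. This replaces $\tau_{00}$ by $\tau$ plus $\Wtt$. The $\Dc_l\Wtt$ correction displayed in the lemma arises exactly from such a replacement, together with $\Dc_Q$ of the gauge relation. The coefficient $2\mt_Q-2\delta^{JK}\ellt_{JKQ}+\delta^{JK}\ellt_{QJK}=g_{Q00}$ of that correction strongly suggests it comes from $\Dc_l$ applied to the identity $g^{jk}\tau_{jk}=\Wtt$.

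I would therefore rewrite all $\tau_{0 0 \cdot}$ and $\Dc_Q$-of-$\tau_{00}$ contributions via the derivative of the wave equation, $\eta^{jk}\tau_{Qjk}=\Dc_Q\Wtt+g*\tau$. This yields $\taut$ terms, the lower-order products listed in \eqref{eq:defKf}, and the explicit $\delta^i_0\beta g_{Q00}\Dc_l\Wtt$ term.
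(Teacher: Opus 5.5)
Your treatment of $\beta J^j_L$ is correct and is essentially the paper's computation. The gap is in $\beta K^0_{Ql}$. You claim that the first term $-\eta^{jr}\eta^{ks}g_{Qrs}\tau_{kjl}$ of \eqref{K-def} is already of the form $(\beta\kt+\beta\ellt+\beta\mt)*\taut$. This overlooks its $k=j=0$ component, $-g_{Q00}\tau_{00l}$.

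The factor $\tau_{00l}$ is not a $\taut$ variable. For $l=0$ it is $\tau_{000}$, which no permutation of indices puts into $\taut$. This is exactly where the paper invokes \eqref{for-K.2} with $r=0$, i.e.\ the $\Dc_l$-derivative of $\eta^{jk}\tau_{jk}=\Wtt$:
\begin{equation*}
\tau_{00l}=\delta^{JK}\tau_{KJl}-\eta^{jp}\eta^{kq}g_{lpq}\tau_{kj}-\Dc_l\Wtt .
\end{equation*}
Substituting this for every $l$ (to obtain the uniform form stated in the lemma) produces three pieces:
\begin{itemize}
\item a $(\beta\ellt+\beta\mt)*\taut$ term;
\item a term $(\ellt+\mt)*(\beta\kt+\beta\ellt+\beta\mt)*(\tau_{00}+\tau)$;
\item exactly $+\beta g_{Q00}\Dc_l\Wtt$. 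By \eqref{gi00} this is the displayed correction $\beta(2\mt_Q-2\delta^{JK}\ellt_{JKQ}+\delta^{JK}\ellt_{QJK})\Dc_l\Wtt$.
\end{itemize}
The $k=0$, $j=J$ components give $\tau_{0Jl}=\tau_{J0l}\in\taut$. Your proposal never makes this substitution, so it has no valid source for the correction term.

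The mechanism you propose instead does not produce it. Substituting $\tau_{00}=\delta^{JK}\tau_{JK}-\Wtt$ into $-\beta g_{Ql00}\tau_{00}$ generates $\beta g_{Ql00}\Wtt=\frac{\Rtt}{t}(\beta-\beta^{-1})g_{Ql00}$. That term contains undifferentiated $\Wtt$ and the second-order coefficient $g_{Ql00}$. It is therefore neither of the form \eqref{eq:defKf} nor equal to $\beta g_{Q00}\Dc_l\Wtt$. Differentiating the wave equation in the direction $Q$ yields $\Dc_Q\Wtt$, not $\Dc_l\Wtt$. It is also unnecessary, since $\tau_{Q00}$ already lies in $\taut$.

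As for $-\beta g_{Ql00}\tau_{00}$ itself, you are right that it is present and is not literally of the form $\beta\gt*\tau$. However, none of the available identities removes it. The paper's argument, which uses only \eqref{gi00}, \eqref{K-def} and \eqref{for-K.2}, leaves it in place as well. The correct move is to keep it, i.e.\ to read the entry $\beta\gt*\tau$ of \eqref{eq:defKf} as $\beta\gt*(\tau_{00}+\tau)$.

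This is harmless later. We have $t^{1+\ep}\beta\gt\,\tau_{00}=\gac\,\tau_{00}$, and in renormalised variables $\tau_{00}=t^{-\ep}\delta^{RL}\xi_{RL}-\Rtt t^{-1+\Rtt-\ep}\beta^{-2}(2+t^{\Rtt-\ep}\betah)\betah$. So the extra term fits into the $t^{-1+q}F$ part of \eqref{Fuch-ev-A-W}.
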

\begin{proof}
The expansion for $\beta J^j_l$ follows directly from \eqref{J-def}, \eqref{gi00}, \eqref{gamma-000}-\eqref{gamma-0KJ} and the definitions \eqref{kt-def}-\eqref{tau-def}. The validity of the second expression  for $\beta K^i_{Ql}$ is similarly a straightforward consequence of \eqref{gi00}, \eqref{K-def}, \eqref{for-K.2} and the definitions
\eqref{kt-def}-\eqref{taut-def}.
\end{proof}

\section{Fuchsian formulation}
\label{sec:FuchsianForm}
With the assistance of the preliminary calculations carried out in the previous sections, we are now in a position to transform the reduced conformal Einstein–scalar field equations into Fuchsian form that is suitable for establishing the existence of solutions up to the singularity. In particular, this formulation allows for the derivation of energy estimates from which uniform bounds on the gravitational and matter fields near the big bang singularity can be obtained. The remainder of this section is devoted to the construction of the Fuchsian formulation.

\subsection{$\beta$-renormalised equations}
The starting point for the derivation of the Fuchsian formulation is to rewrite the first-order formulation of the reduced conformal Einstein–scalar field system obtained in Section~\ref{sec:FirstOrderForm} in terms of the variables \eqref{kt-def}–\eqref{psit-def}, after a suitable renormalisation.
We begin the derivation of this renormalised system by introducing the $\beta$-weighted variables $\beta \gt_{Qjlm}$ in place of $\gt_{Qjlm}$. As observed in \cite{BeyerOliynyk:2021}, this rescaling is essential for obtaining a first-order formulation with a favourable Fuchsian structure. Using \eqref{grad-alpha-A} and \eqref{for-I}, it is straightforward to verify that $\beta \gt_{Qjlm}$ satisfies
\begin{align}
  \delta^{ij}\del{t}(\beta \gt_{Qjlm}) +B^{ijK}\beta e_K^\Lambda\partial_\Lambda(\beta \gt_{Qjlm})&= \frac{1\added{-\Rtt}}{t} \delta_0^i\delta_0^j  (\beta \gt_{Qljm}+ \beta \gt_{Qmjl}-\beta \gt_{Qjlm}) \notag \\
  &\quad+\frac{1}{2}\delta^{ij} \delta^{JK}\beta\kt_{JK}\beta\gt_{Qjlm}+\frac{2}{t}\delta^i_0\beta^2\taut_{Q(lm)}+\Hf^i_{Qlm}
 ,  \label{for-I.S2a}
\end{align}
where we have set
\begin{align}
    \Hf^i_{Qlm}=& -\delta^{ij}\beta^2 \tau_{00}\beta\gt_{Qjlm} + B^{ijK}\Bigl(-\beta^2 \tau_{K0}-\frac{1}{2}\beta\bigl( 2 \mt_K  
   -\delta^{JI}(2\ellt_{JIK}
   -\ellt_{KJI})\bigr)\Bigr)\beta \gt_{Qjlm}  \notag \\
   & 
     + \delta^i_0 \beta\Pf_{Q(lm)}-B^{ijk}\beta\Gf_{Qkj(lm)}.\label{Hf-def}
\end{align}
For similar reasons, we also introduce  the $\beta$-weighted variables $\beta\kt_{LM}$, which by \eqref{grad-alpha-A}, \eqref{for-H} and \eqref{kt-def} satisfy
\begin{align}
    \del{t}(\beta\kt_{LM}) =& -\frac{1\added{-\Rtt}}{t}\beta\kt_{LM}-\beta^2 \tau_{00} \beta\kt_{LM}  +\frac{2(1\added{-\Rtt})}{t}\beta^2\tau_{(LM)}\notag \\
    &- \beta B^{0jK}\beta\gt_{Kj(LM)} +2\beta B^{IjK}\beta\gt_{Kj0(L}\delta_{M)I} 
      +\Mf_{LM}
      , \label{for-H.S2a}
\end{align}
where
\begin{align} 
    \Mf_{LM} =&  \beta^2\Qft^0_{(LM)}  -2\delta_{MI}\beta^2 \Qft^I_{(L}\delta_{M)I}. 
    \label{Kf-def}
\end{align}

Before proceeding, it is worth noting that the corresponding definition of the quantity $\Hf^i_{Qlm}$ from \cite{BeyerOliynyk:2021} that appears in the evolution equations for $\beta \gt_{Qjlm}$ differs from \eqref{for-I.S2a} by the term $\frac{2}{t}\delta^i_0\beta^2\taut_{Q(lm)}$. In the present analysis, we need to treat this term separately in order to track the singular contributions that arise from this term, which was not necessary in  \cite{BeyerOliynyk:2021}.  For later use, we also note from Lemmas~\ref{Qijk-lem} and~\ref{lem-P-exp} that the quantities defined by \eqref{Hf-def} and \eqref{Kf-def} can be expressed as
\begin{align}
  \Hf=&
  \Bigl(\beta^2 \tau_{00}+\beta^2 \tau+\beta\kt+\beta\psit+\beta\mt+\beta\ellt\Bigr)\beta \gt\notag\\
  &+ (\beta\kt+\beta\ellt+\beta\mt) *(\beta\tau+\psit+\ellt+\mt)*(\beta\kt+\beta\ellt+\beta\mt)\notag\\ 
   &+ \frac 1t \beta\tau_{00}*(\beta^2\tau+\beta\ellt+\beta\mt+\beta\psit)
  +\frac{1}{t}(\beta\tau + \psit + \ellt +\mt)*(\beta^2\tau+\beta\kt+\beta\ellt+\beta\mt) 
  ,
  \label{Hf-exp}
\end{align}
{and}
\begin{align}
  \Mf &= (\beta\kt+\beta\ellt+\beta\mt)*(\beta^2\tau+\beta\ellt+\beta\mt),\label{Mf-exp}
\end{align}
respectively. These expressions constitute a slight refinement of those given in \cite[Lemma~8.4]{BeyerOliynyk:2021}, reflecting the fact that we now distinguish between the individual components $\tau_{00}$ and $\tau=(\tau_{iJ},\tau_{Ji})$. Indeed, in the present work, it becomes necessary to eliminate the variable $\tau_{00}$ entirely from the system of evolution equations and instead express it via the algebraic equation
\begin{equation}
  \label{eq:tau00identitygen.NN}
  \tau_{00}=\delta^{RL}\tau_{RL}
  +\frac{\Rtt}{t}\bigl(\beta^{-2}-1\bigr), 
\end{equation}
which follows from \eqref{confESFG}, \eqref{Ttt-Wtt-fix}, \eqref{tau-fix}, \eqref{orthog},  \eqref{p-fields} and \eqref{grad-tau}. In order to keep the equations that follow concise, we continue to write $\tau_{00}$, with the understanding that it is to be interpreted as the function defined by \eqref{eq:tau00identitygen.NN}.

The final computation required for our renormalised first-order formulation involve derivatives of the scalar field source term \eqref{Wtt-def} (see also \eqref{Ttt-Wtt-fix} ), that is,
\begin{equation}
    \Wtt 
     = \frac{\Rtt}{t} \bigl(g^{ij}\Dc_i\tau\Dc_j\tau+1\bigr). \label{eq:Wtt-def.2}
\end{equation}
Recalling that $\Dc_i$ is the covariant derivative of the flat background metric $\gc_{ij}$, a straightforward computation involving  \eqref{tau-fix},  \eqref{orthog}, \eqref{p-fields}, \eqref{d-fields} and \eqref{grad-tau} shows that
\begin{align}
  \Dc_k\Wtt
  =&\Rtt \beta^{-3}\delta_k^0t^{-2} 
              -\Rtt \beta^{-2}t^{-1}g_{k00}
              -2\Rtt \beta^{-1}t^{-1}\tau_{k0}
              -\delta_k^0\beta^{-1}\Rtt t^{-2},
              \\
  \Dc_l\Dc_k\Wtt
  =& 
                   -2t^{-3}\Rtt \beta^{-4}\delta_k^0\delta_l^0
                   +2\Rtt t^{-2}\beta^{-3}\delta_{(k}^0 g_{l)00}
                   +\Rtt t^{-2}\beta^{-2} \tau_{lk}
                   +4\Rtt t^{-2} \beta^{-2} \delta_{(k}^0\tau_{l)0}\notag \\
                  &
                    -\Rtt t^{-1}\beta^{-2}g_{lk00}
                    +2\Rtt t^{-1}\beta^{-2} g^{in} g_{ki0}g_{ln0}
                    +4\Rtt t^{-1}\beta^{-1}g^{ij}\tau_{i(l} g_{k)j0}\notag\\
                  &                    
                    -2\Rtt t^{-1}\beta^{-1}\tau_{lk0}
                    +2\Rtt t^{-1}g^{ij}\tau_{ki}\tau_{lj}-\Rtt t^{-2}\tau_{kl}
                     +2\Rtt t^{-3}\beta^{-2}\delta_k^0 \delta_l^0. \label{eq:Wtt-def.Last}
\end{align}
Then using \eqref{gi00}, \eqref{kt-def}--\eqref{psit-def}, \eqref{Hf-def},  \eqref{B0-def}--\eqref{BK-def} and \eqref{eq:Wtt-def.2}--\eqref{eq:Wtt-def.Last}, we arrive at the following first-order system:
\begin{align}
  \del{t}\beta =& -\beta^3 \tau_{00}+\frac{1}{2} t^{-1+\Rtt}\delta^{JK} (t^{1-\Rtt}\beta\kt_{JK}) \beta, \label{for-O.1.S2.2}\\
  \del{t}(t\taut_{Q0L}) -\delta^{JK}\beta e_K^\Lambda\partial_\Lambda (t\taut_{QJL})
  =&(2\Rtt+1) t^{-1} (t\taut_{Q0L})                                    
                    +\Rtt t^{-1}\beta^{-2}(t\beta g_{QL00})-t^{-1}\tau_{QL}\Rtt(\beta^{-1}-\beta)     \notag\\
                    &-\Rtt \beta^{-1}\Bigl(2 g^{in} g_{Qi0}g_{Ln0}
                    +4\beta g^{ij}\tau_{i(Q} g_{L)j0}
                    +2\beta^2 g^{ij}\tau_{Qi}\tau_{Lj}\Bigr)\notag\\
  &-\Rtt \beta^{-1}(2\mt_{Q}-2 \delta^{JK}\ellt_{JKQ}+\delta^{JK}\ellt_{QJK}
  \notag \\
  & +\beta \tau_{0Q})\bigl(
              g_{L00}
              +2\beta\tau_{L0}
                   \bigr) +t\Kf^0_{QL} -tB^{0jk}\Tf_{QkjL}, \label{for-J.S2.2.2.N.First}\\
  \del{t}(t\beta \gt_{Q000}) -\delta^{JK}\beta e_K^\Lambda\partial_\Lambda(t\beta \gt_{QJ00})=& \frac{2\added{-\Rtt}}{t} (t\beta \gt_{Q000}) +\frac{1}{2}t^{-1+\Rtt} \delta^{JK}(t^{1-\Rtt}\beta\kt_{JK})( t\beta\gt_{Q000}) \notag \\
                 &+\frac{2}{t}\beta^2(t\taut_{Q00})+t\Hf^0_{Q00},\\
  \del{t}(t\taut_{Q00}) -\delta^{JK}\beta e_K^\Lambda\partial_\Lambda (t\taut_{QJ0})=&(2\Rtt+1) t^{-1} (t\tau_{Q00})+\Rtt t^{-1}\beta^{-2}(t\beta g_{Q000})-2\Rtt t^{-1} \beta^{-1}\tau_{0Q}\notag \\
  &-t^{-1}\Rtt(2\mt_{Q}-2 \delta^{JK}\ellt_{JKQ}+\delta^{JK}\ellt_{QJK})
                   \notag\\       
  &              +\Rtt t^{-1} \beta^{-2} (2\mt_{Q}-2 \delta^{JK}\ellt_{JKQ}+\delta^{JK}\ellt_{QJK}
  \notag \\
  &+\beta \tau_{0Q})\Bigl[ t^{\Rtt}\delta^{JK} t^{1-\Rtt}\beta \kt_{JK}
              -2 \beta^2 t\tau_{00}                   
    \Bigr]\notag\\                 
 & +\Rtt \beta^{-2} \Bigl[-2\beta g^{in} g_{0i0}g_{Qn0} +4\beta^2\tau_{0(0} g_{Q)00}
 -4\beta^2\delta^{IJ}\tau_{I(0} g_{Q)J0}\notag\\
 &+2\beta^3 \tau_{Q0}\tau_{00}-2\beta^3 \delta^{IJ}\tau_{QI}\tau_{0J}\Bigr]
  +t\Kf^0_{Q0}-tB^{0jk}\Tf_{Qkj0}, \label{for-J.S2.2.1}\\
  \del{t}\mt_{M} =& \frac{1\added{-\Rtt}}{t}\bigl(2 \mt_{M}-\delta^{IJ}(2 \ellt_{IJM}-\ellt_{MIJ})\bigr)+\frac{2(1\added{-\Rtt})}{t}\beta\tau_{0M}\label{for-G.1.S2.2}\\
  &
                    + \frac 1t \eta^{JK} (t\beta \gt_{KJ0M})+\beta \Qft_M, \notag\\
\del{t}\tau_{0L} =&2\Rtt t^{-1}\tau_{0L}+\Rtt \beta^{-1}t^{-1}\bigl(
              2\mt_{L}-\delta^{JK}(2\ellt_{JKL}-\ellt_{LJK})
              \bigr) \notag \\
 &+t^{-1}\delta^{JK} \beta(t\taut_{KJL})+\beta J^0_L, \\
\del{t}e^\Lambda_I =& -\frac{1}{2}t^{-1+\Rtt}\delta^{JL}(t^{1-\Rtt}\beta\kt_{IL}) e^\Lambda_J -\delta^{JK}\beta\psit_{I}{}^0{}_K e^\Lambda_J,\label{for-M.1.S2.2}\\
\del{t}\psit_I{}^k{}_J=&
                         -\frac{1}{2}\delta^k_0 t^{-1}(t\beta \gt_{IJ00})-\frac{1}{2}t^{-1}\eta^{kl}\bigl(t\beta \gt_{I0Jl}+t\beta \gt_{IJ0l}-t\beta \gt_{Il0J}\bigr)\notag\\
                 &-t^{-1}\delta^k_0\beta^2 (t\taut_{IJ0})+\beta \Lf_I{}^k{}_J,\label{for-N.S2.2}\\
  \del{t}(t^{1-\Rtt}\beta\kt_{LM}) =& -\beta^2 \tau_{00} (t^{1-\Rtt}\beta\kt_{LM})  +2(1\added{-\Rtt}) t^{-\Rtt}\beta^2\tau_{(LM)}\notag \\
    &+ t^{-\Rtt}\beta \delta^{JK} (t\beta\gt_{KJ(LM)}) -2 t^{-\Rtt}\beta (t\beta\gt_{K00(L})\delta_{M)}^{K} 
\notag \\
  &+t^{1-\Rtt}\Mf_{LM}, \\
  \del{t}\ellt_{R0M} =& t^{-1}(t\beta\gt_{R00M})+\delta_{RI}\beta \Qft^I_M, \label{for-G.2.S2.2}\\
  \del{t}\ellt_{RLM} =& t^{-1} (t\beta\gt_{R0LM})+
                        \delta_{RI}\beta \Qft^I_{LM}, \label{for-G.4.S2.2}\\
\del{t}\tau_{RL} =&t^{-1}\beta (t\taut_{RL0})+
                    \delta_{RI}\beta J^I_L, \label{for-F.2.S2.2.3}\\
  \del{t}(t\beta \gt_{Q0LM}) -\delta^{JK}\beta e_K^\Lambda\partial_\Lambda(t\beta \gt_{QJLM})=& \frac{1\added{-\Rtt}}{t} (t\beta \gt_{QL0M}+ t\beta \gt_{QM0L})+\frac{\Rtt}t (t\beta \gt_{Q0LM}) \notag \\
                 &+\frac{2}{t}\beta^2(t\taut_{Q(LM)})+\frac{1}{2}t^{-1+\Rtt} \delta^{JK}(t^{1-\Rtt}\beta\kt_{JK})(t\beta\gt_{Q0LM})
                   \notag\\
                   &+t\Hf^0_{QLM},\\
  \del{t}(t\beta \gt_{QNLM}) -\beta e_N^\Lambda\partial_\Lambda(t\beta \gt_{Q0LM})=&\frac 1t (t\beta \gt_{QNLM})+\frac{1}{2}t^{-1+\Rtt}\delta^{JK}(t^{1-\Rtt}\beta\kt_{JK})(t\beta\gt_{QNLM}) \notag \\
                 &+t\delta_{IN}\Hf^I_{QLM},\\
  \del{t}(t\beta \gt_{Q00M}) -\delta^{JK}\beta e_K^\Lambda\partial_\Lambda(t\beta \gt_{QJ0M})=& \frac 1t (t\beta \gt_{Q00M})+\frac{1\added{-\Rtt}}{t} (t\beta \gt_{QM00}) \notag  \\
  &+\frac{1}{2} t^{-1+\Rtt}\delta^{JK}(t^{1-\Rtt}\beta\kt_{JK})(t\beta\gt_{Q00M})+\frac{2}{t}\beta^2(t\taut_{QM0})\notag\\
                 &+t\Hf^0_{Q0M},\\
  \del{t}(t\beta \gt_{QN0M}) -\beta e_N^\Lambda\partial_\Lambda(t\beta \gt_{Q00M})=&\frac 1t (t\beta \gt_{QN0M})+\frac{1}{2}t^{-1+\Rtt}\delta^{JK}(t^{1-\Rtt}\beta\kt_{JK})(t\beta\gt_{QN0M}) \notag \\
                 &+t\delta_{IN}\Hf^I_{Q0M},\\
  \del{t}(t\beta \gt_{QN00}) -\beta e_N^\Lambda\partial_\Lambda(t\beta \gt_{Q000})=&\frac 1t (t\beta \gt_{QN00})+\frac{1}{2}t^{-1+\Rtt}\delta^{JK}(t^{1-\Rtt}\beta\kt_{JK})(t\beta\gt_{QN00}) \notag \\
                 &+t\delta_{IN}\Hf^I_{Q00},\\
  \del{t} (t\taut_{QN0}) -\beta e_N^\Lambda\partial_\Lambda (t\taut_{Q00})=&                                                                     t^{-1} (t\taut_{QN0})+\beta^{-2}t^{-1}\psit_N{}^0{}_Q \Rtt\bigl(1  +\beta^2\bigr)\notag \\
     &+\Rtt \beta^{-2}t^{-1}\psit_N{}^0{}_Q
     \bigl(t^{\Rtt}\delta^{JK}(t^{1-\Rtt}\beta\kt_{JK})
     -2 \beta^2t\tau_{00}\bigr)\notag\\
     & +\delta_{IN}t\Kf^I_{Q0}
       -\delta_{IN} t B^{Ijk}\Tf_{Qkj0}
       ,  \label{for-J.S2.2.1.N.N}\\
  \del{t} (t\taut_{QNL}) -\beta e_N^\Lambda\partial_\Lambda (t\taut_{Q0L})=&
       t^{-1} (t\taut_{QNL})-\Rtt  \beta^{-1}\psit_N{}^0{}_Q\bigl(g_{L00}
              +2\beta t\tau_{L0}\bigr)         +\delta_{IN}t\Kf^I_{QL} \notag \\
              & -\delta_{IN} B^{Ijk}t\Tf_{QkjL}. \label{for-J.S2.2.2.N}
\end{align}
where $Q^i_{lm}$ is determined by \eqref{Q-def} and \eqref{Qilm-def}, $S_{LM}$ and $J^i_l$ are given by \eqref{S-def} and \eqref{J-def}, respectively, and $g_{000}$ and $g_{I00}$ are determined by \eqref{gi00}. Moreover, $P_{qlm}$ is determined by \eqref{P-def} and \eqref{grad-alpha-A}, $K^i_{ql}$ by \eqref{K-def}, $g_{0rlm}$ and $\tau_{0rl}$ by \eqref{for-K.1} and \eqref{for-K.2}, respectively, and $L_I{}^k{}_J$ by \eqref{L-def}. 
In addition, the coefficients $\gamma{}_0{}^0{}_0$, $\gamma{}_0{}^K{}_0$, $\gamma_0{}^0{}_J$, $\gamma_0{}^K{}_J$, $\gamma_I{}^0{}_0$ and $\gamma_I{}^J{}_0$ are determined by \eqref{gamma-000}–\eqref{gamma-IJ0}, while $\Lf_I{}^k{}_J$ is given by \eqref{lem-cov-exp.3}, $\Pf_{Qlm}$, $\Gf_{Qkjlm}$ and $\Tf_{Qkjl}$ by \eqref{lem-P-exp.1}–\eqref{lem-P-exp.3}, and $\Qft_M$, $\Qft^I_M$, $\Qft^0_{LM}$, and $\Qft^I_{LM}$ by \eqref{Qijk-lem.1}. We further note that
\begin{align}
  B^{ij0}&=\delta^{ij}, \label{B0-def}\\
B^{ijK} &= -\delta^i_0 \eta^{jK}-\delta^j_0\eta^{iK} \label{BK-def}.
\end{align}

\subsection{Initial data\label{frame-idata}}
 
Before continuing, we describe, following \cite[\S9.2]{BeyerOliynyk:2021}, how the initial data $\{\gr_{\mu\nu},\ggr_{\mu\nu},\taur=t_0,\taugr\}$ for the reduced conformal Einstein–scalar field equations, together with an initial choice of spatial frame
\begin{equation} \label{frame-idata-A}
e_I^\mu\bigl|_{\Sigma_{t_0}} = \er^\mu_I,
\end{equation}
determine initial data for the first-order system \eqref{for-J.S2.2.2.N.First}–\eqref{for-J.S2.2.2.N}. The first step is to observe, by Remark~\ref{rem-Lag-idata}, that the initial data set ${\gr_{\mu\nu},\ggr_{\mu\nu},\taur=t_0,\taugr}$ determines, via \eqref{l-idata}–\eqref{chi-idata}, a corresponding Lagrangian initial data set
\begin{equation} \label{Lag-idata-set-A}
\bigl\{g_{\mu\nu}\bigl|_{\Sigma_{t_0}},\del{0}g_{\mu\nu}\bigl|_{\Sigma_{t_0}}, \tau\bigl|_{\Sigma_{t_0}}=t_0,\del{0}\tau\bigl|_{\Sigma_{t_0}}=1\bigr\}.
\end{equation}
Then, with the help of the reduced conformal Einstein-scalar field equations \eqref{lag-redeqns}, this initial data determines the following higher order partial derivatives on the initial hypersurface:
\begin{equation} \label{Lag-idata-set-B}
\bigl\{\del{\beta}g_{\mu\nu}\bigl|_{\Sigma_{t_0}},\del{\alpha}\del{\beta}g_{\mu\nu}\bigl|_{\Sigma_{t_0}},\del{\alpha}\tau\bigl|_{\Sigma_{t_0}}=\delta^0_\alpha,\del{\alpha}\del{\beta}\tau\bigl|_{\Sigma_{t_0}}, \del{\alpha}\del{\beta}\del{\gamma}\tau\bigl|_{\Sigma_{t_0}}\bigr\}.
\end{equation}
Furthermore, we note the flat metric $\gc_{\mu\nu}=\del{\mu}l^\alpha \eta_{\alpha\beta}\del{\nu} l^\beta$ and its first and second order partial derivatives on the initial hypersurface
\begin{equation} \label{Lag-idata-set-C}
\bigl\{\gc_{\mu\nu}\bigl|_{\Sigma_{t_0}},\del{\alpha}\gc_{\mu\nu}\bigl|_{\Sigma_{t_0}},\del{\alpha}\del{\beta}\gc_{\mu\nu}\bigl|_{\Sigma_{t_0}} \bigr\},
\end{equation}
are determined from the initial data set $\{\gr_{\mu\nu},\ggr_{\mu\nu},\taur=t_0,\taugr\}$ by
\eqref{Jc-def}, \eqref{vr-def}--\eqref{Jcr-def} and the
reduced conformal Einstein-scalar field equations.
Together, \eqref{Lag-idata-set-A}--\eqref{Lag-idata-set-C} determine the initial data
\begin{equation} \label{Lag-idata-set-D}
\bigl\{g_{\mu\nu}\bigl|_{\Sigma_{t_0}},\Dc_{\beta}g_{\mu\nu}\bigl|_{\Sigma_{t_0}},\Dc_{\alpha}\Dc_{\beta}g_{\mu\nu}\bigl|_{\Sigma_{t_0}},\tau\bigl|_{\Sigma_{t_0}}=t_0,\Dc_{\alpha}\tau\bigl|_{\Sigma_{t_0}}=\delta^0_\alpha,\Dc_{\alpha}\Dc_{\beta}\tau\bigl|_{\Sigma_{t_0}},\Dc_{\alpha}\Dc_{\beta}\Dc_{\gamma}\tau\bigl|_{\Sigma_{t_0}} \bigr\}.
\end{equation}

Next, on $\Sigma_{t_0}$, the frame vector $e_0^\mu$ is, see \eqref{alpha-def}-\eqref{e0-mu}, given by 
\begin{equation} \label{frame-idata-E}
e_0^\mu \bigl|_{\Sigma_{t_0}} = \frac{1}{\beta|_{\Sigma_{t_0}}}\delta^\mu_0   
\end{equation}
where 
\begin{equation} \label{frame-idata-F}
\beta|_{\Sigma_{t_0}} = (- g^{00}\bigl|_{\Sigma_{t_0}})^{-\frac{1}{2}}.
\end{equation}
Then using \eqref{frame-idata-E} and a Gram-Schmidt algorithm, we select (non-unique) spatial frame initial data \eqref{frame-idata-A} so that the frame $e_i=e_i^\mu\del{\mu}$ is initially orthonormal, i.e., $g(e_i,e_j)|_{\Sigma_{t_0}}=\eta_{ij}$.
From the relations
\begin{gather*}
\Dc_i g_{jk} = e_i^\beta e_j^\mu e_k^\nu \Dc_{\beta}g_{\mu\nu}, \quad \Dc_i\Dc_j g_{kl} = e_i^\alpha e_j^\beta e_k^\mu e_l^\nu \Dc_{\alpha}\Dc_{\beta}g_{\mu\nu}, \\
\Dc_i \Dc_{j}\tau = e_i^\beta e_j^\beta \Dc_{\beta}\Dc_{\beta}g_{\mu\nu}, \quad \Dc_i\Dc_j \Dc_k \tau = e_i^\alpha e_j^\beta e_k^\gamma \Dc_{\alpha}\Dc_{\beta}\Dc_{\gamma}\tau,
\end{gather*}
and the definitions \eqref{p-fields}--\eqref{d-fields}, it is then clear that \eqref{frame-idata-A}, \eqref{Lag-idata-set-D} and \eqref{frame-idata-E}--\eqref{frame-idata-F} determine the initial data
\begin{equation}\label{frame-idata-H}
    \bigl\{g_{ijk}\bigl|_{\Sigma_{t_0}},g_{ijkl}\bigl|_{\Sigma_{t_0}},\tau_{ij}\bigl|_{\Sigma_{t_0}},\tau_{ijk}\bigl|_{\Sigma_{t_0}}\bigr\}.
\end{equation}

Using the fact that the frame $e_j^\mu$ is orthonormal with respect to the metric $g_{\mu\nu}$, a direct computation shows that
$\Gamma_{I}{}^0{}_J{}|_{\Sigma_{t_0}} = \Ktt_{\Lambda\Omega}\er_I^\Lambda \er_J^\Omega$,
where $\Ktt=\Ktt_{\Lambda\Omega}dx^\Lambda\otimes dx^\Omega$ denotes the second fundamental form (cf.\ \eqref{gtt-def1}), which is determined by the initial data ${g_{\mu\nu}|_{\Sigma_{t_0}},\del{0}g_{\mu\nu}|_{\Sigma_{t_0}}}$. Combining this identity with \eqref{Ccdef}, we deduce that
\begin{equation} \label{frame-idata-J}
    \gamma_I{}^0{}_J\bigl|_{\Sigma_{t_0}} = \Ktt_{\Lambda\Omega}\er_I^\Lambda \er_J^\Omega +\frac{1}{2}(g_{IJ0}+g_{JI0}-g_{0IJ})\bigl|_{\Sigma_{t_0}}.
\end{equation}
Moreover, initial data for the connection coefficients
$\Gamma_I{}^K{}_J$ can be computed from the initial data for the spatial frames $\ett_I = \er^\Lambda_I \del{\Lambda}$
and the spatial metric
$\gtt=\gr_{\Lambda\Omega}dx^\Lambda \otimes dx^\Omega$, c.f.\ \eqref{gtt-def1}, using
$\Gamma_I{}^K{}_J|_{\Sigma_{t_0}} = \delta^{KL}\gtt(\Dtt_{\ett_I}\ett_J,\ett_L)$,
where $\Dtt$ is the Levi-Civita connection of $\gtt$. This identity can then be combined with \eqref{Ccdef} to obtain
\begin{equation}  \label{frame-idata-K}
    \gamma_I{}^K{}_J \bigl|_{\Sigma_{t_0}} = \delta^{KL}\gtt(\Dtt_{\ett_I}\ett_J,\ett_L)
    - \frac{1}{2}\delta^{KL}(g_{IJL}+g_{JIL}-g_{LIJ})\bigl|_{\Sigma_{t_0}}.
\end{equation}
Together, \eqref{frame-idata-A} and \eqref{frame-idata-F}--\eqref{frame-idata-K} determine, via \eqref{kt-def}--\eqref{psit-def}, initial data on $\Sigma_{t_0}$
for the system \eqref{for-J.S2.2.2.N.First}--\eqref{for-J.S2.2.2.N}.

\subsection{$t$-renormalised equations}
\label{sec:tnormalisedequations}
The next step needed to bring the equations \eqref{for-O.1.S2.2}--\eqref{for-J.S2.2.2.N} into a form that is suitable for analysing the \emph{positive ekpyrotic regime} is to renormalise the first-order variables \eqref{kt-def}-\eqref{psit-def} by a suitable power of $t$ as follows: 
\begin{align}
  \gac_{Q000}&=t^{1+\ep}\beta g_{Q000}, \label{gac-Q000}\\
  \tauac_{Q00}&=t^{1+\ep}\tau_{Q00},\\
  \tauac_{QN0}&=t^{1+\ep}\tau_{QN0},\\
  m_Q&=t^{\ep}\mt_Q, \label{m-def}\\
  \xi_{0L}&=t^{\ep}\tau_{0L},\label{xi-def.N}\\
  \xi_{RL}&=t^{\ep}\tau_{RL},\\%\label{xi-def.N}
\psi_I{}^k{}_J&=t^{\ep}\psit_I{}^k{}_J,\label{psi-def.NN}\\
  k_{IJ}&=t^{1-\Rtt}\beta\kt_{IJ},\label{k-def.NN}\\
  \ell_{RiJ}&=t^{\ep}\ellt_{RiJ},\label{l-def.NN}\\
  \gac_{Q0LM}&=t^{1+\ep}\beta g_{Q00M},\\
  \gac_{QNLM}&=t^{1+\ep}\beta g_{QN0M},\\
  \gac_{Q00M}&=t^{1+\ep}\beta g_{Q00M},\\
  \gac_{QN0M}&=t^{1+\ep}\beta g_{QN0M},\\
  \tauac_{Q0L}&=t^{1+\ep}\tau_{Q0L}, \\
  \tauac_{QNL}&=t^{1+\ep}\tau_{QNL}, \\
  \label{gacclastdef}
  \gac_{QN00}&=t^{1+\ep}\beta g_{QN00},
\end{align}
where $\ep>0$ is, for the moment, an arbitrary positive constant.

\begin{rem}\label{rem:peFLRW}
From the discussion in Section~\ref{sec:FLRWEinsteinSF}, it is not difficult to verify that the positive ekpyrotic-FLRW solution can be expressed as
\begin{equation*}
 % \label{eq:EkpyroticSol.Form1}
  \bigl(\beta,e_{I}^\lambda,
  k_{IJ},m_Q,\ell_{RiJ},\xi_{iL},\psi=0,\gac_{Qijk},\tauac_{Qij}\bigr)=(1,\delta^\Lambda_I,0,0,0,0,0,0,0).
\end{equation*}
\end{rem}

The final step needed to bring the evolution equations into a form suitable for analysing small perturbations of the positive ekpyrotic-FLRW solution is to replace $\beta$ by a renormalised version $\betah$, where the two are related via
\begin{equation}
\label{eq:ekpyroticbeta}
 \beta=1+t^{\Rtt-\ep}\betah
\end{equation}
with $\ep>0$ a positive constant to be determined. The heuristic motivation for this change of variables is that $\beta$ should approach $1$ in the ekpyrotic regime as $t\searrow 0$, and the new variable is designed to capture this expected convergence.

In what follows, we replace $\beta$ in favour of $\betah$ and replace its evolution equation with
\begin{equation}
\label{eq:betahevol_N}
\del{t}\betah=t^{-1}\bigl(
\Rtt-\ep+\Rtt t^{\Rtt-\ep}\betah
(3+t^{\Rtt-\ep}\betah)\bigr)\betah
+ \frac{1}{2} t^{-1+\ep}\delta^{JK} k_{JK} \beta
-t^{-\Rtt}\beta^3\delta^{RL}\xi_{RL}.
\end{equation}
For example, \eqref{eq:tau00identitygen.NN} becomes
\begin{equation*}
%\label{eq:5}
\tau_{00}=t^{-\ep}\delta^{RL}\xi_{RL}-\Rtt t^{-1+\Rtt-\ep}\beta^{-2}(2+t^{\Rtt-\ep}\betah)\betah.
\end{equation*}
To keep the equations concise, we occasionally continue to write $\beta$, with the understanding that it denotes the function defined by \eqref{eq:ekpyroticbeta}.

Collecting \eqref{for-J.S2.2.2.N.First}--\eqref{for-J.S2.2.2.N} and \eqref{eq:betahevol_N} together, we arrive at a first-order formulation of the reduced conformal Einstein-scalar field equations that is suitable for analysing small perturbations of the positive ekpyrotic-FLRW solution: 
\begin{align}
  \del{t}e^\Lambda_I =& -\frac{1}{2}t^{-1+\Rtt}\delta^{JL}k_{IL} e^\Lambda_J -\delta^{JK}t^{-\ep}\beta\psi_{I}{}^0{}_K e^\Lambda_J,\label{final.first.N}\\
  \del{t}k_{LM} =& -\beta^2 \tau_{00} k_{LM}
                   +2(1\added{-\Rtt}) t^{-\Rtt-\ep}\beta^2\xi_{(LM)} \notag \\
                                    &+ t^{-\Rtt-\ep}\beta \delta^{JK} \gac_{KJ(LM)}
                                      -2 t^{-\Rtt-\ep}\beta \gac_{K00(L}\delta_{M)}^{K} 
    {+t^{\ep}\beta \Lf_I{}^k{}_J},\\
  \del{t}\betah=&t^{-1}\bigl(\Rtt-\ep\bigr)\betah+t^{-1+\Rtt-\ep}\Rtt 
    (2+\beta)\betah^2
    + \frac{1}{2} t^{-1+\ep}\delta^{JK} k_{JK} \beta\notag\\
    &-t^{-\Rtt}\beta^3\delta^{RL}\xi_{RL}, \\
  \del{t} \gac_{Q000} 
  -\delta^{JK}\beta e_K^\Lambda\partial_\Lambda\gac_{QJ00}
  =& \frac{2-\Rtt+\ep}{t} \gac_{Q000}
  +2t^{-1}(1+t^{\Rtt-\ep}\betah)^2\tauac_{Q00} \notag \\
  &+\frac{1}{2}t^{-1+\Rtt} \delta^{JK}k_{JK}\gac_{Q000}
     {+t^{1+\ep}\Hf^0_{Q00}},\\
  \del{t}\tauac_{Q00} -\delta^{JK}\beta e_K^\Lambda\partial_\Lambda \tauac_{QJ0}
  =&  (2\Rtt+1+\ep) t^{-1} \tauac_{Q00}
  +\Rtt t^{-1}(1+t^{\Rtt-\ep}\betah)^{-2}g_{Q000} \notag \\
  &-2\Rtt t^{-1} (1+t^{\Rtt-\ep}\betah)^{-1} \xi_{0Q}
  -2 \Rtt t^{-1} m_{Q} 
  +\Rtt t^{-1}\bigl(2 \delta^{JK}\ell_{JKQ}\notag \\
  &-\delta^{JK}\ell_{QJK}\bigr) 
  +\Rtt t^{-1} \beta^{-2} \Bigl[\bigl(m+ \ell+\beta \xi\bigr)*\bigl(t^{\Rtt}k+\beta^2 t\tau_{00}+t^{1-\ep}\beta m
   \notag \\
   &+t^{1-\ep}\beta\ell+t^{1-\ep}\beta^2\xi\bigr) \Bigr]_Q    
  +t^{1+\ep}\Kf^0_{Q0}-t^{1+\ep}B^{0jk}\Tf_{Qkj0}, \label{final.last.N.2}\\
  2\Rtt\del{t}m_{M} =&
    \frac{4\Rtt(1\added{-\Rtt})+2\Rtt\ep}{t} m_{M}
    +\frac{4\Rtt(1\added{-\Rtt})}{t}(1+t^{\Rtt-\ep}\betah)\xi_{0M} \\
    &-\frac{2\Rtt(1\added{-\Rtt})}{t}\delta^{IJ}(2 \ell_{IJM}-\ell_{MIJ}) 
    + 2\Rtt t^{-1}\delta^{JK} \gac_{KJ0M} + 2\Rtt t^{\ep}\beta\Qft_M,\\
  2(1-\Rtt)\del{t}\xi_{0L} =&2(1-\Rtt)(2\Rtt+\ep) t^{-1}\xi_{0L}
  +4(1-\Rtt)\Rtt (1+t^{\Rtt-\ep}\betah)^{-1}t^{-1}m_{L} \notag \\
    &-2(1-\Rtt)\Rtt (1+t^{\Rtt-\ep}\betah)^{-1}t^{-1}
      \delta^{JK}(2\ell_{JKL}-\ell_{LJK})\notag\\
    &+2(1-\Rtt)t^{-1}\delta^{JK} (1+t^{\Rtt-\ep}\betah) \tauac_{KJL}+2(1-\Rtt)t^{\ep}\beta J^0_L,\\
  \del{t}\xi_{RL} =&\ep\, t^{-1} \xi_{RL}
  + t^{-1}(1+t^{\Rtt-\ep}\betah) \tauac_{R0L}+\delta_{RI}t^{\ep}\beta J^I_L,\\%\label{for-F.2.S2.2.3}\\
\del{t}\psi_I{}^k{}_J=&\ep\,t^{-1} \psi_I{}^k{}_J 
-t^{-1}\delta^k_0(1+t^{\Rtt-\ep}\betah)^2\tauac_{I0J}\notag\\
&-\frac{1}{2}t^{-1}\eta^{kL}\bigl(\gac_{I0JL}+\gac_{IJ0L}-\gac_{IL0J}\bigr)+t^{1-\Rtt}\Mf_{LM},\\
  \del{t}\ell_{R0M} =& \ep\, t^{-1} \ell_{R0M} + t^{-1}\gac_{R00M}+\delta_{RI}t^{\ep}\beta \Qft^I_M,\\%\label{for-G.2.S2.2}\\
  \del{t}\ell_{RLM} =& \ep\, t^{-1} \ell_{RLM}  
  + t^{-1} \gac_{R0LM}+\delta_{RI}t^{\ep}\beta \Qft^I_{LM}, \\
  \del{t}\gac_{Q0LM} -\delta^{JK}\beta e_K^\Lambda\partial_\Lambda\gac_{QJLM}=& 
  \frac{\Rtt+\ep}t \gac_{Q0LM}
  +\frac{1\added{-\Rtt}}{t} (\gac_{QL0M}+ \gac_{QM0L}) 
  +{2}t^{-1}(1+t^{\Rtt-\ep}\betah)^2\tauac_{Q(LM)} \notag \\
  &+\frac{1}{2}t^{-1+\Rtt} \delta^{JK}k_{JK}\gac_{Q0LM}
                   {+t^{1+\ep}\Hf^0_{QLM}},\\
  \del{t}\gac_{QNLM} -\beta e_N^\Lambda\partial_\Lambda\gac_{Q0LM})=&(1+\ep)\frac 1t \gac_{QNLM}
  +\frac{1}{2}t^{-1+\Rtt}\delta^{JK}k_{JK}\gac_{QNLM} {+t^{1+\ep}\delta_{IN}\Hf^I_{QLM}},\\
  \del{t}\gac_{Q00M} -\delta^{JK}\beta e_K^\Lambda\partial_\Lambda\gac_{QJ0M}=& 
  (1+\ep)\frac 1t \gac_{Q00M}
  +\frac{1\added{-\Rtt}}{t} \gac_{QM00} 
  +{2}t^{-1}(1+t^{\Rtt-\ep}\betah)^2\tauac_{Q0M} \notag \\
    &+\frac{1}{2} t^{-1+\Rtt}\delta^{JK}k_{JK}\gac_{Q00M}{+t^{1+\ep}\Hf^0_{Q0M}},\\
  \del{t}\gac_{QN0M} -\beta e_N^\Lambda\partial_\Lambda\gac_{Q00M}
  =&\frac {1+\ep}t  \gac_{QN0M}+\frac{1}{2}t^{-1+\Rtt}\delta^{JK}k_{JK}\gac_{QN0M} 
    {+t^{1+\ep}\delta_{IN}\Hf^I_{Q0M} },\\
  \del{t}\tauac_{Q0L} -\delta^{JK}\beta e_K^\Lambda\partial_\Lambda \tauac_{QJL}
  =&(2\Rtt+1+\ep) t^{-1}\tauac_{Q0L}
  +\Rtt t^{-1}(1+t^{\Rtt-\ep}\betah)^{-2}\gac_{QL00}\\     
  &+t^{-1+\Rtt-\ep}\Rtt\xi_{QL}\beta^{-1}
  \betah(1+\beta) {+t^{1+\ep}\Kf^0_{QL} -t^{1+\ep}B^{0jk}\Tf_{QkjL}} \notag\\
  &-\Rtt t^{-\ep}\beta^{-1}\Bigl[(m+\ell+\beta\xi)*(m+\ell+\beta\xi)
  \Bigr]_{QL}, \label{final.symm.N}\\
  \del{t} \tauac_{QNL} -\beta e_N^\Lambda\partial_\Lambda \tauac_{Q0L}
  =&(1+\ep)t^{-1} \tauac_{QNL} 
     -\Rtt  \beta^{-1}t^{-\ep}\psi_N{}^0{}_Q\Bigl[m+\ell+t\beta \xi\Bigr] \notag \\
     &{+\delta_{IN}t^{1+\ep}\Kf^I_{QL} -\delta_{IN} B^{Ijk}t^{1+\ep}\Tf_{QkjL}}, \label{final.symm.N.2}\\
  \del{t}\gac_{QN00} -\beta e_N^\Lambda\partial_\Lambda\gac_{Q000}
  =&\frac {1+\ep}t \gac_{QN00}+\frac{1}{2}t^{-1+\Rtt}\delta^{JK}k_{JK}\gac_{QN00} 
  {+t^{1+\ep}\delta_{IN}\Hf^I_{Q00}}, \notag\\
 \del{t} \tauac_{QN0} -\beta e_N^\Lambda\partial_\Lambda \tauac_{Q00}
  =& (1+\ep)t^{-1} \tauac_{QN0} -{t^{-1+\Rtt-\ep}\Rtt\psi_N{}^0{}_Q \beta^{-2}
      \betah(1+\beta)} \notag \\
      &+\Rtt \beta^{-2}\psi_N{}^0{}_Q\Bigl(t^{-1+\Rtt}\delta^{JK}k_{JK}
      -2 \beta^2\tau_{00}\Bigr)\notag\\
      &{+\delta_{IN}t^{1+\ep}\Kf^I_{Q0}
      -\delta_{IN} t^{1+\ep} B^{Ijk}\Tf_{Qkj0}}.\label{final.last.N}
\end{align} 

We briefly comment on the variables $\tauac_{Q0L}$ and $\tauac_{QL0}$. Although these agree by their definition \eqref{p-fields} and \eqref{d-fields}, we treat them as independent unknowns here. As in \cite{BeyerOliynyk:2021}, this is necessary to obtain a symmetric hyperbolic system: \eqref{final.symm.N} and \eqref{final.symm.N.2} form one symmetrised pair, and \eqref{final.last.N} and \eqref{final.last.N.2} form another. We present the equations in this particular order, especially with the latter two equations separated, to make a certain upper triangular structure of the leading singular lower-order terms on the right-hand side manifest, which is crucial for the Fuchsian analysis. In any case, one readily checks that if $\tauac_{Q0L}=\tauac_{QL0}$ initially, then the full system implies that this equality is preserved for as long as the solution exists.

We further note that the source terms satisfy
\begin{align*}
t^{1+\ep}\Kf =& 
(t^{-1+\Rtt} k+t^{-\ep}\beta\ell+t^{-\ep}\beta m)*\tauac 
+ t^{-\ep}\gac*\xi \\
&+ (\ell+m)*(t^{\Rtt} k+t^{1-\ep}\beta\ell+t^{1-\ep}\beta m)*(\tau_{00}+t^{-\ep}\xi),\\
t^{1+\ep}\Tf=& (\beta^2 t^{-\ep}\xi+t^{-\ep}\beta\psi+ t^{-1+\Rtt} k+ t^{-\ep}\beta\ell +t^{-\ep}\beta m)*\tauac\\
&+ (\beta \xi+\psi +\ell +m)*(t^{\Rtt} k + t^{1-\ep}\beta\ell +t^{1-\ep}\beta m)*(\tau_{00}+t^{-\ep}\xi),\\
t^{1+\ep}\Hf
=& \Bigl(\beta^2 \tau_{00}+t^{-\ep}\beta^2 \xi+t^{-1+\Rtt} k+t^{-\ep}\beta\psi+t^{-\ep}\beta m+t^{-\ep}\beta\ell\Bigr)\gac\notag\\
&+ (t^{-1+\Rtt} k+t^{-\ep}\beta\ell+t^{-\ep}\beta m) *(\beta\xi+\psi+\ell+m)*(t^{\Rtt} k+t^{1-\ep}\beta\ell+t^{1-\ep}\beta m)\notag\\ 
 &+ \beta\tau_{00}*(\beta^2\xi+\beta\ell+\beta m+\beta\psi)\\
&+ (\beta\xi + \psi + \ell +m)*(t^{-\ep}\beta^2\xi+t^{-1+\Rtt} k+t^{-\ep}\beta\ell+t^{-\ep}\beta m), \\
t^{\ep}\beta\Qft &=(t^{-1+\Rtt}k+t^{-\ep}\beta\ell+t^{-\ep}\beta m)*\beta\xi 
+(t^{-1+\Rtt} k+t^{-\ep}\beta\ell+t^{-\ep}\beta m)*(\ell+m),\\
t^{\ep}\beta J &= (\beta^2t^{-\ep}\xi+ t^{-1+\Rtt}k +t^{-\ep}\beta\ell +t^{-\ep}\beta m) *\xi
+\beta(\beta\xi+m+\ell)\tau_{00},\\
t^{\ep}\beta\Lf&=  (t^{-\ep}\beta^2\xi
+t^{-\ep}\beta\psi+t^{-1+\Rtt} k+t^{-\ep}\beta\ell +t^{-\ep}\beta m)*(\beta \xi+\psi+\ell+m),\\
t\Mf &= t^{1-\ep}(t^{-1+\Rtt} k+t^{-\ep}\beta\ellt+t^{-\ep}\beta m)*(\beta^2\xi+\beta\ell+\beta m),
\end{align*}
where
$\Qft$,  $\Lf$, $\Tf$,  $J$, $\Kf$, $\Hf$ and $\Mf$ are defined by \eqref{eq:defQft}, \eqref{Lf-def}, \eqref{Tf-def},  \eqref{eq:defJ}, \eqref{eq:defKf},  \eqref{Hf-exp} and
 \eqref{Mf-exp}, respectively.

\subsection{Fuchsian formulation\label{sec:Fuch-form}}
We now cast the system \eqref{final.first.N}--\eqref{final.last.N} in matrix form by setting 
\begin{equation}\label{W-def}
  \begin{split}
W =\bigl(&e^\Lambda_I-\delta^\Lambda_I,k_{LM},
\betah,
\gac_{Q000},\tauac_{Q00},
m_M,\xi_{0L},
\xi_{RL},\psi_I{}^k{}_{J}, \ell_{R0M},\ell_{RLM},\\
&\gac_{Q0LM},\gac_{QNLM},
\gac_{Q00M},\gac_{QN0M},
\tauac_{Q0L},\tauac_{QNL},
\gac_{QN00},\tauac_{QN0}
\bigr)^{\tr},
  \end{split}
\end{equation}
In terms of $W$, the evolution equations take the Fuchsian form
\begin{equation} \label{Fuch-ev-A-W}
A^0\del{t}W -e_K^\Lambda A^K \del{\Lambda} W =
\frac{1}{t}\Ac\Pbb W +\frac{1}{t^{1-q}} F(t,W)
\end{equation}
where 
\begin{align} 
  q&=\min\{1-\Rtt-\epsilon, \Rtt-\ep,\ep,\Rtt,\ep\}, \label{q}\\
  A^0 &=\diag\Bigl(
    \delta^{\It I}\delta_{\Lambdat \Lambda},
    \delta^{\Lt L}\delta^{\Mt M},
    1,
    \delta^{\Qt Q},\delta^{\Qt Q},
    2\Rtt\delta^{\Mt M}, 2(1-\Rtt)\delta^{\Lt L},
    \delta^{\Rt R} \delta^{\Lt L}, \notag \\
    &\hspace{1.3cm}\delta^{\It I}\delta_{\kt k} \delta^{\Jt J}, 
    \delta^{\Rt R}\delta^{\Mt M},
    \delta^{\Rt R} \delta^{\Lt L} \delta^{\Mt M},
    \delta^{\Qt Q}\delta^{\Lt L} \delta^{\Mt M}, \delta^{\Qt Q}\delta^{\Nt N}\delta^{\Lt L} \delta^{\Mt M}, \notag  \\
    &\hspace{1.3cm}
    \delta^{\Qt Q} \delta^{\Mt M}, \delta^{\Qt Q}\delta^{\Nt N} \delta^{\Mt M},
    \delta^{\Qt Q} \delta^{\Lt L}, \delta^{\Qt Q}\delta^{\Nt N} \delta^{\Lt L},
    \delta^{\Qt Q}\delta^{\Nt N},\delta^{\Qt Q}\delta^{\Nt N}
    \Bigr), \label{A0-def}\\
  \Pbb &=\small \diag\Bigl(
    0,
    0,
    1,
    \delta^{\Qt Q},\delta^{\Qt Q},
    \delta^{\Mt M}, \delta^{\Lt L},
    \delta^{\Rt R} \delta^{\Lt L}, \delta^{\It I}\delta_{\kt k} \delta^{\Jt J}, \delta^{\Rt R}\delta^{\Mt M}, \notag \\
    &\hspace{1.3cm}\delta^{\Rt R} \delta^{\Lt L} \delta^{\Mt M}, 
    \delta^{\Qt Q}\delta^{\Lt L} \delta^{\Mt M}, \delta^{\Qt Q}\delta^{\Nt N}\delta^{\Lt L} \delta^{\Mt M},
    \delta^{\Qt Q} \delta^{\Mt M}, \notag \\
    &\hspace{1.3cm}\delta^{\Qt Q}\delta^{\Nt N} \delta^{\Mt M},
    \delta^{\Qt Q} \delta^{\Lt L}, \delta^{\Qt Q}\delta^{\Nt N} \delta^{\Lt L},
    \delta^{\Qt Q}\delta^{\Nt N},\delta^{\Qt Q}\delta^{\Nt N}
    \Bigr), \label{Pbb-def} 
\end{align}
and
$F(t,W)$ is a polynomial in $(W,\beta^{-1})$ and
satisfies
\begin{equation*}
F(t,0)=0.
\end{equation*}

Because $\Rtt \in (0,1)$, it is clear from \eqref{A0-def} that the matrix $A^0$ is symmetric and positive definite. 
Moreover, the matrices $A^K$ and $ \Ac$ are constant and can be expressed in block form as
\begin{align}
   A^K&=(A^K_{\Pc\Qc}), \label{AK-def}\\
     \Ac&=(\Ac_{\Pc\Qc}), \label{Ac-def}
\end{align}
where the the non-vanishing blocks of $A^K$ are 
\begin{equation} \label{AK-sym}
\begin{gathered}
A^K_{4\,18}=A^K_{18\,4} = \delta^{NK}\delta^{\Qt Q}, \\
A^K_{5\,19}=A^K_{19\,5} =\delta^{NK}\delta^{\Qt Q}, \\
A^K_{12\,13}=A^K_{13\,12} =\delta^{NK}\delta^{\Qt Q}\delta^{\Lt L} \delta^{\Mt M},\\
A^K_{14\,15}=A^K_{15\,14} =\delta^{NK}\delta^{\Qt Q} \delta^{\Lt L}, \\
A^K_{16\,17}=A^K_{17\,16} =\delta^{NK}\delta^{\Qt Q} \delta^{\Lt L},\\
\end{gathered}
\end{equation}
and the non-vanishing blocks of $\Ac$ are
\begin{equation} \label{Ac-blocks}
\begin{gathered}
    \Ac_{1\,1}=\delta^{\It I}\delta_{\Lambdat \Lambda}, \quad
    \Ac_{2\,2}=\delta^{\Lt L}\delta^{\Mt M}, \quad
    \Ac_{3\,3}=\Rtt-\ep, \quad 
    \Ac_{4\,4}=(2-\Rtt+\ep)\delta^{\Qt Q},\quad 
    \Ac_{4\,5}=2\delta^{\Qt Q}, \\
    \Ac_{5\,4}=\Rtt\delta^{\Qt Q},\quad 
    \Ac_{5\,5}=(2\Rtt+1+\ep)\delta^{\Qt Q},\quad 
    \Ac_{6\,6}=(4\Rtt(1-\Rtt)+2\Rtt\ep)\delta^{\Mt M},\\ 
    \Ac_{6\,7}=\Ac_{7\,6}=4\Rtt(1-\Rtt)\delta^{\Mt M},\quad 
    \Ac_{7\,7}=2(1-\Rtt)(2\Rtt+\ep)\delta^{\Lt L}, \\
    \Ac_{8\,8}=\ep\delta^{\Rt R} \delta^{\Lt L}, \quad 
    \Ac_{9\,9}=\ep\delta^{\It I}\delta_{\kt k} \delta^{\Jt J}, \quad 
    \Ac_{10\,10}=\ep\delta^{\Rt R}\delta^{\Mt M},\quad 
    \Ac_{11\,11}=\ep\delta^{\Rt R} \delta^{\Lt L} \delta^{\Mt M},\\
    \Ac_{12\,12}=(\Rtt+\ep)\delta^{\Qt Q}\delta^{\Lt L} \delta^{\Mt M}, \quad
    \Ac_{13\,13}=(1+\ep)\delta^{\Qt Q}\delta^{\Nt N}\delta^{\Lt L} \delta^{\Mt M}, \\
    \Ac_{14\,14}=(1+\ep)\delta^{\Qt Q} \delta^{\Mt M}, 
    \Ac_{15\,15}=(1+\ep)\delta^{\Qt Q}\delta^{\Nt N} \delta^{\Mt M}, 
    \Ac_{16\,16}=(2\Rtt+1+\ep)\delta^{\Qt Q} \delta^{\Lt L}, \\
    \Ac_{17\,17}=(1+\ep)\delta^{\Qt Q}\delta^{\Nt N} \delta^{\Lt L}, \quad
    \Ac_{18\,18}=(1+\ep)\delta^{\Qt Q}\delta^{\Nt N}, \quad
    \Ac_{19\,19}=(1+\ep)\delta^{\Qt Q}\delta^{\Nt N},
\end{gathered}
\end{equation}
along with constant matrices $\Ac_{5\,6}$, $\Ac_{5\,7}$, $\Ac_{5\,11}$, $\Ac_{6\,11}$, $\Ac_{7\,11}$, $\Ac_{9\,12}$, $\Ac_{10\,14}$, $\Ac_{6\,15}$, $\Ac_{9\,15}$, $\Ac_{12\,15}$, $\Ac_{8\,16}$, $\Ac_{9\,16}$, $\Ac_{14\,16}$, $\Ac_{7\,17}$, $\Ac_{12\,17}$, $\Ac_{14\,18}$, $\Ac_{16\,18}$ whose particular form is not important for the analysis.

The matrices $A^K$ and $\Ac$ satisfy two critical structural conditions. The first is that the $A^K$ are symmetric, and the second is that $\Ac$ is upper-block triangular with respect to the block decomposition determined by the following diagonal blocks:
\begin{equation} \label{Ac-d-blocks} 
\begin{gathered}
\Ac_{1\,1}, \quad \Ac_{2\,2}, \quad \Ac_{3\,3}, \quad \begin{pmatrix}\Ac_{4\,4} & \Ac_{4\,5} \\
\Ac_{5\,4} & \Ac_{5\,5} \end{pmatrix}, \quad
\begin{pmatrix}\Ac_{6\,6} & \Ac_{6\,7} \\
\Ac_{7\,6} & \Ac_{7\,7} \end{pmatrix}, \quad \Ac_{8\,8}, \quad \Ac_{9\,9}\quad  \Ac_{10\,10}, \\
 \Ac_{11\,11}, \quad  \Ac_{12\,12}, \quad  \Ac_{13\,13}, \quad  \Ac_{14\,14}, \quad  \Ac_{15\,15}, \quad  \Ac_{16\,16}, \quad  \Ac_{17\,17}, \quad  \Ac_{18\,18}, \quad \Ac_{19\,19}.
 \end{gathered}
\end{equation}

Now since $\Rtt\in (0,1)$, it is clear from \eqref{q} that 
\begin{equation} \label{q-pos}
q>0
\end{equation}
if and only if $\ep$ satisfies
\begin{equation}
  \label{epcond}
  0<\ep<\min\{1-\Rtt,\Rtt\}.
\end{equation}
For the remainder of this article, we assume that these inequalities hold. Given this, it is then not difficult to verify that 
\begin{equation} \label{F0-smooth}
F(t,W)\in C^0_b((0,t_0],C^\infty(B_R(\Rbb^{\udim}))
\end{equation}
for any $t_0$ provided 
$R>0$ is chosen small enough so that $\beta^{-1}$, see \eqref{eq:ekpyroticbeta}, is well defined. Here, $\udim$ denotes the dimension of the vector space that $W$ takes values in, c.f.~\eqref{W-def}. 

We further claim that assumption  \eqref{epcond} implies that the diagonal blocks of the matrix $\Ac$ are all bounded below by a positive number times the identity operator. Indeed from the above formulas, this is obvious for all the diagonal blocks except for
\begin{equation*}
 \begin{pmatrix}\Ac_{4\,4} & \Ac_{4\,5} \\
\Ac_{5\,4} & \Ac_{5\,5} \end{pmatrix} \AND
\begin{pmatrix}\Ac_{6\,6} & \Ac_{6\,7} \\
\Ac_{7\,6} & \Ac_{7\,7} \end{pmatrix}.
\end{equation*}
However, the fact that these operators are bounded below by a positive multiple of the identity follows directly from the $2\times2$-matrix bounds\footnote{The first inequality in these bounds is obtained from the fact that a square matrix $A$ is bounded below by $A\geq a \mathbbm{1}$ where  $a$ is the lowest eigenvalue of $\frac{1}{2}(A+A^T)$.}  
\begin{align*}
  \begin{pmatrix}
    2-\Rtt+\ep & 2\\
    \Rtt & 2\Rtt+1+\ep
  \end{pmatrix} &\geq \frac{1}{2} \Bigl(-\sqrt{10 \Rtt^2-2 \Rtt+5}+\Rtt+2 \epsilon +3\Bigr)\mathbbm{1}\\
  &\ge \frac{4-\sqrt{13}+ 2\ep}{2}\mathbbm{1},
\end{align*}
and
\begin{align*}
  2\begin{pmatrix}
    2(1-\Rtt)\Rtt+\Rtt\ep & 2\Rtt(1-\Rtt)\\
    2\Rtt(1-\Rtt) & 2\Rtt(1-\Rtt)+\ep(1-\Rtt)
  \end{pmatrix} &\ge \Bigl(\epsilon + 4 \Rtt(1 -  \Rtt) - \sqrt{\epsilon^2 (1 - 2 \Rtt)^2 + 16 (1 - \Rtt)^2 \Rtt^2}\Bigr)\mathbbm{1} \\
  & \ge \epsilon (1-|1-2\Rtt|)\mathbbm{1}.
\end{align*}

Because the matrices $A^0$ and $A^K$ are symmetric and $A^0\gtrsim \mathbbm{1}$, the system \eqref{Fuch-ev-A-W} is symmetric hyperbolic and can therefore be solved locally in time using the standard local existence and uniqueness theory for symmetric hyperbolic equations. To obtain global-in-time solutions that are valid for $t\in (0,t_0]$, we aim to apply the Fuchsian global existence theory developed in \cite{BOOS:2021,Oliynyk:CMP_2016}. In order to use this theory, the matrix $\Asc$ must satisfy $\Asc \gtrsim \mathbbm{1}$, which is equivalent to requiring that the symmetric matrix $\frac{1}{2}(\Ac+\Ac^T)$ has strictly positive eigenvalues. Owing to the complicated block upper-triangular structure of $\Ac$, this condition is difficult to verify directly. 

On the other hand, since $\Ac$ is block upper-triangular with diagonal blocks that are bounded below by a positive constant times the identity operator, \cite[Lemma A.1]{BeyerOliynyk:2021} guarantees the existence of a positive definite matrix $\Asc$ such that $\Asc A^0\gtrsim \mathbbm{1}$. However, to directly exploit this improvement, one would need to multiply \eqref{Fuch-ev-A-W} on the left by $\Asc$, which yields
\begin{equation*} 
\Asc A^0\del{t}W -e_K^\Lambda \Asc A^K \del{\Lambda} W =
\frac{1}{t}\Asc\Ac\Pbb W +\frac{1}{t^{1-q}} \Asc F(t,W).
\end{equation*}
The difficulty that then arises is that there is no guarantee, and indeed no reason to expect, that the matrices $\Asc A^K$ are symmetric. This symmetry is also required in order to apply the Fuchsian global existence theory from \cite{BOOS:2021,Oliynyk:CMP_2016}.

A similar situation involving competing requirements, namely, the lower bound $\Ac\gtrsim \mathbbm{1}$ on the singular matrix $\Ac$, and the symmetry of the coefficient matrices $A^0$ and $A^K$, was encountered previously in \cite{BeyerOliynykZheng:2025}, where stability in the contracting direction of a family of subcritical Kasner--scalar field spacetimes was established. In that work, a technique based on taking a sufficiently high number of spatial derivatives of the evolution equations was developed to resolve the tension between these competing requirements. Similar difficulties were also encountered in \cite{fournodavlos2020b,groeniger2023} and were again resolved by considering sufficiently high-order derivatives, although the technical details differed substantially. It is worth noting that the results of \cite{Li:2024} provide compelling evidence that the need to consider higher derivatives is intrinsic to the problem, rather than an artefact of the proofs in \cite{BeyerOliynykZheng:2025,fournodavlos2020b,groeniger2023}.

To overcome the competing requirements encountered here, we adapt the technique introduced in \cite{BeyerOliynykZheng:2025}. This approach requires repeatedly differentiating the evolution equation \eqref{Fuch-ev-A-W} with respect to the spatial variables and suitably renormalising the resulting spatial derivatives by powers of $t$. The evolution equations obtained by taking the highest number of spatial derivatives are collected together and referred to as the \textit{high-order equations}, which are treated as a PDE system. In contrast, all equations arising from lower-order derivatives are collectively referred to as the \textit{low-order equations} and are treated as ODEs. This splitting allows us to exploit the favourable structure of the matrix $\Asc \Ac$ in the low-order equations. Moreover, taking $l$ spatial derivatives improves the singular matrix $\Ac\Pbb$ to $\Ac\Pbb+l\nu A^0$ for a constant $\nu\in (0,1)$ without affecting the symmetry of the matrices $A^0$ and $A^K$. By choosing $l$ sufficiently large, we can ensure that $\Ac\Pbb+l\nu A^0\gtrsim \mathbbm{1}$. Altogether, this procedure yields a Fuchsian equation with the structure required to apply the Fuchsian global existence theory from \cite{BOOS:2021,Oliynyk:CMP_2016}. The details of this construction are presented in the following section. 

\begin{rem}
In light of similar situations encountered in \cite{BeyerOliynykZheng:2025,fournodavlos2020b,groeniger2023} and the results of \cite{Li:2024}, it would be worthwhile to determine whether the use of higher derivatives is in fact unavoidable. In the Kasner regime, anisotropy is responsible for the high-derivative requirements. However, this explanation does not apply in the ekpyrotic regime, where isotropisation occurs. Instead, the necessity of high-order derivatives appears to be related to the extreme nature of the potential.
\end{rem}

\subsection{The differentiated system}
\label{sec:differentiated_system}

\subsubsection{Low-order equations} 
We begin by letting $l\in \Nbb_{0}$ be a non-negative integer, for the moment unspecified, which determines the top-order derivatives. All derivatives of order less than $l$ are referred to as \textit{low-order}, while derivatives of order $l$ are referred to as \textit{high-order}. 

Next, we fix $\nu \in \Rbb$ satisfying 
\begin{equation} \label{eigenval-posC}
   0<\nu<1,
\end{equation}
and for a given a multiindex $\bc = (\bc_1,\ldots,\bc_{n-1})\in \Nbb_0^{n-1}$, set
\begin{equation} \label{low-order-vars}
    W_\bc = t^{|\bc|\nu}\del{}^\bc W.
\end{equation}
Note that $W=W_0$. Since the matrices $A^0$, $\Pbb$, $A^K$ and $\Ac$ are constant, c.f.~\eqref{A0-def}, \eqref{Pbb-def}, \eqref{AK-def} and \eqref{Ac-def}, applying  $t^{|\bc|\nu}\del{}^{\bc}$ to 
\eqref{Fuch-ev-A-W}, we find, with the help of the product rule, that
\begin{equation} \label{Fuch-ev-B}
  A^0 \del{t}W_\bc
  -
  e_K^\Lambda A^K \del{\Lambda}W_{\bc} 
  =
  \frac{1}{t}(\Ac\Pbb+{|\bc|\nu}A^0) W_\bc 
  +\frac{1}{t^{1-\bar q}} F_{\bc}
\end{equation}
where
\begin{equation} \label{F0-bc-def}
 F_{\bc}=
  t^{q-\bar q} t^{|\bc|\nu}\del{}^{\bc}(F(t,W))
  +t^{1-\nu-\bar q}\sum_{\substack{\bc'\leq\bc\\\bc'\neq \bc}}
  \begin{pmatrix}
  \bc\\
  \bc'
  \end{pmatrix} t^{|\bc-\bc'|\nu}(\del{}^{\bc-\bc'} e_K^\Lambda) A^K t^{(|\bc'|+1)\nu}\del{\Lambda}W_{\bc'}, 
\end{equation}
\begin{equation}
  \label{qbar}
  \bar q=\min\{q,1-\nu\},
\end{equation}
and in deriving this we employed the relation
\[|\bc|-|\bc'|-|\bc-\bc'|=\sum_{i=1}^N \bc_i-\sum_{i=1}^N \bc'_i-\sum_{i=1}^N (\bc_i-\bc'_i)=0.\]

Using Fa\'{a} di Bruno's formula, we express $\del{}^{\bc}(F(t,W))$ as
\begin{equation*}
    \del{}^{\bc}(F(t,W))=\Fc_{\bc}\bigl(t,W,(\del{}^{\bc'}W)_{\bc'\leq \bc}\bigr)
\end{equation*}
where $\Fc_{\bc}\bigl(t,W,(Y_{\bc'})_{\bc'\leq \bc}\bigr)$ is smooth in the variables $(W,(Y_{\bc'})_{\bc'\leq \bc})$ uniformly for $t\in (0,t_0]$ provided $W\in B_R(\Rbb^{\udim})$ with $R>0$ chosen as above, c.f.~\eqref{F0-smooth}. The map $\Fc_{\bc}\bigl(t,W,(Y_{\bc'})_{\bc'\leq \bc}\bigr)$ is also polynomial in the variables $(Y_{\bc'})_{\bc'\leq \bc}$  and satisfies
\begin{equation*}
c^{|\bc|}\Fc_{\bc}\bigl(t,W,(Y_{\bc'})_{\bc'\leq \bc}\bigr)=\Fc_{\bc}\bigl(t,W,(c^{|\bc'|}Y_{\bc'})_{\bc'\leq \bc}\bigr)
\end{equation*}
for all $c\geq 0$. This, in particular, allows us to express \eqref{F0-bc-def} as
\begin{equation} \label{F0-bc-form}
 F_{\bc}=
  t^{q-\bar q} \Fc_{\bc}\bigl(t,W,(t^{|\bc'|\nu}\del{}^{\bc'}W)_{\bc'\leq \bc}\bigr)
  +t^{1-\nu-\bar q}\sum_{\substack{\bc'\leq\bc\\\bc'\neq \bc}}
  \begin{pmatrix}
  \bc\\
  \bc'
  \end{pmatrix} t^{|\bc-\bc'|\nu}(\del{}^{\bc-\bc'} e_K^\Lambda) A^K t^{(|\bc'|+1)\nu}\del{\Lambda}W_{\bc'}. 
\end{equation}

For the low-order equations, where  $|\bc|<l$, we view the term $e_K^\Lambda A^K \del{\Lambda}W_{\bc}$ in \eqref{Fuch-ev-B}  as a source term and move it over to the right hand side to get
\begin{equation} \label{Fuch-ev-C}
  A^0 \del{t}W_\bc
  =
  \frac{1}{t}(\Ac\Pbb+{|\bc|\nu}A^0) W_\bc 
  +\frac{1}{t^{1-\bar q}} G_{\bc}, \quad |\bc|<l,
\end{equation}  
where
\begin{equation} \label{G0-bc-def}
 G_{\bc}=
  t^{q-\bar q} \Fc_{0,\bc}\bigl(t,W,(t^{|\bc'|\nu}\del{}^{\bc'}W)_{\bc'\leq \bc}\bigr)
  +t^{1-\nu-\bar q}\sum_{\bc'\leq\bc}
  \begin{pmatrix}
  \bc\\
  \bc'
  \end{pmatrix} t^{|\bc-\bc'|\nu}(\del{}^{\bc-\bc'} e_K^\Lambda) A^K t^{(|\bc'|+1)\nu}\del{\Lambda}W_{\bc'}.
\end{equation}
Since the matrix $\Ac$ is block-upper triangular with diagonal blocks \eqref{Ac-d-blocks} that are bounded below by a positive number times the identity operator, we know from \cite[Lemma A.1]{BeyerOliynyk:2021} that there exist positive constants $\sigma_{\af}>0$, $\af=1,2,\ldots,17$, and $\eta_0>0$, which depend on  $\ep$ and $\Rtt$, such that
\begin{equation} \label{Asc-Ac-lbnd}
\Asc \Ac \geq \eta_0 \Asc
\end{equation}
where 
\begin{align}
 \Asc &=\diag\Bigl(
    \sigma_1\delta^{\It I}\delta_{\Lambdat \Lambda},
    \sigma_2\delta^{\Lt L}\delta^{\Mt M},
    \sigma_3,
    \sigma_4\delta^{\Qt Q},\sigma_4\delta^{\Qt Q},
    \sigma_5\delta^{\Mt M}, \sigma_5\delta^{\Lt L},
    \sigma_6\delta^{\Rt R} \sigma_7\delta^{\Lt L}, \sigma_8\delta^{\It I}\delta_{\kt k} \delta^{\Jt J}, \notag \\
    &\hspace{1.3cm}\sigma_9 
    \delta^{\Rt R}\delta^{\Mt M}, \sigma_{10}
    \delta^{\Rt R} \delta^{\Lt L} \delta^{\Mt M},
    \delta^{\Qt Q}\delta^{\Lt L} \delta^{\Mt M}, \sigma_{11}\delta^{\Qt Q}\delta^{\Nt N}\delta^{\Lt L} \delta^{\Mt M}, 
    \sigma_{12}\delta^{\Qt Q} \delta^{\Mt M}, \notag  \\
    &\hspace{1.3cm}\sigma_{13}\delta^{\Qt Q}\delta^{\Nt N} \delta^{\Mt M},
    \sigma_{14}\delta^{\Qt Q} \delta^{\Lt L}, \sigma_{15}\delta^{\Qt Q}\delta^{\Nt N} \delta^{\Lt L},
    \sigma_{16}\delta^{\Qt Q}\delta^{\Nt N},\sigma_{17}\delta^{\Qt Q}\delta^{\Nt N}
    \Bigr). \label{Asc-def}
\end{align}
From \eqref{A0-def}--\eqref{Pbb-def}, \eqref{Asc-Ac-lbnd} and \eqref{Asc-def}, it is then clear that
\begin{gather}
 \Asc A^0=(\Asc A^0)^T, \label{AscA0-sym}\\
 [\Asc A^0,\Pbb]=0,\label{AscA0-Pbb-comm}
\end{gather}
and that there exists a $\eta>0$, which again depends on $\ep$ and $\Rtt$, such that
\begin{equation} \label{Asc-lbnds-A}
\Asc A^0\geq \eta \mathbbm{1} \AND \Asc \Ac \geq \eta \mathbbm{1}.
\end{equation}
This, in turn, implies by \eqref{AscA0-Pbb-comm} that
\begin{equation} \label{Asc-lbns-B}
\Asc A^0\Pbb \geq \eta\Pbb \geq 0.
\end{equation}
Finally, multiplying \eqref{Fuch-ev-C} on the left by $\Asc$ allow us to write the low-order equations as
\begin{equation} \label{Fuch-ev-D}
  \Asc A^0 \del{t}W_\bc
  =
  \frac{1}{t}(\Asc\Ac\Pbb+{|\bc|\nu}\Asc A^0) W_\bc 
  +\frac{1}{t^{1-\bar q}} \Asc G_{\bc}, \quad |\bc|<l.
\end{equation}  

\subsubsection{High-order equations}
The high-order equations are simply the equations \eqref{Fuch-ev-B} with $|\bc|=l$, which we can write as 
\begin{equation} \label{Fuch-ev-E}
  A^0 \del{t}W_\bc
  -
  e_K^\Lambda A^K \del{\Lambda}W_{\bc} 
  =
  \frac{1}{t}(\Ac\Pbb+{l\nu}A^0) W_\bc 
  +\frac{1}{t^{1-\bar q}} F_{\bc}, \quad |\bc|=l.
\end{equation}

\subsubsection{The extended system\label{sec:extended}}
We now gather the low-order and high-order equations \eqref{Fuch-ev-D} and \eqref{Fuch-ev-E} together and express them as the following single system, which we call the \textit{extended system}:
\begin{equation} \label{FuchFinal-A}
    \Bv^0 \del{t}\Wv - e_K^\Lambda\Bv^K \del{\Lambda}\Wv = \frac{1}{t}\Bcv\Pv\Wv + \frac{1}{t^{1-\bar q}}\Fv,
\end{equation}
where
\begin{align} 
    \Wv &= \bigl( W,(W_{\bc})_{|\bc|=1}, \ldots,(W_{\bc})_{|\bc|=l-1},(W_\bc)_{|\bc|=l}\bigr)^{\tr}, \label{Wv-def}\\
    \Bv^0 &= \diag\bigl(\Asc A^0,\Asc A^0,\ldots,\Asc A^0,A^0\bigr), \label{Bv0-def}\\
    \Bv^K &= \diag\bigl(0,0,\ldots,0,A^K\bigr), \label{Bv-Lambda-def}\\
    \Bcv &= \diag\Bigl(\Asc\Ac,
    \nu\Asc A^0+\Asc\Ac\Pbb,\ldots,(l-1)\nu\Asc A^0+\Asc\Ac\Pbb,{l\nu}A^0+\Ac\Pbb\Bigr),
    \label{Bcv-def} \\
    \Pv &= \diag\bigl(\Pbb,\mathbbm{1},\ldots,\mathbbm{1},\mathbbm{1}\bigr), \label{Pv-def}\\
    \label{Fv0-def}
    \Fv&=\bigl(G_{0}, (G_{\bc})_{|\bc|=1},\ldots,(G_{\bc})_{|\bc|=l-1},
      (F_{\bc})_{|\bc|=l}
    \bigr),
\end{align}
We conclude this section with the following important observations:
\begin{enumerate}[(1)] 
\item From \eqref{A0-def}, \eqref{AK-def}, \eqref{AK-sym}, \eqref{AscA0-sym}, and \eqref{Bv0-def}--\eqref{Bv-Lambda-def}, it follows that the \textit{constant} matrices $\Bv^0$ and $\Bv^K$ are symmetric, that is,
\begin{equation}\label{Bv0-BvK-sym}
\Bv^0=(\Bv^0)^T \AND \Bv^K=(\Bv^K)^T,
\end{equation}
and that there exist positive constants $\bv_0,\bv_1>0$ such that
\begin{equation} \label{Bv0-posdef}
\bv_0 \mathbbm{1}\leq \Bv^0 \leq \bv_1 \mathbbm{1}.
\end{equation}
Taken together, \eqref{Bv0-BvK-sym} and \eqref{Bv0-posdef} imply that the extended system \eqref{FuchFinal-A} is symmetric hyperbolic.

\item Since $A^0$ is positive definite, see \eqref{A0-def}, $\Ac$ and $\Pbb$ are constant matrices, see \eqref{Pbb-def}, \eqref{Ac-def} and \eqref{Ac-blocks}, and $\nu \in (0,1)$, it follows that for any given $\kappa_0>0$ there exists an $l\in \Nbb$ such that
\begin{equation*}
l\nu A^0 + \Ac \Pbb \geq \kappa_0 \mathbbm{1}.
\end{equation*}
Combining this estimate with \eqref{Bv0-posdef}, \eqref{Asc-lbnds-A}--\eqref{Asc-lbns-B}, and \eqref{Bcv-def}, we conclude that there exist an $l\in \Nbb$ and a constant $\kappa>0$ such that
\begin{equation} \label{eq:Bvpositivity}
\Bcv \geq \kappa \Bv^0.
\end{equation}
\item
From \eqref{AscA0-Pbb-comm}, \eqref{Pv-def}, \eqref{Bv0-def} and \eqref{Bcv-def}, we note that
\begin{equation}\label{Bv0-Pv}
\Pv \Bv^0 \Pv^\perp=\Pv^\perp \Bv^0 \Pv=0,
\end{equation}
 where $\Pv^\perp=\mathbbm{1}-\Pv$, and
\begin{equation}\label{Bsc-comm}
[\Pv,\Bsc]=0.
\end{equation}

\item The vector-valued map $\Fv=\Fv(t,\Wv)$ is smooth in $\Wv$ uniformly for $t\in (0,t_0]$ provided $W\in B_R(\Rbb^{\udim})$ with $R>0$ chosen as above, c.f.~\eqref{F0-smooth}. Furthermore, $\Fv$ satisfies
\begin{equation} \label{F-vanish}
\Fv(t,0)=0.
\end{equation}

\item We consider solutions to \eqref{FuchFinal-A} that are not obtained from differentiating \eqref{Fuch-ev-A-W}. In this case,  the relations \eqref{low-order-vars} do not hold unless the are satisfied by the initial data.  
\end{enumerate}

\section{The main theorem}
\label{sec:main_theorem}

\subsection{Statement and discussion}
We are now ready to state and prove the main result of this article, namely the past global stability of positive ekpyrotic-FLRW solutions. The precise formulation of this stability statement is given in the theorem below, with the proof deferred to Section \ref{sec:proof_globstab}. The proof builds on a preliminary stability result for the trivial solution of the Fuchsian formulation of the reduced conformal Einstein–scalar field equations introduced in the previous section. This auxiliary result is stated and proved separately in Section \ref{sec:proof_globstab_Fuchsian}.

\begin{thm}[Past global stability of the positive ekpyrotic-FLRW solution ]\label{glob-stab-thm}
Suppose $t_0>0$, $0<\ep<\min\{\Rtt,1-\Rtt\}$,  $n\in\Zbb_{\ge 3}$,  $k \in \Zbb_{>\frac{n+3}{2}}$, $V_0=-1$, $s>s_c$ where $s_c$ is defined by \eqref{sc-def},  and $\Rtt$, $\Phi_0$, $\Ltt$, $\Lambda$ and $\alpha$ chosen according to \eqref{eq:ekpyroticparameterconstants.N}.  
Then there 
exists a $l\ge \Nbb_{0}$ and $\delta_0>0$ such that for every $\delta\in (0,\delta_0]$, $\gr_{\mu\nu}\in H^{k+l+2}(\Tbb^{n-1},\mathbb{S}_n)$, 
$\ggr_{\mu\nu}\in H^{k+l+1}(\Tbb^{n-1},\mathbb{S}_n)$, $\taur=t_0$ and
$\taugr\in H^{k+l+2}(\Tbb^{n-1})$ satisfying 
\begin{equation} 
    \norm{\gr_{\mu\nu}-\eta_{\mu\nu}}_{H^{k+l+2}(\Tbb^{n-1})}
  +\norm{\ggr_{\mu\nu}}_{H^{k+l+1}(\Tbb^{n-1})}
  +\norm{\taugr-1}_{H^{k+l+2}(\Tbb^{n-1})}  
  <\delta \label{glob-stab-thm-idata-A}
\end{equation}
as well as the gravitational and wave gauge constraints \eqref{grav-constr}--\eqref{wave-constr}, 
there exists a unique classical solution $\Wsc\in C^1(M_{0,t_0})$, see \eqref{eq:Wdef}, 
of the Lagrangian formulation of the reduced Einstein-scalar field equations \eqref{tconf-ford-C.1}--\eqref{tconf-ford-C.8} on $M_{0,t_0}=(0,t_0]\times \Tbb^{n-1}$ 
that satisfies the corresponding initial
conditions \eqref{l-idata}--\eqref{hhu-idata} on $\Sigma_{t_0}=\{t_0\}\times\Tbb^{n-1}$ and
the constraints \eqref{eq:Lag-constraints} in $M_{0,t_0}$. Additionally,
\begin{equation}
  \label{eq:WregFin}
  \Wsc \in \bigcap_{j=0}^{k}C^j\bigl((0,t_0], H^{k-j}(\Tbb^{n-1})\bigr),
\end{equation} 
and the pair $\{g_{\mu\nu}=\del{\mu}l^\alpha\ghu_{\alpha\beta}\del{\nu}l^\beta,\tau=t\}$ determined by $\Wsc$ defines a solution of the conformal Einstein-scalar field equations \eqref{lag-confeqns} on $M_{0,t_0}$ that satisfies the wave gauge constraint \eqref{lag-wave-gauge}. Furthermore, the pair
\begin{equation}
  \label{eq:conf2phys.thm}
  \biggl\{\gb_{ij}=e^{2\Phi}g_{ij},\,\phi=\frac{2(n-1)s}{(n-1)s^2-s_c^2}\log t\biggr\},
\end{equation}
where
\begin{equation}
  \label{eq:conffactorekpy.thm}
  e^\Phi=\frac{\sqrt{2}(n-1)\sqrt{s^2-s_c^2}}{(n-1)s^2-s_c^2}t^{\frac{1-\Rtt}{n-2}},
\end{equation}
determines a solution to the \textit{physical Einstein-scalar field equations} \eqref{ESF.1}--\eqref{ESF.2} with potential \eqref{eq:exponentialpotential} on $M_{0,t_0}$, and there exists a $\kappat>0$ such the following hold: 

\begin{enumerate}[(a)]
  \item Let $e_0^\mu=\beta^{-1}\delta_0^\mu$ with $\beta= (-g(dx^0,dx^0))^{-\frac{1}{2}}$, and $e^\mu_I$ be the unique solution of the Fermi-Walker transport equations \eqref{Fermi-A} with initial conditions $e^\mu_I |_{\Sigma_{t_0}}=\delta^\mu_\Lambda\er^\Lambda_I$ where the functions $\er^\Lambda_I \in H^k(\Tbb^{n-1})$ are chosen to satisfy $\norm{\er^\Lambda_I-\delta^\Lambda_I}_{H^{k}(\Tbb^{n-1})} < \delta$
  and make the frame $e^\mu_i$ orthonormal on $\Sigma_{t_0}$ with respect to the conformal metric $\gr_{\mu\nu}$.
  Then $e^\mu_i$ is a well-defined frame in $M_{0,t_0}$ satisfying
  $e^0_I=0$ and $g_{ij}=\eta_{ij}$,
  where $g_{ij}=e_i^\mu g_{\mu\nu}e^\nu_j$ is the
  frame metric. Furthermore, there exist $\ef_I^\Lambda\in H^k (\mathbb T^{n-1})$ such that
  \begin{equation}
    \norm{e_0^\mu - \delta_0^\mu}_{H^{k}(\mathbb T^{n-1})} 
    + \norm{e_I^\Lambda - (\delta_I^\Lambda+\ef_I^\Lambda)}_{H^{k}(\mathbb T^{n-1})} \lesssim t^{\tilde\kappa}\label{frameconv.thm}
  \end{equation}
  for $0<t\leq t_0$. 
  
  \item The frame\footnote{We refer to this frame $\eb_{\ib}^\mu$ as the \emph{physical orthonormal frame} to distinguish it from the  frame $e_i$, and we employ over-bar lower case Latin letters, e.g.,  $\ib,\jb,\bar{k}$, that run from $0$ to $n-1$ to index components relative to this frame. Furthermore, we use over-bar upper case Latin letter, e.g.,~ $\Ib,\Jb,\bar{K}$ that run from $1$ to $n-1$ to index spatial components relative to the physical spatial orthonormal frame  $\eb_{\Ib}=\eb_{\Ib}^\mu\del{\mu}$.} $\eb_{\ib}^\mu=e^{-\Phi} e_i^\mu \delta_{\ib}^i$, see \eqref{eq:conf2phys.thm}--\eqref{eq:conffactorekpy.thm}, is orthonormal with respect to the physical metric $\gb_{\mu\nu}$, that is,
  \begin{equation*}
   \gb_{\ib\jb}=\gb_{\mu\nu}\eb^{\mu}_{\ib}\eb^{\nu}_{\jb} = \eta_{\ib\jb},
  \end{equation*}
  and satisfies
  \begin{align}
    &\Biggl\| t^{\frac{1-\Rtt}{n-2}} \eb_{\overline 0}^\mu - \frac{(n-1)s^2-s_c^2}{\sqrt{2}(n-1)\sqrt{s^2-s_c^2}}\delta_{\overline 0}^\mu\Biggr\|_{H^{k}(\mathbb T^{n-1})} \lesssim t^{\tilde\kappa},\label{frameconv.phys.thm.1}\\
    &\Biggl\|t^{\frac{1-\Rtt}{n-2}} e_{\Ib}^\Lambda - \frac{(n-1)s^2-s_c^2}{\sqrt{2}(n-1)\sqrt{s^2-s_c^2}}(\delta_{\Ib}^\Lambda+\ef_{\Ib}^\Lambda)\Biggr\|_{H^{k}(\mathbb T^{n-1})} \lesssim t^{\tilde\kappa},\label{frameconv.phys.thm.2}
  \end{align}
  for $0<t\leq t_0$. 

  \item 
With respect to the conformal metric $g_{\mu\nu}$, the future pointing unit normal to the foliation $M_{0,t_0}=\cup_{0<t\leq t_0}\Sigma_t$ is given by $n^\mu=e_0^\mu$. The shift vector associated with this foliation vanishes, while the conformal lapse $\Ntt$ satisfies 
\begin{equation}
    \norm{\Ntt-1}_{H^{k}(\Tbb^{n-1})}\lesssim t^{\Rtt-\ep+\tilde\kappa}\label{lapseconv.thm}
  \end{equation}
for $0<t\leq t_0$.
Furthermore, the spatial frame $e_I=e^\Lambda_{I}\del{\Lambda}$ is orthonormal with respect to the spatial metric $\cspatmetr_{\Lambda\Gamma}$ induced on the leaves $\Sigma_t$ by $g_{\mu\nu}$, that is,
 \begin{equation}
    \label{eq:cspatmetr.thm}
    \cspatmetr_{IJ}=\cspatmetr_{\Lambda\Gamma}e^\Lambda_I e^\Gamma_J=\delta_{IJ}.
  \end{equation}
There also exists a symmetric tensor field $\kf_{M}{}_{L}\in H^{k}(\mathbb T^{n-1})$ such that the conformal shear $\sigma_{IJ}$ and conformal expansion $\Htt$ satisfy 
  \begin{equation}  \biggl\| t^{1-\Rtt} \cshear_L{}{}^J-\Bigl(\kf_{L}{}^{J}-\frac{1}{n-1}\kf_{M}{}^{M}\gtt_{L}{}^{J}\Bigr)\biggr\|_{H^{k}(\mathbb T^{n-1})} 
    + \biggl\| t^{1-\Rtt} \cexp-\frac 1{n-1}\kf_{M}{}^{M}\biggr\|_{H^{k}(\mathbb T^{n-1})} 
    \lesssim t^{\kappat}+t^{\Rtt-\ep+\tilde\kappa}
    \label{eq:confHshear.thm}
  \end{equation}
for $0<t\leq t_0$. Here
 \begin{equation*}
    \Ktt_{IJ}=\sigma_{IJ}+H \cspatmetr_{IJ}
  \end{equation*}
denotes the standard decomposition of the second fundamental form $\Ktt_{IJ}$ of the foliation $M_{0,t_0}=\cup_{0<t\leq t_0}\Sigma_t$  into the conformal shear tensor $\sigma_{IJ}$ and the conformal expansion scalar $\Htt$. Recall that $\Rtt\in (0,1)$ as a consequence of \eqref{eq:ekpyroticparameterconstants.N} with $s>s_c$. 

\item With respect to the physical metric $\gb_{\mu\nu}$ (see \eqref{eq:conf2phys.thm}), the future pointing unit normal of the foliation $M_{0,t_0}=\cup_{0<t\leq t_0}\Sigma_t$  is $\nb^\mu=e^{-\Phi} n^\mu$. The shift vector associated with this foliation vanishes, while the  
  physical lapse $\bar\Ntt=e^\Phi\Ntt$ satisfies
  \begin{equation}
    \Bnorm{t^{-\frac{1-\Rtt}{n-2}}\bar\Ntt-\frac{\sqrt{2}(n-1)\sqrt{s^2-s_c^2}}{(n-1)s^2-s_c^2}}_{H^{k}(\Tbb^{n-1})}\lesssim t^{\Rtt-\ep+\tilde\kappa}\label{lapseconv.phys.thm}
  \end{equation}
  for $0<t\leq t_0$.
  Furthermore, the spatial metric $\pspatmetr_{\Lambda\Gamma}$ induced on the leaves $\Sigma_t$ by $\gb_{\mu\nu}$ is given by
  \begin{equation*}
    %\label{eq:spatmetrrel.thm}
    \pspatmetr_{\Lambda\Gamma}=e^{2\Phi}\cspatmetr_{\Lambda\Gamma}
  \end{equation*}
  and the spatial frame $\eb^\Lambda_{\Ib}=e^{-\Phi}e^{\Lambda}_{I}\delta^{I}_{\Ib}$
  is orthonormal with respect $\pspatmetr_{\Lambda\Gamma}$, that is,
  \begin{equation*}
      \pspatmetr_{\Ib\Jb}= \pspatmetr_{\Lambda\Gamma}\eb^\Lambda_{\Ib}\eb^\Gamma_{\Jb} = \delta_{\Ib\Jb}.
  \end{equation*}
Letting
 \begin{equation*}
    \bar\Ktt_{\Ib\Jb}=\bar\sigma_{\Ib\Jb}+
    \bar\Htt\bar\cspatmetr_{\Ib\Jb}
  \end{equation*}
denotes the standard decomposition of the second fundamental form $\bar\Ktt_{\Ib\Jb}$ of the foliation $M_{0,t_0}=\cup_{0<t\leq t_0}\Sigma_t$  into the shear tensor $\bar\sigma_{\Ib\Jb}$ and the expansion scalar $\bar\Htt$, then  
  \begin{align}
    &\biggl\|t^{\frac{n-1-\Rtt}{n-2}} \pexp
    -{\scriptstyle  \frac{s_c^2}{\sqrt{2}(n-1)\sqrt{s^2-s_c^2}}} 
   \biggr\|_{H^{k}(\mathbb T^{n-1})} 
    \lesssim t^{\Rtt}+t^{\Rtt-\ep+\kappat}
    ,\label{eq:pexpbound.thm}\\
    &\biggl\|t^{\frac{n-1}{n-2}(1-\Rtt)}\Ntt\bar\sigma_{\Lb}{}{}^{\Jb}
    -{\scriptstyle \frac{(n-1)s^2-s_c^2}{\sqrt{2}(n-1)\sqrt{s^2-s_c^2}}}\Bigl(\kf_{L}{}^{J}-\frac{1}{n-1}\kf_{M}{}^{M}\gtt_{L}{}^{J}\Bigr)\delta^L_{\Lb}\delta_J^{\Jb}\biggr\|_{H^{k}(\mathbb T^{n-1})} 
    \lesssim t^{\kappat}+t^{\Rtt-\ep+\kappat}
    ,\label{eq:pshearbound.thm}
  \end{align}
  for $0<t\leq t_0$, while the
  \emph{physical scale invariant shear tensor} $\bar\Sigma_{\ib \jb}=\bar\sigma_{\ib \jb}/{\bar\Htt}$ satisfies
  \begin{equation}
    \label{eq:pshearrescbound.thm}
    \Bnorm{t^{-\Rtt}\bar\Sigma_{\Lb}{}^{\Jb}-\frac{(n-1)s^2-s_c^2}{s_c^2}\Bigl(\kf_{L}{}^{J}-\frac{1}{n-1}\kf_{M}{}^{M}\gtt_{L}{}^{J}\Bigr)\delta^L_{\Lb}\delta_J^{\Jb}}_{H^{k}(\mathbb T^{n-1})}
    \lesssim {t^{\kappat}
    +t^{\Rtt} }
  \end{equation}
  for $0<t\leq t_0$. Due to the uniform blow-up of
  the physical expansion scalar $\bar\Htt$ as $t\searrow 0$, the hypersurface $\Sigma_{0}=\{0\}\times \Tbb^{n-1}$  is a \emph{crushing singularity} \cite{eardley1979} and past directed timelike geodesics are incomplete.
  \item The spatial Riemann curvature tensor is asymptotically bounded by $\pexp^2$ in the following sense 
  \begin{equation}
    \label{eq:spatcurvaturedecay.thm}
    \biggl\|\frac{\pspatcurv_{\Ib\Jb\bar K}{}^{\Mb}}{\pexp^2}\biggr\|_{H^{k-1}(\mathbb T^{n-1})}\lesssim t^{\Rtt}
  \end{equation}
  for $0<t\leq t_0$.
  \item \label{thmitem:Ricci} The physical Ricci curvature tensor satisfies
  \begin{align}
    \label{eq:Ricciblowup.thm.1}
    \biggl\|\frac{\Rb_{\ib \bar k}\pnormal^{\ib}\pnormal^{\bar k}}{\pexp^2}-(n-1)\frac{(n-1)s^2-s_c^2}{s_c^2}\biggr\|_{H^{k}(\mathbb T^{n-1})}
    &\lesssim t^{\Rtt-\ep+\kappat} +t^{\Rtt},\\
    \label{eq:Ricciblowup.thm.2}
    \biggl\|\frac{\Rb_{\ib \bar k}\pspatmetr_{\jb}{}^{\ib}\pspatmetr_{\lb}{}^{\bar k}}{\pexp^2}
    +\frac{(n-1)(s^2-s_c^2)}{s_c^2}\pspatmetr_{\jb\lb}\biggr\|_{H^{k}(\mathbb T^{n-1})}
    &\lesssim
    t^{\Rtt-\ep+\kappat} +t^{\Rtt},
  \end{align}
  for $0<t\leq t_0$, while $\Rb_{\ib \jb}\nb^{\jb}\gtt^{\ib}{}_{\bar{k}}$ vanishes identically. In particular, these estimates imply that $\Rb_{\ib\jb}\Rb^{\ib\jb}\approx \bar{\Htt}^4 \approx t^{ -\frac{4(n-1-\Rtt)}{n-2}}$ near $t=0$. As a consequence, the solution \eqref{eq:conf2phys.thm} is $C^2$-inextendable at $t=0$ and the hypersurface $\Sigma_{0}=\{0\}\times \Tbb^{n-1}$ is a big bang singularity. 
  \item \label{thmitem:Weyl} The physical Weyl tensor $ \Wb_{{\ib}{\jb}{\bar k}{\bar l}}$ admits the (unique) decomposition 
  \begin{align}
    \label{eq:Weylrepresentation.thm}
    \Wb_{{\ib}{\jb}{\bar k}{\bar l}}=&\nb_{\ib}\pWeyl_{{\bar k}{\bar l}{\jb}}
    -\nb_{\jb}\pWeyl_{{\bar k}{\bar l}{\ib}}
    +\nb_{\bar k}\pWeyl_{{\ib}{\jb}{\bar l}}
    -\nb_{\bar l}\pWeyl_{{\ib}{\jb}{\bar k}}\\
    &+\nb_{\ib}\nb_{\bar k} \pWeyl_{{\bar m}{\jb}{\bar p}{\bar l}}\pspatmetr^{{\bar m}{\bar p}} 
    +\nb_{\ib}\nb_{\bar l} \pWeyl_{{\bar m}{\jb}{\bar k}{\bar p}}\pspatmetr^{{\bar m}{\bar p}} 
    +\nb_{\jb}\nb_{\bar k} \pWeyl_{{\ib}{\bar m}{\bar p}{\bar l}}\pspatmetr^{{\bar m}{\bar p}}
    +\nb_{\jb}\nb_{\bar l} \pWeyl_{{\ib}{\bar m}{\bar k}{\bar p}}\pspatmetr^{{\bar m}{\bar p}}
    +\pWeyl_{{\ib}{\jb}{\bar k}{\bar l}}\notag
  \end{align}
  where $\pWeyl_{{\bar k}{\bar l}{\jb}}$ and $\pWeyl_{{\ib}{\bar k}{\bar l}{\jb}}$ are purely spatial\footnote{That is, $\pWeyl_{{\bar k}{\bar l}{\jb}}$ and $\pWeyl_{{\ib}{\bar k}{\bar l}{\jb}}$ vanish whenever $\ib$, $\bar{k}$, $\lb$ or $\jb$ equal $0$.} and satisfy
  \begin{equation}
    \label{eq:Weylestimate.thm.1}
    \biggl\|\frac{\pWeyl_{\Mb\Nb\Pb}}{\pexp}\biggr\|_{H^{k-1}(\mathbb T^{n-1})}\lesssim t^{\Rtt}+t^{\Rtt-\ep+\kappat},
  \end{equation}
  and 
  \begin{equation}
    \label{eq:Weylestimate.thm.2}
    \biggl\|\frac{\pWeyl_{\Mb\Nb\Pb\Qb}}{\pexp^2}\biggr\|_{H^{k-1}(\mathbb T^{n-1})}\lesssim t^{\Rtt}+t^{\Rtt-\ep+\kappat},
  \end{equation}
  for $0<t\leq t_0$. 
  These estimates together with \eqref{eq:Ricciblowup.thm.1}--\eqref{eq:Ricciblowup.thm.2} imply that the big bang singularity is Ricci dominated in the sense of \cite[equation~(2.11)]{goode1985}. 
  \item The fields $\phi_0$ and $\phi_1$ defined as
  \begin{equation*}
    \phi_1=\frac{\pnormal^i\nablab_i \phi}{(n-1)\pexp} \AND
    \phi_0=\phi+\frac{s_c^2}{s^2}\phi_1\log((n-1){\bar\Htt}),
  \end{equation*}
  respectively, satisfy
  \begin{equation}
    \label{eq:phi_1limit.thm}
    \biggl\| \phi_1-\frac{2s}{s_c^2}\biggr\|_{H^{k}(\mathbb T^{n-1})} 
    \lesssim t^{\Rtt},
  \end{equation}
  and
  \begin{equation}
    \label{eq:phi_0limit.thm}
    \biggl\|\phi_0
    -\frac{1}{s}\log\left(\frac{s_c^4}{2(s^2-s_c^2)}\right)
    \biggr\|_{H^{k}(\mathbb T^{n-1})} 
    \lesssim t^\Rtt(1+\log t) +t^{\Rtt-\ep+\kappat},
  \end{equation}
  for $0<t\leq t_0$.
\item 
The physical acceleration $\pacc$ determined by the foliation $M_{0,t_0}=\cup_{0<t\leq t_0}\Sigma_t$, c.f.~\eqref{defphysextr}, is purely spatial and satisfies
\begin{equation} \label{ndot-bar-est}
\biggl\|\frac{\pacc_{\Jb}}{\bar\Htt}\biggr\|_{H^k(\Tbb^{n-1})}
\lesssim  t^{1-\ep+\kappat}
\end{equation}
for $0<t\leq t_0$.
\item The solution is AVTD in the sense of \cite[\S1.3]{BeyerOliynyk:2021}, in that it satisfies, up to an error term that is integrable in time over $(0,T_0]$, the \textit{velocity term dominated} (VTD) equations obtained from the evolution system %\eqref{FuchFinal-A} 
\eqref{Fuch-ev-A-W}
by removing all explicit spatial derivative terms.
\end{enumerate}
\end{thm}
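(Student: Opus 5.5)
The plan is to first establish a global existence and decay result for the extended Fuchsian system \eqref{FuchFinal-A}, and then translate it back to the geometric quantities. Concretely, I would fix $\ep$ as in \eqref{epcond}, choose $\nu\in(0,1)$, and pick $l$ so large that \eqref{eq:Bvpositivity} holds with some $\kappa>0$. Observations (1)--(4) at the end of Section~\ref{sec:extended} are exactly the structural hypotheses of the Fuchsian global existence theorem of \cite{BOOS:2021,Oliynyk:CMP_2016}: $\Bv^0,\Bv^K$ are constant and symmetric, $\Bv^0$ is uniformly positive, $\Bcv\geq\kappa\Bv^0$, $\Pv$ commutes with $\Bcv$ and block-diagonalises $\Bv^0$, and $\Fv$ is smooth with $\Fv(t,0)=0$ and singular only at the integrable rate $t^{-1+\bar q}$. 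I also need to check that $\Fv$ contains no $t^{-1}$ term acting on $\Pv^\perp\Wv$, which holds because only $e_I^\Lambda-\delta_I^\Lambda$ and $k_{LM}$ sit in $\ker\Pbb$ and their equations have integrable coefficients. Applying the theorem in $H^k$ with $k>\frac{n+3}{2}$ gives an intermediate proposition, which I would state as Proposition~\ref{prop:globalstability}. It asserts that small data at $t_0$ produce a solution on $(0,t_0]$ with $\norm{\Wv(t)}_{H^k}\lesssim\delta$, that $\Pv^\perp\Wv$ converges to a limit $\Pv^\perp\Wv(0)$, and that the estimates $\norm{\Pv\Wv(t)}_{H^k}\lesssim t^{\kappat}$ and $\norm{\Pv^\perp\Wv(t)-\Pv^\perp\Wv(0)}_{H^k}\lesssim t^{\kappat}$ hold.

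Next I transfer this to the original problem. From data satisfying \eqref{glob-stab-thm-idata-A} and the constraints, Section~\ref{frame-idata} produces data for $W$ at $t_0$. The low-order variables $W_\bc=t^{|\bc|\nu}\del{}^\bc W$ are then initialised consistently, with data controlled in $H^k$ by $\delta$, which costs the $l+2$ extra derivatives in the hypothesis. By uniqueness for symmetric hyperbolic systems, the extended solution coincides with the differentiated one for as long as the local solution of Proposition~\ref{lag-exist-prop} exists, so the relations \eqref{low-order-vars} persist. The uniform bounds then verify the continuation criteria \eqref{eq:cont_crit1}--\eqref{eq:cont_crit2}: $\beta$ stays near $1$ via \eqref{eq:ekpyroticbeta}, and the frame and metric derivatives are bounded on $[t_1,t_0]$ for each $t_1>0$. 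A standard bootstrap in $t_1$ therefore extends the Lagrangian solution to $M_{0,t_0}$, and parts (c)--(e) of Proposition~\ref{lag-exist-prop} give the conformal equations, the wave gauge and $\tau=t$. Formula \eqref{eq:conf2phys} yields the physical solution \eqref{eq:conf2phys.thm}--\eqref{eq:conffactorekpy.thm}.

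The asymptotic statements (a)--(j) then follow by expressing each quantity in terms of $W$. The frame limit $\ef$ and the tensor $\kf$ are the limits of $e^\Lambda_I-\delta^\Lambda_I$ and $k_{LM}=t^{1-\Rtt}\beta\kt_{LM}$. Their equations have right-hand sides that are $O(t^{-1+\Rtt}+t^{-1+\ep})$ times bounded quantities, so the limits exist with rate $t^{\kappat}$. The lapse is $\Ntt=\beta$, so $\Ntt-1=t^{\Rtt-\ep}\betah$ gives \eqref{lapseconv.thm}. The conformal second fundamental form is expressed through $\gamma_I{}^0{}_J$ and $g_{0IJ}$, hence through $\kt$, $\psi$ and $\ell$, which yields (c). Multiplying by the conformal factor and using the transformation rules for $\bar\Htt$, the shear, the acceleration and the Ricci tensor under \eqref{confmet} gives (d), (f) and (i). The key points are that the isotropic leading term $\frac{n-1}{n-2}(1-\Rtt)t^{-1}$ from $\del{t}\Phi$ dominates the $t^{-1+\Rtt}\kf$ correction, that the Ricci tensor is read off from \eqref{ESF.1} together with $\nablab\phi$ and $V(\phi)$ written in terms of $t$ and $\beta$, and that the spatial curvature and Weyl tensor are controlled by first derivatives of the connection $\psi$, $\ell$ and $\gac$, which explains the loss to $H^{k-1}$. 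The scalar field limits in (h) follow by direct computation from $\phi=c\log t$ together with \eqref{eq:pexpbound.thm}. The AVTD property (j) holds because the spatial-derivative terms $e_K^\Lambda A^K\del{\Lambda}W=t^{-\nu}(\cdots)W_\bc$ with $|\bc|=1$ are integrable in time, since $\nu<1$.

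The main obstacle is the Fuchsian step, specifically confirming that all hypotheses of the global theorem hold for the extended system. This requires checking that the $t^{-1}$ singular parts of $\Fv$ (for example the $t^{-1+\Rtt-\ep}$ and $\beta^{-2}$ terms in \eqref{final.last.N.2}) really sit in $t^{-1+\bar q}\Fv$ with the required $\Pv$-structure, and handling the mismatch between the ODE-treated low-order block, where $\Bv^K=0$, and the consistency relations. I would first try to verify the hypotheses in the form given in \cite{BeyerOliynykZheng:2025}, reusing its argument for the differentiated system essentially verbatim.
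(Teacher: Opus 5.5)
Your proposal is correct and follows essentially the paper's route: Fuchsian global existence for the extended system with $l$ chosen large enough that the singular matrix dominates, identification with the differentiated Lagrangian solution to verify the continuation criterion (with $k>\frac{n+3}{2}$ supplying the $W^{2,\infty}$ bounds), and then reading off (a)--(j) from $W$, $\beta=1+t^{\Rtt-\ep}\betah$, the conformal factor and the curvature identities of the appendices. One point in your intermediate proposition needs correcting: the cited Fuchsian theorem yields the decay estimates and the limit of the $\Pbb^\perp$-part only in $H^{k-1}$, so you should not claim $H^k$ decay for the top-order block; as in the paper, the $H^k$ bounds for $W$ itself that the theorem needs come from the fact that the low-order blocks are ODEs in $t$ and therefore do not lose a derivative.
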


\subsection{Fuchsian stability}
\label{sec:proof_globstab_Fuchsian}
The first step in the proof of Theorem~\ref{glob-stab-thm} is to establish the past stability of the trivial solution $W \equiv 0$ to the Fuchsian equation \eqref{Fuch-ev-A-W}, which by Remark \ref{rem:peFLRW} and \eqref{eq:ekpyroticbeta} corresponds to the positive ekpyrotic-FLRW solution. More precisely, in Proposition \ref{prop:globalstability} below we establish the existence of solutions to \eqref{Fuch-ev-A-W} on $M_{0,t_0}$ that satisfy the initial condition
$W(t_0)=W_0$
for sufficiently small initial data $W_0$. The proof of this proposition follows from a straightforward application of the Fuchsian global existence theory developed in \cite{BOOS:2021,Oliynyk:CMP_2016} to the extended system \eqref{FuchFinal-A}.

\begin{prop}
  \label{prop:globalstability}
  Suppose $n\in\Zbb_{\ge 3}$,  $k \in \Zbb_{>\frac{n+1}{2}}$, $\Rtt\in (0,1)$, $t_0>0$, $\{\ep,\nu\}$ satisfy \eqref{epcond} and \eqref{eigenval-posC}, and let $\Pbb^\perp = \mathbbm{1} -\Pbb$.
  Then there exists $l\in \Nbb_0$ and 
  a constant $\delta_0 > 0$ such that for every $\delta \in (0,\delta_0]$ and $W_0\in H^{k+l}(\mathbb T^{n-1})$ with
  \begin{equation}
    \label{glob-stab-thm-idata-A-Fuchsian.Cor}
    \norm{W_0}_{H^{k+l}}(\mathbb T^{n-1})< \delta,
  \end{equation}
  the Fuchsian equation \eqref{Fuch-ev-A-W} with initial condition $W(t_0)=W_0$ admits a unique past global solution
  \begin{equation*}
    W \in C^0\bigl((0,t_0],H^{k+l}(\mathbb T^{n-1})\bigr)\cap C^1\bigl((0,t_0],H^{k+l-1}(\mathbb T^{n-1})\bigr).
  \end{equation*}
  Moreover, 
  \begin{enumerate}[(a)] 
  \item the limit $\Pbb^\perp W(0):=\lim_{t\searrow 0} \Pbb^\perp W(t)$ exists in $H^{k}(\mathbb T^{n-1})$, 
  \item $W$ satisfies the energy estimate
  \begin{equation}
    \sum_{|\bc|\le l}\norm{t^{|\bc|\nu} \del{}^\bc W(t)}_{H^k(\mathbb T^{n-1})}^2 
    + \int^{t_0}_t \frac{1}{s} \sum_{|\bc|\le l}\norm{s^{|\bc|\nu} \del{}^\bc\Pbb W(s)}_{H^k(\mathbb T^{n-1})}^2\, ds
    \lesssim \norm{W_0}^2_{H^{k+l}(\mathbb T^{n-1})}
      \label{eq:resenergy.Cor}
  \end{equation}
  for $0<t\leq t_0$,
  \item and there exists $\tilde\kappa>0$ such that the decay estimates
  \begin{align}
  \label{eq:PbbuDecay.Cor}
    \norm{\Pbb W(t)}_{H^{k}(\mathbb T^{n-1})} &\lesssim t^{\tilde\kappa},\\
    \label{eq:PbbuDecay2.Cor}
    \norm{\Pbb^\perp W(t) - \Pbb^\perp W(0)}_{H^{k}(\mathbb T^{n-1})} &\lesssim t^{\tilde\kappa},\\
    \label{eq:PbbuDecay3.Cor}
    \norm{\del{}^\bc W(t)}_{H^{k}(\mathbb T^{n-1})} &\lesssim t^{\tilde\kappa-|\bc|\nu},\\
   \label{eq:PbbuDecay4.Cor}
   \norm{\del{}^{\tilde\bc} W(t)}_{H^{k-1}(\mathbb T^{n-1})} &\lesssim t^{\tilde\kappa-l\nu},
  \end{align}
  hold for $0<t\leq t_0$ and $1\le|\bc|\le l-1$.
\end{enumerate}
  \medskip

  \noindent The implicit constants in the energy and decay estimates are all independent of the choice of $\delta\in (0,\delta_0]$.
\end{prop}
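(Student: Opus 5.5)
The plan is to apply the Fuchsian global existence theorem of \cite{BOOS:2021,Oliynyk:CMP_2016} (in the form used in \cite{BeyerOliynykZheng:2025}) to the extended system \eqref{FuchFinal-A}, rather than to \eqref{Fuch-ev-A-W} directly. Given $W_0\in H^{k+l}$ with $\norm{W_0}_{H^{k+l}}<\delta$, we build initial data for the extended system by setting $\Wv(t_0)=(W_0,(t_0^{|\bc|\nu}\del{}^\bc W_0)_{1\le|\bc|\le l})$. Then $\norm{\Wv(t_0)}_{H^k}\lesssim\norm{W_0}_{H^{k+l}}$, which is small.

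\textbf{Choice of $l$ and structural hypotheses.} First fix $l$ large enough that \eqref{eq:Bvpositivity} holds, i.e.\ $\Bcv\ge\kappa\Bv^0$; this is possible by observation (2) of Section~\ref{sec:extended}. Next verify the remaining hypotheses of the Fuchsian theorem:
\begin{enumerate}[(i)]
\item $\Bv^0,\Bv^K$ are constant, symmetric and uniformly positive, by \eqref{Bv0-BvK-sym}--\eqref{Bv0-posdef}.
\item $\Pv$ is a symmetric projection commuting with $\Bcv$, and $\Bv^0$ is block diagonal with respect to $\Pv$, by \eqref{Bv0-Pv}--\eqref{Bsc-comm}. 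Since $\Bv^0$ is constant, no $\Pv$-structure conditions on $\del{t}\Bv^0$ or on $\mathrm{div}\,\Bv$ arise beyond the spatially varying coefficient $e^\Lambda_K$.
\item $\Fv$ is smooth, vanishes at $0$ by \eqref{F-vanish}, and carries the integrable weight $t^{-1+\bar q}$ with $\bar q>0$ by \eqref{q-pos} and \eqref{qbar}.
\end{enumerate}
For (iii) I need $\Fv=\Ord(\Wv)$ in $H^k$. This requires checking, via \eqref{F0-bc-form}, that $\Fc_\bc$ is controlled by the rescaled variables $t^{|\bc'|\nu}\del{}^{\bc'}W$. The only delicate term is the commutator term with $t^{(|\bc'|+1)\nu}\del{\Lambda}W_{\bc'}$, which equals $W_{\bc'+e_\Lambda}$ and is therefore a component of $\Wv$. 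The factor $t^{1-\nu-\bar q}\ge0$ is bounded, and $e_K^\Lambda=\delta_K^\Lambda+W$-components. The Fuchsian theorem then gives a global solution $\Wv$ on $(0,t_0]$, the energy estimate, the decay $\norm{\Pv\Wv}_{H^k}\lesssim t^{\tilde\kappa}$, and convergence of $\Pv^\perp\Wv$.

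\textbf{Main obstacle: recovering \eqref{Fuch-ev-A-W}.} The solution of the extended system is not a priori derived from a solution of \eqref{Fuch-ev-A-W}; see observation (5). To close this gap, first solve \eqref{Fuch-ev-A-W} locally in $H^{k+l}$ by standard symmetric hyperbolic theory. Its derivatives $W_\bc=t^{|\bc|\nu}\del{}^\bc W$ then solve \eqref{FuchFinal-A} with the same data, so uniqueness for the extended system identifies the two on the common interval. The uniform $H^k$ bounds on all $W_\bc$, $|\bc|\le l$, yield a uniform $H^{k+l}$ bound on $W$ on $[t_*,t_0]$ for each $t_*>0$, albeit with $t^{-l\nu}$ losses. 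Standard continuation of the local $H^{k+l}$ solution then gives existence on all of $(0,t_0]$, together with uniqueness and the stated regularity.

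\textbf{Reading off the estimates.} The energy estimate \eqref{eq:resenergy.Cor} is the Fuchsian energy estimate. For $\bc\ne0$ the projection $\Pv$ is the identity, so \eqref{eq:PbbuDecay3.Cor} and \eqref{eq:PbbuDecay4.Cor} follow from $\norm{W_\bc}_{H^k}\lesssim t^{\tilde\kappa}$; in \eqref{eq:PbbuDecay4.Cor} one derivative is traded for the $H^{k-1}$ norm. The $\Pbb$ decay \eqref{eq:PbbuDecay.Cor} and the limit in (a) together with \eqref{eq:PbbuDecay2.Cor} come from the first block. For the $\Pbb^\perp$ components, namely $e^\Lambda_I-\delta^\Lambda_I$ and $k_{LM}$, the convergence must be checked from their ODEs. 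Their right-hand sides involve only $t^{-1+\Rtt}$, $t^{-1+\ep}$ or $t^{-\Rtt-\ep}$ times $\Pbb W$ and quadratic terms. By \eqref{eq:PbbuDecay.Cor} and \eqref{epcond} these are integrable, possibly after shrinking $\tilde\kappa$, which gives Cauchy convergence in $H^k$ at a rate $t^{\tilde\kappa}$.
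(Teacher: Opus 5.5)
Your overall route is the paper's: pass to the extended system \eqref{FuchFinal-A} with $l$ chosen so that \eqref{eq:Bvpositivity} holds, apply \cite[Thm.~3.8]{BOOS:2021}, identify the result with $(t^{|\bc|\nu}\del{}^\bc W)_{|\bc|\le l}$ for a local $H^{k+l}$ solution of \eqref{Fuch-ev-A-W} by uniqueness, and continue. Your continuation via an $H^{k+l}$ bound with $t^{-l\nu}$ losses is fine; the paper instead uses the $W^{1,\infty}$ criterion, which follows uniformly from $\norm{W}_{H^k}\lesssim\delta$ and Sobolev.

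The step that fails as written is where you read $H^k$ decay off the Fuchsian theorem. That theorem gives the energy estimate in $H^k$. However, it gives the decay of $\Pv\Wv$, the existence of $\lim_{t\searrow 0}\Pv^\perp\Wv$, and the corresponding rate only in $H^{k-1}$, cf.\ \eqref{eq:PbbuDecay.D}--\eqref{eq:PbbuDecay2.D}. So, as justified, your argument yields \eqref{eq:PbbuDecay.Cor} and \eqref{eq:PbbuDecay3.Cor} only in $H^{k-1}$. Your $\Pbb^\perp$ argument then feeds in \eqref{eq:PbbuDecay.Cor} in $H^k$, which you have not yet established.

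The missing ingredient is the observation the paper uses. Every low-order block $W_\bc$ with $|\bc|<l$, including $W$ itself, solves the ODE \eqref{Fuch-ev-D}. There the spatial-derivative terms have been moved into $G_\bc$ as undifferentiated components $W_{\bc'+e_\Lambda}$ of $\Wv$. Hence $\norm{G_\bc(t)}_{H^k}\lesssim\norm{\Wv(t)}_{H^k}\lesssim\delta$ by the $H^k$ energy estimate and Moser. The singular terms are coercive: for $\Pbb W$ by \eqref{Asc-lbnds-A}, and for $\bc\neq0$ by \eqref{eq:Bvpositivity}. An $H^k$ energy estimate for these ODEs therefore gives $\norm{\Pbb W(t)}_{H^k}+\sum_{1\le|\bc|<l}\norm{W_\bc(t)}_{H^k}\lesssim t^{\tilde\kappa}$, after shrinking $\tilde\kappa$ if necessary.

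For $\Pbb^\perp W$ you need no decay at all. Since $\Pbb^\perp\Ac\Pbb=0$, $\Pbb^\perp A^K=0$, and $A^0$ is diagonal, \eqref{Fuch-ev-A-W} reduces to $\del{t}\Pbb^\perp W=t^{-1+q}(A^0)^{-1}\Pbb^\perp F(t,W)$. Its $H^k$ norm is $\lesssim t^{-1+q}\delta$, which is integrable and gives (a) and \eqref{eq:PbbuDecay2.Cor} in $H^k$. With these additions, the rest of your reading-off goes through, including \eqref{eq:PbbuDecay4.Cor} via the $|\bc|=l-1$ bound.
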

\begin{proof}
Suppose $W_0\in H^{k+l}(\Tbb^{n-1})$ satisfies
\begin{equation*}
\norm{W_0}_{H^{k+l}(\Tbb^{n-1})}<\delta. 
\end{equation*}
Then, since \eqref{Fuch-ev-A-W} is symmetric hyperbolic, we know from standard existence theory for symmetric hyperbolic equations, see \cite[Thm.~10.1]{BenzoniSerre:2007} or \cite[Thm.~2.1]{Majda:1984}, that there exists a $t_1\in [0,t_0)$ and a unique solution 
\begin{equation*}
  W\in C^0((t_1,t_0], H^{k+l}(\Tbb^{n-1}))\cap C^1((t_1,t_0], H^{k+l-1}(\Tbb^{n-1}))
\end{equation*}
of \eqref{Fuch-ev-A-W} satisfying the initial condition $W(t_0)=W_0$. Moreover, if $t_1\neq 0$ and
\begin{equation} \label{prop:globalstability-continuation}
  \sup_{t_1<t<t_0}\norm{W(t)}_{W^{1,\infty}(\Tbb^{n-1})}< R,  
\end{equation}
where $R>0$ is as defined in comment (4) from Section \ref{sec:extended}, then this solution can be uniquely continued, see \cite[Thm.~10.3]{BenzoniSerre:2007} or \cite[Thm.~2.2]{Majda:1984}, to a larger time interval $(t_*,t_0]$ where $t_*\in [0,t_1)$.

On the other hand, $W$ determines via \eqref{low-order-vars} and \eqref{Wv-def} a solution
\begin{equation*}
  \Wv\in C^0((t_1,t_0], H^k(\Tbb^{n-1}))\cap C^1((t_1,t_0], H^{k-1}(\Tbb^{n-1})),
\end{equation*}
where
\begin{equation}\label{prop:globalstability-Wv}
   \Wv = \bigl( W,(t^{|\bc|\nu}\del{}^{\bc}W)_{|\bc|=1}, \ldots,(t^{|\bc|\nu}\del{}^{\bc}W)_{|\bc|=l}\bigr)^{\tr},
\end{equation}
of the extended system \eqref{FuchFinal-A}
satisfying the initial condition
\begin{equation*}
\Wv(t_0)=\Wv_0 := \bigl(W_0,(\del{}^\bc W_0)_{|\bc|=1},\cdots,(\del{}^\bc W_0)_{|\bc|=l}\bigr)^T.
\end{equation*}
Note also that
\begin{equation*} %\label{prop:globalstability-idata}
\norm{\Wv(t_0)}_{H^k(\Tbb^{n-1})}\lesssim  \delta.
\end{equation*}

Next, we choose $l\in \Nbb_0$ sufficiently large so that \eqref{eq:Bvpositivity} holds for some $\kappa>0$. Since $\nu\in (0,1)$ and $q>0$ by \eqref{q-pos}-\eqref{epcond}, we see from \eqref{qbar} that $\qb \in (0,1)$.
Moreover, from \eqref{Pv-def} we observe that the projection matrix $\Pv$ satisfies
\begin{equation*}
\Pv^2 = \Pv,  \quad  \Pv^{\tr} = \Pv, \quad \del{t}\Pv =0 \AND \del{\Lambda} \Pv =0.
\end{equation*}
Using these identities together with \eqref{Bv0-BvK-sym}-\eqref{F-vanish}, it is straightforward to verify that there exists a $R>0$, see comment (4) from Section \ref{sec:extended} above, such that the extended system \eqref{FuchFinal-A} satisfies all coefficient assumptions from \cite[\S3.4]{BOOS:2021} for the parameter values\footnote{The exact values of the other parameters, i.e.,~$\theta$, $\gamma_0$, and $\gamma_1$, from \cite[\S3.4]{BOOS:2021} are not needed, only that they are positive and finite.} $p=\qb$, $\kappa$ as above, $\alpha=0$, $\lambda_1=\lambda_2=\lambda_3=0$, and $\beta_{\af}=0$, $\af=0,1,\ldots,7$. Thus using the time transformation from \cite[\S3.4]{BOOS:2021}, we can apply \cite[Thm.~3.8]{BOOS:2021} or alternatively \cite[Thm.~B.1]{Oliynyk:CMP_2016} to the extended system \eqref{FuchFinal-A} to deduce the existence of a $\delta_0>0$ such that, for any $\delta\in (0,\delta_0]$, there exists a unique solution\footnote{The improvement in regularity in \eqref{Wvt-sol} over the regularity from a direct application of \cite[Thm.~3.8]{BOOS:2021} is a consequence of \cite[Rem.~A.3.(ii)]{BeyerOliynyk:2020}. } 
\begin{equation} \label{Wvt-sol}
  \tilde{\Wv}\in C^0((0,t_0], H^k(\Tbb^{n-1}))\cap C^1((t_1,t_0], H^{k-1}(\Tbb^{n-1}))
\end{equation}
to \eqref{FuchFinal-A} satisfying $\tilde{\Wv}(0)=W_0$ for which the limit $\Pv^\perp \tilde{W}(0):=\lim_{t\searrow 0}\Pv^\perp \tilde{W}(t)$ exists in $H^{k-1}(\Tbb^{n-1})$ where $\Pv^\perp=\mathbbm{1}-\Pv$. Moreover, $\tilde{\Wv}$ satisfies the energy estimates
\begin{equation}
  \norm{\tilde{\Wv}(t)}_{H^k(\mathbb T^{n-1})}^2 + \int^{t_0}_t \frac{1}{s} \norm{\Pv \tilde{\Wv}(s)}_{H^k(\mathbb T^{n-1})}^2\, ds
  \lesssim \norm{\Wv_0}^2_{H^k(\mathbb T^{n-1})},
    \label{eq:resenergy.D}
\end{equation}
for $0<t\leq t_0$, and there exists a $\kappat>0$
such that the decay estimates
\begin{align}
  \label{eq:PbbuDecay.D}
  \norm{\Pv \tilde{\Wv}(t)}_{H^{k-1}(\mathbb T^{n-1})} &\lesssim t^{\tilde\kappa},\\
  \label{eq:PbbuDecay2.D}
  \norm{\Pv^\perp \tilde{\Wv}(t) - \Pv^\perp \tilde{\Wv}(0)}_{H^{k-1}(\mathbb T^{n-1})} &\lesssim
  t^{\tilde\kappa},
\end{align}
hold for $0<t\leq t_0$. Noting from \eqref{FuchFinal-A} that no spatial derivatives appear in the equation for the components\footnote{That is, $\Wt_{\bc}$ satisfies an equation of the form $A^0\del{t}\Wt_{\bc} = \frac{1}{t}(\Ac\Pb+|\bc|\nu A^0) \Wt_{\bc} + \frac{1}{t^{1-\qb}} G_{\bc}(t,\tilde{\Wv})$ for $|\bc|<l$, c.f.~\eqref{Fuch-ev-C}.} $\Wt_{\bc}$, $|\bc|<l$, from
\begin{equation*}
 \tilde{\Wv} = \bigl( \Wt,(\Wt_{\bc})_{|\bc|=1}, \ldots,(\Wt_{\bc})_{|\bc|=l-1},(\Wt_{\bc})_{|\bc|=l}\bigr)^{\tr},
\end{equation*}
it is not difficult to see from the proof of the decay estimates from \cite[Thm.~3.8]{BOOS:2021} or alternatively \cite[Thm.~B.1]{Oliynyk:CMP_2016} that the decay estimates for $\Wt_{\bc}$, $|\bc|<l$, improve to
\begin{align}
  \label{eq:PbbuDecay.E}
  \norm{\Pbb \Wt(t)}_{H^{k}(\mathbb T^{n-1})} &\lesssim t^{\tilde\kappa},\\
  \label{eq:PbbuDecay2.E}
  \norm{\Pbb^\perp \Wt(t) - \Pbb^\perp \Wt(0)}_{H^{k}(\mathbb T^{n-1})} &\lesssim
  t^{\tilde\kappa},\\
   \label{eq:PbbuDecay3.E}
  \norm{\Wt_{\bc}(t)}_{H^{k}(\mathbb T^{n-1})}  &\lesssim t^{\tilde\kappa}, \quad 1\leq |\bc|<l,
\end{align}
for $0<t\leq t_0$.

Now, from the uniqueness of solutions to the extended system \eqref{FuchFinal-A}, we deduce that $\Wv(t)=\tilde{\Wv}(t)$ for $t\in (t_1,t_0]$. By shrinking $\delta_0>0$ if necessary, this result in conjunction with the energy estimate \eqref{eq:resenergy.D} and Sobolev's inequality  \cite[Ch.~13, Prop.~2.4]{taylor2011} implies that \eqref{prop:globalstability-continuation} holds and that we can continue the solution $\Wv$ to the time interval $(t_*,t_1]$, $t_*\in [0,t_1)$, while satisfying   
$\sup_{t_1<t<t_0}\norm{W(t)}_{W^{1,\infty}(\Tbb^{n-1})}< R/2$. Thus, we conclude that $t_*=t_1=0$ and
\begin{equation*}
\Wv(t)=\tilde{\Wv}(t)
\end{equation*}
for $0<t\leq t_0$. The proof now follows directly from this relation, \eqref{prop:globalstability-Wv} and the estimates  \eqref{eq:resenergy.D}, \eqref{eq:PbbuDecay.D}-\eqref{eq:PbbuDecay2.D} and \eqref{eq:PbbuDecay.E}-\eqref{eq:PbbuDecay3.E}.
\end{proof}

\subsection{Proof of Theorem~\ref{glob-stab-thm}}
\label{sec:proof_globstab}
Fix $t_0>0$, $n\in\Zbb_{\ge 3}$, $k \in \Zbb_{>\frac{n+3}{2}}$, $V_0=-1$, $\nu\in (0,1)$, and $s>s_c$, where $s_c$ is defined in \eqref{sc-def}. Let $\Rtt$, $\Phi_0$, $\Ltt$, $\Lambda$ and $\alpha$ be as defined in \eqref{eq:ekpyroticparameterconstants.N}. Furthermore, choose $\ep$ satisfying \eqref{epcond}, and let $R>0$, $l\in \Nbb_0$ and $\delta_0>0$ be as in Proposition~\ref{prop:globalstability}.

\subsubsection{Fuchsian stability}  
Next, we fix $\delta \in (0,\delta_0]$ and assume that $\gr_{\mu\nu}\in H^{k+l+2}(\Tbb^{n-1},\mathbb{S}_n)$, 
$\ggr_{\mu\nu}\in H^{k+l+1}(\Tbb^{n-1},\mathbb{S}_n)$, $\taur=t_0$ and
$\taugr\in H^{k+l+2}(\Tbb^{n-1})$ satisfy \eqref{glob-stab-thm-idata-A} as well as the constraint equations \eqref{grav-constr}--\eqref{wave-constr}. Recalling that $\{\gr_{\mu\nu},\ggr_{\mu\nu},\taur=t_0,\taugr\}$ determines via \eqref{dt-tau-idata}-\eqref{dt-g-idata} initial data $\{g_{\mu\nu}|_{\Sigma_{t_0}},\del{0}g_{\mu\nu}|_{\Sigma_{t_0}},\tau=t_0,\del{0}\tau =1\}$ for the conformal metric $g_{\mu\nu}$ and the scalar field $\tau$ in Lagrangian coordinates on $\Sigma_{t_0}$,
we set 
\begin{equation*}
  e_0^\mu=(-|\chi|_g^2)^{-\frac{1}{2}}\chi^\mu,  
\end{equation*} 
and note its value on $\Sigma_{t_0}$ is determined by the initial data via \eqref{chi-idata}.
 Using a Gram-Schmidt orthogonalisation procedure, we then select spatial frame initial data $e^\mu_I\bigl|_{\Sigma_{t_0}}=\er^\Lambda_I \in H^{k+l}(\Tbb^{n-1})$ so that
\begin{equation*} 
\norm{\er^\Lambda_I-\delta^\Lambda_I}_{H^{k+l}(\Tbb^{n-1})} < \delta
\end{equation*}
and the frame $e_i=e^\mu_i\del{\mu}$ is orthonormal on $\Sigma_{t_0}$ with respect to the conformal metric $g_{\mu\nu}$, see \eqref{g-idata}.

Following the procedure outlined in Section~\ref{frame-idata}, we construct initial data $W\bigl|_{\Sigma_{t_0}}=W_0$ for  \eqref{Fuch-ev-A-W} using the variable definitions \eqref{gac-Q000}-\eqref{gacclastdef}, \eqref{eq:ekpyroticbeta} and \eqref{W-def}. 
Using Sobolev and Moser inequalities, e.g.,~ Propositions 2.4., 3.7.~and 3.9.~from   \cite[Ch.~13]{taylor2011}, it is then straightforward to verify that $W_0$ satisfies \begin{equation}\label{W0-bnd}
    \norm{W_0}_{H^{k+l}(\Tbb^{n-1})} \lesssim \delta.
\end{equation}
Then by Proposition~\ref{prop:globalstability}, there exists a unique solution   
\begin{equation}\label{eq:symhypreg_appl}
  W \in C^0_b\bigl((0,t_0],H^{k+l}(\Tbb^{n-1})\bigr)\cap C^1\bigl((0,t_0],H^{k+l-1}(\Tbb^{n-1})\bigr)
\end{equation}
to \eqref{Fuch-ev-A-W} satisfying the initial condition $W(t_0)=W_0$.
Moreover, this solution 
satisfies the energy 
and decay estimates \eqref{eq:resenergy.Cor}--\eqref{eq:PbbuDecay4.Cor} for some $\kappat>0$. By definition $\Pbb^\perp=\mathbbm{1}-\Pbb$, and so by
\eqref{Pbb-def}, we see that
  \begin{equation}\label{Pbb-perp-def}
  \Pbb^\perp = \diag\Bigl(\delta^{\It I}\delta_{\Lambdat \Lambda},
    \delta^{\Lt L}\delta^{\Mt M},0,\ldots,0\Bigr).
\end{equation}
Using this and \eqref{W-def}, we express the limit $\Pbb^\perp W(0)\in H^{k}(\mathbb T^{n-1})$ as
\begin{equation} \label{Pbb-perp-u(0)}
\Pbb^\perp W(0)=\bigl(\ef_I^{\Lambda},2\kf_{IJ},0,\ldots,0\bigr),
\end{equation}
where $\kf_{IJ}\in H^{k}(\mathbb T^{n-1},\Sbb{n-1})$ and $\ef_I^{\Lambda}\in H^{k}(\mathbb T^{n-1},M_{n-1})$. 

\subsubsection{Frame convergence:}
Recalling \eqref{e0-mu} and \eqref{e0I-fix}, we have
\begin{equation} \label{e0i-fix}
    e_0^\mu=\beta^{-1}\delta_0^\mu \AND e_I^0=0, 
\end{equation}
while 
\begin{align}
  \norm{\beta-1}_{H^{k}(\Tbb^{n-1})}=\bnorm{t^{\Rtt-\ep}\betah}_{H^{k}(\Tbb^{n-1})}&\lesssim t^{\Rtt-\ep+\tilde\kappa},\label{betaconv}\\
  \norm{e_I^\Lambda - (\delta_I^\Lambda+\ef_I^\Lambda)}_{H^{k}(\mathbb T^{n-1})} &\lesssim t^{\tilde\kappa},\label{frameconv}
\end{align}
by \eqref{eq:ekpyroticbeta}, \eqref{Pbb-def},  \eqref{W-def},  \eqref{Pbb-perp-u(0)}, and the decay estimates  \eqref{eq:PbbuDecay.Cor}--\eqref{eq:PbbuDecay2.Cor}.   
Together, \eqref{e0i-fix} and \eqref{betaconv}-\eqref{frameconv} imply that the conformal orthonormal frame components $e_i^\mu$ converges as $t\searrow 0$.

\subsubsection{Past stability of the positive ekpyrotic-FLRW solution:} Proposition~\ref{lag-exist-prop} implies, for some $t_1\in (0,t_0]$, the existence of a unique solution 
\begin{equation*}
    \Wsc \in \bigcap_{j=0}^{k}C^j\bigl((t_1,t_0], H^{k-j}(\Tbb^{n-1})\bigr)
\end{equation*} to the system \eqref{tconf-ford-C.1}-\eqref{tconf-ford-C.8} on $M_{t_1,t_0}=(t_1,t_0]\times \Tbb^{n-1}$ that satisfies the initial conditions \eqref{l-idata}-\eqref{hhu-idata}. Since the conformal Einstein-Euler-scalar field initial data satisfy the gravitational and wave gauge constraint equations by assumption, Proposition~\ref{lag-exist-prop} further implies that $\Wsc$ satisfies the constraints \eqref{eq:Lag-constraints}, and consequently, determines a solution $\{g_{\mu\nu},\tau\}$ of the conformal Einstein-scalar field equations \eqref{lag-confeqns} in Lagrangian coordinates on $M_{t_1,t_0}=(t_1,t_0]$ that satisfies the wave gauge constraint \eqref{lag-wave-gauge} and
\begin{equation} \label{tau=t}
\tau=t.
\end{equation}

On the other hand, from the derivation of the system the Fuchsian equation \eqref{Fuch-ev-A-W}, we know that $\Wsc$ determines a solution
\begin{equation*}
  \Wt \in C^0_b\bigl((0,t_0],H^{k+l}(\Tbb^{n-1})\bigr)\cap C^1\bigl((0,t_0],H^{k+l-1}(\Tbb^{n-1})\bigr)
\end{equation*}
to \eqref{Fuch-ev-A-W} on $M_{t_1,t_0}$ satisfying $\Wt|_{\Sigma_{t_0}}=W_0$, and which in turn, implies by uniqueness, see Proposition~\ref{prop:globalstability}, that
$\Wt = W|_{M_{t_1,t_0}}$. 

With the help of the energy bounds \eqref{eq:resenergy.Cor} and Sobolev's inequality, we deduce that
\begin{equation*} %\label{ut-bnd}
    \sup_{t_1<t<t_0}\norm{\Wt(t)}_{W^{2,\infty}(\Tbb^{n-1})} \lesssim \delta.
\end{equation*}
From this bound, \eqref{p-fields}, \eqref{gi00},   \eqref{kt-def}-\eqref{tau-def}, \eqref{psit-def}, \eqref{gac-Q000}-\eqref{l-def.NN}, \eqref{W-def}, \eqref{eq:ekpyroticbeta}, and \eqref{e0i-fix}, we then get
\begin{align}
\sup_{t_1<t<t_0}\Bigl(\norm{e_j^\mu(t)}_{W^{2,\infty}(\Tbb^{n-1})}&+\norm{\Dc_i g_{jk}(t)}_{W^{2,\infty}(\Tbb^{n-1})}+
\norm{\beta(t)}_{W^{2,\infty}(\Tbb^{n-1})}
\notag \\
&
+ 
\norm{ \kt_{IJ}(t)}_{W^{2,\infty}(\Tbb^{n-1})}
+\norm{\psit_I{}^k{}_J(t)}_{W^{2,\infty}(\Tbb^{n-1})}\notag \\
&
+
\norm{\gamma_I{}^k{}_J(t)}_{W^{2,\infty}(\Tbb^{n-1})}
+\norm{\Dc_i\Dc_j\tau}_{W^{2,\infty}(\Tbb^{n-1})}\Bigr)<\infty.
\label{glob-cont-bnd-A}
\end{align}
Using this bound, we observe from the evolution equations \eqref{for-O.1.S2.2} and \eqref{for-M.1.S2.2} for $\beta$ and $e^\Lambda_I$, respectively, together with \cite[Lemma~A.2]{BeyerOliynyk:2021}, that
\begin{equation*}
\inf_{M_{t_1,t_0}}\bigr\{\beta,\det(e^\Lambda_I)\bigl\} > 0,
\end{equation*}
and hence, by \eqref{e0i-fix} that
\begin{equation}\label{glob-cont-bnd-C}
\inf_{M_{t_1,t_0}}\det(e^\mu_j) > 0.
\end{equation}

Due to the orthonormality of the frame $e_i^\mu$, the components of the conformal metric in the Lagrangian coordinates are given by $g_{\mu\nu}=e_\mu^i \eta_{ij} e_\nu^j$. From this representation and the bound \eqref{glob-cont-bnd-A} and \eqref{glob-cont-bnd-C}, we see that
\begin{equation} \label{glob-cont-bnd-D}
\sup_{t_1<t<t_0}\norm{g_{\mu\nu}(t)}_{W^{2,\infty}(\Tbb^{n-1})} < \infty \AND \sup_{M_{t_1,t_0}}\det(g_{\mu\nu})<0.
\end{equation}

Next, with the help of the calculation
\begin{align*}
    e^\mu_i e^\nu_j \del{t}g_{\mu\nu} &= \Ld_{\del{t}}g_{ij}
    \oset{\eqref{e0i-fix}}{=}\Ld_{\beta e_0}g_{ij}\\
    &=\beta e_0^k \Dc_{k}g_{ij}+\Dc_i (\beta e_0^k)\eta_{kj}
    + \Dc_j (\beta e_0^k)\eta_{jk} \\
    & =\beta \delta_0^k \Dc_{k}g_{ij}+ e_i(\beta)\eta_{0j}+\beta \gamma_i{}^k{}_0 \eta_{kj}
    + e_j(\beta)\eta_{0i}+\beta \gamma_j{}^k{}_0\eta_{ik},
\end{align*}
we observe that the bound
\begin{equation} \label{glob-cont-bnd-E}
\sup_{t_1<t<t_0}\norm{\del{t}g_{\mu\nu}(t)}_{W^{1,\infty}(\Tbb^{n-1})} < \infty
\end{equation}
is an immediate consequence of
\eqref{glob-cont-bnd-A}-\eqref{glob-cont-bnd-C},
the relations  \eqref{p-fields}, \eqref{gamma-000}-\eqref{gamma-0K0}, \eqref{gamma-I00}-\eqref{gamma-IJ0} and \eqref{e0i-fix}, and the evolution
equation \eqref{for-O.1.S2.2}. Employing similar arguments, it is also not difficult to 
verify 
\begin{equation} \label{glob-cont-bnd-F}
\sup_{t_1<t<t_0}\Bigl(\norm{\Dc_\nu \chi^\mu(t)}_{W^{2,\infty}(\Tbb^{n-1})}+\norm{\del{t}(\Dc_\nu \chi^\mu)(t)}_{W^{1,\infty}(\Tbb^{n-1})}\Bigr) < \infty,
\end{equation}
where we recall that $\chi^\mu$ is defined by \eqref{chi-def}.

Collectively, the definition \eqref{alpha-def} and the  bounds \eqref{glob-cont-bnd-A}, \eqref{glob-cont-bnd-D}, \eqref{glob-cont-bnd-E} and \eqref{glob-cont-bnd-F}, allow us to conclude from the continuation criterion, Proposition \ref{lag-exist-prop}.(f), that the solution $\Wsc$ can be continued past $t_1$, and in fact, that $t_1=0$.
Consequently, $\Wsc$ determines a solution
$\{g_{\mu\nu},\tau\}$
of the conformal Einstein-scalar field equations \eqref{lag-confeqns} in Lagrangian coordinates on $M_{0,t_0}$ that satisfies the wave gauge constraint \eqref{lag-wave-gauge} and the gauge condition \eqref{tau=t}. This, in particular, proves the past stability of the positive ekpyrotic-FLRW solution.

\subsubsection{Conformal and physical $(n-1)+1$-decompositions} 
As shown in Appendix~\ref{spacetimedecomp}, the foliation determined by \eqref{tau=t} yields an $(n-1)+1$ decomposition of the conformal metric with vanishing shift and conformal lapse given by
\begin{equation} \label{Ntt=betat}
\Ntt=\beta.
\end{equation}
In particular, the estimate \eqref{lapseconv.thm} follows from \eqref{betaconv}.

Since $e_0^i=n^i$, the conformal spatial metric $\cspatmetr_{ij}$, see \eqref{eq:confspatmetr}, is given by \eqref{eq:cspatmetr.thm}, while
the conformal second fundamental form $\Ktt_{ij}$, see \eqref{eq:confsecff}, is determined by
\begin{equation*}
  t^{1-\Rtt}\beta \cextcurv_{LJ}=\frac 12 k_{LJ}
  +t^{1-\ep-\Rtt}\beta \psi_{(L}{}^0{}_{J)},
\end{equation*}       
where in deriving this we have used \eqref{psit-def} and  \eqref{psi-def.NN}--\eqref{k-def.NN}. The bound
\begin{equation*}
 % \label{eq:cf2ffbd}
  \norm{ t\beta \cextcurv_{LJ}(t)-t^{\Rtt}\kf_{LJ}}_{H^{k}(\mathbb T^{n-1})} 
  \lesssim t^{\Rtt+\kappat}
  +t^{1-\ep+\kappat} 
\end{equation*}
then follows directly from \eqref{W-def}, \eqref{eq:PbbuDecay.Cor}, \eqref{Pbb-perp-u(0)} and \eqref{betaconv}. Decomposing the confomal second fundamental form, c.f.~\eqref{eq:confexpshear}, in terms of conformal expansions and shear as 
$\Ktt_{IJ} = \sigma_{IJ}+\Htt \gtt_{IJ}$,
the above estimate implies
\begin{equation}\label{Htt-sigma-bnds}
    \biggl\| t\beta \cexp-\frac 1{n-1}t^\Rtt\kf_{M}{}^{M}\biggr\|_{H^{k}(\mathbb T^{n-1})} +
  \biggl\| t\beta \cshear_L{}{}^J-t^{\Rtt}\Bigl(\kf_{L}{}^{J}-\frac{1}{n-1}\kf_{M}{}^{M}\gtt_{L}{}^{J}\Bigr)\biggr\|_{H^{k}(\mathbb T^{n-1})} 
  \lesssim t^{\kappat+\Rtt}
  +t^{1-\ep+\kappat}.
\end{equation}
This, together with \eqref{Ntt=betat} and \eqref{lapseconv.thm},
implies \eqref{eq:confHshear.thm}.

Turning to the physical quantities, we first note that the formula \eqref{eq:conffactorekpy.thm} for the conformal factor $e^\Phi$ follows directly from \eqref{confmet}, \eqref{eq:conf2phys}, \eqref{eq:ekpyroticparameterconstants.N} and  \eqref{tau=t}. 
Using \eqref{eq:conffactorekpy.thm}, it is then immediate that the physical frame estimates \eqref{frameconv.phys.thm.1}--\eqref{frameconv.phys.thm.2} follow from \eqref{frameconv.thm}. 

Next, from \eqref{confmet} and the orthonormality of the frame $e_i=e_i^\mu\del{\mu}$ with respect to the conformal metric $g$, we observe that (physical) frame
\begin{equation} \label{eq:porthonframe}
\eb_{\ib}=\eb_{\ib}^\mu\del{\mu}, \quad \eb_{\ib}^\mu=e^{-\Phi} e_i^\mu \delta_{\ib}^i
\end{equation}
is orthonormal with respect to the physical metric $\gb$, that is,
\begin{equation*}
\gb_{\ib\jb}:=\gb(\eb_{\ib},\eb_{\jb})=\eta_{\ib\jb}.
\end{equation*}
We also note that the future pointing unit normal $\nb$ to the foliation \eqref{tau=t}, as determined by the physical metric, has components relative to the physical and orthonormal frames given by 
\begin{equation}
  \label{eq:trafounitnormal}
  \pnormal_{\ib}=-\delta_{\ib}^0 \AND \pnormal_{i}=e^\Phi \cnormal_i=-e^\Phi\delta_{i}^0,
\end{equation}
respectively. Moreover, the shift vanishes, and the (physical) lapse $\Nb$ is given by 
\begin{equation*}%\label{barNtt}
  \bar\Ntt=e^\Phi\Ntt.
\end{equation*}
Combining these relations with \eqref{eq:conffactorekpy.thm} and \eqref{lapseconv.thm}, we obtain the bound \eqref{lapseconv.phys.thm} on the physical lapse.

Using \eqref{eq:pextcurv} and \eqref{e0i-fix}, we express the components of the physical second fundamental form relative to the conformal orthonormal frame as
\begin{equation*}
  \label{eq:physWeingarten}
  t\beta\pextcurv_{I}{}^{J}=e^{-\Phi}\biggl(t\beta\cextcurv_{I}{}^{J}
  +\frac{s_c^2}{(n-1)s^2-s_c^2}\cspatmetr_{I}{}^{J}\biggr).
\end{equation*}
Decomposing the physical second fundamental form, c.f.~\eqref{eq:physshearexp}, into the physical expansion and shear via
$\pextcurv_{I}{}^{J}=\sigmab_I{}^J+\bar{\Htt}\bar{\gtt}_I{}^J$,
we see from \eqref{eq:pexp}--\eqref{eq:pshear} that
\begin{equation}
  t\beta\pexp=e^{-\Phi}\biggl(t\beta\cexp
  +\frac{s_c^2}{(n-1)s^2-s_c^2}\biggr),\quad 
  t\beta\pshear_{I}{}^J=e^{-\Phi}\cshear_{I}{}^J.
\end{equation}
From these expressions, \eqref{eq:porthonframe}
and the bounds \eqref{Htt-sigma-bnds}, we obtain the estimates
\begin{align}
  \biggl\| t\beta e^\Phi\pexp-\frac{s_c^2}{(n-1)s^2-s_c^2}-\frac 1{n-1}t^\Rtt\kf_{M}{}^{M}\biggr\|_{H^{k}(\mathbb T^{n-1})} 
&\lesssim t^{\kappat+\Rtt}
+t^{1-\ep+\kappat},\label{eq:pexpbound}\\
  \biggl\| t^{1-\Rtt}\beta e^\Phi\bar\sigma_{\Lb}{}{}^{\Jb}-\Bigl(\kf_{L}{}^{J}-\frac{1}{n-1}\kf_{M}{}^{M}\gtt_{L}{}^{J}\Bigr)\delta^L_{\Lb}\delta_J^{\Jb}\biggr\|_{H^{k}(\mathbb T^{n-1})} 
&\lesssim t^{\kappat}
+t^{1-\ep-\Rtt+\kappat},
\end{align}
for the physical expansion and shear relative to the physical orthonormal frame \eqref{eq:porthonframe}. These estimates together with the formula \eqref{eq:conffactorekpy.thm} for the conformal factor, \eqref{Ntt=betat} and \eqref{lapseconv.thm} and the assumption that $\Rtt<1-\ep$  yield the bounds \eqref{eq:pexpbound.thm} and \eqref{eq:pshearbound.thm}.

Next, we note that
\begin{equation*}
\frac{1}{t\beta e^\Phi\pexp}-\frac{(n-1)s^2-s_c^2}{s_c^2} = \chi\biggl(t\beta e^\Phi\pexp-\frac{s_c^2}{(n-1)s^2-s_c^2}\biggr)
\end{equation*}
where 
\[\chi(x)=\frac{1}{\frac{s_c^2}{(n-1)s^2-s_c^2}+x}-\frac{(n-1)s^2-s_c^2}{s_c^2}.\]
From the above identity and the smoothness of $\chi(x)$ for $|x|<\frac{s_c^2}{(n-1)s^2-s_c^2}$, we deduce, with the help of
the Sobolev and Moser inequalities, e.g.,~ Propositions 2.4., 3.7.~and 3.9.~from   \cite[Ch.~13]{taylor2011}, and the bound \eqref{eq:pexpbound}, that 
\begin{equation}
  \label{eq:inversepexpbound}
  \Bnorm{ \frac{1}{t\beta e^\Phi\pexp}-\frac{(n-1)s^2-s_c^2}{s_c^2}}_{H^{k}(\mathbb T^{n-1})} 
\lesssim t^{\Rtt}.
\end{equation}
In particular, this implies that the physical orthonormal frame components of the \emph{scale invariant} physical shear tensor $\bar\Sigma_{\Lb}{}^{\Jb}$, see \eqref{eq:physshearresc}, is bounded by 
\begin{equation}
  \label{eq:pshearrescbound}
  \biggl\|t^{-\Rtt}\bar\Sigma_{\Lb}{}^{\Jb}-\Bigl(\kf_{\Lb}{}^{\Jb}-\frac{1}{n-1}\kf_{\Mb}{}^{\Mb}\gtt_{\Lb}{}^{\Jb}\Bigr)\frac{(n-1)s^2-s_c^2}{s_c^2}\biggr\|_{H^{k}(\mathbb T^{n-1})}
  \lesssim {t^{\kappat}
+t^{1-\ep-\Rtt+\kappat}+t^{\Rtt} },
\end{equation}
which establishes \eqref{eq:pshearrescbound.thm}.

In order to establish geodesic incompleteness, we require a lower bound on the physical expansion. We obtain this by 
employing Sobolev's inequality together with the estimate \eqref{eq:pexpbound} as follows:
\begin{align*}
  \label{eq:pexplowerbound}
  \inf_{x\in\Tbb^{n-1}} t\beta e^\Phi\pexp
  &=\frac{s_c^2}{(n-1)s^2-s_c^2}+\inf_{x\in\Tbb^{n-1}} \left(t\beta e^\Phi\pexp-\frac{s_c^2}{(n-1)s^2-s_c^2}\right)\\
  &\ge\frac{s_c^2}{(n-1)s^2-s_c^2}-\sup_{x\in\Tbb^{n-1}} \left|t\beta e^\Phi\pexp-\frac{s_c^2}{(n-1)s^2-s_c^2}\right|\\
  &\ge\frac{s_c^2}{(n-1)s^2-s_c^2}-
  C\biggl\| t\beta e^\Phi\pexp-\frac{s_c^2}{(n-1)s^2-s_c^2}\biggr\|_{H^k(\mathbb T^{n-1})} \\ 
&\ge \frac{s_c^2}{(n-1)s^2-s_c^2}-C
t^{\Rtt}. 
\end{align*}
This lower bound together with \eqref{eq:conffactorekpy.thm} and  \eqref{betaconv} shows that the physical expansion scalar $\pexp$ blow-up uniformly as $t\searrow 0$. Consequently, $t=0$ is a \emph{crushing singularity} \cite{eardley1979}, and
Hawking's singularity theorem \cite[Chapter~14, Theorem~55A]{oneill1983a} guarantees that timelike geodesics are past incomplete.

\subsubsection{Spatial curvature decay}  The conformal spatial Riemann curvature tensor, when expressed with respect to the conformal orthonormal frame, is given by
\begin{align} %\label{Rtt-formula}
  \cspatcurv_{IJK}{}^M
    = e_J(\Gamma_I{}^M{}_K)
    - e_I(\Gamma_J{}^M{}_K)
    -\Gamma_I{}^M{}_L\Gamma_J{}^L{}_K
    +\Gamma_J{}^M{}_L\Gamma_I{}^L{}_K
    +\Gamma_L{}^M{}_K \Gamma_I{}^L{}_J
    -\Gamma_L{}^M{}_K \Gamma_J{}^L{}_I,
\end{align}
where $\Gamma_M{}^J{}_K$ are the spatial components of the connection coefficients of the conformal spatial metric $\cspatmetr_{IJ}=\delta_{IJ}$. From this and \eqref{Ccdef},  \eqref{p-fields}, \eqref{ellt-def}, \eqref{psit-def},   \eqref{psi-def.NN} and \eqref{l-def.NN}, we observe that $\cspatcurv_{IJK}{}^M$ can be expressed using $*$-notation from Section \ref{*-not} as 
\begin{equation}
  \label{eq:spatialcurvaturerepr}
  \cspatcurv_{IJK}{}^M=t^{-\ep} e* \del{}\psi+t^{-\ep} e* \del{}\ell+t^{-2\ep}(\psi+\ell)*(\psi+\ell).
\end{equation}

As the conformal factor $e^\Phi$ (see \eqref{eq:conffactorekpy.thm}) is constant along each $t=const$-hypersurface, it is straightforward to verify, using the above expression, that the physical spatial Riemann curvature tensor, expressed in terms of the physical orthonormal frame $\bar e_{\Ib}^\mu$ (see \eqref{eq:porthonframe}), is
\begin{equation*}
  \pspatcurv_{\Ib\Jb\bar K}{}^{\Mb}=e^{-2\Phi}\bigl(t^{-\ep} e* \del{}\psi+t^{-\ep} e* \del{}\ell+t^{-2\ep}(\psi+\ell)*(\psi+\ell)\bigr).
\end{equation*}
A short calculation using \eqref{betaconv}, \eqref{frameconv}, and \eqref{eq:inversepexpbound} then shows that the physical orthonormal frame components of the physical spatial Riemann curvature tensor, rescaled by the square of the physical expansion scalar, satisfy
\begin{equation}
  \label{eq:spatcurvaturedecay}
  \Bnorm{\frac{\pspatcurv_{\Ib\Jb\bar K}{}^{\Mb}}{\pexp^2}}_{H^{k-1}(\mathbb T^{n-1})}\lesssim{t^{2-2\ep}}+t^{\Rtt},
\end{equation}
which establishes \eqref{eq:spatcurvaturedecay.thm}.

\subsubsection{Spacetime Ricci tensor blow up} By \eqref{eq:pstRicci},
we have
\begin{align*}
  (t\beta e^{\Phi})^{2}\Rb_{\ib \bar k}\pnormal^{\ib}\pnormal^{\bar k}&=\frac{2}{\alpha^2}+\frac{4}{n-2}t^{s/\alpha}(t\beta e^{\Phi})^2,\\
  (t\beta e^{\Phi})^{2}\Rb_{\ib \bar k}\pspatmetr_{\jb}{}^{\ib}\pspatmetr_{\lb}{}^{\bar k}&=
  -\frac{4}{n-2}t^{s/\alpha}(t\beta e^{\Phi})^{2}\pspatmetr_{\jb \lb},
\end{align*}
where the $\Rb_{\ib\jb}$ are the conformal orthonormal frame components of the physical Ricci tensor.
Noting from \eqref{sc-def}, \eqref{eq:ekpyroticparameterconstants.N}, \eqref{eq:conffactorekpy.thm}, \eqref{tau=t}, and \eqref{Ntt=betat} that
\begin{equation}
  \frac{2}{\alpha^2}=\frac{8(n-1)^2s^2}{((n-1)s^2-s_c^2)^2}\AND
  \frac{4}{n-2}t^{s/\alpha}t^2 \beta^2 e^{2\Phi}
  = 
  \frac{(n-1)s_c^2(s^2-s_c^2)}{((n-1)s^2-s_c^2)^2}\beta^2,\label{eq:aux2}
\end{equation}
we can, using \eqref{sc-def}, rewrite the above formulas for the physical Ricci tensor as
\begin{align*}
  (t\beta e^{\Phi})^{2}\Rb_{\ib \bar k}\pnormal^{\ib}\pnormal^{\bar k}&=
  \frac{(n-1)s_c^2}{(n-1)s^2-s_c^2}+\frac{(n-1)s_c^2(s^2-s_c^2)}{((n-1)s^2-s_c^2)^2} (\beta+1)(\beta-1),\\
  (t\beta e^{\Phi})^{2}\Rb_{\ib \bar k}\pspatmetr_{\jb}{}^{\ib}\pspatmetr_{\lb}{}^{\bar k}&=
  -\left(\frac{(n-1)s_c^2(s^2-s_c^2)}{((n-1)s^2-s_c^2)^2}
  +\frac{(n-1)s_c^2(s^2-s_c^2)}{((n-1)s^2-s_c^2)^2}(\beta+1)(\beta-1)\right)\pspatmetr_{\jb \lb}.
\end{align*}
Combining these formulas with the estimates \eqref{betaconv} and \eqref{eq:inversepexpbound}, it follows that
\begin{align*}
  \biggl\|\frac{\Rb_{\ib \bar k}\pnormal^{\ib}\pnormal^{\bar k}}{\pexp^2}-(n-1)\frac{(n-1)s^2-s_c^2}{s_c^2}\biggr\|_{H^{k}(\mathbb T^{n-1})}
  &\lesssim t^{\Rtt-\ep+\kappat} +t^{\Rtt},\\
  \biggl\|\frac{\Rb_{\ib \bar k}\pspatmetr_{\jb}{}^{\ib}\pspatmetr_{\lb}{}^{\bar k}}{\pexp^2}
  +\frac{(n-1)(s^2-s_c^2)}{s_c^2}\pspatmetr_{\jb\lb}\biggr\|_{H^{k}(\mathbb T^{n-1})}
  &\lesssim
  t^{\Rtt-\ep+\kappat} +t^{\Rtt},
\end{align*}
which establishes the bounds \eqref{eq:Ricciblowup.thm.1} and \eqref{eq:Ricciblowup.thm.2}.
These estimates, in conjunction with \eqref{eq:pexpbound}, imply that the physical Ricci tensor blows up at the same rate as the square of the physical expansion scalar. In particular, this shows that the perturbed positive ekpyrotic-FLRW solutions are $C^2$-inextendible at $t=0$.

\subsubsection{Physical Weyl tensor decay} 
According to \eqref{eq:Weylrepresentation}, the conformal orthonormal frame components of the physical Weyl tensor $\Wb_{ijkl}$ can be decomposed in terms of the tensor fields $\pWeyl_{mnpq}$ and $\pWeyl_{mnp}$, where $\pWeyl_{mnpq}$ and $\pWeyl_{mnp}$ are defined by \eqref{eq:relevantWeylprojections} and can be computed using  \eqref{Weyl1}-\eqref{Weyl2}. 

To analyse $\pWeyl_{mnp}$, we introduce the function 
\[\xi(x)=\log\biggl(\frac{s_c^2}{(n-1)s^2-s_c^2}+x\biggr)-\log\biggl(\frac{s_c^2}{(n-1)s^2-s_c^2}\biggr),\]
which allows us to write
\begin{equation*}
\log(t\beta e^\Phi\pexp)-\log\biggl(\frac{s_c^2}{(n-1)s^2-s_c^2}\biggr)=  \xi\biggl(t\beta e^\Phi\pexp-\frac{s_c^2}{(n-1)s^2-s_c^2}\biggr).
\end{equation*}
The smoothness of $\xi(x)$ for $|x|\leq\frac{s_c^2}{(n-1)s^2-s_c^2}$,
the Sobolev and Moser inequalities, e.g.,~ Propositions 2.4., 3.7.~and 3.9.~from   \cite[Ch.~13]{taylor2011}, and the bound \eqref{eq:pexpbound} imply
\begin{equation}
  \label{eq:logpexpbound}
  \biggl\| \log(t\beta e^\Phi\pexp)-\log\left(\frac{s_c^2}{(n-1)s^2-s_c^2}\right)\biggr\|_{H^{k}(\mathbb T^{n-1})} 
\lesssim t^{\Rtt}.
\end{equation}

Next, we express \eqref{Weyl2} in terms of the physical orthonormal frame and employ  \eqref{eq:conffactorekpy.thm} and \eqref{e0i-fix} to obtain
\begin{align*}
  %\label{Weyl2}
  \frac{\pWeyl_{\Mb\Nb\Pb}}{\pexp}
  =&(\pshearresc_{\Nb \Pb}+\pspatmetr_{\Nb\Pb})\pspnabla_{\Mb} \log (t\beta e^\Phi \pexp)
  -(\pshearresc_{\Nb \Pb}+\pspatmetr_{\Nb\Pb})\beta^{-1}\pspnabla_{\Mb} \beta\\
  &-(\pshearresc_{\Mb\Pb}+\pspatmetr_{\Mb\Pb})\pspnabla_{\Nb} \log(t\beta e^\Phi \pexp)
  +(\pshearresc_{\Mb \Pb}+\pspatmetr_{\Mb\Pb})\beta^{-1}\pspnabla_{\Nb} \beta
  + \pspnabla_{\Mb} \pshearresc_{\Nb\Pb}
  -\pspnabla_{\Nb} \pshearresc_{\Mb\Pb}.
\end{align*}
Using this expression together with \eqref{Ntt=betat} and $\pspatmetr_{\Nb\Pb}=\delta_{\Nb\Pb}$, we obtain the bound 
\begin{equation*}
  \biggl\|\frac{\pWeyl_{\Mb\Nb\Pb}}{\pexp}\biggr\|_{H^{k-1}(\mathbb T^{n-1})}\lesssim t^{\Rtt}+t^{\Rtt-\ep+\kappat}
\end{equation*}
as an immediate consequence of the estimates \eqref{eq:pshearrescbound} and \eqref{eq:logpexpbound}. This establishes \eqref{eq:Weylestimate.thm.1}.

Next, we obtain bounds on $\pWeyl_{mnpq}$ analogous to the ones obtained above for $\pWeyl_{mnp}$. Because the derivation is very similar, we omit most details and instead focus on the differences. The starting point is to express \eqref{Weyl1} in terms of the physical orthonormal frame and divide the resulting expression by $\pexp{}^2$. The terms coming from the first line of \eqref{Weyl1} can immediately be estimated in the $H^{k-1}$-norm using \eqref{eq:pshearrescbound} and \eqref{eq:spatcurvaturedecay}. The most interesting part of the analysis occurs in the second line of \eqref{Weyl1}, where there is a surprising cancellation in the following expression:    
\begin{align*}
  \frac{2}{\alpha^2}
  -4\tau^{s/\alpha}& t^2 e^{2\Phi}\beta^2
  -(n-1)(n-2) (t \beta e^{\Phi}\pexp)^2 \\
  =&\frac{8(n-1)^2s^2}{((n-1)s^2-s_c^2)^2}
  -\frac{(n-1)(n-2)s_c^2(s^2-s_c^2)}{((n-1)s^2-s_c^2)^2}
  -(n-1)(n-2) (\frac{s_c^2}{(n-1)s^2-s_c^2})^2\\
  &-\frac{(n-1)(n-2)s_c^2(s^2-s_c^2)}{((n-1)s^2-s_c^2)^2}(\beta^2-1)
  -(n-1)(n-2)\biggl( (t \beta e^{\Phi}\pexp)^2-\frac{s_c^4}{((n-1)s^2-s_c^2)^2}\biggr)\\
  =&\frac{8(n-1)^2s^2}{((n-1)s^2-s_c^2)^2}
  -\frac{8(n-1)^2(s^2-s_c^2)}{((n-1)s^2-s_c^2)^2}
  - \frac{8(n-1)^2s_c^2}{((n-1)s^2-s_c^2)^2}\\
  &-\frac{(n-1)(n-2)s_c^2(s^2-s_c^2)}{((n-1)s^2-s_c^2)^2}(\beta^2-1)
  -(n-1)(n-2)\biggl( (t \beta e^{\Phi}\pexp)^2-\frac{s_c^4}{((n-1)s^2-s_c^2)^2}\biggr)\\
  =&-\frac{(n-1)(n-2)s_c^2(s^2-s_c^2)}{((n-1)s^2-s_c^2)^2}(\beta^2-1)
  -(n-1)(n-2)\biggl( (t \beta e^{\Phi}\pexp)^2-\frac{s_c^4}{((n-1)s^2-s_c^2)^2}\biggr),
\end{align*}
where in the above calculation we employed \eqref{sc-def}, \eqref{Ntt=betat}, \eqref{tau=t} and \eqref{eq:aux2}.
Putting everything together, we conclude, with the help of \eqref{Weyl1} and \eqref{eq:inversepexpbound}, that 
\begin{equation*}
  \Bnorm{\frac{\pWeyl_{\Mb\Nb\Pb\Qb}}{\pexp^2}}_{H^{k-1}(\mathbb T^{n-1})}\lesssim t^{2-2\ep}+t^{\Rtt}+t^{\Rtt-\ep+\kappat},
\end{equation*}
which establishes \eqref{eq:Weylestimate.thm.2}.

\subsubsection{Scalar field limits}

Recall from Section \ref{sec:FLRWEinsteinSF} the quantities $\phi_0$ and $\phi_1$ defined by \eqref{eq:Ringstromlimits} for homogeneous solutions to Einstein-scalar field equations, whose limits as $t\searrow 0$ characterize the asymptotic properties of the scalar field. Since we are concerned with perturbations of the positive ekpyrotic-FLRW solutions, we set $\xtt_i=s/s_c$ and define analogous quantities in the inhomogeneous setting via
\begin{align}
  \phi_1&:=\frac{\pnormal^i\nablab_i \phi}{(n-1)\pexp}
  =-\frac{1}{(n-1)\alpha}\frac{1}{e^{\Phi}\beta\tau \pexp}
  =\frac{2s}{(n-1)s^2-s_c^2}\frac{1}{e^{\Phi}\beta t \pexp}, \label{eq:phi1.proof}\\
  \phi_0&:=\phi+\frac{s_c^2}{s^2}\phi_1\log((n-1){\bar\Htt})\label{eq:phi0.proof}\\
  &=-\frac 1\alpha\log t-\frac{s_c^2}{s^2}\phi_1\log(t e^\Phi)
  +\frac{s_c^2}{s^2}\phi_1\log(t e^\Phi\beta\bar\Htt)
  +\frac{s_c^2}{s^2}\phi_1\log((n-1))
  -\frac{s_c^2}{s^2}\phi_1\log(\beta),\notag
\end{align}
where, in obtaining the further identities, we have used  
\eqref{eq:ekpyroticparameterconstants.N}, \eqref{grad-tau}, \eqref{eq:conf2phys.thm},  \eqref{Ntt=betat}, \eqref{eq:confnormal} and \eqref{eq:trafounitnormal}.
From \eqref{eq:inversepexpbound} and \eqref{eq:phi1.proof}, we immediately see that
\begin{equation*}
%\label{eq:inversepexpbound}
  \Bnorm{ \phi_1-\frac{2s}{s_c^2}}_{H^{k}(\mathbb T^{n-1})} 
  \lesssim t^{\Rtt},
\end{equation*}
which shows that $\phi_1$ has the same limit as $t\searrow 0$ as in the homogeneous setting, c.f.~\eqref{eq:FLRWPhiOekpy}. For $\phi_0$, we first consider the first two terms on the right-hand side of \eqref{eq:phi0.proof}: 
\begin{align*}
  &-\frac 1\alpha\log t-\frac{s_c^2}{s^2}\phi_1\log(t e^\Phi)\\
  =&
  -\frac{2s}{s^2}\log\left(\frac{(n-1)\sqrt{2(s^2-s_c^2)}}{(n-1)s^2-s_c^2}\right)\\
  &+\frac{(n-1)s_c^2}{(n-1)s^2-s_c^2}\left(\frac{2s}{s_c^2}-\phi_1\right)\log t
  -\frac{s_c^2}{s^2}\left(\phi_1-\frac{2s}{s_c^2}\right)\log\left(\frac{(n-1)\sqrt{2(s^2-s_c^2)}}{(n-1)s^2-s_c^2}\right)
\end{align*}
owing to \eqref{eq:ekpyroticparameterconstants.N} and \eqref{eq:conffactorekpy.thm}. Combining this with \eqref{eq:logpexpbound}, we conclude from an application of the Sobolev and Moser inequalities to \eqref{betaconv} that 
\begin{equation*}
  \Bnorm{\phi_0
  -\frac{1}{s}\log\left(\frac{s_c^4}{2(s^2-s_c^2)}\right)
  }_{H^{k}(\mathbb T^{n-1})} 
  \lesssim t^\Rtt(1+\log t) +t^{\Rtt-\ep+\kappat}.
\end{equation*}
From this estimate, we see, after a short calculation, that $\phi_0$ also has the same limit as $t\searrow$ as in the homogeneous setting, c.f.~ \eqref{eq:FLRWphi0} with $\xtt_i=s/s_c$ and $\phi_*$ given by \eqref{eq:FLRWFPConstr.ekp.2}. This yields \eqref{eq:phi_0limit.thm}.

\subsubsection{Physical acceleration $\pacc$}
From \eqref{eq:porthonframe}, \eqref{eq:trafounitnormal} and \eqref{defphysextr}, we know that the physical acceleration vector $\pacc$ is purely spatial and that its non-vanishing components relative to the physical orthonormal frame $\eb_i$ are given by
\begin{equation*}
\pacc_{\Jb} = e^{-\Phi}\cacc_{I}\delta_{\Jb}^I = t^{-\ep} e^{-\Phi}\delta^{L}_{\Jb}\Bigl( -\frac{1}{4}\bigl(2m_L-\delta^{KM}(2\ell_{KML}-\ell_{LKM})\bigr)-\beta\xi_{0L}\Bigr), 
\end{equation*}
where in deriving the second equality we used \eqref{m-def}, \eqref{xi-def.N}, \eqref{l-def.NN} and \eqref{ndot-exp}. Normalising the physical acceleration by  $\bar\Htt$ gives 
\begin{equation*}
\frac{\pacc_{\Jb}}{\bar\Htt} =  \frac{t^{1-\ep}}{t\beta \bar\Htt e^{\Phi}} \beta\Bigl( -\frac{1}{4}\bigl(2m_L-\delta^{KM}(2\ell_{KML}-\ell_{LKM})\bigr)-\beta\xi_{0L}\Bigr)\delta^{L}_{\Jb}.
\end{equation*}
With the help of the Sobolev and Moser inequalities, e.g.,~ Propositions 2.4., 3.7.~and 3.9.~from   \cite[Ch.~13]{taylor2011}, along with the decay estimates \eqref{eq:PbbuDecay.Cor} and the definitions \eqref{W-def} and \eqref{Pbb-def}, we can then bound $\frac{\pacc_{\Jb}}{\bar\Htt}$ by
\begin{equation*}
\biggl\|\frac{\pacc_{\Jb}}{\bar\Htt}\biggr\|_{H^k(\Tbb^{n-1})}
\lesssim \Bigl\|  \frac{1}{t\beta \bar\Htt e^{\Phi}}\Bigr\|_{H^k(\Tbb^{n-1})} \norm{\beta}_{H^k(\Tbb^{n-1})}(1+\norm{\beta}_{H^k(\Tbb^{n-1})}) t^{1-\ep+\kappat}.
\end{equation*}
This estimate together with \eqref{betaconv} and \eqref{eq:inversepexpbound} establishes \eqref{ndot-bar-est} and completes the proof.

\appendix

\section{$(n-1)+1$-decomposition}
\label{spacetimedecomp}

Let $\cnormal=\cnormal^\mu\del{\mu}$ denote the future pointing
timelike normal to the foliation of\footnote{Recall $\tau$ is defined by \eqref{taudef}.} $\tau=\textrm{constant}$ hypersurfaces that is unit with respect to the conformal metric $g_{\mu\nu}$. By \eqref{e0-def} and the orthonomality of the frame\footnote{Recall that lower case Latin indices, e.g.,~$i,j,k$, run from $0$ to $n-1$ and enumerate frame indices.}  $e_i=e_i^\mu\del{\mu}$ with respect to that conformal metric, i.e.,~ $g_{ij}:=g(e_i,e_j)=\eta_{ij}$, we have that $e_0^\mu=\cnormal^\mu$,
\begin{equation}
  \label{eq:confnormal}
  \cnormal^i=\delta^i_0 \AND \cnormal_i=-\delta_i^0.
\end{equation}
From~\eqref{e0-def} and  \eqref{e0-mu}, we also note that the \emph{conformal lapse} is given by
\begin{equation}
  \label{eq:conflapsebetat}
  \Ntt=\beta,
\end{equation}
while the \emph{shift vector} vanishes.

The \emph{conformal spatial metric}, which is induced on the $\tau=\textrm{constant}$ hypersurfaces by the conformal metric, is given by 
\begin{equation}
  \label{eq:cspatmetrdef}
  \cspatmetr_{ij}=g_{ij}+\cnormal_i\cnormal_j,
\end{equation}
where\footnote{Recall that upper case Latin indices, e.g.,~$I,J,K$, run from $1$ to $n-1$ and enumerate indices associated to the spatial orthonormal frame $e_I=e_I^\mu\del{\mu}$. } 
\begin{equation}
  \label{eq:confspatmetr}
  \cspatmetr_{i0}=0 \AND \cspatmetr_{IJ}=\delta_{IJ}.
\end{equation}
Letting $\nabla_i$ denote the Levi-Civita connection of the conformal metric $g_{ij}=\eta_{ij}$, the \emph{conformal acceleration} $\cacc_i$ and \emph{conformal extrinsic curvature} $\cextcurv_{ij}$  are uniquely defined via the relations:
\begin{equation}
  \label{defconfextr}
  \nabla_i\cnormal_j=-\cacc_j\cnormal_i+\cextcurv_{ij},\quad \cacc_j n^j=0 \AND \cextcurv_{ij} n^j=\cextcurv_{ji} n^j=0.
\end{equation}
Because the normal $n$ is hypersurface orthogonal, the conformal extrinsic curvature is symmetric, i.e.,~ $\cextcurv_{ij}=\cextcurv_{ji}$, and we observe from \eqref{Ccdef}, \eqref{Gamma-def}, \eqref{gamma-0K0}, \eqref{p-fields}, \eqref{gi00}, \eqref{ellt-def}-\eqref{mt-def} and \eqref{eq:confnormal} that
\begin{equation} \label{ndot-exp}
\cacc^K=\cnormal^j\nabla_j\cnormal^K=\Gamma_0{}^K{}_0
  =-\frac{1}{4}\delta^{K L}(2\mt_{L}-\delta^{JM}(2\ellt_{JML}-\ellt_{LJM}))-\beta \delta^{KL}\tau_{0L}.
\end{equation}
Moreover,
\begin{equation}
  \label{eq:confsecff}
  \cextcurv_{I}{}^{J}=\Gamma_I{}^J{}_0=\frac{1}{2}\kt_{I}{}^{J} + \delta^{JK}\gamma_{I}{}^0{}_K \AND \Ktt_{0}{}^i=\Ktt_i{}^0=0
\end{equation}
are a direct consequence of \eqref{Ccdef}, \eqref{Gamma-def},  \eqref{p-fields}, \eqref{gamma-IJ0}, \eqref{kt-def}, \eqref{eq:confnormal}  and \eqref{defconfextr}.
We further decompose the conformal extrinsic curvature $\cextcurv_{ij}$ into its \emph{conformal expansion} $\cexp$ and \emph{conformal shear} (trace free) $\cshear_{ij}$ as follows
\begin{equation}
  \label{eq:confexpshear}
  \cextcurv_{ij}=\cshear_{ij}+\cexp\cspatmetr_{ij},\quad \cshear_{ij} \cspatmetr^{ij}=0,\quad \cexp=\frac 1{n-1}\cextcurv_{ij}\cspatmetr^{ij},
\end{equation}
and introduce the \emph{conformal scale invariant shear} $\cshearresc_{ij}$ defined by
\begin{equation}
  \label{eq:confshearresc}
  \cshearresc_{ij}=\frac{\cshear_{ij}}{\cexp}.
\end{equation}

From \eqref{confmet}, it is clear that 
\begin{equation}
\label{eq:trafounitnormal.a}
\pnormal_{i}=e^\Phi \cnormal_i=-e^\Phi\delta_{i}^0
\end{equation}
defines the future pointing \emph{physical timelike normal} to the foliation of $\tau=\textrm{constant}$ hypersurfaces  that is unit with respect to the physical metric $\gb_{\mu\nu}$. On the $\tau=\textrm{constant}$ hypersurfaces, the induced physical spatial metric is given by
\begin{equation}
  \label{eq:pspatmetrdef}
  \pspatmetr_{i j}=\gb_{i j}+\pnormal_{i}\pnormal_{j}
\end{equation}
or equivalently, see \eqref{confmet} and \eqref{eq:trafounitnormal}, by
\begin{equation}
  \label{eq:trafospatmetric}
  \pspatmetr_{i j}=e^{2\Phi}\cspatmetr_{ij}.
\end{equation}
Letting  $\nablab_i$ denote the Levi-Civita connection of the physical metric $\gb_{ij}$, the physical \emph{acceleration} $\pacc_i$ and \emph{extrinsic curvature} $\pextcurv_{ij}$ are uniquely defined via the relations
\begin{equation}
  \label{defphysextr}
  \nablab_i\pnormal_j=-\pacc_j\pnormal_i+\pextcurv_{ij}, \quad \pacc_j \nb^j=0,\quad \pextcurv_{ij} \nb^j=\pextcurv_{ji} \nb^j=0.
\end{equation}
The symmetry $\pextcurv_{ij}=\pextcurv_{ji}$ of the physical extrinsic curvature is a consequence of hypersurface orthogonality of the normal $\nb_i$. 
We see also follows from \eqref{confChrist}, \eqref{defconfextr}, \eqref{eq:trafounitnormal.a} and \eqref{eq:trafospatmetric} that
\begin{align*}  \nablab_i\pnormal_j&=\nabla_i\pnormal_j-g^{kl}(g_{il}\nabla_j \Phi + g_{jl}\nabla_i\Phi -g_{ij}\nabla_l\Phi)\pnormal_k\\
  &= 
  e^\Phi\nabla_i \cnormal_j
  -e^\Phi \cnormal_i\nabla_j \Phi  +e^\Phi g_{ij}\cnormal^l\nabla_l\Phi\\
  &= 
  -\cacc_j\pnormal_i+e^\Phi\cextcurv_{ij}
  - \pnormal_i\cspnabla_j \Phi  
  +e^\Phi \cspatmetr_{ij}\cnormal^l\nabla_l\Phi,
\end{align*}
where $\cspnabla_j$ is the Levi-Civita connection of the conformal spatial metric $\cspatmetr_{ij}$.
Comparing
\eqref{defconfextr} and \eqref{defphysextr}, we obtain 
\begin{align} \label{ndot-bar}
  \pacc_j=\cacc_j+\cspnabla_j \Phi \overset{\eqref{exp(2Phi)}}{=} \cacc_j \AND
  \pextcurv_{ij}=e^\Phi\bigl(\cextcurv_{ij}+\cnormal^l\nabla_l\Phi \cspatmetr_{ij}\bigr).
\end{align}
Raising\footnote{In the following, indices related to conformal tensors are raised and lowered using the conformal metric, e.g.,~ $\cnormal^i=g^{ij}\cnormal_j$, while indices related to physical tensors are raised and lowered using the physical metric, e.g.,~$\pnormal^i=\gb^{ij}\pnormal_j$. Similarly, we raise and lower indices related to conformal spatial and physical spatial tensors using the conformal and physical spatial metrics, respectively, e.g.,~
$\pextcurv_{i}{}^j=\pextcurv_{il}\pspatmetr^{lj}$
and $\cextcurv_{i}{}^j=\cextcurv_{il}\cspatmetr^{lj}$.
} one index of $\pextcurv_{ij}$ yields
\begin{equation}
  \label{eq:pextcurv}
  \pextcurv_{i}{}^j\overset{\eqref{confmet}}{=}e^{-\Phi}\bigl(\cextcurv_{i}{}^j+\cnormal^l\nabla_l\Phi \cspatmetr_{i}{}^j\bigr).
\end{equation}
Decomposing the physical extrinsic curvature $\pextcurv_{ij}$ into its \emph{physical expansion} $\pexp$ and \emph{physical shear} (trace free) $\pshear_{ij}$ as 
\begin{equation}
 \label{eq:physshearexp}
 \pextcurv_{i}{}^j=\pshear_{i}{}^j+\pexp\pspatmetr_{i}{}^j,
\end{equation}
we find, with the help of \eqref{confmet}, \eqref{eq:confexpshear} and \eqref{eq:pextcurv}, that
\begin{equation}
  \label{eq:pexp}
  \pexp=\frac 1{n-1}\pextcurv_{i}{}^i
  =e^{-\Phi}\bigl(\cexp+\cnormal^l\nabla_l\Phi\bigr),
\end{equation}
and
\begin{equation}
  \label{eq:pshear}
  \pshear_{i}{}^j=e^{-\Phi}\bigl(\cextcurv_{i}{}^j+\cnormal^l\nabla_l\Phi \cspatmetr_{i}{}^j\bigr)-e^{-\Phi}\bigl(\cexp+\cnormal^l\nabla_l\Phi\bigr) \pspatmetr_{i}{}^{j}
  =e^{-\Phi}\cshear_{i}{}^j.
\end{equation}
Analogous to \eqref{eq:confshearresc}, it is useful to introduce the \emph{physical scale invariant shear}  $\pshearresc_{ij}$ defined  by
\begin{equation}
  \label{eq:physshearresc}
  \pshearresc_{i}{}^j=\frac{\pshear_{i}{}^j}{\pexp}=\frac{\cshear_{i}{}^j}{\cexp+\cnormal^l\nabla_l\Phi}
  =\frac{\cexp}{\cexp+\cnormal^l\nabla_l\Phi}\cshearresc_{i}{}^j.
\end{equation}

\section{Riemann, Ricci and Weyl curvature tensors}
\label{sec:curvature_tensors}

Let  $\Rb_{ijkl}$ and $\Rb_{ij}$ denote the \emph{physical Riemann and Ricci tensors}\footnote{See Section~\ref{sec:curv_conv} for our curvature conventions.}, respectively, determined by the physical metric $\gb_{ij}$. Then, with the help of \eqref{ESF.1}, \eqref{confmet}, \eqref{taudef}, \eqref{eq:DefVtt} with $V_0=-1$, \eqref{grad-tau}, \eqref{eq:cspatmetrdef}, \eqref{eq:trafounitnormal} and \eqref{eq:pspatmetrdef}, we can express the physical Ricci tensor as
\begin{align}
  \Rb_{ij}=&2\nablab_i\phi\nablab_j \phi+\frac{4}{n-2}V(\phi)\gb_{ij}
  =\frac{2}{\alpha^2}\tau^{-2}\nablab_i\tau\nablab_j \tau-\frac{4}{n-2}\tau^{s/\alpha}\gb_{ij} \notag \\
  =&\frac{2}{\alpha^2}\tau^{-2}\beta^{-2} e^{-2\Phi}\pnormal_{i}\pnormal_{j}
  -\frac{4}{n-2}\tau^{s/\alpha}\gb_{ij}\notag \\
  &=\left(\frac{2}{\alpha^2}\tau^{-2}\beta^{-2}+\frac{4}{n-2}\tau^{s/\alpha}e^{2\Phi}\right)\cnormal_{i}\cnormal_{j}
  -\frac{4}{n-2}\tau^{s/\alpha}e^{2\Phi}\cspatmetr_{ij},   \label{eq:pstRicci}
\end{align}
or equivalently
\begin{equation*}
  \Rb_{i}{}^k=(\tau\beta e^{\Phi})^{-2}\left(\frac{2}{\alpha^2}+\frac{4}{n-2}\tau^{s/\alpha}(\tau\beta e^{\Phi})^2\right)\cnormal_{i} \cnormal^{k}
  -\frac{4}{n-2}\tau^{s/\alpha}\cspatmetr_{i}{}^k.
\end{equation*}
Contracting the $i,k$ indices yields the \textit{physical Ricci scalar}
\begin{equation}
  \label{eq:pstRiccisc}
  \Rb
  =-(\tau\beta e^{\Phi})^{-2}\left(\frac{2}{\alpha^2}+\frac{4n}{n-2}\tau^{s/\alpha}(\tau\beta e^{\Phi})^2\right).
\end{equation}

Following \cite[Ex.~3.2.24]{petersen2016}, we define the \emph{physical Schouton tensor} $\Pb_{ij}$ by
\begin{equation*}
\Pb_{ij}=\frac{2}{n-2}\Bigl(\Rb_{ij}-\frac{\Rb}{2(n-1)}\gb_{ij}\Bigr).
\end{equation*}
The \emph{physical Weyl tensor} $\Wb_{ijkl}$, see 
\cite[Ex.~3.2.25]{petersen2016}, is then defined by\footnote{Note the sign differences in the second and third term on the right side between our definition of the Weyl tensor and the one from \cite{petersen2016}. These sign changes are needed to account for the different Ricci tensor sign conventions employed here and in \cite{petersen2016}.}
\begin{align}
  \Wb_{ijkl}&=\Rb_{ijkl}-\Pb_{i[k} \gb_{l]j}+\Pb_{j[k} \gb_{l]i} \notag \\
  &=\Rb_{ijkl}-\frac{2}{n-2}\bigl(\Rb_{i[k} \gb_{l]j}-\Rb_{j[k} \gb_{l]i}\bigr)
  +\frac{\Rb}{(n-1)(n-2)}\bigl(\gb_{i[k} \gb_{l]j}-\gb_{j[k} \gb_{l]i}\bigr).\label{Weyldef}
\end{align}
From the above expression, it is clear that the physical Riemann tensor $\Rb_{ijkl}$ is completely determined by the Weyl tensor $\Wb_{ijkl}$ and the Ricci tensor $\Rb_{ij}$.  
It is also straightforward to verify that the Weyl tensor is trace free, i.e.,~vanishes upon contracting any two indices, and transforms under the conformal transformation \eqref{confmet} as 
\begin{equation}
  \label{eq:Weylconfinv}
  W_{ijkl}=e^{-2\Phi} \Wb_{ijkl}
\end{equation}
where $W_{ijkl}$ is the conformal Weyl tensor, i.e., the Weyl tensor determined by the conformal metric $g_{ij}$.

In spacetime dimension $n=4$, it is common to decompose the Weyl tensor into its electric and magnetic parts, and express these parts in terms of the spatial curvature, the second fundamental form and matter, e.g.,~see \cite[Eqns.~(1.80) \& (1.81)]{wainwright1997}. While,  this decomposition is not valid in spacetime dimension $n\ge 5$, it is straightforward to derive a related decomposition that holds in all spacetime dimensions $n\geq 3$.  Indeed, from the tracefree property of the Weyl tensor and the symmetries the Weyl tensor inherits from the Riemann curvature tensor, it is not difficult to show that the physical Weyl tensor is fully determined by the projections 
\begin{equation}
  \label{eq:relevantWeylprojections}
  \pWeyl_{mnpq}=\Wb_{ijkl}\pspatmetr^i{}_m\pspatmetr^j{}_n\pspatmetr^k{}_p\pspatmetr^l{}_q
  \quad\text{and}\quad \pWeyl_{mnp}=\Wb_{ijkl}\pspatmetr^i{}_m\pspatmetr^j{}_n\pspatmetr^k{}_p\pnormal^l,
\end{equation}
where $\pnormal^i$ and $\pspatmetr_{ij}$ are as defined by \eqref{eq:trafounitnormal.a} and \eqref{eq:pspatmetrdef}, respectively.
In terms of these projections, the physical Weyl tensor is given by
\begin{align}
  \label{eq:Weylrepresentation}
  \Wb_{ijkl}=&\nb_i\pWeyl_{klj}
  -\nb_j\pWeyl_{kli}
  +\nb_k\pWeyl_{ijl}
  -\nb_l\pWeyl_{ijk}\\
  &+\nb_i\nb_k \pWeyl_{mjpl}\pspatmetr^{mp} 
  +\nb_i\nb_l \pWeyl_{mjkp}\pspatmetr^{mp} 
  +\nb_j\nb_k \pWeyl_{impl}\pspatmetr^{mp}
  +\nb_j\nb_l \pWeyl_{imkp}\pspatmetr^{mp}
  +\pWeyl_{ijkl}\notag
\end{align}
as can be verified using the following identities:
\begin{gather*}
  \Wb_{ijkl}\pnormal^i\pnormal^j\pnormal^k\pnormal^l=0,\\
  \Wb_{ijkl}\pnormal^i\pnormal^j\pnormal^k\pspatmetr^l{}_m=0,
  \quad \Wb_{ijkl}\pnormal^i\pnormal^j\pspatmetr^k{}_m\pnormal^l=0,
  \quad \Wb_{ijkl}\pnormal^i\pspatmetr^j{}_m\pnormal^k\pnormal^l=0,
  \quad \Wb_{ijkl}\pspatmetr^i{}_m\pnormal^j\pnormal^k\pnormal^l=0,\\
  \Wb_{ijkl}\pnormal^i\pnormal^j\pspatmetr^k{}_m\pspatmetr^l{}_n=0,
  \quad \Wb_{ijkl}\pspatmetr^i{}_m\pspatmetr^j{}_n \pnormal^k\pnormal^l=0,
\end{gather*}
\begin{gather*}
  \pWeyl_{mnp}=\Wb_{ijkl}\pspatmetr^i{}_m\pspatmetr^j{}_n\pspatmetr^k{}_p\pnormal^l
  =-\Wb_{ijkl}\pspatmetr^i{}_m\pspatmetr^j{}_n\pnormal^k\pspatmetr^l{}_p
  =\Wb_{ijkl}\pspatmetr^i{}_p\pnormal^j\pspatmetr^k{}_m\pspatmetr^l{}_n
  =-\Wb_{ijkl}\pnormal^i\pspatmetr^j{}_p\pspatmetr^k{}_m\pspatmetr^l{}_n,
\end{gather*}
and
\begin{gather}
  \Wb_{ijkl}\pnormal^i\pspatmetr^j{}_m\pnormal^k\pspatmetr^l{}_n
  =\Wb_{ijkl}\pspatmetr^{ik}\pspatmetr^j{}_m\pspatmetr^l{}_n=\pWeyl_{mnpq}\pspatmetr^{mp},\label{eq:electric}\\
  \Wb_{ijkl}\pnormal^i\pspatmetr^j{}_m\pnormal^k\pspatmetr^l{}_n
  =-\Wb_{ijkl}\pspatmetr^i{}_m \pnormal^j\pnormal^k\pspatmetr^l{}_n
  =-\Wb_{ijkl}\pnormal^i\pspatmetr^j{}_m\pspatmetr^k{}_n \pnormal^l
  =\Wb_{ijkl}\pspatmetr^i{}_m\pnormal^j\pspatmetr^k{}_n \pnormal^l.\notag
\end{gather}

Both tensor fields $\pWeyl_{mnpq}$ and $\pWeyl_{mnp}$ in \eqref{eq:relevantWeylprojections} are purely spatial, and hence, they are completely determined by their spatial frame components, i.e.,~ $\pWeyl_{MNPQ}$ and $\pWeyl_{MNP}$. As a side remark, we can, using \eqref{eq:electric},  define, in all spacetime dimensions $n\geq 3$, the \emph{electric part} of the physical Weyl tensor by
\begin{equation}
  \label{eq:electricweyl}
  \bar E_{mn}=\Wb_{ijkl}\pnormal^i\pspatmetr^j{}_m\pnormal^k\pspatmetr^l{}_n=\pWeyl_{mnpq}\pspatmetr^{mp}.
\end{equation}
On the other hand, there is no counterpart to the definition of the magnetic part of the physical Weyl tensor in spacetime dimensions $n\geq 5$  because it relies on dualisation, which is dimension dependent.

The Gauss and Codazzi relations for the physical Riemann tensor, see \cite{Gourgoulhon:Book}, take the form
\begin{align*}
  \Rb_{ijkl}\pspatmetr^{i}{}_m\pspatmetr^{j}{}_n\pspatmetr^{k}{}_p\pspatmetr^{l}{}_q
  &=\pspatcurv_{mnpq}-\pextcurv_{np}\pextcurv_{mq}+\pextcurv_{mp}\pextcurv_{nq},%\label{Gauss}
  \\
  \Rb_{ijkl}\pspatmetr^{i}{}_m\pspatmetr^{j}{}_n\pspatmetr^{k}{}_p\pnormal^l
  &=\pspnabla_m\pextcurv_{np}-\pspnabla_n\pextcurv_{mp},%\label{Codazzi}
\end{align*}
where $\pspatcurv_{mnpq}$ and $\Db_m$ are the Riemann tensor and Levi-Civita connection determined by the the physical spatial metric $\pspatmetr_{ij}$, and $\Ktt_{mp}$ is the extrinsic curvature determined by determined by the $\tau=\text{constant}$ hypersurfaces and the physical metric $\gb_{ij}$.
Using these relations along with \eqref{Weyldef} allows us to express the spatial tensors $\pWeyl_{mnpq}$ and $\pWeyl_{mnp}$ in the following form:
\begin{align*}
  \pWeyl_{mnpq}
  =&\pspatcurv_{mnpq}-\pextcurv_{np}\pextcurv_{mq}+\pextcurv_{mp}\pextcurv_{nq}
  -\frac{2}{n-2}\bigl(\Rb_{i[k} \gb_{l]j}-\Rb_{j[k} \gb_{l]i}\bigr)\pspatmetr^{i}{}_m\pspatmetr^{j}{}_n\pspatmetr^{k}{}_p\pspatmetr^{l}{}_q\\
  &+\frac{\Rb}{(n-1)(n-2)}\bigl(\gb_{i[k} \gb_{l]j}-\gb_{j[k} \gb_{l]i}\bigr)\pspatmetr^{i}{}_m\pspatmetr^{j}{}_n\pspatmetr^{k}{}_p\pspatmetr^{l}{}_q\\
  =&\pspatcurv_{mnpq}-\pextcurv_{np}\pextcurv_{mq}+\pextcurv_{mp}\pextcurv_{nq}
  -\frac{1}{n-2}\bigl(\Rb_{ik} \pspatmetr_{lj}-\Rb_{jk} \pspatmetr_{li}-\Rb_{il} \pspatmetr_{kj}+\Rb_{jl} \pspatmetr_{ki}\bigr)\pspatmetr^{i}{}_m\pspatmetr^{j}{}_n\pspatmetr^{k}{}_p\pspatmetr^{l}{}_q\\
  &+\frac{\Rb}{(n-1)(n-2)}\bigl(\pspatmetr_{mp} \pspatmetr_{qn}-\pspatmetr_{np} \pspatmetr_{qm}
  \bigr),
\end{align*}
and
\begin{align*}
  \pWeyl_{mnp}
  =&\pspnabla_m\pextcurv_{np}-\pspnabla_n\pextcurv_{mp}
  -\frac{1}{n-2}\bigl(\Rb_{ik} \gb_{lj}-\Rb_{jk} \gb_{li}-\Rb_{il} \gb_{kj}+\Rb_{jl} \gb_{ki}\bigr)\pspatmetr^{i}{}_m\pspatmetr^{j}{}_n\pspatmetr^{k}{}_p\pnormal^l\\
  =&\pspnabla_m\pextcurv_{np}-\pspnabla_n\pextcurv_{mp}
  -\frac{1}{n-2}\bigl(-\Rb_{il} \pspatmetr_{kj}+\Rb_{jl} \pspatmetr_{ki}\bigr)\pspatmetr^{i}{}_m\pspatmetr^{j}{}_n\pspatmetr^{k}{}_p\pnormal^l.
\end{align*}
From these expressions, we then, with the help of \eqref{eq:pstRicci} and \eqref{eq:pstRiccisc}, obtain
\begin{align*}
  \pWeyl_{mnpq}=&\pspatcurv_{mnpq}-\pextcurv_{np}\pextcurv_{mq}+\pextcurv_{mp}\pextcurv_{nq}
  -(\tau e^{\Phi}\beta)^{-2}\frac{2}{(n-1)(n-2)}\Bigl(\frac{1}{\alpha^2}
  -2\tau^{s/\alpha}(\tau e^{\Phi}\beta)^{2}\Bigr)
  \bigl(\pspatmetr_{mp} \pspatmetr_{qn}-\pspatmetr_{np} \pspatmetr_{qm}\bigr)
  \end{align*}
and
\begin{equation*}
  \pWeyl_{mnp}
  =\pspnabla_m\pextcurv_{np}-\pspnabla_n\pextcurv_{mp}.
\end{equation*}
On the other hand, we have, by \eqref{eq:physshearexp} and \eqref{eq:physshearresc}, that   
\begin{align*}
  -\pextcurv_{np}\pextcurv_{mq}+&\pextcurv_{mp}\pextcurv_{nq}
  =-(\pshear_{np}+\pexp\pspatmetr_{np})(\pshear_{mq}+\pexp\pspatmetr_{mq})
  +(\pshear_{mp}+\pexp\pspatmetr_{mp})(\pshear_{nq}+\pexp\pspatmetr_{nq})\\
  &=\pexp^2(\pspatmetr_{mp}\pspatmetr_{nq}-\pspatmetr_{np}\pspatmetr_{mq})
  -(\pshear_{np}+\pexp\pspatmetr_{np})\pshear_{mq}
  -\pshear_{np}\pexp\pspatmetr_{mq}
  +(\pshear_{mp}+\pexp\pspatmetr_{mp})\pshear_{nq}
  +\pshear_{mp}\pexp\pspatmetr_{nq}\\
  &=\pexp^2(\pspatmetr_{mp}\pspatmetr_{nq}-\pspatmetr_{np}\pspatmetr_{mq})
  -\pexp^2\Bigl((\pshearresc_{np}+\pspatmetr_{np})\pshearresc_{mq}
  +\pshearresc_{np}\pspatmetr_{mq}
  -(\pshearresc_{mp}+\pspatmetr_{mp})\pshearresc_{nq}
  -\pshearresc_{mp}\pspatmetr_{nq}\Bigr),
\end{align*}
and
\begin{align*}
  \pspnabla_m\pextcurv_{np}-\pspnabla_n\pextcurv_{mp}
  &=\pspnabla_m (\pexp(\pshearresc_{np}+\pspatmetr_{np}))-\pspnabla_n(\pexp(\pshearresc_{mp}+\pspatmetr_{mp}))\\
  &=\pspnabla_m \pexp(\pshearresc_{np}+\pspatmetr_{np})+\pexp \pspnabla_m \pshearresc_{np}
  -\pspnabla_n \pexp(\pshearresc_{mp}+\pspatmetr_{mp})-\pexp \pspnabla_n \pshearresc_{mp}\\
  &=\pexp\Bigl((\pshearresc_{np}+\pspatmetr_{np})\pspnabla_m \log\pexp+ \pspnabla_m \pshearresc_{np}
  -(\pshearresc_{mp}+\pspatmetr_{mp})\pspnabla_n \log\pexp-\pspnabla_n \pshearresc_{mp}\Bigr).
\end{align*}
Putting everything together yields
\begin{align} 
  \pWeyl_{mnpq}=\pspatcurv_{mnpq}
  &-\pexp^2\Bigl((\pshearresc_{np}+\pspatmetr_{np})\pshearresc_{mq}
  -\pshearresc_{np}\pspatmetr_{mq}
  +(\pshearresc_{mp}+\pspatmetr_{mp})\pshearresc_{nq}
  +\pshearresc_{mp}\pspatmetr_{nq}\Bigr)\notag \\
  &-(\tau^{}e^{\Phi}\beta)^{-2}\frac{1}{(n-1)(n-2)}\biggl(\frac{2}{\alpha^2}
  -4\tau^{s/\alpha}(\tau e^{\Phi}\beta)^2
  \notag \\
  &-(n-1)(n-2)(\tau\beta e^\Phi\pexp)^2
  \biggr)
  \bigl(\pspatmetr_{mp} \pspatmetr_{qn}-\pspatmetr_{np} \pspatmetr_{qm}\bigr) \label{Weyl1}
\end{align}
and
\begin{equation}
  \label{Weyl2}
  \pWeyl_{mnp}
  =\pexp\Bigl((\pshearresc_{np}+\pspatmetr_{np})\pspnabla_m \log\pexp+ \pspnabla_m \pshearresc_{np}
  -(\pshearresc_{mp}+\pspatmetr_{mp})\pspnabla_n \log\pexp-\pspnabla_n \pshearresc_{mp}\Bigr).
\end{equation}

\bigskip

%\noindent \textbf{Acknowledgements}

%\section{Statements and declarations}

%\subsection*{Funding} 
%This research was partially supported by the Marsden Fund Council from Government
%funding, managed by Royal Society Te Apārangi (Award Number: MFP-UOO2322).
%This work was partially supported by the Marsden Fund grant MFP-UOO2322.

%\subsection*{Data Availability}  This article has no associated data.

%\subsection*{Conflict of interest} The authors declare no conflict of interest.

\bibliographystyle{amsplain}
%\bibliography{bibliography, refs}
\bibliography{big-bang_scalarfield_potential}

\end{document}